\theoremstyle{plain} 
\newtheorem{thm}{Theorem}
\newtheorem{lem}{Lemma}
\newtheorem{rmk}{Remark}
\newtheorem{cor}{Corollary}
\newtheorem{ex}{Example}
\newtheorem{obs}{Observation}
\newcommand{\RR}{\mathbb{R}}
\newcommand{\ZZ}{\mathbb{Z}}
\newcommand{\FF}{\mathbb{F}}
\newcommand{\III}{\mathcal{I}}
\newcommand{\SSS}{\mathbb{S}}
\newcommand{\BB}{\mathcal{B}}
\newcommand{\sss}{\mathfrak{s}}
\newcommand{\la}{\bm{\lambda}}
\newcommand{\al}{\bm{\alpha}}
\newcommand{\uu}{\underline}
\newcommand{\ms}{\mspace{2mu}}
\newcommand{\su}{\textnormal{supp}}
\newcommand{\vv}{\bm{v}}
\newcommand{\ww}{\bm{w}}
\newcommand{\sd}{\mathcal{S}}
\newcommand{\sal}{\SSS_{(a,l)}}
\newcommand{\lb}{\langle}
\newcommand{\rb}{\rangle}
\begin{document}


\title{On the Algebraic Structure of Linear Trellises}

\author{David~Conti 
        and~Nigel~Boston
\thanks{
This work was part of the Ph.D. research of D. Conti (e-mail: david.conti@ucdconnect.ie), supported by the Science Foundation Ireland grant 06/MI/006, and supervised by N. Boston (e-mail: boston@math.wisc.edu).
Some parts of this paper were presented at the 2012 International Zurich Seminar on Communications \cite{CB2} and the 2012 Allerton conference \cite{CB3}.}
}
\date{}

\maketitle
\thispagestyle{plain}

\begin{abstract}
Trellises are crucial graphical representations of codes. While conventional trellises are well understood, the general theory of (tail-biting) trellises is still under development. Iterative decoding concretely motivates such theory. 
In this paper we first develop a new algebraic framework for a systematic analysis of linear trellises which enables us to address open foundational questions. In particular, we present a useful and powerful characterization of linear trellis isomorphy. We also obtain a new proof of the Factorization Theorem of Koetter/Vardy  
and point out unnoticed problems for the group case. 

Next, we apply our work to: describe all the elementary trellis factorizations of linear trellises and consequently to determine all the minimal linear trellises for a given code; prove that nonmergeable one-to-one linear trellises are strikingly determined by the edge-label sequences of certain closed paths; prove self-duality theorems for minimal linear trellises; analyze quasi-cyclic linear trellises and consequently extend results on reduced linear trellises to nonreduced ones.  To achieve this, we also provide new insight into mergeability and path connectivity properties of linear trellises.

Our classification results are important for iterative decoding as we show that minimal linear trellises can yield different pseudocodewords even if they have the same graph structure.
\end{abstract}


\textit{\textbf{Index Terms}} -- Linear tail-biting trellises, linear block codes, nonmergeable trellises, minimal trellises.

\section{Introduction}
 
\textit{Trellis} representations of block codes play a prominent role in coding theory and practice as they provide combinatorial insight into algebraic codes and enable the design of efficient decoding. 

Traditionally trellises were separated in two classes, \textit{conventional trellises} (introduced in \cite{BCJR}) and \textit{tail-biting trellises} (introduced in \cite{SoTi}), however it is no harm to see conventional ones as a subclass of tail-biting ones, and we will do so. In fact for theoretical purposes it is convenient to do so (this point of view was already adopted for example in \cite{GW,GW2}). 

Also, the actual objects of study in trellis theory are \textit{linear trellises}, i.e. trellises with a linear structure (trellises without any algebraic structure are infeasible to control). The notion of linear trellis was formalized though only at a late stage, by Koetter/Vardy \cite{KV2}  (while McEliece \cite{Mc} acknowledged a bit earlier the linear structure of \textit{minimal conventional trellises}). 

The study of trellises was confined to conventional ones until late in the 90s. The interest for a general theory of (linear tail-biting) trellises surged along with the interest for suboptimal iterative decoding (sparked by Wiberg's thesis \cite{W} and the invention of Turbo Codes) as its complexity benefits from the long known fact proven in \cite{SoTi} that nonconventional representations can achieve smaller size (while optimal decoding does not, see for example \cite{RB}).  

The first works \cite{CFV,KV0,KV2,KV} towards such general theory provided a rigorous basis to the subject and had a strong influence on what came next. In particular, Koetter/Vardy \cite{KV2} considered the \textit{trellis product} operation (introduced first for conventional trellises in \cite{KsSo} and then extended to all trellises in \cite{CFV}) and proved that \underline{all} linear trellises factor as products of \textit{elementary trellises} (\textit{Factorization Theorem}), which can be more easily handled. This groundbreaking result enabled them to achieve in \cite{KV} breakthrough on the minimality problem (which is far more complicated in the general case than in the conventional case) by narrowing down the search for minimal linear trellises to computable \textit{characteristic sets} of elementary trellises, which inspired much of the subsequent research.

Steady subsequent research (e.g. \cite{B,FoGl2,GW,GW2,N,RB,SB,SB2,SKSR}) on the top of the seminal works has led to a fairly rich development of linear trellis theory. 

However, some important foundational questions have not been addressed, and as a consequence the problem of classifying minimal linear {trellis} representations has been addressed only partially. This problem is important not only for theoretical purposes but also for iterative/LP decoding (as we point out in Subsection \ref{pointout}). 

In this paper we build an algebraic framework that gives extra insight into linear trellises, answers such fundamental questions, and provides new algebraic tools that we apply to address the classification problem and more. Mathematically, we provide a thorough analysis of the monoid of linear trellises with trellis product.


\subsection{Contents and contributions of the paper}

\begin{rmk}
{All trellises in Sections \ref{algframpap}, \ref{factpap} and Appendix \ref{graphcarpap} will be reduced}. 
\end{rmk}


 {\textbf{Section II} (\textit{Preliminaries and basics on trellises}):
In this section we fix the notation/terminology and go over the necessary background
with the intent to make our
 treatment as self-contained as possible.
The reader versed in trellis theory may skim through this part, except for paying some attention to our notation for spans (given in Subsection \ref{eldef}).


 \textbf{Section III} (\textit{Algebraic framework for linear trellises and new foundational insights}):
 In this section we introduce the core machinery and results on the top of which the rest of the paper is built on. The structure of linear trellises is analyzed from an algebraic perspective and related to the trellis product operation. Our methodology consists in studying  the \textit{label code} $\SSS(T)$ and showing how its properties correspond to properties of $T$. The key tools that we introduce are \textit{span subcodes} of $\SSS(T)$ (Subsection \ref{spansubpap}) and \textit{product bases} (Subsection \ref{prodbaspap}), which describe the structure of $\SSS(T)$. 
 
 The foundational paper of Koetter/Vardy \cite{KV2} on linear trellises lacks a framework that goes beyond proving the existence of elementary trellis factorizations. Our algebraic framework enables us to answer the following fundamental questions:

\begin{enumerate}
\item  How is the trellis product operation encoded by the label code $\SSS(T)$? 
\item When are two linear trellises \textit{linearly isomorphic} (i.e. equivalent)?
\item Is the linear structure of a linear trellis essentially unique? 
 Equivalently, if two linear trellises are isomorphic are they linearly isomorphic too? 
\end{enumerate}
  
In answering $1)$ we provide a new proof of the Factorization Theorem, which is simpler and more explanatory. In answering $2)$ we provide a fundamental characterization of linear trellis isomorphy (Theorem \ref{thm15}) that tells us how span subcodes discriminate linear trellises. This is the first criterion of such type in the literature and it is crucial in proving the main following results of the paper, in particular for trellis classification purposes (see Section \ref{charactpap}). It also allows us to answer $3)$ positively. So (for possible implementation purposes) there is no need to search for the best linear structure of a linear trellis, and to look for linear isomorphisms of linear trellises is equivalent to look for isomorphisms (which also resolves the confusion in the literature where some authors use the first notion while others use the second one). 
   
   At the end of this section we point out some important overlooked aspects of \textit{group trellises} by showing that the notion of \textit{elementary group trellis} given in \cite{KV2} is too strict for the Factorization Theorem to hold also in that case, and discuss a possible remedy.
 

\textbf{Section IV} (\textit{Elementary trellis factorizations of linear trellises}):
In this section we apply the work of the previous one to answer the following fundamental questions:

\begin{enumerate}
\item How does a linear trellis $T$ determine its elementary trellis factorizations?
\item How can we compute (all the possible) elementary trellis factorization of $T$?
\end{enumerate}

In particular, we show that the span distribution of any such factorization of $T$ is unique (a striking fact which passed unnoticed in \cite{KV2}) and is only determined by the underlying graph structure of $T$. We also determine precise conditions under which $T$ has a unique elementary trellis factorization, and give a formula for the number of such factorizations for linear trellises with no repeated spans.

Besides the theoretical value of the above questions, to be able to compute elementary trellis factorizations of $T$ is important since such a factorization yields important data that we can use to check more easily whether certain properties hold or not for $T$. Note that there are important classes of linear trellises that are not presented as elementary trellis products, e.g. \textit{BCJR trellises} (see \cite{GW,GW2,N}).
Also, our results 
yield a method to find out (efficiently) whether two products of elementary trellises are equal or not. 

The work of this section is crucial for being able to classify and determine all the minimal linear trellises for a fixe code as these are precisely constructed as elementary trellis products. 
 

 \textbf{Section V} (\textit{Insights into the nonmergeable property, classification of nonmergeable linear trellises through multicyles, and complete classification/computation of minimal linear trellises}):
It is known that \textit{biproper} nonconventional linear trellises may be \textit{mergeable}, even in the \textit{one-to-one} case (see the trellis depicted after Observation \ref{obs51}), while for conventional trellises this never happens (\cite{V}). No explanation of this phenomenon has been given so far though. In this section we present first a new characterization of the nonmergeable property for one-to-one linear trellises which explains when and how such trellises fail to satisfy that property, and reconciles the conventional and nonconventional cases. This characterization amounts to a one-to-one correspondence between all long enough paths and their edge-label sequences.

We then use this result along with Theorem \ref{thm15} 
to prove the striking fact that such trellises (in particular minimal linear trellises) are completely determined (and so classifiable) by their codes of edge-label sequences of closed paths of length greater than the trellis length (\textit{multicycles}). 
In other words, a trellis $T$ of length $n$ does not only represent a single code $C(T)$, but a sequence of codes $C(T), C^{2}(T), C^{3}(T),\ldots$, of respective length $n,2n,3n,\ldots$, and for nonmergeable linear one-to-one trellises $C^{2}(T)$ completely determines $T$. 

Next, we show how to determine/compute all the minimal linear trellises with same graph structure for a given code $C$ from the knowledge of its so-called  \textit{characteristic matrix} from \cite{KV}. In particular we give a formula for their number. Combining this with the results of \cite{KV}  yields then a method to compute and so efficiently classify  all the minimal linear trellises for a fixed code. Besides its theoretical value, we then discuss how to be able to do is very important for iterative (or LP) decoding applications as we show that two minimal linear trellises with the same graph structure for the same code can still yield drastically different \textit{pseudocodewords}, a striking phenomenon which was never observed before.

We also apply the above results to deduce some very interesting results on \textit{self-duality} of  linear trellises, e.g. we show that a \textit{KV-trellis} (and so in particular a minimal linear trellis) $T$ is self-dual if and only if $C^{2}(T)$ is. 
  

 \textbf{Section VI} (\textit{Factorizations and isomorphisms of quasi-cyclic linear trellises, and extension of results on reduced linear trellises to nonreduced ones}):  
In this section we further demonstrate the power of the framework developed in Section \ref{algframpap} by giving some other interesting applications. We first define \textit{quasi-cyclic trellises} and prove that in the linear case their elementary trellis factorizations and isomorphisms allow a quasi-cyclic version.  We use then this result to show that our Theorems \ref{isolin}, \ref{linstr}, and \ref{thm53} extend to nonreduced linear trellises as well. In order to achieve that we  also prove some independent interesting results on the structure of trellises, in particular, we show how the nonreduced case and the reduced case can be linked to each other by means of \textit{trellis covers} (which we define in the previous section in order to deal with duality questions), for which we provide some basic results. 
This link is very useful for extending results on reduced linear trellises to nonreduced ones.
Note that nonreduced trellises naturally arise by taking duals of reduced  trellises or wrapped fragments of quasi-cyclic trellises and it is thus worth to have results concerning them.


 \textbf{Appendix A} (\textit{Connectivity and graphical properties of linear trellises}):
In this appendix we prove some important facts on connectivity and path properties of linear trellises that have been overlooked and have not appeared in the literature, and which we make use of  in  the paper. In particular, we prove that a linear trellis is connected as a directed graph if and only if it is as an undirected graph, and a characterization of  the ``reduced'' property for connected linear trellises is given. 


 \textbf{Appendix B} (\textit{Graphical characterization of span distributions of linear trellises}):
In this appendix we present an alternative proof of the uniqueness of the span distribution of elementary factorizations of linear trellises and an alternative method to compute such distribution that are based on a graphical approach given by considering intersections of paths.


\section{Preliminaries}
{\color{red}
 }

\subsection{Some general notation}\label{prelimi}
The ring of integers modulo $n$ is denoted by  $\ZZ_{n}$. Finite fields are denoted by 
 $\FF$. The cardinality of a set $S$ is denoted by $|S|$.  
 The \textit{support} of $\bm{v}=(v_{i})_{i\in\III}\in\prod_{i\in\III} V_{i}$ (where the $V_{i}$'s are vector spaces) is   $\su(\bm{v}):=\{i\in\III|v_{i}\neq0\}$. 
The coordinate indices of products $V_{0}\times\ldots\times V_{n-1}$ (e.g. $\FF^{n}$) will be seen as lying inside $\ZZ_{n}$ in order to perform modular operations on them. 
For $\bm{v}\in V_{0}\times\ldots\times V_{n-1}$ we denote by $\sigma$ the left cyclic shift given by 
$$\sigma(v_{0}v_{1}\ldots v_{n-1}):= v_{1}\ldots v_{n-1}v_{0}$$
We use angle brackets $\lb\rb$ to indicate the subspace generated by the elements of a vector space $V$ given within the same brackets. 
A sum $\sum_{i\in\mathcal{I}}V_{i}$ of a family of subspaces $\{V_{i}\}_{i\in\mathcal{I}}$ of a vector space $V$ is a \textit{direct sum} if given $\vv^{i}\in V_{i}$ such that $\sum_{i\in\mathcal{I}}\vv^{i}=0$ then $\vv^{i}=0$ for all $i$. In that case we write it also as  $\oplus_{i\in\mathcal{I}}V_{i}$.
\subsection{Basics on trellises}
A  (\textit{tail-biting})\textit{ trellis}\index{trellis}   of \textit{length $n$} over  $\FF$ is a directed graph $T=(\mathcal{V},\mathcal{E})$\nomenclature[t]{$T$}{trellis}  with $\FF$-labeled edges and a partition 
into \textit{vertex sets}
$\mathcal{V}=\sqcup_{i\in\ZZ_{n}}V_{i}(T)$ such that  any edge starting in $V_{i}(T)$\nomenclature[vti]{$V_{i}(T)$}{vertex set of $T$ at time $i$} 
must end in $V_{i+1}(T)$.
We assume that \textit{parallel edges}\index{parallel edges} (i.e. edges starting and ending at same vertices) must have different labels. 
 The set of edges of $T$ that start in  $V_{i}(T)$ and end in $ V_{i+1}(T)$ is thus a subset 
 $$E_{i}(T)\subseteq V_{i}(T)\times\FF\times V_{i+1}(T)$$
  We call it an \textit{edge set} of $T$. 
A \textit{linear trellis}\index{trellis!linear} over $\FF$ is a trellis $T$ over $\FF$ with an $\FF$-vector space structure on each $V_{i}(T)$ such that each $E_{i}(T)$ is a vector subspace of $V_{i}(T)\times\FF \times V_{i+1}(T)$.   
We assume that trellises are \textit{trim}\index{trellis!trim}, i.e. 
each vertex has an outgoing and an incoming edge.
If the edge-labels of $T$ are all equal then they are essentially irrelevant. In that case we call $T$ also an {\textit{unlabeled trellis}}\index{trellis!unlabeled} and simply assume that all edge-labels are equal to $0$.   If   $|V_{0}(T)|=1$ then $T$ is said to be \textit{conventional}\index{trellis!conventional}.
 
 A  \textit{subtrellis} of $T$ is a trim subgraph \index{subtrellis}  $T'\subseteq T$. If in addition $T$ is linear  and $V_{i}(T')$, $E_{i}(T')$ are vector subspaces respectively of $V_{i}(T)$, $E_{i}(T)$ for all $i$, then $T'$ is a \textit{linear subtrellis}\index{subtrellis!linear}, and we write $T'\leq T$. 
 The \textit{cyclic}  \textit{shift}\index{shift of trellis} of $T$ by $j\in\ZZ$ positions is 
the trellis $\sigma^{j}(T)$\nomenclature[sigt]{$\sigma^{j}(T)$}{shift of $T$} defined by  
\begin{align*}
V&_{i}(\sigma^{j}(T)):=V_{i+j}(T)\\
E&_{i}(\sigma^{j}(T)):=E_{i+j}(T)
\end{align*}
We write $\sigma(T)$ for $\sigma^{1}(T)$.

Trellises are usually visualized by diagrams like the one below  for a linear trellis over the binary field $\FF_{2}$. 
 Vertex sets are plotted vertically.  The leftmost vertices are identified with the rightmost ones, and make up $V_{0}(T)$.  
 The edge-labels are represented by full lines for $1$ and dashed lines for $0$. 
  The vector space structure of the $V_{i}(T)$'s is given by vertex-labels.
 All the examples in this paper will be for trellises over $\FF_{2}$, except in Subsection \ref{grouppap}. 
\begin{center}
 \begin{tikzpicture}[yscale=.8,>=latex',shorten >=.9pt, shorten <=1.4pt, line width=.6pt]
  \tikzstyle{every node}=[draw,circle,fill=black,minimum size=2pt,
                        inner sep=0pt]                    
\foreach \x in {0,1,2,3,4,5}{
\node at (\x,2) {};};
\foreach \x in {0,1,2,3,4,5}{
\node at (\x,1) {};};
\foreach \x in {1,4}{
\node at (\x,3) {};};
\foreach \x in {1,4}{
\node at (\x,-0) {};};
\tikzstyle{every node}=[]

\draw [->] (0,1) -- node[above] {} (1,1);
\draw [->, dashed] (1,1) -- node[pos=.05, above] {} (2,2);
\draw [->] (2,2) -- node[above] {} (3,2);
\draw [->] (3,2) -- node[above] {} (4,3);
\draw [->, dashed] (3,2) -- node[pos=.05, below] {} (4,1);
\draw [->] (4,3) -- node[above] {} (5,2);
\draw [->] (4,1) -- node[above] {} (5,1);
\draw [->] (0,2) -- node[above] {} (1,3);
\draw [->] (1,3) -- node[above] {} (2,2);
\draw [->, dashed] (0,2) -- node[above] {} (1,2);
\draw [->] (1,2) -- node[pos=.05, below] {} (2,1);
\draw [->, dashed] (0,1) -- node[above] {} (1,0);
\draw [->, dashed] (1,0) -- node[above] {} (2,1);
\draw [->, dashed] (2,1) -- node[above] {} (3,1);
\draw [->, dashed] (3,1) -- node[above] {} (4,0);
\draw [->, dashed] (4,0) -- node[above] {} (5,1);
\draw [->] (3,1) -- node[pos=.05, above] {} (4,2);
\draw [->, dashed] (4,2) -- node[above] {} (5,2);

\draw (0,2) node [below] {\tiny$\uu{1}$};
\draw (1,2) node [below] {\tiny$\uu{10}$};
\draw (2,2) node [below] {\tiny$\uu{1}$};
\draw (3,2) node [below] {\tiny$\uu{1}$};
\draw (4,2) node [below] {\tiny$\uu{10}$};
\draw (5,2) node [below] {\tiny$\uu{1}$};
\draw (0,1) node [below] {\tiny$\uu{0}$};
\draw (1,1) node [below] {\tiny$\uu{01}$};
\draw (2,1) node [below] {\tiny$\uu{0}$};
\draw (3,1) node [below] {\tiny$\uu{0}$};
\draw (4,1) node [below] {\tiny$\uu{01}$};
\draw (5,1) node [below] {\tiny$\uu{0}$};

\draw (1,0) node [below] {\tiny$\uu{00}$};
\draw (4,0) node [below] {\tiny$\uu{00}$};
\draw (1,3) node [above] {\tiny$\uu{11}$};
\draw (4,3) node [above] {\tiny$\uu{11}$};
   \end{tikzpicture}
      \end{center}   

\subsubsection{\textbf{Morphisms}}
A \textit{morphism of trellises}\index{trellis!morphism} $f:T \rightarrow T'$ is a collection of maps $f_{i}:V_{i}(T)\rightarrow V_{i}(T')$, $i\in\ZZ_{n}$, such that 
$$v\alpha w\in E_{i}(T)\Longrightarrow f_{i}(v)\alpha f_{i+1}(w)\in E_{i}(T')$$
for all $i$. If the $f_{i}$'s are bijective and 
$$v\alpha w\in E_{i}(T)\Longleftrightarrow f_{i}(v)\alpha f_{i+1}(w)\in E_{i}(T')$$ 
for all $i$, we say that $f$ is an \textit{isomorphism} and that  $T$ and $T'$ are \textit{isomorphic}. In that case we  write $T\sim T'$\nomenclature[tziso]{$T\sim T'$}{isomorphic trellises}. \underline{Isomorphic trellises} \underline{must be regarded as equal} (since by renaming their vertices they are exactly the same trellis).
If $T$, $T'$, and the $f_{i}$'s are all linear we say that $f$ is \textit{linear}. If $f$ is a linear isomorphism then we say that $T$ and $T'$ are \textit{linearly isomorphic}, and write $T\simeq T'$. 
Two trellises $T,  T'$ are said to be \textit{structurally isomorphic}\index{structurally isomorphic trellises} if there exist bijective maps $f_{i}:V_{i}(T)\rightarrow V_{i}(T')$, $i\in\ZZ_{n}$, such that 
 for all $v\in V_{i}(T)$, $w\in V_{i+1}(T)$, 
 there are as many edges from $v$ to $w$ as from $f_{i}(v)$ to $f_{i+1}(w)$,
 that is,
 by forgetting all edge-labels $T$ and $T'$  have the same (ordered) graph structure.

\subsubsection{\textbf{Paths}}\label{special}
A (\textit{directed}) \textit{path}\index{path} $\bm{p}$ of length $m$ of a trellis $T$ is an ordered sequence
 $$v_{0}\alpha_{0}v_{1}\alpha_{1}\ldots v_{m-1}\alpha_{m-1}v_{m}$$
  such that $v_{j}\alpha_{j}v_{j+1}$ is an edge of $T$ for all $j=0,\ldots,m-1$. 
  The path is \textit{closed} if $v_{0}=v_{m}$. 
 We put 
\begin{align*}
L(\bm{p})&:=\alpha_{0}\ldots\alpha_{m-1}\\
\nu_{j}(\bm{p})&:=v_{j}
\end{align*} 
   By deinterleaving vertices and edge-labels, paths of length $m$ starting in $V_{i}(T)$ can be seen as vectors of $
\prod_{j=i}^{i+m}V_{j}(T)  
  \times \FF^{m}$.  
  If $v,w\in V_{i}(T)$ 
we denote by $\mathbb{P}(v,w)$\nomenclature[pvw]{$\mathbb{P}(v,w)$}{set of paths in $T$ from $v$ to $w$ of length $n$} the set of paths in $T$ from $v$ to $w$ of same length as $T$.
If $T$ is linear and all the $v_{j}$'s and $\alpha_{j}$'s are zero then $\bm{p}$ is a \textit{zero path}. The zero element of $V_{i}(T)$ will be  denoted also by $0_{i}$.

\subsubsection{\textbf{Cycles and associated codes}}
Let $n$ be the length of $T$.   
A \textit{cycle}\index{cycle} $\la$ of $T$ is a closed path of $T$ of length $n$  
starting in $V_{0}(T)$.  Cycles can be written as sequences $v_{0}\alpha_{0}\ldots v_{n-1}\alpha_{n-1}$ or (by deinterleaving) as pairs $(\vv,\al)\in\prod_{i\in\ZZ_{n}}V_{i}(T)\times \FF^{n}$. 
The \textit{label code}\index{label code} of $T$ is 
\begin{equation*}
\SSS(T):=\{\la|\la \textnormal{ is a cycle of T}\}
\end{equation*} 
 The \textit{code represented}\index{code represented by $T$} by $T$ is 
 \begin{equation*}
 C(T):=L(\SSS(T))\subseteq \FF^{n}
 \end{equation*}
  We say that $T$ is a \textit{trellis for} $C(T)$. 
If $T$ is linear so are $\SSS(T)$ and $C(T)$.
If each vertex of $T$ belongs to some cycle we say that $T$ is \textit{almost reduced}\index{trellis!almost reduced}.  If also each edge belongs to some cycle of $T$ we say that $T$ is \textit{reduced}\index{trellis!reduced}. 
A reduced trellis is identified by its label code. 
The \textit{trellis $T(S)$ spanned}\index{trellis spanned (or generated) by $S$}  by a subset $S\subseteq \SSS(T)$  
is the (largest) subtrellis of $T$ covered by the cycles in $S$.
The \textit{shift map} $$\sigma^{j}:\SSS(T)\rightarrow \SSS(\sigma^{j}(T))$$ is defined by $\sigma^{j}((\vv,\al)):=(\sigma^{j}(\vv),\sigma^{j}(\al))$.

\begin{rmk}
{While much} of our terminology goes back to \cite{KV2,KV}, within behavioral system theory  trellises can be seen also as dynamical systems (cf. \cite{FoGl, FoGl2}). 
So it makes sense to think of: the indexing set $\ZZ_{n}$ as a (circular) time axis; indices as \textit{time indices}\index{time index}; vertices as  \textit{states}. Also, in such terminology $\SSS(T)$ is the \textit{``behavior''} of $T$, $E_{i}(T)$ is a \textit{``constraint code''}, and ``almost reduced'' becomes  \textit{``state-trim''}.

\end{rmk}


\subsection{Classes of trellises}
Let $T$ be a trellis. Then $T$ is:
\begin{itemize}
\item  \textit{connected}\index{trellis!connected} if for any vertices $v\neq w$  there exists a path from $v$ to $w$ (in Appendix \ref{trellisconnect} we show that \textbf{a linear trellis is connected if and only if it is connected as an undirected graph})
\item \textit{one-to-one}\index{trellis!one-to-one} if $L:\SSS(T)\rightarrow C(T)$ is injective
\item \textit{biproper}\index{trellis!biproper} if different edges with same label never start or end in the same vertex
\item  \textit{mergeable}\index{trellis!mergeable} if there exist vertices $v\neq w$ in $V_{i}(T)$ for some $i$ such that the trellis resulting from merging $v$ with $w$ represents the same code as $T$, otherwise \textit{nonmergeable}\index{trellis!nonmergeable} 
\end{itemize}
Clearly, a nonmergeable trellis is connected, and a biproper conventional trellis is one-to-one (the ``conventional'' hypothesis is necessary here).
\subsubsection{\textbf{Minimal trellises}}Given trellises $T,T'$  of same length, we say that $T'$ is \textit{smaller} than $T$ if $|V_{i}(T')| \leq|V_{i}(T)|$ for all $i$, with at least one strict inequality. If there exists no $T'$ smaller than $T$ such that $C(T')=C(T)$ then we say that $T$ is a \textit{minimal trellis}\index{trellis!minimal} (for $C(T)$). While this is the principal notion of trellis size/minimality, other notions   can be given, especially by refining this one  (see \cite{KV}). 
 However all notions of minimality coincide for conventional linear trellises, as the following holds:

\begin{thm}[Minimal Conventional Trellis \cite{M,KsV,V}]\label{minconpap}
A linear code has a unique minimal conventional trellis representation (up to isomorphism). Such trellis is linear and minimizes all $|V_{i}(T)|$ and $|E_{i}(T)|$ simultaneously.
\end{thm}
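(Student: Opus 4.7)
The plan is to prove the theorem in three stages: derive a state-space lower bound valid for every conventional trellis of $C$, construct a linear trellis attaining this bound at every $i$ simultaneously, and extract edge-minimality and uniqueness from the resulting rigid vertex-minimal structure.

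For the lower bound, I would use a past-future splicing argument. Fix a cut $i$ and, for $c\in C$, let $v_i(c)\in V_i(T)$ be the vertex visited at time $i$ by the cycle corresponding to $c$. If $v_i(c)=v_i(c')$, then the past of $c$ concatenated with the future of $c'$ still traces out a cycle of $T$, so it lies in $C$; this forces the state map $c\mapsto v_i(c)$ to factor through the linear equivalence $c\equiv_i c'\Leftrightarrow c-c'\in C_{[0,i)}+C_{[i,n)}$, where $C_{[0,i)}:=\{c\in C:\su(c)\subseteq[0,i)\}$ and $C_{[i,n)}$ is defined analogously. Since these two subspaces intersect trivially, $|V_i(T)|\geq|\FF|^{\dim C-\dim C_{[0,i)}-\dim C_{[i,n)}}$, the classical Muder bound. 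I would then construct $T_{\min}$ with vertex spaces $V_i:=C/(C_{[0,i)}+C_{[i,n)})$ and edge sets $E_i:=\{([c]_i,c_i,[c]_{i+1}):c\in C\}$; well-definedness, linearity, and $C(T_{\min})=C$ are immediate from linearity of $C$, and the Muder bound is attained at every $i$ simultaneously.

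For uniqueness, let $T$ be any minimal conventional trellis for $C$. Vertex-minimality combined with the lower-bound argument forces the canonical factor map $V_i(T_{\min})\twoheadrightarrow V_i(T)$ to be a bijection at each $i$; edge compatibility follows because both trellises route each $c\in C$ along a unique cycle, yielding $T\sim T_{\min}$ and in particular that $T$ is linear.

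The main obstacle is the edge-minimality claim, since vertex-minimality alone does not a priori control $|E_i|$: a vertex-minimal trellis might still carry redundant parallel edges or fail to be biproper. The resolution is to verify directly that $T_{\min}$ is biproper---distinct edges leaving $[c]_i$ with a common label $\alpha$ would exhibit two codewords agreeing on $[0,i+1)$ but lying in distinct classes modulo $C_{[0,i+1)}+C_{[i+1,n)}$, a contradiction. In a biproper linear conventional trellis, $|E_i|$ is pinned down by $|V_i|$ together with a local branching exponent depending only on $C$, so simultaneous vertex-minimality forces simultaneous edge-minimality, completing the theorem.
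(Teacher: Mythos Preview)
The paper does not prove this theorem; it is quoted as background from the classical references \cite{M,KsV,V}. So there is no in-paper argument to compare against. Your outline is essentially the standard proof from those sources: Muder's state bound via past/future splicing, the quotient (BCJR-type) construction of $T_{\min}$, and biproperness for edge control.

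Two places need tightening. First, the canonical map goes the other way. The splicing argument shows that if two codewords share a state at time $i$ in $T$ then their difference lies in $C_{[0,i)}+C_{[i,n)}$, which yields a well-defined surjection $V_i(T)\twoheadrightarrow V_i(T_{\min})$ (after choosing one cycle per codeword, or working with coset representatives --- your phrasing silently assumes $T$ is one-to-one). This is precisely the inequality $|V_i(T)|\ge|V_i(T_{\min})|$; when $T$ is vertex-minimal it becomes a bijection and you invert it. As written, a map $V_i(T_{\min})\to V_i(T)$ is not a priori well-defined for arbitrary $T$.

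Second, your edge-minimality paragraph does not establish $|E_i(T)|\ge|E_i(T_{\min})|$ for an \emph{arbitrary} conventional $T$ representing $C$. Knowing that $T_{\min}$ is biproper pins down $|E_i(T_{\min})|$, and uniqueness tells you every vertex-minimal $T$ has those same edge counts; but this does not exclude a trellis with more vertices yet fewer edges at some index. The fix is to rerun the splicing argument at the edge level: if two codewords traverse the same edge in $E_i(T)$ they agree at position $i$ and share states at $i$ and $i+1$, so their difference lies in $C_{[0,i)}+C_{[i+1,n)}$, giving $|E_i(T)|\ge |C|/|C_{[0,i)}+C_{[i+1,n)}|$, which $T_{\min}$ attains.
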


We denote the minimal conventional trellis for a linear code $C$  by $T^{*}(C)$. For nonconventional trellises the situation is not as easy, and we will go back to that in Section \ref{charactpap}.
For much on  minimal conventional trellises see the comprehensive survey \cite{V}, while for the nonconventional case see \cite{BJ,CFV,KV,RB,SB,SKSR}.

\begin{rmk}\label{nolinminpap}
If a linear trellis $T$ is minimal amongst all linear trellises is it so also amongst \underline{all trellises}?  By Theorem \ref{minconpap} if $T$ is conventional then the answer is yes. But in general the answer is not clear (a nonconventional minimal trellis for a linear code may be not linear, see \cite{KV}). This question is overlooked in the literature as only  ``linear minimality'' is really dealt with. 
Indeed there is no systematic way to construct/control nonlinear trellis representations, while linearity makes things feasible. 
We will stay in this better world, so 
 \textbf{by a minimal linear trellis  
for $C$ we will always mean a linear trellis which is minimal amongst all linear trellises for $C$}. 
Note also that the usual assumption in the literature that minimal linear trellises  be reduced    is redundant (see Theorem \ref{minred}). 
\end{rmk}


\subsubsection{\textbf{Self-dual trellises}}
We say that a linear trellis $T$ is \textit{self-dual} if $T\sim T^{\perp}$. Here $T^{\perp}$ is the dual trellis of $T$ as defined in \cite{Fonorm}, which satisfies $C(T^{\perp})=(C(T))^{\perp}$. For $\FF=\FF_{2}$, $T^{\perp}$ is given by  
\begin{align*}
V_{i}(T^{\perp})&:= V_{i}(T)\equiv \FF_{2}^{r_{i}}\\
E_{i}(T^{\perp})&:=(E_{i}(T))^{\perp}
\end{align*}
 where the dual of $E_{i}(T)\leq \FF_{2}^{r_{i}+1+r_{i+1}}$ is with respect to the standard scalar product. The definitions of dual trellis given in \cite{N} and \cite{KV} coincide with the above one when $T$ is minimal. Note that $T^{\perp}$ may be not trim, even if $T$ is reduced. However, $T^{\perp}$ is minimal  (and so reduced) if and only and if $T$ is. See also \cite{GW2} for more on trellis dualization.

\subsection{Spans}\label{eldef}

 Let $n\geq0$ be fixed.
 For $a,b\in\ZZ_{n}$ we put 
 \begin{equation*}
 [a,b]:=\{a,a+1,\ldots,a+{l}\}\subseteq\ZZ_{n}
 \end{equation*}
where $l=\min\{l\geq0|a+{l}\equiv b\mod n\}$. This is a (\textit{circular}) \textit{interval}\index{interval of $\ZZ_{n}$} of $\ZZ_{n}$. We also put $(a,b]:=[a,b]\setminus\{a\}$. 
Let now  $\prod_{i\in\ZZ_{n}}V_{i}$ be a product  of vector spaces which we think of as alphabets. 
Given $a\in\ZZ_{n}$, $0\leq l\leq n-1$, and $\bm{v}\in  \prod_{i\in\ZZ_{n}}V_{i}$,  we say that the pair $(a,l)$\nomenclature{$(a,l)$}{span, page (see also page )} is  a \textit{span} of $\bm{v}$ of \textit{length} $l$ if $\textnormal{supp}(\bm{v})\subseteq [a,a+{l}]$.
The \textit{starting} and \textit{ending point} of $(a,l)$ are respectively $a$ and $a+{l}$. 
We also say that $\emptyset$ is a span of length $-1$ of $\bm{0}\in V$ and dually that $\ZZ_{n}$ is a span of length $n$ of all $\bm{v}\in V$. No 
starting and ending point are associated to $\emptyset$ and $\ZZ_{n}$, so we also say that these spans are \textit{degenerate}.
Nevertheless we will use respectively the notation $(a,-1)$, $(a,n)$ for the spans $\emptyset$, $\ZZ_{n}$ too.
A generic span will be also simply denoted by the letter $\sss$ when we do not need to specify its starting point and length.
 If no alphabets are specified when talking of spans of vectors in $\FF^{n}$ we tacitly assume  
that they are all one-dimensional.

\begin{rmk}
In the classical terminology one calls $[a,b]$  a span of $\vv$ if $\textnormal{supp}(\bm{v})\subseteq[a,b]$. However in tail-biting trellis theory we want to distinguish between spans with different starting and ending points, and so for $c\neq d$ we want $[c,c-1]$ and $[d,d-1]$ to be different spans, which clashes with $[c,c-1]=[d,d-1]$. Similarly, in \cite{KV} and subsequent works $(a,b]$ is called a span of $\vv$ if $\textnormal{supp}(\bm{v})\subseteq[a,b]$,
leading for example to $(0,0]$ being a span of $100$ and $(1,1]$ being not, while $(1,1]=(0,0]$.
Our terminology avoids such formal abuses while remaining compatible with the literature. 
In fact a span should still be thought of as an interval. The notation $(a,l)$ rather serves  to parametrize spans and to discriminate between them. It is also convenient for proofs by induction on span length. 
\end{rmk}
 Spans have a natural partial order: we put 
 $$(a_{1},l_{1})\leq(a_{2},l_{2})$$  if and only if $(l_{1}\leq l_{2}< n-1$ and $[a_{1},a_{1}+l_{1}]\subseteq[a_{2},a_{2}+l_{2}])$, or $(l_{2}=n-1$ and  $(a_{1},a_{1}+l_{1}]\subseteq(a_{2},a_{2}+l_{2}])$, or $l_{1}=-1$, or $l_{2}=n$.
 This yields the following Hasse diagram.
\begin{center}
 \begin{tikzpicture}[yscale=.65,xscale=.8, line width=.6pt]
                                      \tikzstyle{every node}=[]
                                     \foreach \a in {0,1,2,3}{
                        \foreach \l in {0,1,2}{
                       \draw  (2*\a,1.5*\l) -- (2*\a,1.5*\l+1.5);};};
                        \foreach \a in {1,2,3}{
                        \foreach \l in {0,1,2}{
                       \draw  (2*\a,1.5*\l) -- (2*\a-2,1.5*\l+1.5);};};
                        \foreach \l in {0,1,2}{
                       \draw  (0,1.5*\l) -- (6,1.5*\l+1.5);};
                                \foreach \a in {0,1,2,3}{
                       \draw  (2*\a,0) -- (3,-1.5);
                       \draw  (2*\a,4.5) -- (3,6);};
            
             \tikzstyle{every node}=[fill=white,minimum size=0pt,
                        inner sep=2pt]\small
          
                        \foreach \a in {0,1}{
                        \foreach \l in {0,1}{
    \draw  (2*\a,1.5*\l) node {$(\a,\l)$};};};
         
                        \foreach \l in {0,1}{
        \draw  (6,1.5*\l) node {$(n-1,\l)$};
        };
    
                     \foreach \a in {0,1}{
     \draw  (2*\a,4.5) node {$(\a,n-1)$};};
     
        \draw (6,4.5) node {$(n-1,n-1)$};
 \draw (0,3) node {$\cdots$};
 \draw (2,3) node {$\cdots$};   
  \draw (6,3) node {$\cdots$};   
 
                           \foreach \a in {2}{
                        \foreach \l in {0,1,2,3}{
                       \draw  (2*\a,1.5*\l) node {$\cdots$};};};

    \draw (3,-1.5) node {$\emptyset$};
     \draw (3,6) node {$\ZZ_{n}$};
    
%
%
%

      \end{tikzpicture}    
    \end{center}
  If $(a_{1},l_{1})\leq(a_{2},l_{2})$ we also say that $(a_{1},l_{1})$ is \textit{contained} in $(a_{2},l_{2})$.
   A span $(a,l)$ is said to be \textit{conventional}\index{span!linear} if $(a,l)\leq(0,n-1)$.

\subsection{Elementary trellises}\label{elemtrelliscor}
Let $(a,l)$ be a span of $\al\in\FF^{n}$, with $0\leq l \leq n$. We denote by $\al|(a,l)$ the \textit{elementary trellis for $\al$ of span $(a,l)$}\index{trellis!elementary}\nomenclature[alp]{$\al|(a,l)$}{elementary trellis}. This is defined as follows:  
\begin{itemize}
\item $V_{i}(\al|(a,l)):=0$ for all $i\in\ZZ_{n}\setminus (a,a+l]$
\item $V_{i}(\al|(a,l)):=\FF$ for all $i\in (a,a+l]$
\item $E_{i}(\al|(a,l)):=\lb(v_{i},\alpha_{i},v_{i+1})\rb$ for all $i$, where $v_{i}:=0$ if $i\in\ZZ_{n}\setminus (a,a+l]$ and $v_{i}:=1$ otherwise
\end{itemize}
Clearly $\al|(a,l)$ is: reduced and linear; minimal if and only if $(a,l)$ is a minimal span of $\al$; one-to-one if and only if $\al\neq0$. 
Also, a
linear trellis $T$ is (isomorphic to) an elementary trellis if and only if $\dim\SSS(T)=1$. This justifies the adjective ``elementary''. 

Note that elementary trellis graph structures correspond to spans, thanks to our terminology. In fact the graph structure of $\al|(a,l)$ does not depend on $\al$.

\begin{ex} All the possible elementary trellises for $10$ are given by
\begin{center}
 \begin{tikzpicture}[yscale=.8,>=latex',shorten >=.9pt, shorten <=1.4pt, line width=.6pt]
  \tikzstyle{every node}=[draw,circle,fill=black,minimum size=2pt,
                        inner sep=0pt]                    
\foreach \x in {0,1,2}{
\node at (\x,0) {};};
\foreach \x in {1}{
\node at (\x,1) {};};

\tikzstyle{every node}=[]
\draw [->,dashed] (0,0) -- (1,0);
\draw [->,dashed] (1,0) -- (2,0);

\draw [->] (0,0) -- (1,1);
\draw [->,dashed] (1,1) -- (2,0);

\draw (0,0) node [below] {\tiny ${\uu{0}}$};
\draw (1,0) node [below] {\tiny ${\uu{0}}$};
\draw (2,0) node [below] {\tiny ${\uu{0}}$};

\draw (1,1) node [above] {\tiny ${\uu{1}}$};
\draw (-1,.5) node {$10|(0,1)=$};
\begin{scope}
[xshift=4.5cm]
  \tikzstyle{every node}=[draw,circle,fill=black,minimum size=2pt,
                        inner sep=0pt]                    
\foreach \x in {0,1,2}{
\node at (\x,0) {};};
\foreach \x in {0,2}{
\node at (\x,1) {};};

\tikzstyle{every node}=[]
\draw [->,dashed] (0,0) -- (1,0);
\draw [->,dashed] (1,0) -- (2,0);

\draw [->] (0,1) -- (1,0);
\draw [->,dashed] (1,0) -- (2,1);

\draw (0,0) node [below] {\tiny ${\uu{0}}$};
\draw (1,0) node [below] {\tiny ${\uu{0}}$};
\draw (2,0) node [below] {\tiny ${\uu{0}}$};

\draw (0,1) node [above] {\tiny ${\uu{1}}$};
\draw (2,1) node [above] {\tiny ${\uu{1}}$};
\draw (-1,.5) node {$10|(1,1)=$};
\end{scope}
\begin{scope}
[yshift=-1.5cm]
  \tikzstyle{every node}=[draw,circle,fill=black,minimum size=2pt,
                        inner sep=0pt]                    
\foreach \x in {0,1,2}{
\node at (\x,0) {};};

\tikzstyle{every node}=[]
\draw [->,dashed] (0,0) -- (1,0);
\draw [->,dashed] (1,0) -- (2,0);

\draw [->] (0,0) .. controls (.3,.3) and (.7,.3) .. (1,0);

\draw (0,0) node [below] {\tiny ${\uu{0}}$};
\draw (1,0) node [below] {\tiny ${\uu{0}}$};
\draw (2,0) node [below] {\tiny ${\uu{0}}$};

\draw (-1,0) node {$10|(0,0)=$};
\end{scope}
\begin{scope}
[xshift=4.5cm,yshift=-2cm]
  \tikzstyle{every node}=[draw,circle,fill=black,minimum size=2pt,
                        inner sep=0pt]                    
\foreach \x in {0,1,2}{
\node at (\x,0) {};};
\foreach \x in {0,1,2}{
\node at (\x,1) {};};

\tikzstyle{every node}=[]
\draw [->,dashed] (0,0) -- (1,0);
\draw [->,dashed] (1,0) -- (2,0);

\draw [->] (0,1) -- (1,1);
\draw [->,dashed] (1,1) -- (2,1);

\draw (0,0) node [below] {\tiny ${\uu{0}}$};
\draw (1,0) node [below] {\tiny ${\uu{0}}$};
\draw (2,0) node [below] {\tiny ${\uu{0}}$};

\draw (0,1) node [above] {\tiny ${\uu{1}}$};
\draw (2,1) node [above] {\tiny ${\uu{1}}$};
\draw (-1,.5) node {$10|\ZZ_{2}=$};
\end{scope}
      \end{tikzpicture}
    \end{center}
Two more elementary trellis graph structures of length 2 are missing, namely those corresponding to the spans $(1,0)$ and $\emptyset$ (which are not spans of $10$). These can be given by
\begin{center}
 \begin{tikzpicture}[yscale=.8,>=latex',shorten >=.9pt, shorten <=1.4pt, line width=.6pt]
  \tikzstyle{every node}=[draw,circle,fill=black,minimum size=2pt,
                        inner sep=0pt]                    
  \tikzstyle{every node}=[draw,circle,fill=black,minimum size=2pt,
                        inner sep=0pt]                    
\foreach \x in {0,1,2}{
\node at (\x,0) {};};

\tikzstyle{every node}=[]
\draw [->,dashed] (0,0) -- (1,0);
\draw [->,dashed] (1,0) -- (2,0);

\draw [->] (1,0) .. controls (1.3,.3) and (1.7,.3) .. (2,0);

\draw (0,0) node [below] {\tiny ${\uu{0}}$};
\draw (1,0) node [below] {\tiny ${\uu{0}}$};
\draw (2,0) node [below] {\tiny ${\uu{0}}$};

\draw (-1,0) node {$01|(1,0)=$};
\begin{scope}
[xshift=4.5cm]
  \tikzstyle{every node}=[draw,circle,fill=black,minimum size=2pt,
                        inner sep=0pt]                    
\foreach \x in {0,1,2}{
\node at (\x,0) {};};

\tikzstyle{every node}=[]
\draw [->,dashed] (0,0) -- (1,0);
\draw [->,dashed] (1,0) -- (2,0);

\draw (0,0) node [below] {\tiny ${\uu{0}}$};
\draw (1,0) node [below] {\tiny ${\uu{0}}$};
\draw (2,0) node [below] {\tiny ${\uu{0}}$};

\draw (-1,0) node {$00|\emptyset=$};
\end{scope}
      \end{tikzpicture}
    \end{center}
\end{ex}

\subsection{Trellis product}
Let $T$ and $T'$ be trellises of same length. The \textit{trellis product}\index{trellis!product} $T\otimes T'$\nomenclature[tzzpr]{$T\otimes T'$}{trellis product of $T$ and $T'$} is
given by
\begin{gather*}
V_{i}(T\otimes T'):=V_{i}(T)\times V_{i}(T')\\
E_{i}(T\otimes T'):=\\
\{vv'(\alpha+\alpha')ww'| (v\alpha w, v'\alpha' w')\in E_{i}(T)\times E_{i}(T')\}
\end{gather*} for all $i$.
Below is an easy example:
\begin{center}
  \begin{tikzpicture}[yscale=.8,>=latex',shorten >=.9pt, shorten <=1.4pt, line width=.6pt]
  \begin{scope}
 [yshift=-2cm,xshift=2cm]

  \tikzstyle{every node}=[draw,circle,fill=black,minimum size=2pt,
                        inner sep=0pt]                    
\foreach \x in {1,2}{
\node at (\x,2) {};};
\foreach \x in {0,1,2,3}{
\node at (\x,1) {};};
\tikzstyle{every node}=[]
  \draw [->,dashed] (0,1) -- (1,2);
\draw [->,dashed] (1,1) -- (2,1);
\draw [->,dashed] (2,1) -- (3,1);
  \draw [->,dashed] (0,1) -- (1,1);
\draw [->] (1,2) -- (2,2);
\draw [->] (2,2) -- (3,1);
\draw [->] (1,1) -- (2,2);
\draw [->,dashed] (1,2) -- (2,1);
\draw (0,1) node [below] {\tiny ${\uu{00}}$};
\draw (1,1) node [below] {\tiny ${\uu{00}}$};
\draw (2,1) node [below] {\tiny ${\uu{00}}$};
\draw (3,1) node [below] {\tiny ${\uu{00}}$};
\draw (1,2) node [above] {\tiny ${\uu{10}}$};
\draw (2,2) node [above] {\tiny ${\uu{01}}$};
\end{scope}
\begin{scope}
[xshift=0cm]
  
  \tikzstyle{every node}=[draw,circle,fill=black,minimum size=2pt,
                        inner sep=0pt]                    
\foreach \x in {1}{
\node at (\x,2) {};};
\foreach \x in {0,1,2,3}{
\node at (\x,1) {};};
\tikzstyle{every node}=[]
  \draw [->,dashed] (0,1) -- (1,2);
\draw [->,dashed] (1,1) -- (2,1);
\draw [->,dashed] (2,1) -- (3,1);
  \draw [->,dashed] (0,1) -- (1,1);
\draw [->,dashed] (1,2) -- (2,1);
\draw (0,1) node [below] {\tiny ${\uu{0}}$};
\draw (1,1) node [below] {\tiny ${\uu{0}}$};
\draw (2,1) node [below] {\tiny ${\uu{0}}$};
\draw (3,1) node [below] {\tiny ${\uu{0}}$};
\draw (1,2) node [above] {\tiny ${\uu{1}}$};
\end{scope}

\begin{scope}
[xshift=4cm]
    \tikzstyle{every node}=[draw,circle,fill=black,minimum size=2pt,
                        inner sep=0pt]                    
\foreach \x in {2}{
\node at (\x,2) {};};
\foreach \x in {0,1,2,3}{
\node at (\x,1) {};};
\tikzstyle{every node}=[]

\draw [->,dashed] (1,1) -- (2,1);
\draw [->,dashed] (2,1) -- (3,1);
  \draw [->,dashed] (0,1) -- (1,1);

\draw [->] (2,2) -- (3,1);
\draw [->] (1,1) -- (2,2);

\draw (0,1) node [below] {\tiny ${\uu{0}}$};
\draw (1,1) node [below] {\tiny ${\uu{0}}$};
\draw (2,1) node [below] {\tiny ${\uu{0}}$};
\draw (3,1) node [below] {\tiny ${\uu{0}}$};
\draw (2,2) node [above] {\tiny ${\uu{1}}$};
\end{scope}

\draw (3.5,1.5) node {$\otimes$};
\draw (7.5,1.5) node {$=$};
  \end{tikzpicture}
          \end{center}

The trellis product has the following basic properties: 
\begin{enumerate}
\item $T\otimes T'\sim T'\otimes T$ (commutativity)
\item $T\otimes (T'\otimes T'')\sim(T\otimes T')\otimes T''$ (associativity)
\item $0\otimes T\sim T$, where $0$ is the \textit{zero trellis} given by $V_{i}(T):=0$ and $E_{i}(T):=0$ for all $i$ (identity element)
\item $T'\sim T''\implies T\otimes T'\sim T\otimes T''$
\item if $T$ and $T'$ are linear/reduced/one-to-one so is $T\otimes T'$
\item we can replace $\sim$ with $\simeq$ in 1), 2), 3), and 4) when the trellises involved are linear
\end{enumerate}
 It also satisfies the crucial identity 
 $$C(T\otimes T')=C(T)+C(T')$$ 
 So one can construct trellises for linear codes by taking products of elementary trellises for their generators. Remarkably, the \textit{Factorization Theorem} from \cite{KV2} (which we reprove as Corollary \ref{factthm}) says that this is exhaustive, i.e. any reduced linear trellis factors into elementary trellises. 
 Finally, we call   $$\otimes_{i=1}^{r} \al^{i}|(a_{i},l_{i})$$ an \textit{elementary trellis factorization}\index{elementary (trellis) factorization} of $T$ if $T \sim\otimes_{i=1}^{r} \al^{i}|(a_{i},l_{i})$.
 
\begin{rmk}\label{assumpt1pap} \textbf{We will always assume that in a product of elementary trellises  at most one elementary trellis with span $(a,0)$ appears for each $a\in\ZZ_{n}$}, in order to avoid the degeneracy $\al|(a,0)=\al|(a,0)\otimes\al|(a,0)$ which would require cumbersome distinctions in the statements of our theorems.
\end{rmk}


\section{An algebraic framework for linear trellises}\label{algframpap}

\subsection{Spans of cycles and span subcodes}\label{spansubpap}

We introduce here our main tool: span subcodes of $\SSS(T)$.
Let $T$ be a linear trellis of length $n$. 
Given $a\in\ZZ_{n}$, $0\leq l\leq n-1$, and $\la=(\vv,\al)\in\SSS(T)$, we say that 
$(a,l)$ 
is a \textit{span starting at $a$ of length $l$}\index{span!of cycle $\la\in\SSS(T)$}  of $\la$
if $(a+1,l-1)$ is a span of $\bm{v}\in\prod_{i\in\ZZ_{n}}V_{i}(T)$ (with respect to the  alphabets $V_{i}(T)$)
and $(a,l)$ is a span of $\bm{\alpha}\in\FF^{n}$,
i.e. if  $\su(\vv)\subseteq (a,a+l]$ and $\su(\al)\subseteq[a,a+l]$. Like in \ref{eldef}, we say as well that $\emptyset$ is a span of length $-1$ of $\bm{0}\in\SSS(T)$ and $\ZZ_{n}$ is a span of length $n$ of all $\la\in\SSS(T)$, and write also $(a,-1)$ for $\emptyset$ and $(a,n)$ for $\ZZ_{n}$.
The linear subcode of $\SSS(T)$ defined as
\begin{equation*}
\sal(T):=\{\la\in \SSS(T)| (a,l) \textnormal{ is a span of }\la\}
\end{equation*}
 is then called the $(a,l)$-\textit{span subcode}\index{span!subcode of $\SSS(T)$} of $\SSS(T)$ (or simply of $T$).  
 
Considering the partial order for spans defined in Subsection \ref{eldef}, we have that 
$$(a_{1},l_{1})\leq (a_{2},l_{2})\implies\SSS_{(a_{1},l_{1})}(T)\leq \SSS_{(a_{2},l_{2})}(T)$$ and so the Hasse diagram 
of spans
yields a diagram giving containments amongst the span subcodes $\sal(T)$. In particular $\SSS_{\emptyset}(T)=0\leq\sal(T)\leq\SSS(T)=\SSS_{\ZZ_{n}}(T)$ for all $(a,l)$. 
For elementary trellises the picture of span subcodes is simple:  
\begin{equation*}
\SSS_{(a',l')}(\al|(a,l))=
\begin{cases}
\SSS(\al|(a,l))& \textnormal{ if } (a,l)\leq(a',l')\\
0 & \textnormal{ otherwise }
\end{cases}
\end{equation*}
Here is a less trivial example:

\begin{ex}\label{ex1pap}
Consider the below linear trellis with label code $\SSS(T)=\langle(\uu{0}\ms\uu{0}\ms\uu{0},100),(\uu{0}\ms\uu{1}\ms\uu{0},100), (\uu{0}\ms\uu{0}\ms\uu{1},011) \rangle$. 
\begin{center}
 \begin{tikzpicture}[yscale=.8,>=latex',shorten >=.9pt, shorten <=1.4pt, line width=.6pt]
  \tikzstyle{every node}=[draw,circle,fill=black,minimum size=2pt,
                        inner sep=0pt]                    
\foreach \x in {1,2}{
\node at (\x,2) {};};
\foreach \x in {0,1,2,3}{
\node at (\x,1) {};};
\tikzstyle{every node}=[]
  \draw [->,dashed] (0,1) -- (1,2);
    \draw [->] (0,1) .. controls (0.1,1.8) and (0.8,2.1) .. (1,2);
\draw [->,dashed] (1,1) -- (2,1);
\draw [->,dashed] (2,1) -- (3,1);
\draw [->,dashed] (0,1) .. controls (0.3,0.7) and (0.7,0.7) ..  (1,1);
  \draw [->] (0,1) -- (1,1);
\draw [->] (1,2) -- (2,2);
\draw [->] (2,2) -- (3,1);
\draw [->] (1,1) -- (2,2);
\draw [->,dashed] (1,2) -- (2,1);
\draw (0,1) node [below] {\tiny ${\uu{0}}$};
\draw (1,1) node [below] {\tiny ${\uu{0}}$};
\draw (2,1) node [below] {\tiny ${\uu{0}}$};
\draw (3,1) node [below] {\tiny ${\uu{0}}$};
\draw (1,2) node [above] {\tiny ${\uu{1}}$};
\draw (2,2) node [above] {\tiny ${\uu{1}}$};
      \end{tikzpicture}
    \end{center}
This yields the diagram:
\begin{center}
 \begin{tikzpicture}[xscale=.7,yscale=.85, line width=.6pt]
                                      \tikzstyle{every node}=[]
                                      
             \foreach \a in {0,1,2}{
                        \foreach \l in {0,1}{
                       \draw  (4*\a,1.5*\l) -- (4*\a,1.5*\l+1.5);};};
                       
                        \foreach \a in {1,2}{
                        \foreach \l in {0,1}{
                       \draw  (4*\a,1.5*\l) -- (4*\a-4,1.5*\l+1.5);};};
                       
                        \foreach \l in {0,1}{
                       \draw  (0,1.5*\l) -- (8,1.5*\l+1.5);};
                       
                                \foreach \a in {0,1,2}{
                       \draw  (4*\a,0) -- (4,-1.5);
                       \draw  (4*\a,3) -- (4,4.5);};
            
             \tikzstyle{every node}=[fill=white,minimum size=0pt,
                        inner sep=2pt]\small
                        
\draw (0,0) node {$\langle(\uu{0}\ms\uu{0}\ms\uu{0},100)\rangle$};
\draw (0.5,1.5) node {$\langle(\uu{0}\ms\uu{0}\ms\uu{0},100),(\uu{0}\ms\uu{1}\ms\uu{0},000)\rangle$};
\draw (0,3) node {$\SSS(T)$};

\draw (4,0) node {$0$};
\draw (4.5,1.5) node {$\langle(\uu{0}\ms\uu{0}\ms\uu{1},011)\rangle$};
\draw (3.2,3) node {$\langle(\uu{0}\ms\uu{0}\ms\uu{1},011),(\uu{0}\ms\uu{0}\ms\uu{0},100)\rangle$};

\draw (8,0) node {$0$};
\draw (8,1.5) node {$\langle(\uu{0}\ms\uu{0}\ms\uu{0},100)\rangle$};
\draw (8.2,3) node {$\langle(\uu{0}\ms\uu{0}\ms\uu{0},100),(\uu{0}\ms\uu{1}\ms\uu{0},000)\rangle$};

    \draw (4,-1.5) node {$0$};
     \draw (4,4.5) node {$\SSS(T)$};
    
      \end{tikzpicture}
\end{center}
\end{ex}

Note that $\SSS_{(a,0)}(T)$ has either dimension $0$ or $1$, since between adjacent vertices 
there exists either $1$ edge (labelled with $0$) or $|\FF|$ parallel edges (whose labels span $\FF$).

 We call the minimum length of spans of $\la$ the  \textit{span length}\index{span length of cycles} of $\la$ and denote it by  
 $\ell(\la)$.
The subcode generated by all cycles of span length less or equal than a given $l$ is denoted  by 
\begin{equation*}\SSS_{l}(T):=\sum_{a\in\ZZ_{n}, l'\leq l}\SSS_{(a,l')}(T)\end{equation*}
We also put 
\begin{equation*}
\SSS_{<(a,l)}(T):=\sum_{(a',l')\lneq(a,l)}\SSS_{(a',l')}(T)
\end{equation*}
 
  If $\la$ has a minimum span
   we will denote it by $[\bm{\lambda}]$\nomenclature[lasp]{$[\la]$}{minimum span of cycle $\la$} and refer to it simply as \textit{the span} of $\bm{\lambda}$.  
   If $\la$ has not a minimum span then it is easy to see that $\la\in\SSS_{\ell(\la)-1}(T)$.
   
  \begin{ex}
  Let $T$ be the unlabeled linear trellis
  \begin{center}
 \begin{tikzpicture}[yscale=.8,>=latex',shorten >=.9pt, shorten <=1.4pt, line width=.6pt]
  \tikzstyle{every node}=[draw,circle,fill=black,minimum size=2pt,
                        inner sep=0pt]                    
\foreach \x in {0,1,2,3,4,5}{
\node at (\x,2) {};};
\foreach \x in {0,1,2,3,4,5}{
\node at (\x,1) {};};
\tikzstyle{every node}=[]
  \draw [->,dashed] (0,1) -- (1,1);
  \draw [->,dashed] (0,1) -- (1,2);
\draw [->,dashed] (0,2) -- (1,1);
\draw [->,dashed] (0,2) -- (1,2);
\begin{scope}
[xshift=1cm]
  \draw [->,dashed] (0,1) -- (1,1);
  \draw [->,dashed] (0,1) -- (1,2);
\draw [->,dashed] (0,2) -- (1,1);
\draw [->,dashed] (0,2) -- (1,2);
\draw (0,1) node [below] {\tiny ${\uu{0}}$};
\draw (0,2) node [above] {\tiny ${\uu{1}}$};
\end{scope}
\begin{scope}
[xshift=2cm]
  \draw [->,dashed] (0,1) -- (1,1);
  \draw [->,dashed] (0,1) -- (1,2);
\draw [->,dashed] (0,2) -- (1,1);
\draw [->,dashed] (0,2) -- (1,2);
\draw (0,1) node [below] {\tiny ${\uu{0}}$};
\draw (0,2) node [above] {\tiny ${\uu{1}}$};
\end{scope}
\begin{scope}
[xshift=3cm]
  \draw [->,dashed] (0,1) -- (1,1);
  \draw [->,dashed] (0,1) -- (1,2);
\draw [->,dashed] (0,2) -- (1,1);
\draw [->,dashed] (0,2) -- (1,2);
\draw (0,1) node [below] {\tiny ${\uu{0}}$};
\draw (0,2) node [above] {\tiny ${\uu{1}}$};
\end{scope}
\begin{scope}
[xshift=4cm]
  \draw [->,dashed] (0,1) -- (1,1);
  \draw [->,dashed] (0,1) -- (1,2);
\draw [->,dashed] (0,2) -- (1,1);
\draw [->,dashed] (0,2) -- (1,2);
\draw (0,1) node [below] {\tiny ${\uu{0}}$};
\draw (0,2) node [above] {\tiny ${\uu{1}}$};
\end{scope}

\draw (0,1) node [below] {\tiny ${\uu{0}}$};
\draw (0,2) node [above] {\tiny ${\uu{1}}$};
\draw (5,1) node [below] {\tiny ${\uu{0}}$};
\draw (5,2) node [above] {\tiny ${\uu{1}}$};
      \end{tikzpicture}
    \end{center}
Then 
$[(\uu{0}\ms\uu{1}\ms\uu{1}\ms\uu{1}\ms\uu{0},\bm{0})]=(0,3)$. 
On the other hand, 
$(\uu{1}\ms\uu{0}\ms\uu{1}\ms\uu{0}\ms\uu{0},\bm{0})$ 
has two minimal incomparable spans, $(1,4)$ and $(4,3)$. 
We see that $(\uu{1}\ms\uu{0}\ms\uu{1}\ms\uu{0}\ms\uu{0},\bm{0})=(\uu{0}\ms\uu{0}\ms\uu{1}\ms\uu{0}\ms\uu{0},\bm{0})+(\uu{1}\ms\uu{0}\ms\uu{0}\ms\uu{0}\ms\uu{0},\bm{0})\in\SSS_{(1,1)}(T)+\SSS_{(4,1)}(T)$, while $\ell((\uu{1}\ms\uu{0}\ms\uu{1}\ms\uu{0}\ms\uu{0},\bm{0}))=3$.
\end{ex}

 Finally, for each $(a,l)$ we put 
\begin{align*}
&C_{(a,l)}(T):=L(\sal(T))
\end{align*}
Obviously $C_{(a,l)}(T)\subseteq\{\al\in C(T)| \textrm{ (a,l) is a span of }\al\}$, but in general the containment is strict. 


\subsection{Label code maps as trellis maps}\label{labcodemapspap}

Let $T$, $T'$ be linear trellises.
Any  linear morphism $f:T\rightarrow T'$ 
induces a  linear map $\SSS(f):\SSS(T)\rightarrow\SSS(T')$\nomenclature[sf]{$\SSS(f)$}{trellis map}  given by 
$$\SSS(f)(v_{0}\alpha_{0}\ldots v_{n-1}\alpha_{n-1})=f_{0}(v_{0})\alpha_{0}\ldots f_{n-1}(v_{n-1})\alpha_{n-1}$$ 
We say that a linear map $F:\SSS(T)\rightarrow\SSS(T')$ is a \textit{trellis map}\index{trellis map ($\SSS(f)$)} if $F=\SSS(f)$ for some linear morphism $f:T\rightarrow T'$. 
It is easily seen that $F$ 
is a trellis map if and  only if it preserves edge-labels 
and $\nu_{i}(\la)=\nu_{i}(\la')\Rightarrow\nu_{i}(F(\la))=\nu_{i}(F(\la'))$ for all $i\in\ZZ_{n}$ and all $\la,\la'\in\SSS(T)$. In fact if those conditions are satisfied then  $F=\SSS(f)$ with $f_{i}(v):=\nu_{i}(F(\la))$ for all $v\in V_{i}(T)$ and any $\la$ such that $\nu_{i}(\la)=v$. By linearity we thus have also that   $F$  
is a trellis map if and only if it preserves edge-labels and  
 $$F(\SSS_{(a,n-1)}(T))\leq\SSS_{(a,n-1)}(T')$$ for all $a\in\ZZ_{n}$ (which implies that $F(\SSS_{(a,l)}(T))\leq\SSS_{(a,l)}(T')$ for all  $(a,l)$).

Note that  $\SSS(f\circ g)=\SSS(f)\circ \SSS(g)$ (the composition of trellis morphisms being the obvious one) and $f=g$ if $\SSS(f)=\SSS(g)$. So if $F=\SSS(f)$ and $F$ has an inverse which is also a trellis map 
then 
$f$ is a linear isomorphism (and vice versa). 
We thus conclude that:
 \begin{obs}\label{trellmappap}
 $T\simeq T'$  if and only if there exists a linear isomorphism $F:\SSS(T)\rightarrow\SSS(T')$ that preserves edge-labels and such that $$F(\SSS_{(a,n-1)}(T))=\SSS_{(a,n-1)}(T')$$ 
 for all  $a\in\ZZ_{n}$ (in which case $F(\SSS_{(a,l)}(T))=\SSS_{(a,l)}(T')$ for all  $(a,l)$).  
\end{obs}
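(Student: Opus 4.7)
The plan is to read this observation as the immediate bijective packaging of the trellis-map characterization given in the paragraph just above it. Concretely, the paper has already shown that a linear map $F\colon\SSS(T)\to\SSS(T')$ is of the form $\SSS(f)$ for some linear morphism $f$ if and only if $F$ preserves edge-labels and $F(\SSS_{(a,n-1)}(T))\leq\SSS_{(a,n-1)}(T')$ for every $a\in\ZZ_n$, and also that $f\mapsto\SSS(f)$ is injective and respects composition. So the observation is just asking when $f$ is an isomorphism, and the claim is that this happens exactly when $F=\SSS(f)$ is a linear isomorphism that maps the top-length span subcodes onto (not merely into) each other.

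For the forward direction, suppose $T\simeq T'$ via the linear isomorphism $f$. Then $\SSS(f)$ is a linear bijection preserving edge-labels. The key point is that $\la\in\SSS_{(a,n-1)}(T)$ if and only if $\nu_a(\la)=0_a$, because the span condition $\su(\vv)\subseteq(a,a+n-1]=\ZZ_n\setminus\{a\}$ singles out exactly the cycles whose vertex at time $a$ is zero. Since each $f_a$ is a linear bijection, $\nu_a(\la)=0_a$ iff $\nu_a(\SSS(f)(\la))=f_a(0_a)=0_a$, and therefore $\SSS(f)(\SSS_{(a,n-1)}(T))=\SSS_{(a,n-1)}(T')$ for every $a$.

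For the reverse direction, assume $F$ satisfies the three hypotheses. The prior characterization immediately gives $F=\SSS(f)$ for some linear morphism $f\colon T\to T'$. Applying the same characterization to $F^{-1}$—which is a linear map, preserves edge-labels because $F$ does, and satisfies $F^{-1}(\SSS_{(a,n-1)}(T'))=\SSS_{(a,n-1)}(T)$ by the hypothesized equality—yields $F^{-1}=\SSS(g)$ for some linear morphism $g\colon T'\to T$. Then $\SSS(f\circ g)=F\circ F^{-1}=\mathrm{id}_{\SSS(T')}=\SSS(\mathrm{id}_{T'})$ and similarly on the other side; injectivity of $f\mapsto\SSS(f)$ forces $f\circ g=\mathrm{id}_{T'}$ and $g\circ f=\mathrm{id}_T$. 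Hence $f$ is a linear isomorphism and $T\simeq T'$.

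Finally, the parenthetical that the equality $F(\SSS_{(a,l)}(T))=\SSS_{(a,l)}(T')$ holds for all $(a,l)$ follows by the same vertex/label description of span subcodes: a cycle $\la=(\vv,\al)$ lies in $\SSS_{(a,l)}(T)$ precisely when $\nu_j(\la)=0_j$ for every $j\notin(a,a+l]$ (equivalently, $\la\in\bigcap_{j\notin(a,a+l]}\SSS_{(j,n-1)}(T)$) and $\alpha_j=0$ for every $j\notin[a,a+l]$. Both conditions are preserved and reflected by $F$—the first by the just-established equalities on top-length span subcodes together with linearity, and the second because $F$ preserves edge-labels—so the $(a,l)$-span subcodes correspond bijectively under $F$. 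No step is really an obstacle here; the only thing to be careful about is the label-zero condition on span subcodes, which is what forces the use of surjectivity onto $\SSS_{(a,n-1)}(T')$ rather than just containment in the reverse direction.
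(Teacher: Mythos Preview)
Your proof is correct and follows essentially the same approach as the paper: both use the preceding characterization of trellis maps via edge-label preservation and containment in the top-length span subcodes, then apply it to $F$ and $F^{-1}$ together with the injectivity and functoriality of $\SSS(\cdot)$. Your argument is simply a more explicit unpacking of what the paper compresses into the sentence ``if $F=\SSS(f)$ and $F$ has an inverse which is also a trellis map then $f$ is a linear isomorphism (and vice versa)''; your direct verification of the forward direction via $\SSS_{(a,n-1)}(T)=\{\la:\nu_a(\la)=0_a\}$ and your intersection description $\SSS_{(a,l)}(T)=\bigcap_{j\notin(a,a+l]}\SSS_{(j,n-1)}(T)\cap\{\la:\alpha_j=0\ \forall j\notin[a,a+l]\}$ for the parenthetical are exactly the computations underlying the paper's terser statements.
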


This observation allows us to prove that two linear trellises are isomorphic by focusing locally on the span subcodes. See our core Theorem \ref{thm15} for this crucial approach in its proof.

Note that the existence of a linear isomorphism between $\SSS(T)$ and $\SSS(T')$ which is a trellis map only in one direction is not sufficient for $T$ and $T'$ to be isomorphic.
For example, if $T$ and $T'$ are the two nonisomorphic trellises below and $f:T\rightarrow T'$  
 is given by $f_{0}=\textrm{Id}$, $f_{1}=\textrm{Id}$, $f_{2}(\uu{01})=f_{2}(\uu{11})=\uu{1}$, then  $\SSS(f)$ is a linear isomorphism. 
\begin{center}
 \begin{tikzpicture}[yscale=.8,>=latex',shorten >=.9pt, shorten <=1.4pt, line width=.6pt]
  \tikzstyle{every node}=[draw,circle,fill=black,minimum size=2pt,
                        inner sep=0pt]                    
\foreach \x in {1,2}{
\node at (\x,2) {};};
\foreach \x in {1,2}{
\node at (\x,3) {};};
\foreach \x in {1,2}{
\node at (\x,4) {};};
\foreach \x in {0,1,2,3}{
\node at (\x,1) {};};
\tikzstyle{every node}=[]
  \draw [->,dashed,black] (0,1) -- (1,1);
\draw [->,dashed,black] (1,1) -- (2,1);
\draw [->,dashed,black] (2,1) -- (3,1);
\draw [->,black] (0,1) -- (1,2);
\draw [->,dashed,black] (1,2) -- (2,2);
\draw [->,black] (2,2) -- (3,1);
\draw [->,dashed,black] (1,1) -- (2,2);
\draw [->,dashed,black] (1,2) -- (2,1);

  \draw [->,dashed,black] (0,1) -- (1,3);
\draw [->,black] (1,3) -- (2,3);
\draw [->,dashed,black] (2,3) -- (3,1);
\draw [->,black] (0,1) -- (1,4);
\draw [->,black] (1,4) -- (2,4);
\draw [->,black] (2,4) -- (3,1);
\draw [->,black] (1,3) -- (2,4);
\draw [->,black] (1,4) -- (2,3);

\draw (0,1) node [below] {\tiny ${\uu{0}}$};
\draw (1,1) node [below] {\tiny ${\uu{00}}$};
\draw (2,1) node [below] {\tiny ${\uu{00}}$};
\draw (3,1) node [below] {\tiny ${\uu{0}}$};
\draw (1,2) node [above] {\tiny ${\uu{01}}$};
\draw (2,2) node [above] {\tiny ${\uu{01}}$};

\draw (1.1,3) node [below] {\tiny ${\uu{10}}$};
\draw (1.9,3) node [below] {\tiny ${\uu{10}}$};
\draw (1,4) node [above] {\tiny ${\uu{11}}$};
\draw (2,4) node [above] {\tiny ${\uu{11}}$};

\draw (-.5,2.5) node {$T=$};

\begin{scope}
[xshift=5cm]
\tikzstyle{every node}=[draw,circle,fill=black,minimum size=2pt,
                        inner sep=0pt]                    
\foreach \x in {1,2}{
\node at (\x,2) {};};
\foreach \x in {1}{
\node at (\x,3) {};};
\foreach \x in {1}{
\node at (\x,4) {};};
\foreach \x in {0,1,2,3}{
\node at (\x,1) {};};
\tikzstyle{every node}=[]
  \draw [->,dashed,black] (0,1) -- (1,1);
\draw [->,dashed,black] (1,1) -- (2,1);
\draw [->,dashed,black] (2,1) -- (3,1);
\draw [->,black] (0,1) -- (1,2);
\draw [->,dashed,black] (1,2) -- (2,2);
\draw [->,black] (2,2) -- (3,1);
\draw [->,dashed,black] (1,1) -- (2,2);
\draw [->,dashed,black] (1,2) -- (2,1);

  \draw [->,dashed,black] (0,1) -- (1,3);
\draw [->,black] (1,3) -- (2,1);
\draw [->,black] (0,1) -- (1,4);
\draw [->,black] (1,4) -- (2,2);
\draw [->,black] (1,3) -- (2,2);
\draw [->,black] (1,4) -- (2,1);

\draw (0,1) node [below] {\tiny ${\uu{0}}$};
\draw (1,1) node [below] {\tiny ${\uu{00}}$};
\draw (2,1) node [below] {\tiny ${\uu{0}}$};
\draw (3,1) node [below] {\tiny ${\uu{0}}$};
\draw (1,2) node [above] {\tiny ${\uu{01}}$};
\draw (2.1,2) node [above] {\tiny ${\uu{1}}$};

\draw (1,3) node [above] {\tiny ${\uu{10}}$};
\draw (1,4) node [above] {\tiny ${\uu{11}}$};

\draw (-.5,2.5) node {$T'=$};
\end{scope}
      \end{tikzpicture}
    \end{center}

 
\subsection{Algebraic structure of $\SSS(T)$: product bases}\label{prodbaspap}

We now isolate  the key properties of the label code $\SSS(T)$ (and its family of subcodes $\{\sal(T)\}$) related to how trellis factorizations are encoded in its structure.

Consider two  linear trellises $T$ and $T'$. Naturally, $T$ is identified with the linear  subtrellis $T\otimes 0\leq T\otimes T'$ via the (injective) linear morphism $T\rightarrow T\otimes T'$ given by $$V_{i}(T)\ni v\mapsto (v,0)\in V_{i}(T)\times V_{i}(T')$$
Consequently each span subcode $\SSS_{(a,l)}(T)$  is identified with a subcode of  $\SSS_{(a,l)}(T\otimes T')$ by the map $(\vv,\al)\mapsto((\bm{v},\bm{0}),\bm{\alpha})$. With those identifications in mind we have: 

\begin{obs}\label{obs10} $\SSS_{(a,l)}(T\otimes T')=\SSS_{(a,l)}(T)+\SSS_{(a,l)}(T')$ for each span $(a,l)$. Moreover, if for all $b\in[a,a+l]$ there are no parallel edges at time index $b$ in $T$ and $T'$ simultaneously 
then $$\SSS_{(a,l)}(T\otimes T')=\SSS_{(a,l)}(T)\oplus \SSS_{(a,l)}(T')$$

\end{obs}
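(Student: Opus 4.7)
The plan is to prove both inclusions of $\SSS_{(a,l)}(T\otimes T')=\SSS_{(a,l)}(T)+\SSS_{(a,l)}(T')$ and then analyze the intersection under the parallel-edge hypothesis. The heart of the argument is a careful choice of edge-label decomposition along the cycle.

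The inclusion ($\supseteq$) is essentially immediate from the construction: under the identification $T\hookrightarrow T\otimes T'$, $v\mapsto(v,0)$, a cycle $((v_i),(\alpha_i))\in\sal(T)$ maps to $((v_i,0),(\alpha_i))$ in $\SSS(T\otimes T')$, and the support conditions $\su((v_i))\subseteq(a,a+l]$ and $\su((\alpha_i))\subseteq[a,a+l]$ are preserved verbatim. The symmetric argument handles $\sal(T')$.

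For ($\subseteq$), take $\la=((v_i,v'_i),(\beta_i))\in\sal(T\otimes T')$. The span hypothesis forces $v_i=v'_i=0$ for $i\notin(a,a+l]$ and $\beta_i=0$ for $i\notin[a,a+l]$. For every $i$, since the edge $(v_i,v'_i)\beta_i(v_{i+1},v'_{i+1})$ belongs to $E_i(T\otimes T')$, by definition of the trellis product there exist $\alpha_i,\alpha'_i\in\FF$ with $\alpha_i+\alpha'_i=\beta_i$, $v_i\alpha_i v_{i+1}\in E_i(T)$, and $v'_i\alpha'_i v'_{i+1}\in E_i(T')$. The key point is that for $i\notin[a,a+l]$ both sets of vertex-endpoints are zero and $\beta_i=0$, so we are free to choose the trivial decomposition $\alpha_i=\alpha'_i=0$ (the zero edge is always in $E_i(T)$ and $E_i(T')$ by linearity). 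With these choices, $\mu:=((v_i),(\alpha_i))\in\sal(T)$ and $\mu':=((v'_i),(\alpha'_i))\in\sal(T')$, and $\mu+\mu'=\la$ after re-embedding.

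For the direct-sum assertion I would compute $\sal(T)\cap\sal(T')$ inside $\sal(T\otimes T')$. An element of this intersection has both vertex sequences zero, so it has the form $((0),(\gamma_i))$ with $(0,\gamma_i,0)\in E_i(T)\cap E_i(T')$ for all $i$ and $\gamma_i=0$ for $i\notin[a,a+l]$. The subspace $\{c\in\FF:(0,c,0)\in E_i(T)\}\leq\FF$ is either $0$ or all of $\FF$, and equals $\FF$ precisely when $T$ has parallel edges at time $i$. Under the hypothesis that no $b\in[a,a+l]$ has parallel edges in $T$ and $T'$ simultaneously, at least one of these subspaces is trivial at each $b\in[a,a+l]$, forcing $\gamma_i=0$ for all $i\in[a,a+l]$ and hence $\gamma=0$.

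The main technical subtlety to watch is the non-uniqueness of the decomposition $\beta_i=\alpha_i+\alpha'_i$ at indices outside $[a,a+l]$: although the edge there is the zero edge in $T\otimes T'$, parallel edges in $T$ or $T'$ could in principle allow nonzero $\alpha_i,\alpha'_i$ summing to zero, which would ruin the span condition on $\mu$ or $\mu'$. The remedy is to always choose the trivial $(0,0,0)+(0,0,0)$ decomposition at such indices, exploiting only that $E_i(T)$ and $E_i(T')$ are subspaces.
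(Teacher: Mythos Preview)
Your proof is correct and follows essentially the same approach as the paper. The paper declares the equality $\SSS_{(a,l)}(T\otimes T')=\SSS_{(a,l)}(T)+\SSS_{(a,l)}(T')$ ``immediate'' and then computes the intersection exactly as you do; your write-up simply fills in more detail for the $(\subseteq)$ direction (in particular, the observation that outside $[a,a+l]$ one is free to pick the trivial $0+0$ label decomposition), which the paper leaves implicit.
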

\begin{proof}
The  equality $\SSS_{(a,l)}(T\otimes T')=\SSS_{(a,l)}(T)+\SSS_{(a,l)}(T')$  is immediate. It remains to prove the second statement, i.e. that  $\SSS_{(a,l)}(T)\cap\SSS_{(a,l)}(T')=0$ under the given hypothesis. 
Assume that we have 
$$((\bm{v},\bm{0}),\bm{\alpha})=((\bm{0},\bm{v}'),\bm{\alpha}')\in\sal(T)\cap\sal(T')$$ 
Then $\bm{v}=\bm{0}$, $\bm{v}'=\bm{0}$, and $\bm{\alpha}=\bm{\alpha}'$. This implies that for all $b\in\textnormal{supp}(\al)\subseteq[a,a+l]$ there are parallel edges at time index $b$ both in $T$ and $T'$. So $\al=\bm{0}$, and we are done.
\end{proof}

\begin{rmk}
By obvious inductive arguments the above extends to any  finite linear trellis product $\otimes_{i=1}^{r}T_{i}$. That is, identifying  
naturally $\sal(T_{i})$ as a subspace of $\sal(\otimes_{i=1}^{r}T_{i})$  we can write $\sal(\otimes_{i=1}^{r}T_{i})=\sum_{i=1}^{r}\sal(T_{i})$, where the sum is direct if $\sum_{i=1}^{r}\dim\SSS_{(b,0)}(T_{i})\leq1$ (i.e. at most one $T_{i}$ has parallel edges at $b$) for all $(b,0)\leq(a,l)$. 
\end{rmk}

We can use Observation \ref{obs10}  to analyze the structure of the label code of a product of elementary trellises.  Recall that if $T=\al|(a,l)$ 
then $\SSS_{(a',l')}(T)=\SSS(T)$ if $(a',l')\geq(a,l)$ and $\SSS_{(a',l')}(T)=0$ otherwise. It then follows that:

\begin{obs}\label{obs11}
Let $T=\otimes_{i=1}^{r}\al^{i}|(a_{i},l_{i})$. 
For each $i$ let $\la^{i}$ be a generator of $\SSS(\al^{i}|(a_{i},l_{i}))$, which we identify 
with a subspace of $\SSS(T)$ as usual. 
Then $\{\la^{i}\}_{i|(a_{i},l_{i})\leq(a,l)}$ is a basis of $\SSS_{(a,l)}(T)$. 
 \end{obs}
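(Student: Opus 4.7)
The plan is to combine the extended version of Observation \ref{obs10} (as stated in the remark immediately following it) with the explicit formula for span subcodes of elementary trellises displayed just above the statement. So the proof should be little more than a bookkeeping step once these two ingredients are in hand.

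First I would iterate Observation \ref{obs10} to write
\begin{equation*}
\SSS_{(a,l)}(T)=\sum_{i=1}^{r}\SSS_{(a,l)}(\al^{i}|(a_{i},l_{i})),
\end{equation*}
with each summand identified as a subspace of $\SSS(T)$ in the natural way. Substituting the elementary trellis formula, the $i$-th summand equals $\lb\la^{i}\rb$ when $(a_{i},l_{i})\leq(a,l)$ and is $0$ otherwise. This already gives that $\{\la^{i}\}_{i|(a_{i},l_{i})\leq(a,l)}$ spans $\SSS_{(a,l)}(T)$.

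The remaining task, and the only nontrivial step, is to verify linear independence, i.e. that the sum above is direct. For this I would invoke the directness criterion in the remark after Observation \ref{obs10}: it is enough to check that
\begin{equation*}
\sum_{i=1}^{r}\dim\SSS_{(b,0)}(\al^{i}|(a_{i},l_{i}))\leq1
\end{equation*}
for every $(b,0)\leq(a,l)$. Applying the elementary trellis formula once more with $(a',l')=(b,0)$, the $i$-th dimension is $1$ iff $(a_{i},l_{i})\leq(b,0)$. Unpacking the partial order on spans, and ignoring the trivial case $l_{i}=-1$ (where $\al^{i}|(a_{i},l_{i})$ is the identity trellis and contributes nothing), this forces $l_{i}=0$ and $a_{i}=b$. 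This is precisely where Remark \ref{assumpt1pap} is used: by that standing convention at most one $\al^{i}|(a_{i},l_{i})$ can have span $(b,0)$, so the inequality holds and directness follows.

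I expect the main obstacle to be nothing more than unwinding the partial order on spans correctly at the $(b,0)$-level; every other step is a direct plug-in of Observation \ref{obs10} and of the elementary trellis formula. The role of Remark \ref{assumpt1pap} is worth highlighting, since without that convention one could duplicate $\al|(a,0)$-factors and destroy linear independence.
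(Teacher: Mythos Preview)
Your proposal is correct and follows exactly the approach the paper intends: the paper states Observation \ref{obs11} immediately after recalling the span-subcode formula for elementary trellises and Observation \ref{obs10}, with the phrase ``It then follows that'', so your spelling-out of the spanning part and the directness criterion from the remark after Observation \ref{obs10} is precisely what is meant. Your explicit identification of where Remark \ref{assumpt1pap} enters is a good addition; the only quibble is that the case $l_{i}=-1$ never actually occurs, since elementary trellises $\al|(a,l)$ are defined only for $0\leq l\leq n$.
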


Thus the label code $\SSS(\otimes_{i=1}^{r}T_{i})$ of an elementary trellis product has a basis $\BB$ such that $\BB\cap\sal(\otimes_{i=1}^{r}T_{i})$ generates $\sal(\otimes_{i=1}^{r}T_{i})$, for each span $(a,l)$. We are now going to show that any linear trellis label code possesses such a basis. 
In order to do so we   
first  prove a more fundamental property of label codes.

\begin{thm}\label{thm11}  Let $T$ be a linear trellis, and let $\{(a_{i},l_{i})\}_{i\in\mathcal{I}}$ be a family of spans. If  $(a,l)$ is a span such that $(a,l)\nleq (a_{i},l_{i})$ for all $i\in\III$ then
$$\SSS_{(a,l)}(T)\cap\sum_{i\in\III}\SSS_{(a_{i},l_{i})}(T)\leq\SSS_{<(a,l)}(T)$$ 
\end{thm}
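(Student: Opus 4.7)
My plan is to proceed by induction on $l$, the length of the span $(a,l)$.

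First I will dispatch the degenerate base cases. For $l=-1$ the space $\SSS_{(a,-1)}(T)$ is zero, so the claim is vacuous. For $l=0$, a quick check of the order relation shows that $(a,0) \nleq (a_i,l_i)$ forces $l_i \le n-2$ and $a \notin [a_i,a_i+l_i]$ for every $i$; hence each $\la^i$ has zero edge-label at time $a$, and since $\la \in \SSS_{(a,0)}(T)$ forces all other coordinates of $\la$ to vanish anyway, this yields $\la=0 \in \SSS_{<(a,0)}(T)$. For $l=n$ every $(a_i,l_i)$ automatically satisfies $l_i < n$, so $\sum_i \SSS_{(a_i,l_i)}(T) \subseteq \SSS_{<(a,n)}(T)$ by definition.

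For the inductive step, given $\la = \sum_i \la^i$ with $\la \in \SSS_{(a,l)}(T)$, I would first peel off those summands with $(a_i,l_i) \lneq (a,l)$, which already lie in $\SSS_{<(a,l)}(T)$. So we may assume every $(a_i,l_i)$ is incomparable to $(a,l)$. Setting $J = [a,a+l]$ and $J_i = [a_i,a_i+l_i]$, this forces $J \cap J_i \subsetneq J$ for every $i$. The core idea is then to split each $\la^i$ as $\la^i = \la^i_{\text{in}} + \la^i_{\text{out}}$, where $\la^i_{\text{in}}$ is a valid cycle supported in $J \cap J_i$ (hence of span $\lneq (a,l)$) and $\la^i_{\text{out}}$ is a valid cycle supported in $J_i \setminus J$. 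Once this is achieved, the fact that $\la$ itself is supported in $J$ forces $\sum_i \la^i_{\text{out}} = 0$, and so $\la = \sum_i \la^i_{\text{in}} \in \SSS_{<(a,l)}(T)$, closing the induction.

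The principal obstacle is that such a splitting of each $\la^i$ into valid cycles requires the vertices of $\la^i$ at the two boundary times $a$ and $a+l+1$ (where $J$ meets $J^c$) to vanish, which need not hold term by term. My plan is to handle this via the cancellation identities $\sum_i v_a(\la^i) = 0$ and $\sum_i v_{a+l+1}(\la^i) = 0$ (both following from $\la \in \SSS_{(a,l)}(T)$), combined with the inductive hypothesis applied to strictly shorter spans supported near the boundary, so as to iteratively modify the decomposition modulo $\SSS_{<(a,l)}(T)$ until the boundary vertices of each summand vanish. A secondary subtlety, proper to the tail-biting setting, is that $J \cap J_i$ may consist of two disjoint sub-arcs; in that case $\la^i_{\text{in}}$ itself further decomposes into two cycles of strictly smaller span at the intervening zero-vertex stretch. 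Both splittings rely on the linear trellis axiom that $(0,0,0) \in E_t(T)$ for every $t$, which is what allows cycles to be cut at any zero-vertex position.
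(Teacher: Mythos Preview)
Your base cases are fine, but the inductive step has a genuine gap. The claim that incomparability of $(a_i,l_i)$ with $(a,l)$ forces $J\cap J_i\subsetneq J$ is false in the tail-biting setting: take $l_i=n-1$ with $a_i\in(a,a+l]$ (and $l<n-1$); then $(a,l)\nleq(a_i,n-1)$ since $(a,a+l]\not\subseteq(a_i,a_i+n-1]=\ZZ_n\setminus\{a_i\}$, yet $J_i=[a_i,a_i+n-1]=\ZZ_n$ as a set, so $J\cap J_i=J$. More seriously, the heart of your plan---splitting each $\la^i$ term by term into an ``inside $J$'' cycle and an ``outside $J$'' cycle---is precisely what cannot be done, as you yourself note, and the repair you sketch is not a mechanism. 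The inductive hypothesis is a containment statement about $\SSS_{(a',l')}(T)\cap\sum_i\SSS_{(a_i,l_i)}(T)$; it gives no handle on the individual boundary vertices $v_a(\la^i)$ or $v_{a+l+1}(\la^i)$, and you never specify what cycle you would add or subtract to kill those vertices while keeping each modified summand inside its own $\SSS_{(a_i,l_i)}(T)$. The cancellation identities $\sum_i v_a(\la^i)=0$ and $\sum_i v_{a+l+1}(\la^i)=0$ are collective, not per-term, so by themselves they do not enable the splitting.

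The paper's argument is direct (no induction on $l$) and sidesteps per-summand splitting entirely. It focuses on the single vertex $v_{a+1}$ of $\la$. If $v_{a+1}=0$ then $\la$ already decomposes as a cycle in $\SSS_{(a,0)}(T)$ plus one in $\SSS_{(a+1,l-1)}(T)$. If $v_{a+1}\neq 0$, one looks only at those $i$ with $v^i_{a+1}\neq 0$; for each such $i$ one has $a+1\in(a_i,a_i+l_i]$, and then the hypothesis $(a,l)\nleq(a_i,l_i)$ forces $(a,a_i+l_i]\subsetneq(a,a+l]$, so the path segment of $\la^i$ starting at $v^i_{a+1}$ reaches the zero vertex strictly before time $a+l$. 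Summing those segments (linearity of $T$) yields a single path from $v_{a+1}$ to $0_{a+l}$; prepending the edge $0_a\alpha_a v_{a+1}$ and extending by zeros produces one auxiliary cycle $\la'\in\SSS_{(a,l-1)}(T)$ with $\la-\la'\in\SSS_{(a+1,l-1)}(T)$. The construction is explicit and requires no boundary-vertex bookkeeping on the individual $\la^i$.
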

\begin{proof} Let $(\bm{v},\bm{\alpha})\in \SSS_{(a,l)}(T)\cap\sum_{i\in\III}\SSS_{(a_{i},l_{i})}(T)$.
If $l=-1$ or $l=n$, then the statement is trivially true.
If $l=0$, then it is clear that $(\bm{v},\bm{\alpha})=(\bm{0},\bm{0})$, since any cycle in $\SSS_{(a_{i},l_{i})}(T)$ yields a codeword whose support does not contain $a$. So, assume now $n>l\geq1$.
If $v_{a+1}=0$ then the statement is always true as  in that case $(\bm{v},\bm{\alpha})$ is clearly a sum of a cycle in $\SSS_{(a,0)}(T)$ and a cycle in $\SSS_{(a+1,l-1)}(T)$, namely, $(\bm{0},\bm{\alpha}')\in\SSS_{(a,0)}(T)$ where $\alpha'_{j}=\delta_{ja}\alpha_{a}$,  and  
$$(\bm{v},\bm{\alpha})-(\bm{0},\bm{\alpha}')\in\SSS_{(a+1,l-1)}(T)$$

So we can also assume  that $v_{a+1}\neq0$. Now, write $(\bm{v},\bm{\alpha})=\sum_{i\in\III}(\bm{v}^{i},\bm{\alpha}^{i})$ for some $(\bm{v}^{i},\bm{\alpha}^{i})\in\SSS_{(a_{i},l_{i})}(T)$. Let 
$$\III':=\{i\in\III| {v}^{i}_{a+1}\neq 0\}$$ 
Note that  $\III'\neq\emptyset$, as $v_{a+1}\neq0$. If $i\in\III'$ then obviously $a+1\in(a_{i},a_{i}+l_{i}]$, and so we must have $(a,a_{i}+l_{i}]\subsetneq(a,a+l]$, because otherwise $(a,a+l]\subseteq(a,a_{i}+l_{i}]\subseteq (a_{i},a_{i}+l_{i}]$, 
which would contradict the hypothesis.  
We deduce that $l\geq2$   
and so that for each $i\in\III'$ we can construct a path $\bm{p}^{i}$ in $T$ of length $l-1$ from $v^{i}_{a+1}$ to $0_{a+l}\in V_{a+l}$ given by concatenating the path 
$$v^{i}_{a+1}\alpha^{i}_{a+1}\ldots v^{i}_{a_{i}+l_{i}}\alpha^{i}_{a_{i}+l_{i}}0_{a_{i}+l_{i}+1}$$ 
with the zero path from $0_{a_{i}+l_{i}+1}$ to $0_{a+l}$. By linearity of $T$ we can add all those paths to get a path $\bm{p}=\sum_{i\in\III'}\bm{p}^{i}$ from $v_{a+1}$ to $0_{a+l}$. Concatenating  the edge $0_{a}\alpha_{a}v_{a+1}$ with $\bm{p}$, and then extending  on right and left by the zero path, we get a cycle $(\bm{v}',\bm{\alpha}')\in\SSS_{(a,l-1)}(T)$ such that $(\bm{v},\bm{\alpha})-(\bm{v}',\bm{\alpha}')\in\SSS_{(a+1,l-1)}(T)$. Hence we are done.
\end{proof}

\begin{thm}\label{thm12} Let $T$ be a linear trellis. 
For  each span $(a,l)$ lift an arbitrary basis of $$\sal(T)/\SSS_{<(a,l)}(T)$$  to a subset $\BB_{(a,l)}\subseteq\sal(T)$. Then $\BB:=\sqcup_{(a,l)}\BB_{(a,l)}$ is a basis of $\SSS(T)$  such that  $\BB\cap\SSS_{(a,l)}(T)$ generates $\SSS_{(a,l)}(T)$  for all $(a,l)$.
\end{thm}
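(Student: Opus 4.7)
The goal is to extract two facts from the lifts $\BB_{(a,l)}$: linear independence of their disjoint union, and the fact that, restricted to any $\SSS_{(a,l)}(T)$, they still span. The only nontrivial structural input available is Theorem~\ref{thm11}, so the whole plan is to set things up so that both facts reduce to its hypothesis ``$(a,l)\nleq(a_i,l_i)$''.

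\textbf{Linear independence.} I would argue by contradiction: suppose we have a nontrivial finite relation
\[
\sum_{(a,l)\in\mathcal{M}}\sum_{j} c_{(a,l),j}\,\bm{v}_{(a,l),j}=0,\qquad \bm{v}_{(a,l),j}\in\BB_{(a,l)},
\]
where $\mathcal{M}$ is the set of spans whose coefficients are not all zero. Pick $(a_{*},l_{*})$ \emph{maximal} in $\mathcal{M}$ with respect to the span partial order. Then for every other $(a,l)\in\mathcal{M}$ we have $(a_{*},l_{*})\nleq(a,l)$, so by Theorem~\ref{thm11}
\[
\sum_{j} c_{(a_{*},l_{*}),j}\,\bm{v}_{(a_{*},l_{*}),j}\;\in\;\SSS_{(a_{*},l_{*})}(T)\cap\!\!\!\sum_{(a,l)\in\mathcal{M}\setminus\{(a_{*},l_{*})\}}\!\!\!\SSS_{(a,l)}(T)\;\leq\;\SSS_{<(a_{*},l_{*})}(T).
\]
But $\BB_{(a_{*},l_{*})}$ was chosen to lift a basis of $\SSS_{(a_{*},l_{*})}(T)/\SSS_{<(a_{*},l_{*})}(T)$, so its images in that quotient are linearly independent. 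Hence all $c_{(a_{*},l_{*}),j}=0$, contradicting $(a_{*},l_{*})\in\mathcal{M}$. So $\BB$ is linearly independent.

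\textbf{Spanning of each span subcode.} I would induct on the length $l$ (ranging over $-1,0,1,\ldots,n$) of the span to show that $\sqcup_{(a',l')\leq(a,l)}\BB_{(a',l')}$ generates $\SSS_{(a,l)}(T)$. The base $l=-1$ is the zero space. For the step, the inductive hypothesis applied to every $(a',l')\lneq(a,l)$ (which necessarily has $l'<l$, by the description of the partial order recalled in Subsection~\ref{eldef}) shows that $\SSS_{<(a,l)}(T)=\sum_{(a',l')\lneq(a,l)}\SSS_{(a',l')}(T)$ is generated by $\sqcup_{(a'',l'')\lneq(a,l)}\BB_{(a'',l'')}$. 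Since $\BB_{(a,l)}$ lifts a basis of the quotient $\SSS_{(a,l)}(T)/\SSS_{<(a,l)}(T)$, adding it yields a generating set of $\SSS_{(a,l)}(T)$, which is exactly $\sqcup_{(a',l')\leq(a,l)}\BB_{(a',l')}$.

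\textbf{Finishing.} To upgrade the last sentence to the claim $\langle \BB\cap\SSS_{(a,l)}(T)\rangle=\SSS_{(a,l)}(T)$, it suffices to observe $\sqcup_{(a',l')\leq(a,l)}\BB_{(a',l')}\subseteq \BB\cap\SSS_{(a,l)}(T)$, which is immediate from $\BB_{(a',l')}\subseteq\SSS_{(a',l')}(T)\subseteq\SSS_{(a,l)}(T)$. Taking the maximal span $\ZZ_{n}$ gives $\langle\BB\rangle=\SSS(T)$, so combined with the independence proved above, $\BB$ is a basis of $\SSS(T)$ with the announced property. The main obstacle is really the first paragraph: the subtle point is that one must choose a \emph{maximal} (not minimal) element of $\mathcal{M}$ so that the direction of Theorem~\ref{thm11}'s hypothesis matches, and it is exactly this that forces the use of that theorem rather than of a purely formal quotient argument.
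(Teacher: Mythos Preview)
Your proof is correct and uses the same essential ingredient (Theorem~\ref{thm11}) as the paper's, but you organize the argument a bit differently. The paper proves independence and generation simultaneously by a single induction on $l$: at each step it shows that $\BB_{l+1}:=\sqcup_{l'\leq l+1}\BB_{(a,l')}$ is a basis of $\SSS_{l+1}(T)$, invoking Theorem~\ref{thm11} once for each starting point $a$ to kill the coefficients in $\BB_{(a,l+1)}$. You instead decouple the two halves: your maximal-span contradiction handles linear independence in one shot (no induction), and your induction on $l$ handles generation exactly as in the paper. Your independence argument is a mild streamlining---picking a maximal $(a_*,l_*)$ in $\mathcal{M}$ lets you apply Theorem~\ref{thm11} directly rather than layer by layer---while the generation half is identical in substance. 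Both routes stand or fall with Theorem~\ref{thm11}, so neither buys a genuinely different proof, but yours is a clean repackaging.
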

\begin{proof} 
We will prove by induction on $ l=-1,\ldots,  n-1$, that  $\BB_{l}:=\sqcup_{(a,l')|l'\leq l}\BB_{(a,l')}$ is a basis of $\SSS_{l}(T)$  
 such that $\BB_{l}\cap \SSS_{(a,l')}(T)$ generates $\SSS_{(a,l')}(T)$ for all $(a,l')$ with $l'\leq l$. Then clearly  $\sqcup_{(a,l)}\BB_{(a,l)}$ will satisfy our statement. 
If $l=-1$ there is nothing to prove. 
Assume now $\BB_{l}$ satisfies the above property for some $-1\leq l< n-1$.  
We claim that  $\BB_{l+1}$ is a basis of $\SSS_{l+1}(T)$. So, suppose $\sum_{\la\in\BB_{l+1}}x_{\la}\bm{\lambda}=0$, for some coefficients $x_{\la}$ in $\FF$. Then for $a\in\ZZ_{n}$ we have that 
\begin{gather*}
\sum_{\la\in\BB_{(a,l+1)}}x_{\la}\bm{\lambda}\in\bigl(\SSS_{(a,l+1)}(T)\cap(\sum_{a'\neq a}\SSS_{(a',l+1)}(T)+\SSS_{l}(T))\bigr)\\
\subseteq \SSS_{<(a,l+1)}(T)
\end{gather*}
 where the containment follows from Theorem \ref{thm11}. 
 Hence $x_{\la}=0$ for all $\la\in\BB_{(a,l+1)}$  and all $a\in \ZZ_{n}$. So  $x_{\la}=0$  for all $\la\in\BB_{l}$ too, since $\BB_{l}$ is a basis of $\SSS_{l}(T)$, and our claim is proven. Now, clearly $\BB_{l+1}\cap \SSS_{(a,l')}(T)$ generates $\SSS_{(a,l')}(T)$ for all $(a,l')$ with $l'\leq l$, as $\BB_{l}\cap \SSS_{(a,l')}(T)$ does. For the same reason $\BB_{l}\cap \SSS_{<(a,l+1)}(T)$ generates $\SSS_{<(a,l+1)}(T)$. So by definition of $\BB_{(a,l+1)}$ we get that 
 $$(\BB_{l}\sqcup\BB_{(a,l+1)})\cap\SSS_{(a,l+1)}(T)$$ generates $\SSS_{(a,l+1)}(T)$.
 This concludes our proof. 
 \end{proof}

\begin{rmk}  
Given a pair $(V,\{V_{i}\}_{i\in\III})$ consisting of a vector space $V$ and a family of subspaces of $V$, in general is not true that there exists a basis $\BB$ of $V$ such that $\BB\cap V_{i}$ generates $V_{i}$ for all $i\in\III$.
A counterexample is given by $\FF^{5}$ with the family of subspaces 
$\{\langle\mathbf{e}^{0}\rangle,\langle\mathbf{e}^{1}\rangle,\langle\mathbf{e}^{2}\rangle,\langle\mathbf{e}^{3}\rangle,\langle\mathbf{e}^{0},\mathbf{e}^{1},\mathbf{e}^{4}\rangle,\langle\mathbf{e}^{1},\mathbf{e}^{2},\mathbf{e}^{0}+\mathbf{e}^{4}\rangle,\langle\mathbf{e}^{2},\mathbf{e}^{3},\mathbf{e}^{1}+\mathbf{e}^{4}\rangle,\langle\mathbf{e}^{3},\mathbf{e}^{0},\mathbf{e}^{2}+\mathbf{e}^{4}\rangle\}$ (where $\mathbf{e}^{i}$ is the $i$-th canonical basis element). Only special pairs  $(V,\{V_{i}\}_{i\in\III})$ have such bases.
\end{rmk}

A basis of $\SSS(T)$ satisfying the property of Theorem \ref{thm12} will be called a \textit{product basis}\index{product basis! of $\SSS(T)$ (or $T$)} of $\SSS(T)$ (or simply of $T$). This denomination is justified by the fact  that  product bases correspond to factorizations of $T$ in terms of elementary trellises, as we will see in the next subsection.

The following observation gives some useful properties of product bases.
\begin{obs}\label{spansub1} Let 
$\mathcal{B}$ a product basis of $T$. Then:
\begin{enumerate}
\item Every $\la\in\BB$ 
has a minimum span
\item $\sum_{i\in\mathcal{I}}\SSS_{(a_{i},l_{i})}(T)=\langle\bm{\lambda}\in\mathcal{B}|[\la]\leq(a_{i},l_{i}) \textnormal{ for some } i\in\mathcal{I}\rangle$
 \item  $\SSS_{<(a,l)}(T)=\langle\bm{\lambda}\in\mathcal{B}|[\la]<(a,l)\rangle$
 \item $\{{\la}+\SSS_{<(a,l)}(T)|\la\in\BB, [\la]=(a,l)\}$ is a basis of $\SSS_{(a,l)}(T)/\SSS_{<(a,l)}(T)$ 
 \item A cycle $\la\in\SSS(T)$ belongs to some product basis of $T$ if and only if $\la\in\SSS_{(a,l)}(T)\setminus\SSS_{<(a,l)}(T)$ for some $(a,l)$ 
\end{enumerate}
\end{obs}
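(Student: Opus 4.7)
The plan is to derive all five items from one linear-algebraic observation that I will call $(\star)$: for any $\la\in\BB$ and any span $(a,l)$, $\la\in\SSS_{<(a,l)}(T)$ if and only if $\la\in\SSS_{(a',l')}(T)$ for some $(a',l')<(a,l)$. The ``if'' direction is trivial. For ``only if'', I will use that $\bigcup_{(a',l')<(a,l)}\BB\cap\SSS_{(a',l')}(T)$ generates $\SSS_{<(a,l)}(T)$, since each $\BB\cap\SSS_{(a',l')}(T)$ generates $\SSS_{(a',l')}(T)$ by the very definition of product basis; hence $\la$ is a linear combination of elements of this union, and, being itself part of the linearly independent set $\BB$, must already lie in that union.

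Item (1) is the main step. Suppose for contradiction that some $\la\in\BB$ has two incomparable minimal spans $(a_1,l_1),(a_2,l_2)$ (which exist in the nonempty finite poset of spans of $\la$). Since $(a_1,l_1)\not\leq(a_2,l_2)$, Theorem \ref{thm11} applied to the family $\{(a_2,l_2)\}$ gives $\SSS_{(a_1,l_1)}(T)\cap\SSS_{(a_2,l_2)}(T)\leq\SSS_{<(a_1,l_1)}(T)$, whence $\la\in\SSS_{<(a_1,l_1)}(T)$. By $(\star)$, $\la$ has a span strictly below $(a_1,l_1)$, contradicting minimality. So $[\la]$ is well defined for every $\la\in\BB$.

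Items (2)--(4) then follow formally. For (2), by (1) the generating set $\BB\cap\SSS_{(a_i,l_i)}(T)$ of $\SSS_{(a_i,l_i)}(T)$ coincides with $\{\la\in\BB:[\la]\leq(a_i,l_i)\}$, and summing over $i\in\mathcal{I}$ yields the claim. Item (3) is the special case $\mathcal{I}=\{(a',l'):(a',l')<(a,l)\}$ of (2). For (4), (2) and (3) together show that the given residues generate $\SSS_{(a,l)}(T)/\SSS_{<(a,l)}(T)$; linear independence holds because any relation $\sum c_\la\la\in\SSS_{<(a,l)}(T)$ would, via (3), produce a nontrivial $\BB$-dependence, forcing each $c_\la=0$.

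For (5), ``$\Rightarrow$'' follows from (1) and $(\star)$: if $\la\in\BB$, take $(a,l):=[\la]$; then $\la\in\SSS_{(a,l)}(T)$ and, since no span of $\la$ lies strictly below $[\la]$, $(\star)$ forces $\la\notin\SSS_{<(a,l)}(T)$. For ``$\Leftarrow$'', if $\la\in\SSS_{(a,l)}(T)\setminus\SSS_{<(a,l)}(T)$ then $\la$ represents a nonzero class in $\SSS_{(a,l)}(T)/\SSS_{<(a,l)}(T)$, so I can extend $\{\la+\SSS_{<(a,l)}(T)\}$ to a basis of that quotient and lift it to $\BB_{(a,l)}\ni\la$, while choosing arbitrary lifts $\BB_{(a',l')}$ for the remaining spans. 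Theorem \ref{thm12} then yields that $\BB:=\sqcup_{(a',l')}\BB_{(a',l')}$ is a product basis containing $\la$; disjointness of the union is automatic, since an element shared by distinct $\BB_{(a_1,l_1)},\BB_{(a_2,l_2)}$ would contradict either the defining condition $\la\notin\SSS_{<(a_i,l_i)}(T)$ (if the spans are comparable) or Theorem \ref{thm11} (if they are incomparable). The only real obstacle is establishing $(\star)$ and item (1); everything else is then a direct consequence.
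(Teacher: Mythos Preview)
Your proof is correct and follows essentially the same approach as the paper's. The paper's argument for (1) also reduces to your $(\star)$ (it observes that $\BB\cap\bigl(\bigcup_{(a',l')<(a,l)}\SSS_{(a',l')}(T)\bigr)$ is a basis of $\SSS_{<(a,l)}(T)$ and uses linear independence of $\BB$), and it handles (2)--(5) in the same way; you have simply made explicit the invocation of Theorem~\ref{thm11} to get $\la\in\SSS_{<(a_1,l_1)}(T)$, which the paper leaves implicit, and added the disjointness check in (5), which the paper absorbs into Theorem~\ref{thm12}.
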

\begin{proof}
Let $\bm{\lambda}\in\BB$ be a product basis element, and let $(a,l)$ be a span of minimum length $l=\ell(\la)$ of $\la$. 
Assume that $\la$ does not have a minimum span.
Then $\bm{\lambda}\in\SSS_{<(a,l)}(T)$. 
Since $\BB$ is a product basis it follows  that $$\BB\cap\left(\cup_{(a',l')<(a,l)} \SSS_{(a',l')}(T)\right)$$ 
is a basis of $\SSS_{<(a,l)}(T)$, so that $\bm{\lambda}\in\SSS_{(a',l')}(T)$ for some $(a',l')<(a,l)$, which contradicts our assumption on $l$.
As for the identities: identity ${2}$) follows easily from the definition of product basis; identity ${3}$) is an instance of ${2}$); identity ${4}$) is a plain consequence of  ${2}$) and ${3}$). 
As for $5)$ the ``if'' part follows from Theorem \ref{thm12}, while the ``only if'' part follows from $4)$.
\end{proof}


\subsection{Product bases and elementary trellis factorizations}

In Observation \ref{obs11} we have observed that an elementary trellis factorization translates into a product basis. We now show that the converse is true too.

\begin{thm}\label{thm13} 
 $T$ has a product basis $\{(\bm{v}^{i},\bm{\alpha}^{i})\}_{i=1,\ldots,r}$ with $[(\bm{v}^{i},\bm{\alpha}^{i})]=(a_{i},l_{i})$ if and only if $T\simeq\otimes_{i=1}^{r}\bm{\alpha}^{i}|(a_{i},l_{i})$.
\end{thm}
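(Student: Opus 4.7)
The ``if'' direction is essentially Observation \ref{obs11}, so my plan is to concentrate on the ``only if'' direction. Given a product basis $\{\la^{i}=(\vv^{i},\al^{i})\}_{i=1}^{r}$ of $\SSS(T)$ with $[\la^{i}]=(a_{i},l_{i})$, I would set $T':=\otimes_{i=1}^{r}\al^{i}|(a_{i},l_{i})$ and construct a linear isomorphism $T\simeq T'$ by working entirely at the level of label codes, appealing to the criterion of Observation \ref{trellmappap}.

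The construction is the obvious one. For each $i$, let $\tilde{\la}^{i}$ be the unique cycle of $\SSS(\al^{i}|(a_{i},l_{i}))$ whose edge-label sequence is $\al^{i}$, viewed as a cycle in $\SSS(T')$ via the identification from Observation \ref{obs10}. By Observation \ref{obs11}, $\{\tilde{\la}^{i}\}_{i=1}^{r}$ is a product basis of $T'$ and the minimum span of $\tilde{\la}^{i}$ is exactly $(a_{i},l_{i})$. Since both $\{\la^{i}\}$ and $\{\tilde{\la}^{i}\}$ are bases of their respective label codes, the formula $F(\la^{i}):=\tilde{\la}^{i}$ extends uniquely to a linear isomorphism $F:\SSS(T)\rightarrow\SSS(T')$.

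To conclude via Observation \ref{trellmappap} I must check two things: that $F$ preserves edge-labels, and that $F(\SSS_{(a,n-1)}(T))=\SSS_{(a,n-1)}(T')$ for every $a\in\ZZ_{n}$. Label preservation is immediate since $L(F(\la^{i}))=L(\tilde{\la}^{i})=\al^{i}=L(\la^{i})$ and $L$ is linear. For the span-subcode condition, identity (2) of Observation \ref{spansub1} applied to both $T$ and $T'$ gives
\[
\SSS_{(a,n-1)}(T)=\lb\la^{i}:(a_{i},l_{i})\leq(a,n-1)\rb,\qquad \SSS_{(a,n-1)}(T')=\lb\tilde{\la}^{i}:(a_{i},l_{i})\leq(a,n-1)\rb,
\]
so $F$ carries the first space isomorphically onto the second.

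I do not expect a serious obstacle: once Observations \ref{obs10}, \ref{obs11}, \ref{spansub1}, and \ref{trellmappap} are in place, the argument is a matter of matching bases. The one point that deserves care is keeping the chain of natural identifications $\SSS(\al^{i}|(a_{i},l_{i}))\hookrightarrow\SSS(T')$ straight so that the $\tilde{\la}^{i}$ genuinely have label $\al^{i}$ and span $(a_{i},l_{i})$ inside $\SSS(T')$; this is exactly what Observation \ref{obs10} guarantees. No combinatorial analysis of the graph structure of $T$ is needed, because Observation \ref{trellmappap} reduces trellis isomorphy to a purely label-code statement that matches the data packaged in a product basis.
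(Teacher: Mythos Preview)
Your proof is correct and essentially identical to the paper's: both build the linear isomorphism by matching the given product basis of $T$ to the canonical product basis of $\otimes_{i=1}^{r}\al^{i}|(a_{i},l_{i})$ furnished by Observation~\ref{obs11}, then invoke Observation~\ref{trellmappap}. One tiny inaccuracy: the cycle of $\SSS(\al^{i}|(a_{i},l_{i}))$ with edge-label sequence $\al^{i}$ need not be unique when $\al^{i}=\bm{0}$ (which can occur if $l_{i}>0$), so you should say ``a generator $\tilde{\la}^{i}$ with $L(\tilde{\la}^{i})=\al^{i}$'' as the paper does; the rest of your argument is unaffected.
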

\begin{proof}
Let $T'=\otimes_{i=1}^{r}\bm{\alpha}^{i}|(a_{i},l_{i})$. For each $i=1,\ldots,r$, take a generator $\la^{i}$ of $\SSS(\bm{\alpha}^{i}|(a_{i},l_{i}))$ such that $L(\la^{i})=\al^{i}$. Now, assume $\{(\bm{v}^{i},\bm{\alpha}^{i})\}_{i=1,\ldots,r}$ is a product basis of $T$ with $[(\bm{v}^{i},\bm{\alpha}^{i})]=(a_{i},l_{i})$. 
In particular, $\{(\bm{v}^{i},\bm{\alpha}^{i})\}_{i|(a_{i},l_{i})\leq(a,l)}$ is a basis of $\SSS_{(a,l)}(T)$ for all spans $(a,l)$. Similarly, by Observation \ref{obs11}, $\{\la^{i}\}_{i|(a_{i},l_{i})\leq(a,l)}$ is a basis of $\SSS_{(a,l)}(T')$, for all $(a,l)$. 
So, by Observation \ref{trellmappap} the linear isomorphism $F:\SSS(T')\rightarrow\SSS(T)$ that sends $\la^{i}$ to $(\bm{v}^{i},\al^{i})$ yields a linear isomorphism of $T'$ and $T$. 
The ``if'' part is clear, since a linear isomorphism of trellises sends a product basis to a product basis, and $\{\la^{i}\}_{i=1,\ldots,r}$ is a product basis of $T'$ by Observation \ref{obs11}. 
\end{proof}

\begin{ex}\label{ex5cr}
Let $T$ be the linear trellis with $\SSS(T)=\langle(\uu{0}\ms\uu{1}\ms\uu{0},000), (\uu{0}\ms\uu{0}\ms\uu{1},011) \rangle$. The only product basis of $T$ is $\{(\uu{0}\ms\uu{1}\ms\uu{0},000), (\uu{0}\ms\uu{0}\ms\uu{1},011)\}$. This corresponds to the only elementary trellis factorization, depicted just below.
The basis $\{(\uu{0}\ms\uu{1}\ms\uu{0},000), (\uu{0}\ms\uu{1}\ms\uu{1},011)\}$ is not a product basis as the second product below is not $T$. 
\begin{center}
 \begin{tikzpicture}[xscale=.8,yscale=.64,>=latex',shorten >=.9pt, shorten <=1.4pt, line width=.6pt]
  
  \tikzstyle{every node}=[draw,circle,fill=black,minimum size=2pt,
                        inner sep=0pt]                    
\foreach \x in {1,2}{
\node at (\x,2) {};};
\foreach \x in {0,1,2,3}{
\node at (\x,1) {};};
\tikzstyle{every node}=[]
  \draw [->,dashed] (0,1) -- (1,2);
\draw [->,dashed] (1,1) -- (2,1);
\draw [->,dashed] (2,1) -- (3,1);
  \draw [->,dashed] (0,1) -- (1,1);
\draw [->] (1,2) -- (2,2);
\draw [->] (2,2) -- (3,1);
\draw [->] (1,1) -- (2,2);
\draw [->,dashed] (1,2) -- (2,1);
\draw (0,1) node [below] {\tiny ${\uu{0}}$};
\draw (1,1) node [below] {\tiny ${\uu{0}}$};
\draw (2,1) node [below] {\tiny ${\uu{0}}$};
\draw (3,1) node [below] {\tiny ${\uu{0}}$};
\draw (1,2) node [above] {\tiny ${\uu{1}}$};
\draw (2,2) node [above] {\tiny ${\uu{1}}$};
\begin{scope}
[xshift=4cm]
  
  \tikzstyle{every node}=[draw,circle,fill=black,minimum size=2pt,
                        inner sep=0pt]                    
\foreach \x in {1}{
\node at (\x,2) {};};
\foreach \x in {0,1,2,3}{
\node at (\x,1) {};};
\tikzstyle{every node}=[]
  \draw [->,dashed] (0,1) -- (1,2);
\draw [->,dashed] (1,1) -- (2,1);
\draw [->,dashed] (2,1) -- (3,1);
  \draw [->,dashed] (0,1) -- (1,1);
\draw [->,dashed] (1,2) -- (2,1);
\draw (0,1) node [below] {\tiny ${\uu{0}}$};
\draw (1,1) node [below] {\tiny ${\uu{0}}$};
\draw (2,1) node [below] {\tiny ${\uu{0}}$};
\draw (3,1) node [below] {\tiny ${\uu{0}}$};
\draw (1,2) node [above] {\tiny ${\uu{1}}$};
\end{scope}

\begin{scope}
[xshift=8cm]
    \tikzstyle{every node}=[draw,circle,fill=black,minimum size=2pt,
                        inner sep=0pt]                    
\foreach \x in {2}{
\node at (\x,2) {};};
\foreach \x in {0,1,2,3}{
\node at (\x,1) {};};
\tikzstyle{every node}=[]

\draw [->,dashed] (1,1) -- (2,1);
\draw [->,dashed] (2,1) -- (3,1);
  \draw [->,dashed] (0,1) -- (1,1);

\draw [->] (2,2) -- (3,1);
\draw [->] (1,1) -- (2,2);

\draw (0,1) node [below] {\tiny ${\uu{0}}$};
\draw (1,1) node [below] {\tiny ${\uu{0}}$};
\draw (2,1) node [below] {\tiny ${\uu{0}}$};
\draw (3,1) node [below] {\tiny ${\uu{0}}$};
\draw (2,2) node [above] {\tiny ${\uu{1}}$};
\end{scope}

\draw (3.5,1.5) node {$=$};
\draw (7.5,1.5) node {$\otimes$};
  \end{tikzpicture}
          \end{center}
\begin{center}
 \begin{tikzpicture}[xscale=.8,yscale=.64,>=latex',shorten >=.9pt, shorten <=1.4pt, line width=.6pt]
  
  \tikzstyle{every node}=[draw,circle,fill=black,minimum size=2pt,
                        inner sep=0pt]                    
\foreach \x in {1,2}{
\node at (\x,2) {};};
\foreach \x in {0,1,2,3}{
\node at (\x,1) {};};
\tikzstyle{every node}=[]
  \draw [->,dashed] (0,1) -- (1,2);
\draw [->,dashed] (1,1) -- (2,1);
\draw [->,dashed] (2,1) -- (3,1);
  \draw [->,dashed] (0,1) -- (1,1);
\draw [->] (1,2) -- (2,2);
\draw [->] (2,2) -- (3,1);
\draw [->] (1,1) -- (2,2);
\draw [->,dashed] (1,2) -- (2,1);
\draw (0,1) node [below] {\tiny ${\uu{0}}$};
\draw (1,1) node [below] {\tiny ${\uu{0}}$};
\draw (2,1) node [below] {\tiny ${\uu{0}}$};
\draw (3,1) node [below] {\tiny ${\uu{0}}$};
\draw (1,2) node [above] {\tiny ${\uu{1}}$};
\draw (2,2) node [above] {\tiny ${\uu{1}}$};
\begin{scope}
[xshift=4cm]
  
  \tikzstyle{every node}=[draw,circle,fill=black,minimum size=2pt,
                        inner sep=0pt]                    
\foreach \x in {1}{
\node at (\x,2) {};};
\foreach \x in {0,1,2,3}{
\node at (\x,1) {};};
\tikzstyle{every node}=[]
  \draw [->,dashed] (0,1) -- (1,2);
\draw [->,dashed] (1,1) -- (2,1);
\draw [->,dashed] (2,1) -- (3,1);
  \draw [->,dashed] (0,1) -- (1,1);
\draw [->,dashed] (1,2) -- (2,1);
\draw (0,1) node [below] {\tiny ${\uu{0}}$};
\draw (1,1) node [below] {\tiny ${\uu{0}}$};
\draw (2,1) node [below] {\tiny ${\uu{0}}$};
\draw (3,1) node [below] {\tiny ${\uu{0}}$};
\draw (1,2) node [above] {\tiny ${\uu{1}}$};
\end{scope}

\begin{scope}
[xshift=8cm]
  \tikzstyle{every node}=[draw,circle,fill=black,minimum size=2pt,
                        inner sep=0pt]                    
\foreach \x in {1,2}{
\node at (\x,2) {};};
\foreach \x in {0,1,2,3}{
\node at (\x,1) {};};
\tikzstyle{every node}=[]
  \draw [->,dashed] (0,1) -- (1,2);
\draw [->,dashed] (1,1) -- (2,1);
\draw [->,dashed] (2,1) -- (3,1);
  \draw [->,dashed] (0,1) -- (1,1);
\draw [->] (1,2) -- (2,2);
\draw [->] (2,2) -- (3,1);

\draw (0,1) node [below] {\tiny ${\uu{0}}$};
\draw (1,1) node [below] {\tiny ${\uu{0}}$};
\draw (2,1) node [below] {\tiny ${\uu{0}}$};
\draw (3,1) node [below] {\tiny ${\uu{0}}$};
\draw (1,2) node [above] {\tiny ${\uu{1}}$};
\draw (2,2) node [above] {\tiny ${\uu{1}}$};
\end{scope}
\draw (3.5,1.5) node {$\neq$};
\draw (7.5,1.5) node {$\otimes$};
  \end{tikzpicture}
          \end{center}
\end{ex}

With such dictionary Theorems \ref{thm11} and \ref{thm12} turn out to yield a new and compact proof of the acclaimed Factorization Theorem \cite{KV2} of Koetter/Vardy.

\begin{cor}[Factorization Theorem]\label{factthm}\index{Factorization Theorem}
Any linear trellis is  linearly isomorphic to a product of elementary trellises. 
\end{cor}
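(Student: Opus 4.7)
The plan is to combine the two main results of this subsection in sequence: Theorem \ref{thm12} provides the existence of a product basis, and Theorem \ref{thm13} converts product bases into elementary trellis factorizations. So the proof should be essentially a one-line appeal to these.

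More concretely, first I would invoke Theorem \ref{thm12} applied to $T$, which, for each span $(a,l)$, asks us to lift an arbitrary basis of the quotient $\SSS_{(a,l)}(T)/\SSS_{<(a,l)}(T)$ to a subset $\BB_{(a,l)} \subseteq \SSS_{(a,l)}(T)$ and then take the disjoint union $\BB := \sqcup_{(a,l)} \BB_{(a,l)}$. The theorem guarantees that $\BB$ is a product basis of $\SSS(T)$, i.e.\ a basis of $\SSS(T)$ such that $\BB \cap \SSS_{(a,l)}(T)$ generates $\SSS_{(a,l)}(T)$ for every span. Next, by Observation \ref{spansub1}(1), each element $\la^i \in \BB$ has a well-defined minimum span $[\la^i] = (a_i,l_i)$, so if we enumerate $\BB = \{\la^i = (\vv^i,\al^i)\}_{i=1,\ldots,r}$ we are in precisely the hypothesis of Theorem \ref{thm13}. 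Applying its ``only if'' direction yields
\[ T \simeq \bigotimes_{i=1}^{r} \al^i|(a_i,l_i), \]
which is exactly the claimed factorization.

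I do not anticipate any real obstacle, since all the substantive content has been developed in the previous subsection: Theorem \ref{thm11} (the containment $\SSS_{(a,l)}(T) \cap \sum_i \SSS_{(a_i,l_i)}(T) \leq \SSS_{<(a,l)}(T)$ when $(a,l) \not\leq (a_i,l_i)$) is what makes the inductive construction of $\BB$ in Theorem \ref{thm12} work, and the translation in Theorem \ref{thm13} relies on Observations \ref{trellmappap} and \ref{obs11} to match up span subcodes. The point of stating this corollary separately is simply to highlight that the new framework recovers, in a conceptually cleaner way, the Koetter/Vardy Factorization Theorem: the product basis structure on $\SSS(T)$ is exactly the algebraic shadow of an elementary factorization of $T$, and existence of such a basis for every linear trellis is guaranteed by the span subcode filtration together with Theorem \ref{thm11}.
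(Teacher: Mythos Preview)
Your proposal is correct and matches the paper's approach exactly: the corollary is presented as an immediate consequence of the ``dictionary'' just established, namely Theorem \ref{thm12} (every linear trellis has a product basis) followed by Theorem \ref{thm13} (a product basis yields a linear isomorphism with the corresponding product of elementary trellises). The paper does not give a separate proof beyond noting that Theorems \ref{thm11} and \ref{thm12} together furnish this new compact argument, which is precisely what you outlined.
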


\begin{rmk}
Note that the Factorization Theorem is actually stated in a weaker form in \cite{KV2}  since it ``only'' says that any linear trellis $T$ is \underline{isomorphic} to a product of elementary trellises. 
No \underline{linear isomorphism} is  mentioned in \cite{KV2}, as no importance is given to the particular linear structure of the spaces $V_{i}(T)$ (see the preamble of Section $6$ therein). However, a careful read of \cite{KV2} reveals that linear isomorphy is a plain consequence of what is proven therein. In fact, it can be easily seen that a \textit{representation matrix $G$ in product form} of $T$ (Definition $6.1$ in \cite{KV2}) yields a linear isomorphism between $T$ and the product of elementary trellises corresponding to the rows of $G$, and existence of such matrices is precisely what is proven in \cite{KV2}  (Theorem $6.2$). 
Actually we will see that  isomorphic linear trellises must be also linearly isomorphic (Theorem \ref{isolin}), but this can be proven only by using our approach. 
\end{rmk}

Despite the rather intricate proof of the Factorization Theorem given in \cite{KV2}, 
 no other proof has appeared in the literature since then
 (note though that recently, following our new proof, which we first had announced and illustrated in \cite{CB3}, Gluesing-Luerrsen came up with one further proof --- private communication). 
Our new proof is simpler (for example, we do not need elaborate transformations of vertex-labels and investigations on the connection properties of vertices as in \cite{KV2}) and shorter (the core part of it being 
contained in Theorems \ref{thm11} and \ref{thm12}). It is also more explanatory. In fact, it came out as a byproduct of our studies on the structure of linear trellises via label codes. 

Span subcodes and product bases yield an algebraic point of view and methodology which is technically powerful and gives more comprehensive understanding. Our framework allows to go beyond ``just'' proving that $T$ admits one factorization into elementary trellises. Indeed by our approach in the next sections we will be able to give a complete description of trellis factorizations (while the proof of \cite{KV2} does not shed light on what are the possible factorizations of $T$ and how these are encoded in $\SSS(T)$) and consequently to obtain also results concerning the classifications of trellises. 


\subsection{Structure of $\SSS(T)$: correspondence with trellis factorizations and atomic cycles}\label{algprop}
The following theorem shows that there is a correspondence between decompositions $\SSS(T)=\oplus_{i=1}^{r} S^{i}$ satisfying $\sal(T)=\oplus_{i=1}^{r} (S^{i}\cap\sal(T))$ for all $(a,l)$ and factorizations $T\simeq \otimes_{i=1}^{r}T_{i}$, and so it
completes the picture on how the label code encodes the trellis product.
To simplify notation, for $S\leq\SSS(T)$ we will put $$S_{(a,l)}:=S\cap\sal(T)$$ 
 \begin{thm}\label{thm10}
 If  $S^{1},\ldots,S^{r}\leq\SSS(T)$ satisfy $$\sal(T)=\oplus_{i=1}^{r} S^{i}_{(a,l)}$$ for all $(a,l)$ then $S^{i}=\SSS(T(S^{i}))$ for all $i$ 
and
 there exists a linear isomorphism 
 $$f:T\rightarrow \otimes_{i=1}^{r}T(S^{i})$$ 
 such that $f(T(S^{i}))=
 0\otimes\ldots\otimes0\otimes T(S^{i})\otimes0\otimes\ldots\otimes0$ 
 for all $i$. 
 
 Vice versa, given $T_{1},\ldots,T_{r}$ with no simultaneous parallel edges (i.e. 
 $\sum_{i=1}^{r}\dim\SSS_{(a,0)}(T_{i})\leq1$ for all $a\in\ZZ_{n}$), if there exists a linear isomorphism $f:T\rightarrow \otimes_{i=1}^{r}T_{i}$ then  for all $(a,l)$ we have
 $$\sal(T)=\oplus_{i=1}^{r}\SSS_{(a,l)}(f^{-1}(0\otimes\ldots\otimes0\otimes T_{i}\otimes0\otimes\ldots\otimes0))$$ 
 \end{thm}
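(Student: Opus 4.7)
The plan is to exploit the product basis machinery of Theorems \ref{thm12} and \ref{thm13} for the forward direction and to lift Observation \ref{obs10} along the isomorphism for the converse.

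For the forward direction, the key preparatory step is to show that each $S^{i}$, equipped with its inherited filtration $\{S^{i}_{(a,l)}\}$, admits a product basis in its own right. The containment $S^{i}_{(a,l)}\cap\sum_{j}S^{i}_{(a_{j},l_{j})}\subseteq\sal(T)\cap\sum_{j}\SSS_{(a_{j},l_{j})}(T)\subseteq\SSS_{<(a,l)}(T)$ shows that Theorem \ref{thm11} descends to $S^{i}$; specializing the hypothesis at $(a,l)=\ZZ_{n}$ gives $\SSS(T)=\oplus_{i}S^{i}$, which combined with the span-level directness yields $S^{i}\cap\SSS_{<(a,l)}(T)=\sum_{(a',l')<(a,l)}S^{i}_{(a',l')}$. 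The inductive construction in the proof of Theorem \ref{thm12} therefore runs inside each $S^{i}$ and produces a product basis $\BB^{i}$ of $S^{i}$. Using that the sums $\sal(T)=\oplus_{i}S^{i}_{(a,l)}$ are direct at every span, $\BB:=\sqcup_{i}\BB^{i}$ is a basis of $\SSS(T)$ whose intersection with each $\sal(T)$ generates $\sal(T)$; hence $\BB$ is a product basis of $T$.

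Applying Theorem \ref{thm13} to $\BB$, and using Observation \ref{spansub1}(1) to get the minimum span $[\la]=(a_{\la},l_{\la})$ for each $\la=(\vv_{\la},\al_{\la})\in\BB$, I obtain a linear isomorphism $f\colon T\to\otimes_{\la\in\BB}\al_{\la}|(a_{\la},l_{\la})$. Grouping the factors by $i$ writes the codomain as $\otimes_{i=1}^{r}T'_{i}$ with $T'_{i}:=\otimes_{\la\in\BB^{i}}\al_{\la}|(a_{\la},l_{\la})$, and under $f$ each $\la\in\BB^{i}$ maps to the canonical generator of $\SSS(\al_{\la}|(a_{\la},l_{\la}))$ embedded in the $i$-th slot. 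Therefore $f(S^{i})=\SSS(0\otimes\cdots\otimes T'_{i}\otimes\cdots\otimes 0)$; since $T'_{i}$ is reduced (being a product of reduced elementary trellises), the subtrellis of $T$ spanned by $S^{i}$ satisfies $f(T(S^{i}))=0\otimes\cdots\otimes T'_{i}\otimes\cdots\otimes 0$, and in particular $\SSS(T(S^{i}))=S^{i}$. Restricting $f$ to $T(S^{i})$ realizes $T(S^{i})\simeq T'_{i}$, and using these identifications the claimed factorization $f\colon T\to\otimes_{i=1}^{r}T(S^{i})$ with $f(T(S^{i}))=0\otimes\cdots\otimes T(S^{i})\otimes\cdots\otimes 0$ follows.

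For the converse, set $T^{i}:=f^{-1}(0\otimes\cdots\otimes T_{i}\otimes\cdots\otimes 0)$. By Observation \ref{trellmappap}, the linear isomorphism $f$ preserves span subcodes, so $f(\SSS_{(a,l)}(T^{i}))=\SSS_{(a,l)}(0\otimes\cdots\otimes T_{i}\otimes\cdots\otimes 0)$ and $f(\sal(T))=\sal(\otimes_{i}T_{i})$ for every $(a,l)$. The no-simultaneous-parallel-edges hypothesis is exactly the condition in the remark following Observation \ref{obs10}, so $\sal(\otimes_{i}T_{i})=\oplus_{i}\SSS_{(a,l)}(0\otimes\cdots\otimes T_{i}\otimes\cdots\otimes 0)$ at every span; pulling back through $f^{-1}$ gives the desired identity. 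The main obstacle is the forward direction: verifying that the concatenation $\BB=\sqcup_{i}\BB^{i}$ is still a product basis of the whole $\SSS(T)$ and that under the resulting elementary factorization the subtrellis $T(S^{i})$ matches the $i$-th embedded factor; both points hinge on the span-level directness hypothesis and on the reducedness of elementary trellis products.
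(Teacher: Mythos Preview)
Your proof is correct, but it takes a genuinely different route from the paper's. The paper does \emph{not} go through product bases for the forward direction. Instead, after reducing to $r=2$, it proves $S'=\SSS(T(S'))$ by a direct minimal-span contradiction: given $\la\in\SSS(T(S'))$, it takes the smallest $l$ with $\la\in S'\oplus S''_{(a,l)}$, then uses the fact that $\la$ lies in the subtrellis spanned by $S'$ to find $\tilde\la\in S'$ matching $\la$ at the $a$-th edge, so that $\la-\tilde\la\in\SSS_{(a+1,n-2)}(T)$; the span-level directness then forces the $S''$-component of $\la$ into $S''_{(a+1,n-2)}$, contradicting minimality. The isomorphism is then built directly as $F\circ G^{-1}$ where $G(\la',\la'')=\la'+\la''$ and $F$ is the natural map from Observation~\ref{obs10}. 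Your approach instead shows that each $S^{i}$ inherits enough structure (Theorem~\ref{thm11} restricted to $S^{i}$, plus $S^{i}\cap\SSS_{<(a,l)}(T)=\sum_{(a',l')<(a,l)}S^{i}_{(a',l')}$) for Theorem~\ref{thm12} to run internally, giving product bases $\BB^{i}$ whose union is a product basis of $T$; you then invoke Theorem~\ref{thm13} and read off both $S^{i}=\SSS(T(S^{i}))$ and the factorization from the elementary decomposition. The paper's argument is more self-contained and elementary (it does not need the Factorization Theorem machinery), whereas yours makes explicit that the theorem is really a statement about how product bases of the pieces assemble into one of the whole---a perspective that ties in nicely with the corollary immediately following the theorem. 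For the converse, both proofs are essentially the same application of Observation~\ref{obs10}.
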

 \begin{proof}
The ``vice versa'' part is an immediate consequence of Observation \ref{obs10}. So let us go the first part.
  By induction it suffices to consider the case of two subcodes $S',S''\leq\SSS(T)$. 
 Let $\la\in\SSS(T(S'))$. By hypothesis $\la\in S'\oplus S''$. Let $-1\leq l\leq n$ be the least $l$ such that $\la\in S'\oplus S''_{(a,l)}$ for some $(a,l)$. If $l=-1$ then $S''_{(a,l)}=0$, and we are done. So, assume $l\geq0$. Write $\la=\la'+\la''$ for some $\la'\in S'$, $\la''\in S''_{(a,l)}$. By assumption $(a,l)$ is a minimal span of $\la''$. Since $\la\in\SSS(T(S'))$, there exists a cycle $\widetilde{\la}\in S'$ which agrees with $\la$ in the $a$-th edge.  
 Then 
 $$\la-\widetilde{\la}\in\SSS_{(a+1,n-2)}(T)=S'_{(a+1,n-2)}\oplus S''_{(a+1,n-2)}$$
 So $\la'-\widetilde{\la}+\la''=\la'''+\la''''$ for some $\la'''\in S'_{(a+1,n-2)}$, $\la''''\in S''_{(a+1,n-2)}$. But $S'\cap S''=0$ by hypothesis, hence $\la''=\la''''\in S''_{(a+1,n-2)}$, contradicting the minimality of $(a,l)$ for $\la''$. 
Hence $l=-1$ in the first place, and so $\la\in S'$.  Since $S'\subseteq \SSS(T(S'))$ is always true we have proved that $S'=\SSS(T(S'))$. 
Symmetrically,  $S''=\SSS(T(S''))$.
 
 Now, let $T':=T(S'), T'':=T(S'')$. The map $$(\la',\la'')\overset{G}{\mapsto}\la'+\la''$$ 
 is an isomorphism and satisfies $G(\sal(T')\times\sal(T''))=\sal(T)$  since  $S_{(a,l)}'\oplus S_{(a,l)}''=\sal(T)$.
On the other hand, by Observation \ref{obs10} we have a natural isomorphism  
$$F:\SSS(T')\times\SSS(T'')\rightarrow\SSS(T'\otimes T'')$$ satisfying $F(\sal(T')\times\sal(T''))=\sal(T'\times T'')$. 
So  
$F\circ G^{-1}(\sal(T))=\sal(T'\times T'')$ for all $(a,l)$. Thus $F\circ G^{-1}$ and its inverse are trellis maps. This yields a linear isomorphism from $T$ to $T'\times T''$ which by construction satisfies our statement.
 \end{proof}
 
By the defining property of product bases and the above theorem, if $\BB=\BB^{1}\sqcup\ldots\sqcup\BB^{r}$ is a partition of a product basis of $T$  then   $$T\simeq \otimes_{i=1}^{r}T(\lb\BB^{i}\rb)$$
 This is a more general formulation of the dictionary given by Theorem \ref{thm13}.
The following interesting corollary also follows.
\begin{cor}
 A product basis of $T'\leq T$  extends to a product basis of $T$ if and only if there exists $T''$
 and a linear isomorphism $f:T\rightarrow T'\times T''$ such that $f(T')=T'\times 0$  (or equivalently, if and only if there exists $S\leq \SSS(T)$ such that $\sal(T)=\sal(T')\oplus S_{(a,l)}$ for all $(a,l)$).
\end{cor}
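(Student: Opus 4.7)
The corollary equates three conditions on a subtrellis $T' \leq T$:
(A) a product basis of $T'$ extends to a product basis of $T$;
(B) there exist $T''$ and a linear isomorphism $f : T \to T' \otimes T''$ with $f(T') = T' \otimes 0$;
(C) there exists $S \leq \SSS(T)$ with $\sal(T) = \sal(T') \oplus S_{(a,l)}$ for all spans $(a,l)$.
My plan is to prove (A)$\Leftrightarrow$(C) directly from the definition of product basis, then obtain (C)$\Rightarrow$(B) via the first half of Theorem~\ref{thm10}, and close the cycle with (B)$\Rightarrow$(A) by assembling a product basis inside $T' \otimes T''$ and pulling it back through $f$.

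For (A)$\Rightarrow$(C), given an extension $\BB = \BB' \sqcup \BB''$ I take $S := \langle \BB'' \rangle$. The identity $\sal(T') = \sal(T) \cap \SSS(T')$ combined with $\BB' \subseteq \SSS(T')$ lets the product-basis decomposition of $\sal(T)$ split cleanly into the parts generating $\sal(T')$ and $S_{(a,l)}$, and linear independence of $\BB$ makes the sum direct. For (C)$\Rightarrow$(A), summing the condition over spans strictly below $(a,l)$ yields $\SSS_{<(a,l)}(T) = \SSS_{<(a,l)}(T') \oplus S_{<(a,l)}$, so $\sal(T)/\SSS_{<(a,l)}(T)$ decomposes compatibly as $\sal(T')/\SSS_{<(a,l)}(T') \oplus S_{(a,l)}/S_{<(a,l)}$. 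Starting from any product basis $\BB'$ of $T'$, whose minimum-span partition already lifts a basis of the first summand by Observation~\ref{spansub1}(4), I pick lifts inside $S_{(a,l)}$ of bases of the second summand and invoke Theorem~\ref{thm12} to assemble a product basis of $T$ containing $\BB'$.

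For (C)$\Rightarrow$(B), the decomposition $\SSS(T) = \SSS(T') \oplus S$ with its span-compatibility is exactly the hypothesis of the first half of Theorem~\ref{thm10}; invoking it produces a linear isomorphism $f : T \to T(\SSS(T')) \otimes T(S)$, and reducedness of $T'$ identifies $T(\SSS(T'))$ with $T'$, so taking $T'' := T(S)$ gives $f(T') = T' \otimes 0$. For (B)$\Rightarrow$(A), I push $\BB'$ through $\SSS(f)$ into $\SSS(T' \otimes 0)$; the canonical identification of $T' \otimes 0$ with $T'$ turns it back into a product basis $\BB_1$ of $T'$ (since $\SSS(f)$ restricted to $\SSS(T')$ is a trellis automorphism of $T'$). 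Choosing any product basis $\BB_2$ of $T''$ and taking its image in $\SSS(0 \otimes T'')$, the set union of both images forms a product basis of $T' \otimes T''$ by Observation~\ref{obs10} extended as in the remark following it; pulling this basis back through $\SSS(f)^{-1}$ delivers a product basis of $T$ containing $\BB'$.

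The main subtlety I anticipate is in (B)$\Rightarrow$(A) when $T'$ and $T''$ have simultaneous parallel edges at some time index $b$: the one-dimensional subcode $\SSS_{(b,0)}(T' \otimes T'')$ then receives the same generator from both factors, so the corresponding elements of the two images collide in $\SSS(T' \otimes T'')$. A cardinality check using $\dim \SSS(T' \otimes T'') = \dim \SSS(T') + \dim \SSS(T'') - |\{b : \text{both factors have parallel edges at }b\}|$ shows that the resulting set union has the correct size, spans $\SSS(T' \otimes T'')$, and remains span-compatible, hence is still a product basis---no explicit pruning is needed once the union is taken set-theoretically.
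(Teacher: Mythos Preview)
Your approach is essentially the one the paper has in mind: the paper does not spell out a proof at all, merely saying the corollary follows from Theorem~\ref{thm10} together with the defining property of product bases, so your detailed cycle (A)$\Leftrightarrow$(C), (C)$\Rightarrow$(B), (B)$\Rightarrow$(A) is the natural expansion of that sketch. The arguments for (A)$\Leftrightarrow$(C) and for (C)$\Rightarrow$(B) are correct as written.

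There is, however, a real gap in your handling of (B)$\Rightarrow$(A) when $T'$ and $T''$ have simultaneous parallel edges at some time $b$. You assert that the span-$(b,0)$ element of $\SSS(f)(\BB')$ and the span-$(b,0)$ element of $\iota(\BB_2)$ ``collide'', i.e.\ are literally the same vector in $\SSS_{(b,0)}(T'\otimes T'')$, so that the set-theoretic union automatically has the correct cardinality. But each of these elements has the form $((\bm 0,\bm 0),\al)$ with $\al$ supported at $b$; the first has its $\al$ dictated by the given basis $\BB'$, while the second depends on your \emph{arbitrary} choice of $\BB_2$. Over a field with more than two elements they are generically distinct nonzero scalar multiples of one another, so the union contains two linearly dependent vectors and is not a basis. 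Your cardinality check therefore fails.

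The repair is minor: either choose the span-$(b,0)$ element of $\BB_2$ to match the one already supplied by $\SSS(f)(\BB')$, or simply discard those elements of $\BB_2$ before forming the union. In the latter case the pruned union still generates each $\SSS_{(a,l)}(T'\otimes T'')$, because $\SSS_{(b,0)}(T'\otimes T'')=\SSS_{(b,0)}(T'\otimes 0)$ is already covered by the $T'\otimes 0$ side whenever both factors have parallel edges at $b$; linear independence is then immediate from $\SSS(T'\otimes 0)\cap\langle\iota(\widetilde{\BB_2})\rangle=0$. With this adjustment your proof is complete.
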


The condition $T\simeq T'\otimes T''$ alone is not sufficient for a product basis of $T'\leq T$ to extend to one of $T$, even though that condition is equivalent to $T'\simeq T'''$ for some $T'''\leq T$ whose product basis extends. In other words, isomorphic subtrellises can contribute differently to the structure of $T$.

\begin{ex}\label{ex22}
Let $T$ 
be as below. Then 
$$T(\lb(\uu{0}\ms\uu{01}\ms\uu{01},\bm{0})\rb)\simeq T(\lb(\uu{0}\ms\uu{10}\ms\uu{10},\bm{0})\rb)$$ 
but  $\{(\uu{0}\ms\uu{01}\ms\uu{01},\bm{0})\}$ cannot be extended to a product basis of $T$, 
while $\{(\uu{0}\ms\uu{10}\ms\uu{10},\bm{0})\}$ can. 
\begin{center}
 \begin{tikzpicture}[yscale=.8,>=latex',shorten >=.9pt, shorten <=1.4pt, line width=.6pt]
  \tikzstyle{every node}=[draw,circle,fill=black,minimum size=2pt,
                        inner sep=0pt]                    
\foreach \x in {1,2}{
\node at (\x,2) {};};
\foreach \x in {1,2}{
\node at (\x,3) {};};
\foreach \x in {1,2}{
\node at (\x,4) {};};
\foreach \x in {0,1,2,3}{
\node at (\x,1) {};};
\tikzstyle{every node}=[]
  \draw [->,dashed] (0,1) -- (1,1);
\draw [->,dashed] (1,1) -- (2,1);
\draw [->,dashed] (2,1) -- (3,1);
\draw [->,dashed] (0,1) -- (1,2);
\draw [->,dashed] (1,2) -- (2,2);
\draw [->,dashed] (2,2) -- (3,1);
\draw [->,dashed] (1,1) -- (2,2);
\draw [->,dashed] (1,2) -- (2,1);

  \draw [->,dashed] (0,1) -- (1,3);
\draw [->,dashed] (1,3) -- (2,3);
\draw [->,dashed] (2,3) -- (3,1);
\draw [->,dashed] (0,1) -- (1,4);
\draw [->,dashed] (1,4) -- (2,4);
\draw [->,dashed] (2,4) -- (3,1);
\draw [->,dashed] (1,3) -- (2,4);
\draw [->,dashed] (1,4) -- (2,3);

\draw (0,1) node [below] {\tiny ${\uu{0}}$};
\draw (1,1) node [below] {\tiny ${\uu{00}}$};
\draw (2,1) node [below] {\tiny ${\uu{00}}$};
\draw (3,1) node [below] {\tiny ${\uu{0}}$};
\draw (1,2) node [above] {\tiny ${\uu{01}}$};
\draw (2,2) node [above] {\tiny ${\uu{01}}$};

\draw (1.1,3) node [below] {\tiny ${\uu{10}}$};
\draw (1.9,3) node [below] {\tiny ${\uu{10}}$};
\draw (1,4) node [above] {\tiny ${\uu{11}}$};
\draw (2,4) node [above] {\tiny ${\uu{11}}$};
      \end{tikzpicture}
    \end{center}
\end{ex}
We can get more insight into this phenomenon by adopting the following point of view. 
The additive structure of a linear trellis $T$ allows for the decomposition of cycles into sums of smaller cycles, i.e. cycles with shorter span length. However, some cycles in $T$ cannot be written as sums of other smaller cycles. In other words, some cycles are \textit{atomic}. More precisely, we say that a nonzero cycle $\la\in \SSS(T)$ is \textit{atomic}\index{cycle!atomic} if 
$\la\notin\SSS_{\ell(\la)-1}(T)$. 
We now have:

\begin{obs}\label{obs22}
A cycle $\la\in\SSS(T)$ is atomic if and only if it belongs to some product basis of $T$. 
\end{obs}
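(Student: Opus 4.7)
The statement is an iff, and the natural approach is to reduce both directions to the already-established characterization in Observation \ref{spansub1}(5): $\la$ belongs to some product basis if and only if $\la \in \SSS_{(a,l)}(T) \setminus \SSS_{<(a,l)}(T)$ for some span $(a,l)$. So the task is essentially to bridge between ``$\la \notin \SSS_{\ell(\la)-1}(T)$'' and the local condition involving $\SSS_{<(a,l)}(T)$ for a well-chosen $(a,l)$.

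For the forward direction, I would pick any span $(a_0, l_0)$ of $\la$ with $l_0 = \ell(\la)$, so that $\la \in \SSS_{(a_0, l_0)}(T)$ automatically. The point is then to observe that $\SSS_{<(a_0, l_0)}(T) \subseteq \SSS_{\ell(\la)-1}(T)$: inspecting the partial order on spans, any $(a', l') < (a_0, l_0)$ must satisfy $l' \leq l_0 - 1$ (intervals of the same length cannot be strictly contained, the case $l_0 = n-1$ is handled by the $(a',a'+l']$ formulation, and $\emptyset$ has length $-1$). Hence atomicity gives $\la \notin \SSS_{<(a_0, l_0)}(T)$, and applying (5) produces a product basis containing $\la$.

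For the reverse direction, suppose $\la$ belongs to some product basis. Then by Observation \ref{spansub1}(1) the minimum span $[\la]$ exists, with length $\ell(\la)$, and by part (4) of the same observation $\la \notin \SSS_{<[\la]}(T)$. Assume for contradiction that $\la \in \SSS_{\ell(\la)-1}(T)$ and write $\la = \sum_i \mu_i$ with $\mu_i \in \SSS_{(a_i, l_i)}(T)$ and $l_i \leq \ell(\la)-1$. Since $[\la]$ has length $\ell(\la) > l_i$, we have $[\la] \nleq (a_i, l_i)$ for every $i$, so Theorem \ref{thm11} applied to the family $\{(a_i, l_i)\}$ gives
\[
\la \in \SSS_{[\la]}(T) \cap \sum_i \SSS_{(a_i, l_i)}(T) \subseteq \SSS_{<[\la]}(T),
\]
contradicting $\la \notin \SSS_{<[\la]}(T)$. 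The main obstacle is precisely this reverse direction: an arbitrary decomposition of $\la$ into shorter-span cycles need not live inside $\SSS_{<[\la]}$ term-by-term (individual $\mu_i$'s may have supports outside the interval $[\la]$ that cancel in the sum), so the naive comparison fails. Theorem \ref{thm11} is tailored to exactly this obstruction, and invoking it is the crucial step that makes the argument go through.
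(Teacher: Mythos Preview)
Your proof is correct. The forward direction matches the paper exactly: pick a span $(a_0,\ell(\la))$ of $\la$, note $\SSS_{<(a_0,\ell(\la))}(T)\subseteq\SSS_{\ell(\la)-1}(T)$, and invoke part (5) of Observation~\ref{spansub1}.

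For the reverse direction you take a slightly longer route than the paper. The paper simply applies part (2) of Observation~\ref{spansub1}: since $\BB$ is a product basis, $\SSS_{\ell(\la)-1}(T)=\langle\mu\in\BB:[\mu]\text{ has length}\leq\ell(\la)-1\rangle$, and as $\la\in\BB$ has $[\la]$ of length $\ell(\la)$ it cannot lie in that span by linear independence of $\BB$. Your argument instead obtains $\la\notin\SSS_{<[\la]}(T)$ from part (4), and then invokes Theorem~\ref{thm11} to pull a hypothetical short-span decomposition back inside $\SSS_{<[\la]}(T)$. Both approaches are valid; yours makes explicit why the cancellation obstacle you describe is harmless, while the paper's route sidesteps it entirely by exploiting that $\BB$ already generates every $\SSS_l(T)$.
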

\begin{proof}
Assume $\la\in\SSS_{(a,\ell(\la))}(T)$ is atomic. Then $\la\notin\SSS_{\ell(\la)-1}(T)$. In particular $\la\notin\SSS_{<(a,\ell(\la))}(T)$. By $5)$ of Observation \ref{spansub1} then $\la$ belongs to some product basis.
Vice versa, 
if $\la$ belongs to a product basis of $T$ then by  Observation \ref{spansub1} again we conclude that $\la\notin\SSS_{\ell(\la)-1}(T)$, i.e. $\la$ is atomic.
\end{proof}

In Example \ref{ex22} we see that $(\uu{0}\ms\uu{10}\ms\uu{10},\bm{0})$ 
is atomic while $(\uu{0}\ms\uu{01}\ms\uu{01},\bm{0})$ 
is  not: the different contributions of the associated (elementary) trellises to the structure of $T$ can now be intrinsically explained by the atomic property. 
Note though that while the atomic property characterizes cycles belonging to product bases, in general it is not true that a linearly independent set of atomic cycles is part of a product basis. Nevertheless, the following theorem holds:

\begin{thm}\label{thm23} A basis $\BB$ of $\SSS(T)$ is  a product basis if and only if it minimizes total span length.
\end{thm}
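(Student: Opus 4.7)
My plan is to translate total span length into the dimensional invariants $d_l(\BB) := |\{\la \in \BB : \ell(\la) \leq l\}|$ and to show that the product-basis property is controlled exactly by the equalities $d_l(\BB) = \dim \SSS_l(T)$.

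A routine Abel-type rearrangement of $\sum_{\la\in\BB}\ell(\la) = \sum_{\la\in\BB}\sum_{l=1}^{\ell(\la)} 1$ yields
\[
\sum_{\la\in\BB}\ell(\la) \;=\; n\cdot|\BB| \;-\; \sum_{l=0}^{n-1} d_l(\BB),
\]
so minimizing total span length over bases of $\SSS(T)$ amounts to maximizing $\sum_{l=0}^{n-1} d_l(\BB)$. Since the elements of $\BB$ with $\ell(\la)\leq l$ are linearly independent and sit inside $\SSS_l(T)$, one always has $d_l(\BB)\leq \dim\SSS_l(T)$, with equality for every $l$ when $\BB$ is a product basis by Observation \ref{spansub1}(2) (applied to the family of all spans of length $\leq l$, which gives $\SSS_l(T) = \langle \la\in\BB : \ell(\la)\leq l\rangle$). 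This proves the forward direction and reduces the converse to: if $d_l(\BB')=\dim\SSS_l(T)$ for every $l$, then $\BB'$ is a product basis.

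For the converse I would first show that every $\la\in\BB'$ is atomic: otherwise $\la\in\SSS_{\ell(\la)-1}(T)$, and the level-$(\ell(\la){-}1)$ equality makes $\{\mu\in\BB':\ell(\mu)\leq\ell(\la)-1\}$ a basis of $\SSS_{\ell(\la)-1}(T)$, expressing $\la$ as a combination of other $\BB'$-elements, contradicting independence. Theorem \ref{thm11} next forces each atomic cycle to have a unique minimum span: two incomparable minimum spans $(a_1,l)$, $(a_2,l)$ of $\la$ would give $\la\in\SSS_{(a_1,l)}(T)\cap\SSS_{(a_2,l)}(T)\subseteq\SSS_{<(a_1,l)}(T)\subseteq\SSS_{l-1}(T)$, against atomicity. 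Thus $[\la]$ is well defined for every $\la\in\BB'$. The product-basis property is then proved by induction on $l$, using as the key structural input two further consequences of Theorem \ref{thm11}: the identity $\SSS_{(a,l)}(T)\cap\SSS_{l-1}(T)=\SSS_{<(a,l)}(T)$, and the direct-sum decomposition
\[
\SSS_l(T)/\SSS_{l-1}(T) \;=\; \bigoplus_{a'\in\ZZ_n}\SSS_{(a',l)}(T)/\SSS_{<(a',l)}(T).
\]
Given $\mu\in\SSS_{(a,l)}(T)$, expand $\mu=\sum c_i\nu^i$ with $\nu^i\in\{\la\in\BB':\ell(\la)\leq l\}$ (possible by the assumed dimensional equality) and reduce modulo $\SSS_{l-1}(T)$. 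The direct-sum decomposition forces $\sum_{i:[\nu^i]=(a',l)}c_i\nu^i\in\SSS_{<(a',l)}(T)$ for every $a'\neq a$, so subtracting $\sum_{i:[\nu^i]=(a,l)}c_i\nu^i$ (already in $\langle\BB'\cap\SSS_{(a,l)}(T)\rangle$) leaves a remainder in $\SSS_{(a,l)}(T)\cap\SSS_{l-1}(T)=\SSS_{<(a,l)}(T)$, handled by the inductive hypothesis.

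The main obstacle is the converse: the $l$-level equality $d_l(\BB')=\dim\SSS_l(T)$ is \emph{a priori} weaker than the span-by-span condition defining a product basis, and bridging the gap requires extracting from Theorem \ref{thm11} both the $\cap$-identity and the direct-sum splitting of $\SSS_l/\SSS_{l-1}$. Once these structural facts are in hand, the inductive step is routine linear algebra within a single level.
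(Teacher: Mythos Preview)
The paper does not actually prove Theorem~\ref{thm23}: immediately after the statement it reads ``We omit its proof here since it is rather technical and it requires extra machinery which goes beyond the scope of this paper. The reader is referred to \cite{C} for the proof and more on this.'' So there is no proof in the paper to compare yours against.

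Your argument appears to be correct, and notably it uses only tools already established in the paper (Theorem~\ref{thm11}, Theorem~\ref{thm12}, Observation~\ref{spansub1}), which is somewhat at odds with the paper's claim that ``extra machinery'' is needed. The Abel rearrangement reducing total span length to the level counts $d_l(\BB)$, the termwise bound $d_l(\BB)\leq\dim\SSS_l(T)$, and the use of existence of product bases (Theorem~\ref{thm12}) to see that a minimizer must saturate every level inequality are all sound. The converse is the substantive part, and your extraction of the two structural consequences of Theorem~\ref{thm11}---namely $\SSS_{(a,l)}(T)\cap\SSS_{l-1}(T)=\SSS_{<(a,l)}(T)$ and the direct-sum splitting of $\SSS_l(T)/\SSS_{l-1}(T)$ by starting point---is exactly what is needed to make the induction go through; both follow cleanly from Theorem~\ref{thm11} by choosing the appropriate families of spans (and the atomicity and unique-minimum-span steps are likewise correct, the latter being already noted in the paper just before the definition of $[\la]$). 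One small point you leave implicit: passing from ``$\BB'$ maximizes $\sum_l d_l$'' to ``$d_l(\BB')=\dim\SSS_l(T)$ for every $l$'' uses that the termwise maxima are simultaneously achievable (by a product basis), but this is immediate once the forward direction is in hand. Since the proof in \cite{C} is not visible here, I cannot say whether your route coincides with it, but your approach is self-contained within the paper's framework.
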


It is easily checked in Examples \ref{ex5cr} and \ref{ex22}    that the product bases are precisely those bases that minimize the total span length. This theorem enables to find product bases (and so elementary trellis factorizations) of $T$ via bases of minimum total span lenght. We omit its proof here since it is rather technical and it requires extra machinery which goes beyond the scope of this paper. The reader is referred to \cite{C} for the proof and more on this.


\subsection{Characterization of linear trellis isomorphy} 

In the previous subsections we have seen how structural properties of the ordered family of span subcodes of a linear trellis $T$ correspond to factorizations of $T$. We now prove a central theorem of this paper which describes how isomorphy is encoded by the same family. Our theorem, which yields an effective method for checking whether or not two linear trellises are isomorphic, will be crucial in determining all factorizations of linear trellises and in our later results on the classification of nonmergeable/minimal linear trellises (Sections \ref{factpap} and \ref{charactpap}).
We actually prove our theorem for linear isomorphy, but we shall see a posteriori that the same result is then true for nonlinear isomorphy  too (see Theorem \ref{isolin}). First, an easy ``mathematical folklore'' lemma:

\begin{lem}\label{lem12}
Assume we have linear maps $V\overset{f}{\rightarrow}U$ and $W\overset{g}{\rightarrow}U$ of vector spaces $V$, $W$, $U$, such that $f(V)=g(W)$ and $\dim{} V=\dim{} W$. Then there exists an isomorphism $V\overset{h}{\rightarrow}W$ such that $g\circ h=f$, i.e. the below diagram commutes. 
$$
\begin{xy}
(0,6)*+{V}="v"; %
(0,-6)*+{W}="w"; (20,0)*+{U}="u";%
{\ar "v";"u"}?*!/_3mm/{f};
{\ar "w";"u"}?*!/^3mm/{g};
{\ar@{-->} "v";"w"};?*!/^3mm/{h};
\end{xy}
$$
\end{lem}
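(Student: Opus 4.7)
The plan is to build $h$ by splitting $V$ and $W$ along their kernels: on the ``active'' parts $f$ and $g$ give matching isomorphisms onto the common image $f(V)=g(W)$, while on the kernels we have enough room to match dimensions freely.

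First I would apply rank--nullity to both maps. Since $f(V)=g(W)$ and $\dim V = \dim W$, we immediately get $\dim\ker f = \dim\ker g$. Next, choose a linear complement $V'$ of $\ker f$ in $V$, so that $V = V' \oplus \ker f$ and $f|_{V'}:V' \to f(V)$ is a linear isomorphism. Similarly choose $W'$ with $W = W' \oplus \ker g$ and $g|_{W'}:W'\to g(W)$ a linear isomorphism.

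Now I would define $h$ piecewise. On the ``image side'' set
\[
h|_{V'} := (g|_{W'})^{-1}\circ f|_{V'}:V'\longrightarrow W',
\]
which is a composition of isomorphisms and therefore an isomorphism of $V'$ onto $W'$ satisfying $g\circ h|_{V'} = f|_{V'}$ by construction. On the ``kernel side'', since $\dim\ker f = \dim\ker g$, pick any linear isomorphism $h|_{\ker f}:\ker f \to \ker g$. Extending by linearity via the direct sum decompositions gives $h:V \to W$.

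Finally I would verify the two required properties. The map $h$ is an isomorphism because it restricts to isomorphisms $V'\to W'$ and $\ker f\to\ker g$ on complementary summands. For the commutativity $g\circ h = f$: on $V'$ it holds by construction, and on $\ker f$ both sides vanish (the left because $h(\ker f)\subseteq\ker g$, the right by definition). By linearity the identity extends to all of $V$. There is no real obstacle here; this is routine once the kernel--complement decomposition is set up.
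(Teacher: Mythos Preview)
Your proof is correct and follows essentially the same approach as the paper: both split $V$ and $W$ into kernel plus complement, match the complements via the common image $f(V)=g(W)$, and match the kernels by an arbitrary isomorphism (using that $\dim\ker f=\dim\ker g$). The paper phrases this in terms of explicit bases while you use the direct-sum decomposition language, but the underlying argument is identical.
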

\begin{proof}
Take a basis $u_{1},\ldots,u_{m}$ of $f(V)=g(W)$. Take linearly independent elements $v_{1},\ldots,v_{m}\in V$ and linearly independent elements $w_{1},\ldots,w_{m}\in W$ such that $f(v_{i})=u_{i}=g(w_{i})$ for $i=1,\ldots,m$. Obviously $\langle v_{1},\ldots,v_{m}\rangle\cap\textnormal{Ker}\,f=0$, so that we can complete $v_{1},\ldots,v_{m}$ to a basis $v_{1},\ldots,v_{n}$ of $V$ such that $v_{m+1},\ldots,v_{n}\in\textnormal{Ker}\,f$. Similarly, we obtain a basis $w_{1},\ldots,w_{n}$ of $W$ such that $w_{m+1},\ldots,w_{n}\in\textnormal{Ker}\,g$. Then the linear map $h$ defined by $h(v_{i}):=w_{i}$ obviously satisfies our statement.
\end{proof}

\begin{thm}\label{thm15} Two linear trellises $T$ and $T'$ (of same length) are linearly isomorphic if and only if for all  $(a,l)$ the equalities  
\begin{align*}
\dim{} \SSS_{(a,l)}(T)&=\dim{} \SSS_{(a,l)}(T')\\
C_{(a,l)}(T)&=C_{(a,l)}(T')
\end{align*} 
hold true.
\end{thm}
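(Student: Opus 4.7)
The ``only if'' direction is immediate from Observation \ref{trellmappap}: if $T\simeq T'$ via a linear isomorphism $f$, then $F=\SSS(f)$ is a linear isomorphism of label codes preserving edge-labels and satisfying $F(\sal(T))=\sal(T')$ for every span $(a,l)$. Hence $\dim\sal(T)=\dim\sal(T')$, and $C_{(a,l)}(T)=L(\sal(T))=L(\sal(T'))=C_{(a,l)}(T')$.

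For the ``if'' direction, the plan is to construct a linear isomorphism $F:\SSS(T)\to\SSS(T')$ satisfying the conditions of Observation \ref{trellmappap} by induction on $l=-1,0,\ldots,n-1$. I build $F_l:\SSS_l(T)\to\SSS_l(T')$ that is an edge-label preserving linear isomorphism with $F_l(\SSS_{(a,l')}(T))=\SSS_{(a,l')}(T')$ for all $(a,l')$ with $l'\leq l$. The base case $l=-1$ is trivial since $\SSS_{-1}(T)=0$. Since any span strictly below $(a,l+1)$ has length at most $l$, we have $\SSS_{<(a,l+1)}(T)\subseteq\SSS_{l}(T)$, so inductively $F_l$ restricts to an isomorphism between the corresponding $\SSS_{<(a,l+1)}$ subcodes, giving $\dim\SSS_{<(a,l+1)}(T)=\dim\SSS_{<(a,l+1)}(T')$ and $L(\SSS_{<(a,l+1)}(T))=L(\SSS_{<(a,l+1)}(T'))$.

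To extend $F_l$ to $F_{l+1}$, for each $a\in\ZZ_n$ I apply Lemma \ref{lem12} to the triple
$$V=\SSS_{(a,l+1)}(T)/\SSS_{<(a,l+1)}(T),\; W=\SSS_{(a,l+1)}(T')/\SSS_{<(a,l+1)}(T'),\; U=C_{(a,l+1)}(T)/L(\SSS_{<(a,l+1)}(T)),$$
with $f,g$ induced by the label map $L$. The equality $\dim V=\dim W$ combines the hypothesis $\dim\SSS_{(a,l+1)}(T)=\dim\SSS_{(a,l+1)}(T')$ with the inductive coincidence noted above, and $f(V)=U=g(W)$ combines $C_{(a,l+1)}(T)=C_{(a,l+1)}(T')$ with $L(\SSS_{<(a,l+1)}(T))=L(\SSS_{<(a,l+1)}(T'))$. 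The lemma yields an isomorphism $h_a:V\to W$ with $g\circ h_a=f$. I choose a lift $\BB_{(a,l+1)}\subseteq\SSS_{(a,l+1)}(T)$ of a basis of $V$ as in Theorem \ref{thm12}, lift $h_a$ to representatives in $\SSS_{(a,l+1)}(T')$, and correct each by an element of $\SSS_{<(a,l+1)}(T')$ (whose existence is guaranteed by $g\circ h_a=f$, forcing the label-discrepancy to lie in $L(\SSS_{<(a,l+1)}(T'))$) to make edge labels match exactly. Setting $F_{l+1}|_{\SSS_l(T)}=F_l$ and defining $F_{l+1}$ on each $\BB_{(a,l+1)}$ by the chosen lifts extends $F_l$ to the required $F_{l+1}$.

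The main obstacle is showing that these extensions across different choices of $a$ at level $l+1$ patch into a single well-defined linear isomorphism, i.e.\ that $\BB_l\sqcup\bigsqcup_a\BB_{(a,l+1)}$ (with $\BB_l$ the already-constructed product basis of $\SSS_l(T)$) is a basis of $\SSS_{l+1}(T)$, and likewise on the $T'$ side. This is exactly the content of Theorem \ref{thm12}, whose proof hinges on Theorem \ref{thm11}, so the construction carries through. Since $\SSS_{n-1}(T)=\SSS(T)$ (every cycle of $T$ has a span of length at most $n-1$), the induction terminates with the required $F=F_{n-1}$, and Observation \ref{trellmappap} then yields $T\simeq T'$.
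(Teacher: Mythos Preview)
Your approach is essentially the paper's: induct on span length, apply Lemma~\ref{lem12} to the quotients $\sal/\SSS_{<(a,l)}$ with the induced label maps, lift to $\sal(T')$, correct by an element of $\SSS_{<(a,l)}(T')$ to force exact edge-label preservation, and rely on the product-basis machinery (Theorems~\ref{thm11} and~\ref{thm12}) to make the extensions across different $a$'s at the same level fit together. The paper organizes this by building maps $F_{(a,l)}$ on each $\sal(T)$ with downward compatibility, while you build a single $F_l$ on all of $\SSS_l(T)$; this is only a packaging difference.

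There is, however, a real slip in your final sentence. The claim ``every cycle of $T$ has a span of length at most $n-1$'' is false: a cycle $(\vv,\al)$ has $(a,n-1)$ as a span if and only if $v_a=0$, so a cycle with $v_a\neq 0$ for every $a\in\ZZ_n$ has $\ZZ_n$ as its \emph{only} span. For instance the reduced linear trellis $T=10|\ZZ_2$ depicted in Subsection~\ref{elemtrelliscor} satisfies $\SSS_1(T)=0\neq\SSS(T)$, since its nonzero cycle $(\uu{1}\,\uu{1},10)$ passes through the vertex $\uu{1}$ at both time indices. Hence you cannot stop at $l=n-1$. The paper runs its induction through $l=-1,0,\ldots,n$ (that is, through the degenerate span $\ZZ_n$) and takes $F=F_{\ZZ_n}$. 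This last step is trivial---there is a single span $\ZZ_n$ at that level and $\SSS_{<\ZZ_n}=\SSS_{n-1}$, so one more application of Lemma~\ref{lem12} finishes---but it must be included for the argument to cover disconnected trellises.
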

\begin{proof} The ``only if'' part is trivial, so we need only to prove the ``if'' part.
We need to show that there exists a linear isomorphism $F:\SSS(T)\rightarrow\SSS(T')$ such that $F(\SSS_{(a,l)}(T))=\SSS_{(a,l)}(T')$ for all $(a,l)$, and $L(F(\la))=L(\la)$ for all cycles $\la$. For increasing $l=-1,0,\ldots,n$ we will construct for all $a\in\ZZ_{n}$ linear isomorphisms 
$$F_{(a,l)}:\SSS_{(a,l)}(T)\rightarrow\SSS_{(a,l)}(T')$$  such that 
\begin{align*}L(F_{(a,l)}(\la))&=L(\la)\\
F_{(a,l)}|_{\SSS_{(a',l')}(T)}&=F_{(a',l')}
\end{align*} 
for all $\la\in\SSS_{(a,l)}(T)$ and all $(a',l')\leq(a,l)$. 
The map $F_{\ZZ_{n}}$ will then clearly be our sought $F$. For the first step $l=-1$, $F_{\emptyset}$ can   only be the zero map, and there is nothing to prove.

Now, for $l\geq1$ assume we have constructed for each $a\in\ZZ_{n}$ and $l'\leq l-1$ linear isomorphisms $F_{(a,l')}$  satisfying the above properties, and let $\BB$ be a product basis of $T$. Fix $a\in\ZZ_{n}$. The first step towards the construction of our sought map $F_{(a,l)}$ is to extend all the isomorphisms $F_{(a',l')}$, for $(a',l')<(a,l)$, to an isomorphism $F_{<(a,l)}:\SSS_{<(a,l)}(T)\rightarrow\SSS_{<(a,l)}(T')$ which also preserves edge-labels. 

So, define $F_{<(a,l)}$ 
by  
$$F_{<(a,l)}(\la):=F_{[\la]}(\la)$$ for all $\la\in\BB\cap\SSS_{<(a,l)}(T)$. Note that if an element $\la\in\BB$ satisfies $[\la]\leq(a',l')<(a,l)$ then,  by our assumption, $F_{[\la]}(\la)=F_{(a',l')}(\la)$, so that $F_{<(a,l)}$ extends $F_{(a',l')}$ for all $(a',l')<(a,l)$. It follows that $F_{<(a,l)}$ is surjective, and hence that 
$$\dim{} \SSS_{<(a,l)}(T')\leq\dim{} \SSS_{<(a,l)}(T)$$ 
Symmetrically, $\dim{} \SSS_{<(a,l)}(T)\leq\dim{} \SSS_{<(a,l)}(T')$. Therefore $F_{<(a,l)}$ is an isomorphism. Obviously by construction we also have that $L(F_{<(a,l)}(\la))=L(\la)$ for all $\la\in\SSS_{<(a,l)}$.

We will now construct $F_{(a,l)}$ by extending $F_{<(a,l)}$. 
Let 
$$C:=L(\SSS_{<(a,l)}(T))=L(\SSS_{<(a,l)}(T'))$$
where the equality follows from the hypothesis. 
We then have two naturally induced maps 
\begin{align*}
\overline{L}&:\SSS_{(a,l)}(T)/\SSS_{<(a,l)}(T)\rightarrow\FF^{n}/C\\
 \overline{L}&:\SSS_{(a,l)}(T')/\SSS_{<(a,l)}(T')\rightarrow\FF^{n}/C
 \end{align*} 
 We have seen that $\dim \SSS_{<(a,l)}(T)= \dim \SSS_{<(a,l)}(T')$. By hypothesis 
 $\dim \SSS_{(a,l)}(T) =\dim \SSS_{(a,l)}(T')$, 
and so also $\dim \SSS_{(a,l)}(T)/\SSS_{<(a,l)}(T)= \dim \SSS_{(a,l)}(T')/\SSS_{<(a,l)}(T')$. 
Moreover, since by hypothesis $C_{(a,l)}(T)=C_{(a,l)}(T')$, we have that 
$$\overline{L}(\SSS_{(a,l)}(T)/\SSS_{<(a,l)}(T))=\overline{L}(\SSS_{(a,l)}(T')/\SSS_{<(a,l)}(T'))$$
Hence by Lemma \ref{lem12} we can find an isomorphism 
$$G:\SSS_{(a,l)}(T)/\SSS_{<(a,l)}(T)\rightarrow \SSS_{(a,l)}(T')/\SSS_{<(a,l)}(T')$$ 
such that $\overline{L}\circ G=\overline{L}$, where $G=\overline{g}$ for some linear map $g:\SSS_{(a,l)}(T)\rightarrow \SSS_{(a,l)}(T')$ (it is easily seen that any linear map from a quotient of vector spaces $V_{1}/V_{2}$ to another quotient $W_{1}/W_{2}$ is the reduction of a linear map from $V_{1}$ to $W_{1}$). The situation is depicted in the following commutative diagram.
$$
\begin{xy}
(10,10)*+{\SSS_{(a,l)}(T)}="v1"; %
(10,-10)*+{\SSS_{(a,l)}(T')}="v2"; %
(40,10)*+{\SSS_{(a,l)}(T)/\SSS_{<(a,l)}(T)}="v"; %
(40,-10)*+{\SSS_{(a,l)}(T')/\SSS_{<(a,l)}(T')}="w"; 
(80,0)*+{\FF^{n}/C}="u";%
{\ar "v";"u"}?*!/_3mm/{\overline{L}};
{\ar "w";"u"}?*!/^3mm/{\overline{L}};
{\ar@{-->} "v";"w"};?*!/^3mm/{\overline{g}};
{\ar "v1";"v"};
{\ar "v2";"w"};
{\ar@{-->} "v1";"v2"};;?*!/^3mm/{{g}};
\end{xy}
$$
Now, for $\la\in \SSS_{(a,l)}(T)$ we have $\overline{L}(\overline{g}(\overline{\la}))=\overline{L}(\overline{\la})$ (where $\overline{\la}$ is the reduction of $\la$ modulo $\SSS_{<(a,l)}(T)$) and therefore $$L(g(\la))-L(\la)\in C=L(\SSS_{<(a,l)}(T'))$$
 i.e. $L(g(\la))=L(\la)+L(\widetilde{\la})$ for some $\widetilde{\la}\in \SSS_{<(a,l)}(T')$. So, putting $$F_{(a,l)}(\la):=F_{<(a,l)}(\la)$$ for each basis element $\la$ with span $[\la]<(a,l)$, and $$F_{(a,l)}(\la):=g(\la)-\widetilde{\la}$$ for each basis element $\la$ with $[\la]=(a,l)$, we get that $F_{(a,l)}$ extends $F_{(a',l')}$ whenever $(a',l')\leq(a,l)$, and that $L(F_{(a,l)}(\la))=L(\la)$ for all $\la\in\SSS_{(a,l)}(T)$. 

It remains to prove that the map $F_{(a,l)}:\SSS_{(a,l)}(T)\rightarrow\SSS_{(a,l)}(T')$ so constructed is an isomorphism. By the hypothesis it is sufficient to show that $F_{(a,l)}$ is injective. So assume that $F_{(a,l)}(\la')=0$ for $\la'\in\SSS_{(a,l)}(T)$. 
We can write $$\la'=\sum_{\la\in\BB|[\la]=(a,l)}x_{\la}\la+\bm{\mu}$$ for some $x_{\la}\in\FF$ and $\bm{\mu}\in\SSS_{<(a,l)}(T)$. Applying $F_{(a,l)}$ then
$$\sum_{\la\in\BB|[\la]=(a,l)}x_{\la}g(\la)-\sum_{\la\in\BB|[\la]=(a,l)}x_{\la}\widetilde{\la}+F_{<(a,l)}(\bm{\mu})=0$$ 
Reducing modulo $\SSS_{<(a,l)}(T')$ and $\SSS_{<(a,l)}(T)$ we thus get that 
$$\overline{g}(\sum_{\la\in\BB|[\la]=(a,l)}x_{\la}\overline{\la})={0}$$ 
and so since $\overline{g}=G$ is an isomorphism  
$$\sum_{\la\in\BB|[\la]=(a,l)}x_{\la}\overline{\la}={0}$$
 But by $4)$ of Observation \ref{spansub1} we know  that $\{\overline{\la}|\la\in\BB, [\la]=(a,l)\}$ is a basis of $\SSS_{(a,l)}(T)/\SSS_{<(a,l)}(T)$. So we conclude that $x_{\la}=0$ for all $\la$. Then $\la'=\bm{\mu}$ and $F_{<(a,l)}(\bm{\mu})=F_{(a,l)}(\la')=0$, hence $\bm{\mu}=0$ by injectivity of $F_{<(a,l)}$. Hence $F_{(a,l)}$ satisfies all the wanted properties, and therefore the proof is finished.
\end{proof}
 
For one-to-one trellises linear isomorphy reduces to equality of the represented span subcodes. In fact, $T$ is one-to-one if and only if $L:\SSS(T)\rightarrow C(T)$ is one-to-one, in which case $\dim\sal(T)=\dim C_{(a,l)}(T)$ for all $(a,l)$, so that we get the following corollary.

\begin{cor}\label{linisocor}
Two  one-to-one linear trellises $T, T'$ are linearly isomorphic if and only if $C_{(a,l)}(T)=C_{(a,l)}(T')$ for all $(a,l)$.
\end{cor}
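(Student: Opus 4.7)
The plan is to deduce this as a direct specialization of Theorem \ref{thm15}. The ``only if'' direction is handled immediately by that theorem, so the real content is the ``if'' direction, where I must recover the dimension equality $\dim \sal(T) = \dim \sal(T')$ from the single hypothesis $C_{(a,l)}(T) = C_{(a,l)}(T')$.

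The key observation is that the one-to-one hypothesis means $L : \SSS(T) \to C(T)$ is an injective linear map. Restricted to any span subcode $\sal(T)$, it is therefore still injective, and by definition its image is exactly $C_{(a,l)}(T) = L(\sal(T))$. Hence
\[
\dim \sal(T) = \dim C_{(a,l)}(T),
\]
and symmetrically $\dim \sal(T') = \dim C_{(a,l)}(T')$. The assumption $C_{(a,l)}(T) = C_{(a,l)}(T')$ then forces the two dimensions to coincide.

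With both hypotheses of Theorem \ref{thm15} in hand, we conclude that $T \simeq T'$. The converse implication is built into Theorem \ref{thm15} (a linear isomorphism sends span subcodes to span subcodes bijectively and preserves edge-labels, hence preserves the represented codes $C_{(a,l)}$). There is no real obstacle here: the whole argument is just unpacking that for one-to-one trellises, dimensions of span subcodes are redundant information once the represented span subcodes are known.
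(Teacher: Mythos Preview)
Your proposal is correct and follows essentially the same approach as the paper: the one-to-one hypothesis makes $L$ injective on each $\sal(T)$, whence $\dim\sal(T)=\dim C_{(a,l)}(T)$, and then Theorem~\ref{thm15} applies directly.
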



\subsection{Uniqueness of linear structure of linear trellises}

A trellis $T$ is said to be \textit{linearizable} if all its vertex sets $V_{i}(T)$ admit a vector space structure that make $T$ a linear trellis (i.e., more concretely, if we can label all vertices with words over $\FF$ so that $\SSS(T)$ is a linear code). While in  \cite{KV2} it was shown how to find out if a trellis is linearizable, the fundamental question whether a linearizable trellis admits  an essentially unique linear structure has not been addressed so far. This question can be rephrased as: are isomorphic linear trellises essentially equal? Since for linear trellises to be ``essentially equal''  means to be linearly isomorphic, the question is then whether isomorphy of linear trellises implies also linear isomorphy.

Thanks to Theorem \ref{thm15} we can now establish that for linear trellises there is no difference between being linearly isomorphic and being isomorphic. 
In particular, in Theorem \ref{thm15} we can drop the ``linearly'' adverb with no need to change anything.

\begin{thm}\label{isolin}
Let $T$, $T'$ be linear trellises. If $T$ and $T'$ are isomorphic then they are also linearly isomorphic.
\end{thm}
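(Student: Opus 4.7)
The plan is to derive the result from Theorem \ref{thm15} by analyzing the bijection on cycles induced by a (possibly non-linear) isomorphism. Given a trellis isomorphism $\phi \colon T \to T'$, I would consider the induced label-preserving set bijection $F \colon \SSS(T) \to \SSS(T')$ defined by
\begin{equation*}
F(v_0\alpha_0 \ldots v_{n-1}\alpha_{n-1}) := \phi_0(v_0)\alpha_0 \ldots \phi_{n-1}(v_{n-1})\alpha_{n-1}.
\end{equation*}
This map is generally not linear, but the aim is to use it to verify the two conditions of Theorem \ref{thm15}: that $\dim \SSS_{(a,l)}(T) = \dim \SSS_{(a,l)}(T')$ and $C_{(a,l)}(T) = C_{(a,l)}(T')$ for every span $(a,l)$.

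The key object of study will be the image $S_{(a,l)} := F(\SSS_{(a,l)}(T)) \subseteq \SSS(T')$. Unravelling the definitions, $S_{(a,l)}$ consists of exactly those cycles $(\ww,\al) \in \SSS(T')$ with $w_i = \phi_i(0_i(T))$ for every $i \notin (a, a+l]$. The first main step will be to argue that $S_{(a,l)}$ is a coset of $\SSS_{(a,l)}(T')$ inside $\SSS(T')$: by linearity of $T'$, the difference of any two elements of $S_{(a,l)}$ has zero vertex coordinates outside $(a, a+l]$ (the common offsets $\phi_i(0_i(T))$ cancel) and hence lies in $\SSS_{(a,l)}(T')$; and adding any element of $\SSS_{(a,l)}(T')$ to an element of $S_{(a,l)}$ stays in $S_{(a,l)}$. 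The coset is nonempty since it contains $F(\bm{0})$, where $\bm{0}$ is the zero cycle of $T$.

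From this coset description the two required equalities follow quickly. Since $F$ is a set bijection, $|S_{(a,l)}| = |\SSS_{(a,l)}(T)|$, and since $S_{(a,l)}$ is a coset of $\SSS_{(a,l)}(T')$, $|S_{(a,l)}| = |\SSS_{(a,l)}(T')|$, yielding the dimension equality. For the label-code equality, the crucial observation is that $L(F(\bm{0})) = \bm{0}$, because the zero cycle of $T$ has all edge-labels zero and $F$ preserves edge-labels. Hence $S_{(a,l)}$ has a coset representative with zero label sequence, so $L(S_{(a,l)}) = \bm{0} + C_{(a,l)}(T') = C_{(a,l)}(T')$; on the other hand $L(S_{(a,l)}) = L(\SSS_{(a,l)}(T)) = C_{(a,l)}(T)$ since $F$ preserves labels. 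Theorem \ref{thm15} then delivers $T \simeq T'$.

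The hard part will be the step identifying $S_{(a,l)}$ as a coset of $\SSS_{(a,l)}(T')$, even though the common vertex values $\phi_i(0_i(T))$ prescribed at times $i \notin (a,a+l]$ need not coincide with the zeros $0_i(T')$ of the target linear structure. The point is that $\SSS(T')$ is closed under subtraction and these common offsets cancel, converting the affine constraint $w_i = \phi_i(0_i(T))$ into the linear constraint $w_i = 0_i(T')$ inside the translated set $S_{(a,l)} - F(\bm{0})$.
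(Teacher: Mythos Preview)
Your proof is correct and follows essentially the same approach as the paper: both arguments reduce to Theorem~\ref{thm15} by using the induced (nonlinear) bijection $F$ on cycles and the translation by $F(\bm{0})$ (equivalently, the paper's $F^{-1}(\bm{0})$ on the source side) to pass between the affine image $F(\SSS_{(a,l)}(T))$ and the linear subspace $\SSS_{(a,l)}(T')$. One small omission: when you conclude that differences of elements of $S_{(a,l)}$ lie in $\SSS_{(a,l)}(T')$, you should also note that their edge-labels are supported in $[a,a+l]$ (immediate since $F$ preserves labels and the preimages lie in $\SSS_{(a,l)}(T)$); your ``exactly those cycles'' description likewise needs this label constraint.
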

\begin{proof}
Assume we have a nonlinear  trellis isomorphism $f:T\rightarrow T'$ given by $f_{i}:V_{i}(T)\rightarrow V_{i}(T')$, ${i\in\ZZ_{n}}$. Let $F=\SSS(f)$.
Define a map from $\SSS(T)$ to $\SSS(T')$ by $$\la\mapsto F(\la+F^{-1}(\bm{0}))$$ 
for all cycles $\la\in\SSS(T)$. Clearly this map preserves edge labels and it is injective, as $F$ and so $F^{-1}$ do. Now, let $\la\in\sal(T)$, for some $0\leq l\leq n-1$. Then $\nu_{i}(\la)=0$ for all $i\notin (a,a+l]$. So, for  $i\notin (a,a+l]$ we have that $$\nu_{i}(\la+F^{-1}(\bm{0}))=\nu_{i}(F^{-1}(\bm{0}))=f_{i}^{-1}(0_{i})$$
 and therefore  $\nu_{i}(F(\la+F^{-1}(\bm{0})))=f_{i}(\nu_{i}(\la+F^{-1}(\bm{0})))=0_{i}$. So our map is an injective and edge-label preserving map from $\sal(T)$ to $\sal(T')$. Symmetrically, there is also such a map from $\sal(T')$ to $\sal(T)$. From this follows that the hypothesis of Theorem \ref{thm15} are satisfied, hence the two trellises are linearly isomorphic.
\end{proof}

The essential uniqueness of the linear structure of linearizable trellises now follows:

\begin{thm}\label{linstr}
Let $T=(\sqcup V_{i}(T),\sqcup E_{i}(T))$ be a trellis. Assume that each vertex set $V_{i}(T)$ has two addition operations 
$+_{1}$, $+_{2}$, and two scalar multiplication operations $\cdot_{1}$, $\cdot_{2}$, 
such that both $(+_{1},\cdot_{1})$ and  $(+_{2},\cdot_{2})$ make $T$ linear. 
Then the two resulting linear trellises $T_{1}$ and $T_{2}$ are linearly isomorphic.
\end{thm}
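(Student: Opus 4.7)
The plan is to reduce this immediately to Theorem \ref{isolin}. The key observation is that $T_1$ and $T_2$ have literally the same underlying directed labeled graph: they share the same vertex sets $V_i(T)$ (viewed as sets), the same edge sets $E_i(T)$ (as subsets of $V_i(T)\times \FF \times V_{i+1}(T)$), and the same edge-label assignments. The only difference between them is the algebraic structure $(+_j,\cdot_j)$ carried by each $V_i(T)$.

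Given this, the first step is to exhibit the identity maps $\mathrm{Id}_i : V_i(T_1) \to V_i(T_2)$ and note that the collection $(\mathrm{Id}_i)_{i\in\ZZ_n}$ is a trellis isomorphism in the sense of Subsection on Morphisms: each $\mathrm{Id}_i$ is bijective, and the condition $v\alpha w \in E_i(T_1) \Longleftrightarrow \mathrm{Id}_i(v)\alpha\, \mathrm{Id}_{i+1}(w)\in E_i(T_2)$ is trivially satisfied because the edge sets literally coincide. Hence $T_1 \sim T_2$ as (not necessarily linear) trellises. Crucially, this identity map need not respect the two vector space structures, so $T_1$ and $T_2$ are a priori only isomorphic, not linearly isomorphic.

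The second and final step is to apply Theorem \ref{isolin}: since $T_1$ and $T_2$ are both linear trellises that are isomorphic, they must be linearly isomorphic. This yields $T_1\simeq T_2$ as claimed.

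There is no hard part here once Theorem \ref{isolin} is in hand; the whole content is packaged in that earlier result, whose proof used the span subcode machinery of Theorem \ref{thm15}. The proof of \ref{linstr} is essentially a one-line corollary: the two linear structures produce isomorphic trellises via the identity, hence by \ref{isolin} they produce linearly isomorphic trellises, so the linear structure of a linearizable trellis is essentially unique.
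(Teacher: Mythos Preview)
Your proof is correct and matches the paper's approach exactly: the paper presents Theorem \ref{linstr} as an immediate consequence of Theorem \ref{isolin} (introduced by ``The essential uniqueness of the linear structure of linearizable trellises now follows''), with the identity map furnishing the isomorphism $T_1\sim T_2$ that Theorem \ref{isolin} then upgrades to a linear isomorphism.
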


The reader must  be aware  that Theorem \ref{isolin} is not saying that an isomorphism of linear trellises is also a linear isomorphism. Equivalently, Theorem \ref{linstr} is not saying that the identity map of $T$  necessarily results in  a linear isomorphism between $T_{1}$ and $T_{2}$. Counterexamples can be  easily constructed indeed. 
Still,  the proofs of the above results explicitly tell us  how to  construct a linear isomorphism from the given isomorphism. 

\begin{rmk}
There is a suggestive parallel between vector spaces and linear trellises. Any $K$-vector space $V$ has a basis and thus decomposes into parts of dimension one ($V\simeq K^{n}$), and if two $K$-vector spaces are equivalent as sets (i.e. have the same cardinality) then they must be linearly isomorphic. Similarly, any linear trellis $T$ has a product basis, so that the Factorization Theorem holds ($T\simeq \otimes_{i=1}^{r}T_{i}$, with $\dim\SSS(T_{i})=1$ for all $i$), and we also have Theorem \ref{isolin}. However, while for vector spaces such fundamental properties are easy to prove, for linear trellises this is not at all the case.
\end{rmk}

When the trellises under consideration are one-to-one we get a much stronger result, which turns out to be even much simpler to prove.

\begin{thm}\label{thm16} Let $f:T\rightarrow T'$ be an isomorphism of one-to-one linear trellises. Then $f$ is linear.
\end{thm}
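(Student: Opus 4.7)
The plan is to pass from the isomorphism $f$ to its associated bijection $F := \SSS(f) : \SSS(T) \to \SSS(T')$ and show that the one-to-one hypothesis forces $F$ to be $\FF$-linear; linearity of each vertex map $f_i$ then follows almost immediately.

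First I would exploit the one-to-one hypothesis. Since $f$ is an isomorphism, $F$ is a bijection that preserves edge-labels, i.e.\ $L = L' \circ F$, where $L$ and $L'$ denote the label maps of $T$ and $T'$. Because $T$ and $T'$ are one-to-one, both $L$ and $L'$ are injective linear maps onto $C(T)$ and $C(T')$ respectively, and the label-preservation identity forces $C(T)=C(T')$. Hence $L' : \SSS(T') \to C(T')$ is a linear isomorphism, and the identity $L = L' \circ F$ rearranges to
\[
F = (L')^{-1} \circ L,
\]
which is a composition of $\FF$-linear maps. Therefore $F$ is $\FF$-linear.

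With $F$ linear in hand, I would finish as follows. By the blanket assumption of Section~\ref{algframpap}, both $T$ and $T'$ are reduced, so every vertex $v \in V_i(T)$ equals $\nu_i(\la)$ for some $\la \in \SSS(T)$. The defining compatibility $\SSS(f) = F$ gives $f_i(\nu_i(\la)) = \nu_i(F(\la))$ for all such $\la$. Since $\nu_i$ is linear (it is just the $i$-th projection on cycles viewed as tuples) and $F$ is linear, the map $\nu_i \circ F$ is linear, and by the well-definedness of $f_i$ (already granted by $f$ being a morphism) this descends to an $\FF$-linear map on $V_i(T)$ agreeing with $f_i$ on the image of $\nu_i$, which by reducedness is all of $V_i(T)$. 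Thus each $f_i$ is linear, and so is $f$.

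I do not expect any serious obstacle here: the whole theorem is essentially the observation that for one-to-one trellises the label code determines the cycle structure rigidly enough that any label-preserving bijection of cycles has to be the canonical linear transport $(L')^{-1} \circ L$. The only point that requires minor care is invoking reducedness to guarantee that every vertex arises as $\nu_i(\la)$, so that linearity of $F$ genuinely transfers to each $f_i$ rather than merely to its restriction to such vertices.
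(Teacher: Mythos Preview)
Your proof is correct and follows essentially the same approach as the paper: both derive $\SSS(f)=(L')^{-1}\circ L$ from label-preservation and the one-to-one hypothesis, concluding that $\SSS(f)$ is linear. You additionally spell out the passage from linearity of $\SSS(f)$ to linearity of each $f_i$ via reducedness, which the paper leaves implicit.
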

\begin{proof}
We need to show that $\SSS(f)$ is linear. Since  $f$ is a trellis morphism we have that $L\circ \SSS(f)=L'$, where $L$ and $L'$ are the edge-label sequence maps of cycles respectively in $T$ and $T'$. 
But since $T$ and $T'$ are one-to-one and linear, $L$ and $L'$ must be injective linear maps, and so  $\SSS(f)=L^{-1}\circ L'$ is linear (where the domain of $L^{-1}$ is $C(T)=L'(\SSS(T))$).
\end{proof}

\begin{cor}\label{cor13}
Assume we are in the same situation as in Theorem \ref{linstr}. Assume also that $T$ is one-to-one. Then the identity map of $T$ is a linear isomorphism between $T_{1}$ and $T_{2}$.
\end{cor}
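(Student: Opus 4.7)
The plan is to derive Corollary \ref{cor13} as an almost immediate consequence of Theorem \ref{thm16}. The point is that the hypothesis ``one-to-one'' is a purely combinatorial property of the trellis $T$ viewed as an edge-labelled graph: it only asks that the edge-label map $L:\SSS(T)\to C(T)$ be injective, which does not refer to any vector space structure on the vertex sets. Consequently, both $T_{1}$ and $T_{2}$ are one-to-one linear trellises, since they share the same underlying graph and edge-labelling as $T$.

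Next, I would observe that the identity map $\operatorname{id}:T\to T$ is trivially a trellis isomorphism from $T_{1}$ to $T_{2}$: each component $\operatorname{id}_{i}:V_{i}(T)\to V_{i}(T)$ is a bijection, and the edge sets $E_{i}(T)$ are (as subsets of $V_{i}(T)\times\FF\times V_{i+1}(T)$) the same on both sides. The only thing that could fail is linearity with respect to the mismatched structures $(+_{1},\cdot_{1})$ and $(+_{2},\cdot_{2})$.

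At this point Theorem \ref{thm16} applies directly: any isomorphism between one-to-one linear trellises is automatically linear. Applying it to $\operatorname{id}:T_{1}\to T_{2}$ yields that $\operatorname{id}$ is a linear isomorphism, which is exactly the statement of the corollary. There is no real obstacle here; the work has all been done in Theorem \ref{thm16}, and the corollary is essentially a remark highlighting that in the one-to-one case one does not need to modify the bijections $f_{i}$ (unlike in the general case of Theorem \ref{linstr}, where the linear isomorphism produced by the proof of Theorem \ref{isolin} is generally not the identity).
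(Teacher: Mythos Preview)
Your proposal is correct and follows exactly the approach intended by the paper: Corollary \ref{cor13} is stated immediately after Theorem \ref{thm16} without proof, the implicit argument being precisely that the identity map is a trellis isomorphism between the one-to-one linear trellises $T_{1}$ and $T_{2}$, hence linear by Theorem \ref{thm16}. Your observation that the one-to-one property is purely combinatorial (so inherited by both $T_{1}$ and $T_{2}$) is the one point that needs checking, and you handle it correctly.
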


The above corollary tells us that there is {literally only one} vector space structure 
on each space $V_{i}(T)$ that can possibly make a one-to-one trellis $T$ linear, so that the situation is very  rigid in that case. In other words, any two different  labelings of vertices of a one-to-one trellis $T$ that make $T$  linear  are one the linear transformation of the other.

The function-theoretical argument used in the above proof leads us also to a necessary and sufficient condition for a one-to-one trellis to admit a linear structure. 

\begin{obs}\label{obs14}
Let $T$ be a one-to-one trellis. Put $G_{i}:=\nu_{i}\circ L^{-1}$. Then $T$ is linearizable if and only if for all $\bm{v},\bm{v}',\bm{w},\bm{w}'\in C(T)$, $\alpha\in\FF$, $i\in\ZZ_{n}$ such that $G_{i}(\bm{v})=G_{i}(\bm{v}')$ and $G_{i}(\bm{w})=G_{i}(\bm{w}')$ the following equalities hold
\begin{align*}
G_{i}(\bm{v}+\bm{w})&=G_{i}(\bm{v}'+\bm{w}')\\
G_{i}(\alpha\vv)&=G_{i}(\alpha\vv')
\end{align*}
\end{obs}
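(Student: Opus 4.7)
The forward direction is immediate: if $T$ admits a linear structure making it a linear trellis, then $L\colon\SSS(T)\to C(T)$ is a linear bijection (linear by assumption, bijective by one-to-oneness), and composing its inverse with the linear maps $\nu_i$ shows that $G_i=\nu_i\circ L^{-1}$ is linear. Under the assumptions $G_i(\vv)=G_i(\vv')$ and $G_i(\ww)=G_i(\ww')$, linearity gives $G_i(\vv+\ww)-G_i(\vv'+\ww')=G_i(\vv-\vv')+G_i(\ww-\ww')=0$, and analogously $G_i(\alpha\vv)=G_i(\alpha\vv')$.

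For the converse I plan the following. The standing hypothesis of Section~\ref{algframpap} that $T$ is reduced makes every $\nu_i$ surjective, and one-to-oneness makes $L^{-1}$ a genuine function, so each $G_i\colon C(T)\to V_i(T)$ is surjective. The stated conditions implicitly require $C(T)$ to be closed under addition and scalar multiplication (otherwise $G_i(\vv+\ww)$ and $G_i(\alpha\vv)$ are not even defined), so $C(T)$ is a linear code. For each $i\in\ZZ_n$ I will \emph{define} operations on $V_i(T)$ by
\[ v+w:=G_i(\vv+\ww),\qquad \alpha\cdot v:=G_i(\alpha\vv), \]
where $\vv,\ww\in C(T)$ are arbitrary preimages of $v,w$ under $G_i$. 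The two hypotheses are precisely what is needed to make these definitions independent of the choice of preimages.

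By construction each $G_i$ then becomes a surjective linear map, so every vector space axiom on $V_i(T)$ follows from the corresponding axiom on $C(T)$ by evaluating on preimages: zero is $G_i(\bm{0})$, the inverse of $v=G_i(\vv)$ is $G_i(-\vv)$, and commutativity, associativity, distributivity, and the scalar axioms are routine. It remains to verify that each $E_i(T)$ is a linear subspace of $V_i(T)\times\FF\times V_{i+1}(T)$. Given edges $(v_k,\alpha_k,w_k)\in E_i(T)$ for $k=1,2$, reducedness provides cycles through them whose edge-label sequences $\bm{c}_k\in C(T)$ satisfy $G_i(\bm{c}_k)=v_k$, $G_{i+1}(\bm{c}_k)=w_k$ and $(\bm{c}_k)_i=\alpha_k$; the cycle $L^{-1}(\bm{c}_1+\bm{c}_2)$ then traverses the edge $(v_1+v_2,\alpha_1+\alpha_2,w_1+w_2)$ at time $i$ by the very definitions of $+$ on $V_i(T)$ and on $V_{i+1}(T)$, so this edge lies in $E_i(T)$; scalar multiplication is handled identically.

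The main (and really only) conceptual step is recognising that the two stated conditions are literally the statement that the level sets of each $G_i$ form a linear congruence on $C(T)$, so that the quotient vector space structure on $C(T)/\ker G_i$ transports through $G_i$ to a vector space structure on $V_i(T)$. Once that observation is in place, well-definedness, the vector space axioms, and the linearity of the edge sets are all essentially bookkeeping; the requirement that the condition hold at \emph{every} $i\in\ZZ_n$ is exactly what makes the definitions on $V_i(T)$ and on $V_{i+1}(T)$ mesh at the two endpoints of each edge.
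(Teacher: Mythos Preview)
Your proof is correct and follows exactly the approach sketched in the paper's (very brief) proof: the forward direction by linearity of $G_i$, and the converse by using the stated conditions to push the vector space structure of $C(T)$ through the surjections $G_i$ onto the $V_i(T)$, then checking that each $E_i(T)$ becomes linear via a cycle through any given edge. You have simply supplied the details that the paper omits.
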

\begin{proof}
If $T$ is linear then $G_{i}$ is linear, and so it obviously satisfies the stated conditions. Vice versa, if the stated conditions are satisfied then they  induce a unique vector space structure on each $V_{i}(T)$ such that $G_{i}$ is a linear map. It is then easy to check that the induced structures makes $T$ linear.
\end{proof}

Note that in \cite{KV2} no theoretical characterization is given of the linearizable property for (one-to-one) trellises. Instead, Koetter/Vardy present therein an algorithm that halts if $T$ is non-linearizable or otherwise outputs a complete set of vertex-labels which make $T$ linear.

%

\subsection{Group trellises: remarks on their structure and on extending results from the linear case}\label{grouppap}
Replacing $\FF$ with a group $G$ in the definition of trellis yields \textit{trellises over groups}. Definitions and statements 
for trellises over fields can be thus translated (right away or after the appropriate adaptation to group theoretical language) into definitions/statements for trellises over groups.
In particular, we can talk of \textit{group trellises}, i.e. trellises with a group structure over $G$ and which represent group codes (i.e. subgroups of $G^{n}$).

Now, while in  \cite{KV2} (first Remark of Section $3$ therein) and \cite{KV} (first Remark of Section IV therein) it is stated that all the results therein ``hold essentially without change for group trellises over an abelian group'', the Factorization Theorem does not hold for (abelian) group trellises if we do not revise  the definition of elementary group trellis described therein (which  naturally generalizes the one for the linear case and on the base of which such a trellis 
must represent a group code). 
For example, 
consider the below group trellis over the cyclic group $\ZZ_{4}$,  
where edge-labels are identified by the  arrow filling pattern (dashed $\equiv0$; dash-dotted $\equiv1$; dotted $\equiv2$; full $\equiv3$).
\begin{center}
 \begin{tikzpicture}[yscale=.8,>=latex',shorten >=.9pt, shorten <=1.4pt, line width=.6pt]
  \tikzstyle{every node}=[draw,circle,fill=black,minimum size=2pt,
                        inner sep=0pt]                    
\foreach \x in {0,1,2,3}{
\node at (\x,0) {};};
\foreach \x in {1,2}{
\node at (\x,1) {};};
\foreach \x in {1}{
\node at (\x,2) {};};
\foreach \x in {1}{
\node at (\x,3) {};};

\tikzstyle{every node}=[]
\draw [->,dashed] (0,0) -- (1,0);
\draw [->,dashdotted] (0,0) -- (1,1);
\draw [->,dotted] (0,0) -- (1,2);
\draw [->] (0,0) -- (1,3);

\draw [->,dashed] (1,0) -- (2,0);
\draw [->,dashdotted] (1,1) -- (2,1);

\draw [->,dotted] (1,2) -- (2,0);
\draw [->] (1,3) -- (2,1); 

\draw [->,dotted] (2,1) -- (3,0); 
\draw [->,dashed] (2,0) -- (3,0);

\draw (0,0) node [below] {\tiny ${\uu{0}}$};
\draw (1,0) node [below] {\tiny ${\uu{0}}$};
\draw (1,1) node [below] {\tiny ${\uu{1}}$};
\draw (1,2) node [above] {\tiny ${\uu{2}}$};
\draw (1,3) node [above] {\tiny ${\uu{3}}$};

\draw (2,0) node [below] {\tiny ${\uu{0}}$};
\draw (2,1) node [below] {\tiny ${\uu{2}}$};

\draw (3,0) node [below] {\tiny ${\uu{0}}$};
\draw (-.5,1.5) node  {$T=$};

      \end{tikzpicture}
    \end{center}
If  $T\sim T_{1}\otimes\ldots\otimes T_{r}$ for some group trellises $T_{1},\ldots, T_{r}$, then $C(T_{i})=\langle112\rangle$ for some $i$, since $\langle112\rangle=\sum_{i=1}^{r}C(T^{i})$. Then it's easy to see that $T_{i}\sim T$, and so also $T_{j}=0$ for all $j\neq i$.  On the other hand $T$ is not elementary (according to the mentioned definition an elementary group trellis consists of cycles that run disjointly inside its defining span while coinciding with the zero cycle outside it, like in the linear case).  In fact the following is the only conventional elementary group trellis which represents $\langle112\rangle$:
\begin{center}
 \begin{tikzpicture}[yscale=.8,>=latex',shorten >=.9pt, shorten <=1.4pt, line width=.6pt]
  \tikzstyle{every node}=[draw,circle,fill=black,minimum size=2pt,
                        inner sep=0pt]                    
\foreach \x in {0,1,2,3}{
\node at (\x,0) {};};
\foreach \x in {1,2}{
\node at (\x,1) {};};
\foreach \x in {1,2}{
\node at (\x,2) {};};
\foreach \x in {1,2}{
\node at (\x,3) {};};

\tikzstyle{every node}=[]
\draw [->,dashed] (0,0) -- (1,0);
\draw [->,dashed] (1,0) -- (2,0);
\draw [->,dashed] (2,0) -- (3,0); 

\draw [->,dashdotted] (0,0) -- (1,1);
\draw [->,dashdotted] (1,1) -- (2,1);
\draw [->,dotted] (2,1) -- (3,0); 

\draw [->,dotted] (0,0) -- (1,2);
\draw [->,dotted] (1,2) -- (2,2);
\draw [->,dashed] (2,2) -- (3,0); 

\draw [->] (0,0) -- (1,3);
\draw [->] (1,3) -- (2,3); 
\draw [->,dotted] (2,3) -- (3,0); 

\draw (0,0) node [below] {\tiny ${\uu{0}}$};
\draw (1,0) node [below] {\tiny ${\uu{0}}$};
\draw (1,1) node [below] {\tiny ${\uu{1}}$};
\draw (1,2) node [above] {\tiny ${\uu{2}}$};
\draw (1,3) node [above] {\tiny ${\uu{3}}$};

\draw (2,0) node [below] {\tiny ${\uu{0}}$};
\draw (2,1) node [below] {\tiny ${\uu{1}}$};
\draw (2,2) node [above] {\tiny ${\uu{2}}$};
\draw (2,3) node [above] {\tiny ${\uu{3}}$};

\draw (3,0) node [below] {\tiny ${\uu{0}}$};

      \end{tikzpicture}
    \end{center}
Nevertheless, we have that $T\sim T_{1}\otimes T_{2}$, where
\begin{center}
 \begin{tikzpicture}[yscale=.8,>=latex',shorten >=.9pt, shorten <=1.4pt, line width=.6pt]
%
%
%
%
%
%
%
%
%
  \tikzstyle{every node}=[draw,circle,fill=black,minimum size=2pt,
                        inner sep=0pt]                    
\foreach \x in {0,1,2,3}{
\node at (\x,0) {};};
\foreach \x in {1,2}{
\node at (\x,1) {};};

\tikzstyle{every node}=[]
\draw [->,dashed,black] (0,0) -- (1,0);
\draw [->,dashdotted,black] (0,0) -- (1,1);
\draw [->,dashed,black] (1,0) -- (2,0);
\draw [->,dashdotted,black] (1,1) -- (2,1);

\draw [->,dotted,black] (2,1) -- (3,0); 
\draw [->,dashed,black] (2,0) -- (3,0);

\draw (0,0) node [below] {\tiny ${\uu{0}}$};
\draw (1,0) node [below] {\tiny ${\uu{0}}$};
\draw (1,1) node [below] {\tiny ${\uu{1}}$};
\draw (2,0) node [below] {\tiny ${\uu{0}}$};
\draw (2,1) node [below] {\tiny ${\uu{2}}$};

\draw (3,0) node [below] {\tiny ${\uu{0}}$};
\draw (-.5,.5) node  {$T_{1}=$};
\begin{scope}
[yshift=-2cm]
  \tikzstyle{every node}=[draw,circle,fill=black,minimum size=2pt,
                        inner sep=0pt]                    
\foreach \x in {0,1,2,3}{
\node at (\x,0) {};};
\foreach \x in {1}{
\node at (\x,1) {};};

\tikzstyle{every node}=[]
\draw [->,dashed,black] (0,0) -- (1,0);
\draw [->,dotted,black] (0,0) -- (1,1);
\draw [->,dashed,black] (1,0) -- (2,0);
\draw [->,dotted,black] (1,1) -- (2,0);

\draw [->,dashed,black] (2,0) -- (3,0);

\draw (0,0) node [below] {\tiny ${\uu{0}}$};
\draw (1,0) node [below] {\tiny ${\uu{0}}$};
\draw (1,1) node [below] {\tiny ${\uu{2}}$};
\draw (2,0) node [below] {\tiny ${\uu{0}}$};

\draw (3,0) node [below] {\tiny ${\uu{0}}$};
\draw (-.5,.5) node  {$T_{2}=$};
\end{scope}
      \end{tikzpicture}
    \end{center}
So,  
we can still say that a factorization theorem holds for $T$.  
The only catch is that $T_{1}$ is not elementary according to the previous definition, as it represents $\{000,112\}\subseteq \ZZ_{4}^{3}$, which is not a group code. Still, it makes sense to assign the ``elementary'' adjective to it too, since it is not (isomorphic to) a product of smaller trellises. Moreover we can think of $T_{1}$ as representing a group in some sense, since $\{000,112\}$ corresponds (bijectively) to the quotient group $\lb112\rb/\lb220\rb$ (it must be kept in mind though that this correspondence/information is not given by $T_{1}$ alone, i.e. it is not encoded in it, but it is only provided by the factorization $T\sim T_{1}\otimes T_{2}$).   

After having checked more examples, it seems likely that the above happens in general, i.e. any (abelian) group trellis is a product of trellises which cannot be further factored and which represent sets corresponding to quotient groups of prime order. However this still requires a proof. While the approach given in \cite{KV2} does not seem to be adaptable for yielding such a proof, our framework can be extended to group trellises and we believe that with some modifications it can yield the sought proof (we leave this for future research). 

Note though that in \cite{FoTr} related problems have been treated.

\begin{rmk} Some substantial modifications may be necessary when extending our framework to the group case: for example, the below group trellises over $\ZZ_{4}$ are isomorphic but their group structures are not, contrarily to what the straight translation of  Theorem \ref{isolin} to the group case would tell us (note though that Theorem \ref{thm16} and Corollary \ref{cor13} extend without saying to group trellises).
\begin{center}
 \begin{tikzpicture}[yscale=.8,>=latex',shorten >=.9pt, shorten <=1.4pt, line width=.6pt]
    \tikzstyle{every node}=[draw,circle,fill=black,minimum size=2pt,
                        inner sep=0pt]                    
\foreach \x in {0,1,2,3}{
\node at (\x,0) {};};
\foreach \x in {1,2}{
\node at (\x,1) {};};
\foreach \x in {1,2}{
\node at (\x,2) {};};
\foreach \x in {1,2}{
\node at (\x,3) {};};

\tikzstyle{every node}=[]
\draw [->,dashed,black] (0,0) -- (1,0);
\draw [->,dashed,black] (1,0) -- (2,0);
\draw [->,dashed,black] (2,0) -- (3,0); 

\draw [->,dashed,black] (0,0) -- (1,1);
\draw [->,dashed,black] (1,1) -- (2,1);
\draw [->,dashed,black] (2,1) -- (3,0); 

\draw [->,dashed,black] (0,0) -- (1,2);
\draw [->,dashed,black] (1,2) -- (2,2);
\draw [->,dashed,black] (2,2) -- (3,0); 

\draw [->,dashed,black] (0,0) -- (1,3);
\draw [->,dashed,black] (1,3) -- (2,3); 
\draw [->,dashed,black] (2,3) -- (3,0); 

\draw (0,0) node [below] {\tiny ${\uu{0}}$};
\draw (1,0) node [below] {\tiny ${\uu{0}}$};
\draw (1,1) node [below] {\tiny ${\uu{1}}$};
\draw (1,2) node [above] {\tiny ${\uu{2}}$};
\draw (1,3) node [above] {\tiny ${\uu{3}}$};

\draw (2,0) node [below] {\tiny ${\uu{0}}$};
\draw (2,1) node [below] {\tiny ${\uu{1}}$};
\draw (2,2) node [above] {\tiny ${\uu{2}}$};
\draw (2,3) node [above] {\tiny ${\uu{3}}$};

\draw (3,0) node [below] {\tiny ${\uu{0}}$};

%
%
%
%
%
%
%
%
%
\begin{scope}
[xshift=5cm]
\tikzstyle{every node}=[draw,circle,fill=black,minimum size=2pt,
                        inner sep=0pt]                    
\foreach \x in {0,1,2,3}{
\node at (\x,0) {};};
\foreach \x in {1,2}{
\node at (\x,1) {};};
\foreach \x in {1,2}{
\node at (\x,2) {};};
\foreach \x in {1,2}{
\node at (\x,3) {};};

\tikzstyle{every node}=[]
\draw [->,dashed,black] (0,0) -- (1,0);
\draw [->,dashed,black] (1,0) -- (2,0);
\draw [->,dashed,black] (2,0) -- (3,0); 

\draw [->,dashed,black] (0,0) -- (1,1);
\draw [->,dashed,black] (1,1) -- (2,1);
\draw [->,dashed,black] (2,1) -- (3,0); 

\draw [->,dashed,black] (0,0) -- (1,2);
\draw [->,dashed,black] (1,2) -- (2,2);
\draw [->,dashed,black] (2,2) -- (3,0); 

\draw [->,dashed,black] (0,0) -- (1,3);
\draw [->,dashed,black] (1,3) -- (2,3); 
\draw [->,dashed,black] (2,3) -- (3,0); 

\draw (0,0) node [below] {\tiny ${\uu{0}}$};
\draw (1,0) node [below] {\tiny ${\uu{00}}$};
\draw (1,1) node [below] {\tiny ${\uu{02}}$};
\draw (1,2) node [above] {\tiny ${\uu{20}}$};
\draw (1,3) node [above] {\tiny ${\uu{22}}$};

\draw (2,0) node [below] {\tiny ${\uu{00}}$};
\draw (2,1) node [below] {\tiny ${\uu{02}}$};
\draw (2,2) node [above] {\tiny ${\uu{20}}$};
\draw (2,3) node [above] {\tiny ${\uu{22}}$};

\draw (3,0) node [below] {\tiny ${\uu{0}}$};
%
%
%
%
%
%
%
%
%
%
\end{scope}
      \end{tikzpicture}
    \end{center}
\end{rmk}

The Factorization Theorem can make proving certain results for linear trellises easier, however, because of the above situation, when proving a result for such trellises it is preferable to avoid that theorem if possible 
and give instead proofs that exploit only the additive structure and can be thus immediately extended to the (abelian) group case too.


\section{Applications: factoring linear trellises}\label{factpap}

\subsection{Uniqueness of span distribution of linear trellises}

Knowing from the Factorization Theorem that every linear trellis factors into elementary trellises, a natural following question is whether such a factorization is unique. In general the answer is negative. 
For example,  
$111|(0,2)\otimes010|(1,0)$ and $101|(0,2)\otimes010|(1,0)$ yield the same trellis, but the two factorizations are different since $111|(0,2)\neq101|(0,2)$. One sees though that the list of spans of the two factorizations in this case are the same. This is no coincidence, since here we are dealing with a minimal conventional trellis and it is well known that all the elementary trellis factorizations of such a trellis  give rise to the same spans, more precisely, the \textit{atomic spans} of the represented code (see  Subsection \ref{atomspanpap}). 
However, this uniqueness of  span distribution had been proven only as a byproduct of the minimality assumption. 
More recently in \cite{GW} (Proposition III.14 therein), with arguments based again on atomic spans, this result  was extended to the class of so-called \textit{$\mathrm{KV}$-trellises}  (see Subsection \ref{atomspanpap}), which is a subclass of the class of nonmergeable, one-to-one, linear trellises containing the class of minimal linear trellises. 

We show here that this actually holds true for any linear trellis, and furthermore, that the edge-labels really play no role in determining the spans. The underlying graph structure alone determines them. 
We will prove these two claims as easy consequences of the algebraic framework developed in the previous section.

First, let us make the terminology precise: 
 by the \textit{span distribution} of the elementary trellis factorization $\otimes_{i=1}^{r} \al^{i}|(a_{i},l_{i})$ of  a linear trellis 
 we mean the multiset $$\{\{(a_{i},l_{i})|i=1,\ldots,r\}\}$$  
Now, given a product basis $\BB$ of a linear trellis $T$, recall that by $4)$ of Observation \ref{spansub1} the number of cycles $\la\in\BB$ with span $[\la]=(a,l)$ is given precisely by  $\dim\sal(T)/\SSS_{<(a,l)}(T)$, and thus it depends only on $T$,  not on the particular choice of the product basis. In particular, all product bases yield the same span distribution.
But then by Theorem \ref{isolin} and the correspondence between product bases and elementary trellis factorizations given by Theorem \ref{thm13}   we conclude that the following holds:

\begin{thm}\label{trespan}
Two elementary trellis factorizations of a linear trellis $T$ have the same span distribution. More precisely, the number of times a span $(a,l)$ appears in an elementary trellis factorization of $T$ is equal to $$\dim\sal(T)/\SSS_{<(a,l)}(T)=\dim\sal(T)-\dim\SSS_{<(a,l)}(T)$$
\end{thm}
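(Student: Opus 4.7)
The plan is to reduce the claim to the algebraic invariants already isolated in Section \ref{algframpap}, using the dictionary between elementary trellis factorizations and product bases.

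First I would convert each given factorization into a product basis. Suppose $T\sim \otimes_{i=1}^{r}\al^{i}|(a_{i},l_{i})$. By Theorem \ref{isolin} this upgrades to $T\simeq \otimes_{i=1}^{r}\al^{i}|(a_{i},l_{i})$, and then the ``if and only if'' of Theorem \ref{thm13} produces a product basis $\BB=\{(\vv^{i},\al^{i})\}_{i=1,\ldots,r}$ of $T$ with $[(\vv^{i},\al^{i})]=(a_{i},l_{i})$ for every $i$. So the span distribution of the factorization coincides exactly with the multiset $\{\{[\la]\mid \la\in\BB\}\}$ of spans of the associated product basis.

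Next, fix an arbitrary span $(a,l)$. Part $4)$ of Observation \ref{spansub1} says that
$$\{\la+\SSS_{<(a,l)}(T)\mid \la\in\BB,\; [\la]=(a,l)\}$$
is a basis of the quotient $\sal(T)/\SSS_{<(a,l)}(T)$. Hence the number of elements of $\BB$ having span exactly $(a,l)$ equals $\dim\sal(T)/\SSS_{<(a,l)}(T)$, and since $\SSS_{<(a,l)}(T)\leq \sal(T)$ this is the same as $\dim\sal(T)-\dim\SSS_{<(a,l)}(T)$. Crucially, the right-hand side is an intrinsic invariant of $T$ --- it depends only on its ordered family of span subcodes --- and does not refer to the chosen factorization at all.

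Combining the two observations yields both assertions of the theorem at once: any elementary trellis factorization of $T$ contributes exactly $\dim\sal(T)/\SSS_{<(a,l)}(T)$ factors of span $(a,l)$, so any two such factorizations share the same span distribution, given by the stated formula. There is really no substantive obstacle here; the work is already done in Theorems \ref{isolin} and \ref{thm13} and in part $4)$ of Observation \ref{spansub1}, and the theorem is a clean corollary that makes the intrinsic nature of the span distribution visible.
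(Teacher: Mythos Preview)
Your proposal is correct and follows essentially the same route as the paper: upgrade $\sim$ to $\simeq$ via Theorem \ref{isolin}, pass to a product basis via Theorem \ref{thm13}, and then read off the multiplicity of $(a,l)$ as the intrinsic quantity $\dim\sal(T)/\SSS_{<(a,l)}(T)$ using part $4)$ of Observation \ref{spansub1}. The paper's own argument is just a terse paragraph preceding the theorem statement invoking exactly these three ingredients in the same way.
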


This shows that our first claim holds, and also justifies talking about \textit{the span distribution of a linear trellis $T$}\index{span!distribution (of $T$)}, which we will henceforth denote by  $\mathcal{S}(T)$\nomenclature[sst]{$\mathcal{S}(T)$}{span distribution of $T$} (recall that this is a multiset). We also put  
\begin{align*}
\mathcal{S}_{+}(T)&:=\{\{(a,l)\in\mathcal{S}(T)| l>0\}\}\\
\mathcal{S}_{0}(T)&:=\mathcal{S}(T)\setminus\mathcal{S}_{+}(T)
\end{align*}
 (note that $\mathcal{S}_{0}(T)$ is a set, because of our assumption in Remark \ref{assumpt1pap}).
We now prove our second claim. 

\begin{thm}\label{unfact1}
Two linear trellises are structurally isomorphic if and only if they have the same span distribution. 
\end{thm}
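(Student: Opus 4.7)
The plan is to leverage Theorem \ref{trespan}, which says that the span distribution $\sd(T)$ is completely determined by the integers $\dim\sal(T)$ over all spans $(a,l)$, and then to recognize these dimensions as invariants of the underlying graph of $T$.

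For the ``$\Leftarrow$'' direction, assuming $\sd(T)=\sd(T')$, I would pick elementary factorizations $T\simeq\otimes_{i=1}^{r}\al^{i}|(a_{i},l_{i})$ and $T'\simeq\otimes_{i=1}^{r}\bm{\beta}^{i}|(a_{i},l_{i})$ sharing the same list of spans $(a_{i},l_{i})$ (available by Corollary \ref{factthm} combined with the hypothesis). As observed in Subsection \ref{elemtrelliscor}, the graph structure of $\al|(a,l)$ does not depend on $\al$, so each factor $\al^{i}|(a_{i},l_{i})$ is structurally isomorphic to $\bm{\beta}^{i}|(a_{i},l_{i})$ via a bijection $g_{i}$. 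It remains to verify that the trellis product of pairwise structurally isomorphic elementary trellises is itself structurally isomorphic. The critical point is that every elementary trellis has at most one edge between any given pair of adjacent vertices; consequently, in a product $\otimes_{i=1}^{r}T_{i}$ of elementary trellises, the number of edges at time $j$ from $(v_{1},\ldots,v_{r})$ to $(w_{1},\ldots,w_{r})$ equals $\prod_{i=1}^{r}\mathbf{1}[\,v_{i}\to w_{i}\textnormal{ edge exists in }T_{i}]$, a quantity depending only on the graphs of the factors. An easy induction on $r$ then combines the $g_{i}$'s componentwise into a structural isomorphism $T\to T'$.

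For the ``$\Rightarrow$'' direction, I would identify $\sal(T)$ with a path space. A cycle $\la\in\sal(T)$ with $0\leq l\leq n-1$ is uniquely determined by its restriction to the time interval $[a,a+l+1]$, which forms a directed path of length $l+1$ from $0_{a}$ to $0_{a+l+1}$ in $T$; outside this interval all vertices and edge labels of $\la$ must vanish, and this zero extension always exists by linearity and trim-ness. Denoting by $N_{T}(v,w;m)$ the number of directed paths of length $m$ from $v$ to $w$, we obtain $|\sal(T)|=N_{T}(0_{a},0_{a+l+1};l+1)$. By linearity of $T$, the set of paths of length $l+1$ from $v\in V_{a}(T)$ to $w\in V_{a+l+1}(T)$ is either empty or an affine translate of the corresponding set from $0_{a}$ to $0_{a+l+1}$, so
\[N_{T}(0_{a},0_{a+l+1};l+1)=\max_{(v,w)\in V_{a}(T)\times V_{a+l+1}(T)}N_{T}(v,w;l+1).\]
This maximum is manifestly a graph-theoretic invariant and hence preserved under any structural isomorphism $T\to T'$; the degenerate cases $l=-1$ and $l=n$ are immediate. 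Thus $\dim\sal(T)=\dim\sal(T')$ for every $(a,l)$, and Theorem \ref{trespan} yields $\sd(T)=\sd(T')$.

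The main technical obstacle is in the ``$\Leftarrow$'' direction: verifying that the graph of a product of elementary trellises is ``label-free'', i.e., depends only on the spans of the factors. The subtlety is that the trellis product merges edges by summing their labels, so a priori distinct pairs of labels could collapse into the same sum and alter the edge count; this collision is precisely what is ruled out by the ``at most one edge per vertex-pair'' property of elementary trellises. Once that observation is pinned down, both directions are largely mechanical.
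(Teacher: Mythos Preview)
Your ``$\Rightarrow$'' direction is correct and takes a genuinely different route from the paper. You identify $|\sal(T)|$ directly with a graph-theoretic path count (the number of length-$(l{+}1)$ paths from $0_a$ to $0_{a+l+1}$, which by linearity equals the maximum such count over all vertex pairs), and then recover $\mathcal{S}(T)$ from the values $\dim\sal(T)$ via Theorem~\ref{trespan} and Observation~\ref{obs30}. The paper instead passes to the unlabeled trellis $\overline{T}$ obtained by zeroing all edge-labels: since $\overline{\cdot}$ commutes with $\otimes$ and sends $\al|(a,l)$ to $\bm{0}|(a,l)$ for $l>0$ (and to the zero trellis for $l=0$), one gets $\mathcal{S}(\overline{T})=\mathcal{S}_+(T)$; a structural isomorphism $T_1\to T_2$ then induces an honest trellis isomorphism $\overline{T_1}\sim\overline{T_2}$, whence $\mathcal{S}_+(T_1)=\mathcal{S}_+(T_2)$, while $\mathcal{S}_0$ is read off from where parallel edges occur. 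Your argument is more hands-on and makes the structural invariance explicit; the paper's is slicker but relies on the auxiliary $\overline{T}$ construction. (Your path-counting flavour is loosely related to the alternative graphical proof in Appendix~\ref{graphcarpap}.)

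One correction to your ``$\Leftarrow$'' side: the claim that ``every elementary trellis has at most one edge between any given pair of adjacent vertices'' fails for span-$(a,0)$ factors, which have $|\FF|$ parallel edges between $0_a$ and $0_{a+1}$. Your conclusion survives because, by the standing convention of Remark~\ref{assumpt1pap}, at most one factor has span $(a,0)$ for each $a$, so at each time index at most one factor contributes parallel edges and the edge multiplicity in the product is still determined by the spans alone. The paper bypasses this case split by handling $\mathcal{S}_0$ and $\mathcal{S}_+$ separately through $\overline{T}$.
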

\begin{proof}
Given a trellis $T$ we denote by $\overline{T}$ its underlying unlabeled trellis, i.e. $\overline{T}$ is defined by putting all the edge-labels of $T$ equal to $0$, so that $V_{i}(\overline{T})=V_{i}(T)$ and $E_{i}(\overline{T})$ is the image of $E_{i}(T)$ under the map $v\alpha w\mapsto v0w$, for all $i$.
In particular 
\begin{equation*}
\overline{\al|(a,l)}=
\begin{cases}
\bm{0}|(a,l)& \textnormal{ if } l>0\\
0 & \textnormal{ if } l=0
\end{cases}
\end{equation*}
Also, clearly $\overline{T}$ commutes with trellis products, i.e. 
$$\overline{T_{1}\otimes T_{2}}=\overline{T_{1}}\otimes\overline{T_{2}}$$
 
So it follows immediately that $\mathcal{S}(\overline{T})=\mathcal{S}_{+}(T)$. On the other hand,  two linear trellises $T_{1}$, $T_{2}$ are structurally isomorphic if and only if  $\mathcal{S}_{0}(T_{1})=\mathcal{S}_{0}(T_{2})$  and $\overline{T_{1}}\sim\overline{T_{2}}$. We can thus conclude by the above theorem that  structurally isomorphic linear trellises must have the same span distribution. 
The  ``if'' part is trivial, and we have included it for the sake of completeness. 
\end{proof}

 Thus the span distribution of a linear trellis does not depend on its edge-labels,  but only on its underlying graph. Note that by the arguments in the above proof it also follows that the multiplicity of a span $(a,l)\in\mathcal{S}(T)$ with positive length $l>0$ is equal to $\dim\sal(\overline{T})/\SSS_{<(a,l)}(\overline{T})$ (where $\overline{T}$ is defined in the same proof).

\begin{rmk}\label{rmk31}
The mathematically keen reader may have recognized at this point that the above theorem along with the Factorization Theorem imply that the class of unlabeled linear trellises (up to isomorphism) equipped with the trellis product operation is a unique factorization monoid.
\end{rmk}

We conclude this subsection by showing that we can compute all the span multiplicities if we know the dimension of each span subcode $\sal(T)$. The following identity makes that possible. 

\begin{obs}\label{obs30}
Let $T$ be a linear trellis. Then for each $(a,l)$ we have 
\begin{gather}\label{eq3}
\dim\sal(T)/\SSS_{<(a,l)}(T)=\\
\dim\sal(T)-\sum_{(a',l')<(a,l)}\dim\SSS_{(a',l')}(T)/\SSS_{<(a',l')}(T)\notag
\end{gather}
\end{obs}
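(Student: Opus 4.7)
The plan is to derive the identity directly from the existence of a product basis established in Theorem \ref{thm12}, together with the clean enumerative interpretation of its elements given by Observation \ref{spansub1}. Since the left-hand side equals $\dim\sal(T)-\dim\SSS_{<(a,l)}(T)$, the identity (\ref{eq3}) is equivalent to
\[
\dim\SSS_{<(a,l)}(T)=\sum_{(a',l')<(a,l)}\dim\SSS_{(a',l')}(T)/\SSS_{<(a',l')}(T),
\]
so this is what I would target.

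First I would fix a product basis $\mathcal{B}$ of $\SSS(T)$. By point $1)$ of Observation \ref{spansub1}, every $\la\in\mathcal{B}$ has a well-defined minimum span $[\la]$, so $\mathcal{B}$ partitions as $\mathcal{B}=\sqcup_{(a',l')}\mathcal{B}^{(a',l')}$ where $\mathcal{B}^{(a',l')}:=\{\la\in\mathcal{B}\mid[\la]=(a',l')\}$. Next I would invoke point $3)$ of Observation \ref{spansub1} to see that $\SSS_{<(a,l)}(T)=\langle\la\in\mathcal{B}\mid[\la]<(a,l)\rangle$; since these cycles are part of the basis $\mathcal{B}$ they are linearly independent, so
\[
\dim\SSS_{<(a,l)}(T)=|\{\la\in\mathcal{B}\mid[\la]<(a,l)\}|=\sum_{(a',l')<(a,l)}|\mathcal{B}^{(a',l')}|.
\]

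Then I would apply point $4)$ of Observation \ref{spansub1}, which tells us that $\{\la+\SSS_{<(a',l')}(T)\mid\la\in\mathcal{B}^{(a',l')}\}$ is a basis of $\SSS_{(a',l')}(T)/\SSS_{<(a',l')}(T)$. Hence $|\mathcal{B}^{(a',l')}|=\dim\SSS_{(a',l')}(T)/\SSS_{<(a',l')}(T)$. Substituting this into the previous display gives the required identity, and rearranging produces (\ref{eq3}).

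There is really no hard step here; the only subtlety is that one must use a single product basis throughout so that the counts in the two applications of Observation \ref{spansub1} refer to the same collection of cycles. This is precisely what guarantees that the contributions of the strict subspaces $\SSS_{(a',l')}(T)$ for $(a',l')<(a,l)$ assemble without overlap into a basis of $\SSS_{<(a,l)}(T)$, yielding the displayed sum.
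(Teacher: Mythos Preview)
Your proof is correct and follows essentially the same approach as the paper: fix a product basis and use the counting properties from Observation \ref{spansub1}. The paper's proof simply cites points $2)$, $3)$, $4)$ of that observation without spelling out the details, whereas you have written them out (and invoked $1)$ to justify the partition of $\mathcal{B}$ by minimum span, which is implicit in the paper's use of $3)$ and $4)$).
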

\begin{proof}
Let $\BB$ be a product basis of $T$. Then the equality 
 follows from putting together $2)$, $3)$ and $4)$ of Observation \ref{spansub1}.
\end{proof}

Equality \eqref{eq3} can be used recursively for increasing span lengths to compute
all 
 the multiplicities $\dim\sal({T})/\SSS_{<(a,l)}({T})$ from the span subcode dimensions $\dim\sal(T)$. Indeed, starting with length 
 equal to  
 $-1$, i.e. $\SSS_{\emptyset}(T)=0$, we get 
 
\begin{align*}
\dim&\SSS_{(a,-1)}(T)\equiv\dim\SSS_{\emptyset}(T)=0\\
\dim&\SSS_{(a,0)}(T)/\SSS_{<(a,0)}(T)=\dim\SSS_{(a,0)}(T)\\
\dim&\SSS_{(a,1)}(T)/\SSS_{<(a,1)}(T)=\dim\SSS_{(a,1)}(T)-\sum_{(a',0)<(a,1)}\dim\SSS_{(a',0)}(T)\\
\dim&\SSS_{(a,2)}(T)/\SSS_{<(a,2)}(T)=\dim\SSS_{(a,2)}(T)-\sum_{(a',1)<(a,2)}\dim\SSS_{(a',1)}/\SSS_{(a',1)}(T)=\\
&=\dim\SSS_{(a,2)}(T)-\sum_{(a',1)<(a,2)}\Bigl(\dim\SSS_{(a',1)}(T)-\sum_{(a'',0)<(a',1)}\dim\SSS_{(a'',0)}(T)\Bigr)
\end{align*}
and so on.

This can be practically worked out by replacing each span $(a,l)$ by $\dim\sal(T)$ in the Hasse diagram 
for spans,
and then  processing the entries in the diagram from bottom to top according to the above equations, where a single step amounts to subtracting from an entry in the diagram the sum of all the other entries below it in the diagram and update the entry. The final entries will then give the multiplicity of each span. 
For example, if we take the trellis $T$ from Example \ref{ex1pap}  we get the following sequence, where we underline the entries which have been processed so far at each stage:
\begin{center}
 \begin{tikzpicture}[yscale=.5,xscale=.5, line width=.3pt]
                                      \tikzstyle{every node}=[]
                                      
             \foreach \a in {0,1,2}{
                        \foreach \l in {0,1}{
                       \draw  (4*\a,1.5*\l) -- (4*\a,1.5*\l+1.5);};};
                       
                        \foreach \a in {1,2}{
                        \foreach \l in {0,1}{
                       \draw  (4*\a,1.5*\l) -- (4*\a-4,1.5*\l+1.5);};};
                       
                        \foreach \l in {0,1}{
                       \draw  (0,1.5*\l) -- (8,1.5*\l+1.5);};
                       
                                \foreach \a in {0,1,2}{
                       \draw  (4*\a,0) -- (4,-1.5);
                       \draw  (4*\a,3) -- (4,4.5);};
            
             \tikzstyle{every node}=[fill=white,minimum size=0pt,
                        inner sep=2pt]\small
                        
\draw (0,0) node {$\uu{\bm1}$};
\draw (0,1.5) node {$\bm2$};
\draw (0,3) node {$\bm3$};

\draw (4,0) node {$\uu{\bm0}$};
\draw (4,1.5) node {$\bm1$};
\draw (4,3) node {$\bm2$};

\draw (8,0) node {$\uu{\bm0}$};
\draw (8,1.5) node {$\bm1$};
\draw (8,3) node {$\bm2$};

    \draw (4,-1.5) node {$\uu{\bm0}$};
     \draw (4,4.5) node {$\bm3$};
     \draw (-2,1.5) node {$l=0:$};
    
    
      \end{tikzpicture}
            \vspace{.3cm}
 \begin{tikzpicture}[yscale=.5,xscale=.5, line width=.3pt]
                                      \tikzstyle{every node}=[]
                                      
             \foreach \a in {0,1,2}{
                        \foreach \l in {0,1}{
                       \draw  (4*\a,1.5*\l) -- (4*\a,1.5*\l+1.5);};};
                       
                        \foreach \a in {1,2}{
                        \foreach \l in {0,1}{
                       \draw  (4*\a,1.5*\l) -- (4*\a-4,1.5*\l+1.5);};};
                       
                        \foreach \l in {0,1}{
                       \draw  (0,1.5*\l) -- (8,1.5*\l+1.5);};
                       
                                \foreach \a in {0,1,2}{
                       \draw  (4*\a,0) -- (4,-1.5);
                       \draw  (4*\a,3) -- (4,4.5);};
            
             \tikzstyle{every node}=[fill=white,minimum size=0pt,
                        inner sep=2pt]\small
                        
\draw (0,0) node {$\uu{\bm1}$};
\draw (0,1.5) node {$\uu{\bm1}$};
\draw (0,3) node {$\bm3$};

\draw (4,0) node {$\uu{\bm0}$};
\draw (4,1.5) node {$\uu{\bm1}$};
\draw (4,3) node {$\bm2$};

\draw (8,0) node {$\uu{\bm0}$};
\draw (8,1.5) node {$\uu{\bm0}$};
\draw (8,3) node {$\bm2$};

    \draw (4,-1.5) node {$\uu{\bm0}$};
     \draw (4,4.5) node {$\bm3$};
       \draw (-2,1.5) node {$l=1:$};
      \end{tikzpicture}
            \vspace{.3cm}
\begin{tikzpicture}[yscale=.5,xscale=.5, line width=.3pt]
                                      \tikzstyle{every node}=[]
                                      
             \foreach \a in {0,1,2}{
                        \foreach \l in {0,1}{
                       \draw  (4*\a,1.5*\l) -- (4*\a,1.5*\l+1.5);};};
                       
                        \foreach \a in {1,2}{
                        \foreach \l in {0,1}{
                       \draw  (4*\a,1.5*\l) -- (4*\a-4,1.5*\l+1.5);};};
                       
                        \foreach \l in {0,1}{
                       \draw  (0,1.5*\l) -- (8,1.5*\l+1.5);};
                       
                                \foreach \a in {0,1,2}{
                       \draw  (4*\a,0) -- (4,-1.5);
                       \draw  (4*\a,3) -- (4,4.5);};
            
             \tikzstyle{every node}=[fill=white,minimum size=0pt,
                        inner sep=2pt]\small
                        
\draw (0,0) node {$\uu{\bm1}$};
\draw (0,1.5) node {$\uu{\bm1}$};
\draw (0,3) node {$\uu{\bm0}$};

\draw (4,0) node {$\uu{\bm0}$};
\draw (4,1.5) node {$\uu{\bm1}$};
\draw (4,3) node {$\uu{\bm0}$};

\draw (8,0) node {$\uu{\bm0}$};
\draw (8,1.5) node {$\uu{\bm0}$};
\draw (8,3) node {$\uu{\bm0}$};

    \draw (4,-1.5) node {$\uu{\bm0}$};
     \draw (4,4.5) node {$\bm3$};
       \draw (-2,1.5) node {$l=2:$};
%
%
      \end{tikzpicture}
      \vspace{.3cm}
\begin{tikzpicture}[yscale=.5,xscale=.5, line width=.3pt]
                                      \tikzstyle{every node}=[]
                                      
             \foreach \a in {0,1,2}{
                        \foreach \l in {0,1}{
                       \draw  (4*\a,1.5*\l) -- (4*\a,1.5*\l+1.5);};};
                       
                        \foreach \a in {1,2}{
                        \foreach \l in {0,1}{
                       \draw  (4*\a,1.5*\l) -- (4*\a-4,1.5*\l+1.5);};};
                       
                        \foreach \l in {0,1}{
                       \draw  (0,1.5*\l) -- (8,1.5*\l+1.5);};
                       
                                \foreach \a in {0,1,2}{
                       \draw  (4*\a,0) -- (4,-1.5);
                       \draw  (4*\a,3) -- (4,4.5);};
            
             \tikzstyle{every node}=[fill=white,minimum size=0pt,
                        inner sep=2pt]\small
                        
\draw (0,0) node {$\uu{\bm1}$};
\draw (0,1.5) node {$\uu{\bm1}$};
\draw (0,3) node {$\uu{\bm0}$};

\draw (4,0) node {$\uu{\bm0}$};
\draw (4,1.5) node {$\uu{\bm1}$};
\draw (4,3) node {$\uu{\bm0}$};

\draw (8,0) node {$\uu{\bm0}$};
\draw (8,1.5) node {$\uu{\bm0}$};
\draw (8,3) node {$\uu{\bm0}$};

    \draw (4,-1.5) node {$\uu{\bm0}$};
     \draw (4,4.5) node {$\uu{\bm0}$};
       \draw (-2,1.5) node {$l=3:$};
      \end{tikzpicture}
\end{center}
The final diagram tells us that the span distribution of $T$ is $\{\{(0,0),(0,1),(1,1)\}\}$, which is indeed the case.

\begin{rmk}
We had announced Theorem \ref{unfact1} first in \cite{CB2}. Therein indeed we had illustrated how the span distribution $\mathcal{S}(T)$ of a linear trellis $T$ can be recovered from simple graphical characteristics of $T$. In Appendix \ref{graphcarpap} we provide the complete details of that worthy alternative perspective. 
\end{rmk}


\subsection{Edge-labels of elementary trellis factorizations}

After showing that a linear trellis determines uniquely the span distribution of its elementary trellis factorizations and how this can be determined, the naturally following problem is to find out what are the possible edge-labels of the elementary factors. We have already pointed out before that unique factorization does not hold, which means that edge-labels are not necessarily unique. Nevertheless we give here a method for finding all the legitimate edge-labelings. We will also deduce necessary and sufficient conditions for a linear trellis $T$ to have a unique elementary trellis factorization. Again, our results will be easy consequences of what proven in Section \ref{algframpap}.

\begin{obs}\label{obs31}
Let $T$, $T'$ be linear trellises such that $\mathcal{S}(T)=\mathcal{S}(T')$. Then $T$ and $T'$ are isomorphic if and only if $$C_{\sss}(T)=C_{\sss}(T')$$
 for all spans $\sss$.
\end{obs}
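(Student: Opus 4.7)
The plan is to reduce the statement to Theorem \ref{thm15}, which characterizes linear trellis isomorphy by the simultaneous conditions $\dim \SSS_{\sss}(T) = \dim \SSS_{\sss}(T')$ and $C_{\sss}(T) = C_{\sss}(T')$ for all spans $\sss$. The ``only if'' direction is routine: any isomorphism $f: T \to T'$ is automatically a linear isomorphism by Theorem \ref{isolin}, so by Observation \ref{trellmappap} the induced map $\SSS(f)$ sends $\SSS_{\sss}(T)$ bijectively to $\SSS_{\sss}(T')$ while preserving edge-labels; applying $L$ gives $C_{\sss}(T) = C_{\sss}(T')$.

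For the ``if'' direction, the hypothesis already supplies the condition $C_{\sss}(T) = C_{\sss}(T')$, so only $\dim \SSS_{\sss}(T) = \dim \SSS_{\sss}(T')$ remains to be checked, and this is exactly where $\mathcal{S}(T) = \mathcal{S}(T')$ will be used. First I would fix a product basis $\BB$ of $T$ (which exists by Theorem \ref{thm12}) and invoke $2)$ of Observation \ref{spansub1} together with $4)$ to conclude that $\BB \cap \SSS_{\sss}(T)$ is a basis of $\SSS_{\sss}(T)$, so that
\[
\dim \SSS_{(a,l)}(T) \;=\; |\{\la \in \BB \mid [\la] \leq (a,l)\}| \;=\; \sum_{(a',l') \leq (a,l)} m_{(a',l')}(T),
\]
where $m_{\sss}(T)$ denotes the multiplicity of $\sss$ in $\mathcal{S}(T)$ (well defined by Theorem \ref{trespan}). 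The same identity applied to $T'$ then yields $\dim \SSS_{\sss}(T) = \dim \SSS_{\sss}(T')$ directly from the hypothesis on span distributions.

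Both hypotheses of Theorem \ref{thm15} are now in place, so $T \simeq T'$ and in particular $T \sim T'$. No genuine obstacle arises in this argument; the only subtlety worth flagging is that one must appeal to Theorem \ref{isolin} in the ``only if'' direction to ensure that a (possibly nonlinear) isomorphism still respects the span subcodes in the appropriate sense, but this has already been established. The value of the observation lies not in any difficulty of proof but in the conceptual split it exhibits: the span distribution records all the subcode dimensions, while the represented span subcodes $C_{\sss}$ supply the remaining data needed to pin down a linear trellis up to isomorphism.
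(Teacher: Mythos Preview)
Your proposal is correct and follows essentially the same approach as the paper: both reduce to Theorem \ref{thm15} and verify the dimension condition by expressing $\dim\SSS_{\sss}(T)$ in terms of the multiplicities in $\mathcal{S}(T)$. The paper phrases this step via the recursive equation \eqref{eq3} of Observation \ref{obs30}, whereas you write the closed formula $\dim\SSS_{(a,l)}(T)=\sum_{(a',l')\leq(a,l)} m_{(a',l')}(T)$ directly from a product basis; these are the same identity, and your treatment of the ``only if'' direction (invoking Theorem \ref{isolin}) is more explicit than the paper's, which silently treats it as immediate.
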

\begin{proof}
By our Isomorphy Theorem \ref{thm15} we need to prove that $\dim\SSS_{\sss}(T)=\dim\SSS_{\sss}(T')$ for all spans $\sss$. As the two span distributions are equal, we have that $\dim\SSS_{\sss}(T)/\SSS_{<\sss}(T)=\dim\SSS_{\sss}(T')/\SSS_{<\sss}(T')$ for all $\sss$. The sought equalities follow then from  equation \ref{eq3}. 
\end{proof}

Now, assume that we know the span distribution $\mathcal{S}(T)$ of a linear trellis $T$. For a generic span 
$\sss$ 
let  $$m(\sss,T)$$ 
denote the multiplicity of $\sss$ in $\mathcal{S}(T)$. In particular, $m(\sss,T)>0$ if $\sss\in\mathcal{S}(T)$, and $m(\sss,T)=0$ otherwise.  Any elementary trellis factorization of $T$ can be  then written as  $$\otimes_{\sss\in\mathcal{S}(T)}\otimes_{i=1}^{m(\sss,T)}\al^{\sss,i}|\sss$$ for some $\al^{\sss,i}\in\FF^{n}$. The question then is: what are the possible $\al^{\sss,i}$?  Here is the complete answer:

\begin{thm}\label{trelfac}
Let $T$ be a linear trellis.  Then 
\begin{equation}\label{eqpapfac}
\otimes_{\sss\in\mathcal{S}(T)}\otimes_{i=1}^{m(\sss,T)}\al^{\sss,i}|\sss
\end{equation}  
is an elementary trellis factorization of $T$ 
if and only if for each span $\sss\in\mathcal{S}(T)$ we have 
\begin{equation}\label{eq4}
C_{\sss}(T)=\langle\al^{\sss,1}, \ldots,\al^{\sss,m(\sss,T)}\rangle+\sum_{\sss'<\sss}C_{\sss'}(T)
\end{equation}
\end{thm}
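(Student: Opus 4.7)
The plan is to apply Observation \ref{obs31} to $T$ and $T':=\otimes_{\sss\in\mathcal{S}(T)}\otimes_{i=1}^{m(\sss,T)}\al^{\sss,i}|\sss$. Write $A_\sss:=\langle\al^{\sss,1},\ldots,\al^{\sss,m(\sss,T)}\rangle$, with $A_\sss:=0$ when $m(\sss,T)=0$. Since each elementary factor $\al^{\sss,i}|\sss$ has span distribution $\{\{\sss\}\}$, by construction $\mathcal{S}(T')=\mathcal{S}(T)$, so Observation \ref{obs31} reduces the whole theorem to establishing that $C_\sss(T)=C_\sss(T')$ for every span $\sss$ is equivalent to \eqref{eq4}.

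First I would compute $C_\sss(T')$ explicitly. For an elementary trellis, $\SSS_\sss(\al|\sss')=\SSS(\al|\sss')$ if $\sss'\leq\sss$ and $0$ otherwise, so iterating Observation \ref{obs10} over the product $T'$ yields
\[
\SSS_\sss(T')\;=\;\sum_{\sss'\leq\sss}\sum_{i=1}^{m(\sss',T)}\SSS(\al^{\sss',i}|\sss'),
\]
and applying $L$ gives $C_\sss(T')=\sum_{\sss'\leq\sss}A_{\sss'}=A_\sss+\sum_{\sss'<\sss}A_{\sss'}$. Collecting the same formula at each $\sss'<\sss$ produces the telescoping identity
\[
\sum_{\sss'<\sss}C_{\sss'}(T')\;=\;\sum_{\sss'<\sss}\sum_{\sss''\leq\sss'}A_{\sss''}\;=\;\sum_{\sss'<\sss}A_{\sss'},
\]
so in particular $C_\sss(T')=A_\sss+\sum_{\sss'<\sss}C_{\sss'}(T')$.

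For the ``only if'' direction, $T\simeq T'$ forces $C_{\sss'}(T)=C_{\sss'}(T')$ for all $\sss'$, and substitution into the previous display gives $C_\sss(T)=A_\sss+\sum_{\sss'<\sss}C_{\sss'}(T)$, which is exactly \eqref{eq4}. For the ``if'' direction I would prove $C_\sss(T)=C_\sss(T')$ by induction on the poset of spans from Subsection \ref{eldef}; the base case $\sss=\emptyset$ is trivial, and in the inductive step \eqref{eq4} together with the inductive hypothesis yields
\[
C_\sss(T)\;=\;A_\sss+\sum_{\sss'<\sss}C_{\sss'}(T)\;=\;A_\sss+\sum_{\sss'<\sss}C_{\sss'}(T')\;=\;C_\sss(T').
\]
Observation \ref{obs31} then delivers $T\simeq T'$.

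The only genuinely delicate point is the bookkeeping of the iterated sum over the span poset, handled by the telescoping identity above. Two minor details will need a remark: the case $\sss\notin\mathcal{S}(T)$ of \eqref{eq4} is automatic, since then $A_\sss=0$ and Theorem \ref{trespan} forces $\sal(T)=\SSS_{<\sss}(T)$; and the convention of Remark \ref{assumpt1pap} ensures that at most one elementary factor per starting point has span of length $0$, which keeps the factor $T'$ free of the degeneracies excluded there. No product bases need to be exhibited; the whole argument rests on Observations \ref{obs10} and \ref{obs31}.
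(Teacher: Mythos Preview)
Your proposal is correct and follows essentially the same route as the paper: reduce to Observation \ref{obs31}, compute $C_\sss(T')$ for the candidate product $T'$ via the span subcode description of elementary factors (you use Observation \ref{obs10} directly, the paper cites Observation \ref{obs11}), then match $C_\sss(T)$ with $C_\sss(T')$ for all $\sss$ by an induction/iterative substitution on span length, noting that the case $\sss\notin\mathcal{S}(T)$ is automatic from Theorem \ref{trespan}. The only differences are presentational.
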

\begin{proof}
We know by Observation \ref{obs11} 
that 
\begin{gather*}
C_{\sss}(\otimes_{\sss\in\mathcal{S}(T)}\otimes_{i=1}^{m(\sss,T)}\al^{\sss,i}|\sss)=\\
=\langle\al^{\sss',i}|\sss'\in\mathcal{S}(T), \sss'\leq\sss, i=1,\ldots,m(\sss',T)\rangle
\end{gather*}
for all $\sss$, and in particular for all $\sss\in\mathcal{S}(T)$. 
Thus, if  \eqref{eqpapfac} is an elementary trellis factorization of $T$ equation \eqref{eq4} follows. 
Vice versa, assume \eqref{eq4} holds for all $\sss\in\mathcal{S}(T)$. First, note that if $\sss\notin\mathcal{S}(T)$ then $C_{\sss}(T)=\sum_{\sss'<\sss}C_{\sss'}(T)$.  Now, by iterative substitutions in these equations and in \eqref{eq4} for decreasing span lengths we deduce that 
\begin{gather*}
C_{\sss}(T)=\\
=\langle\al^{\sss',i}|\sss'\in\mathcal{S}(T),\sss'\leq\sss, i=1,\ldots,m(\sss',T)\rangle=\\
=C_{\sss}(T')
\end{gather*}
  for all 
$\sss$. 
The conclusion follows now from Observation \ref{obs31}.
\end{proof}
The above theorem can be used to compute all the possible labelings of the elementary factors of $T$, and so, preceded by the computation of $\mathcal{S}(T)$ via Theorem \ref{trespan} (or via the graphical intersection data as described in Appendix \ref{graphcarpap}), all the possible elementary factors of $T$.

 Note that for a given span $\sss\in\mathcal{S}(T)$ only the labels for elementary factors with that same span appear in equation \eqref{eq4}. Therefore, when computing all the possible labelings of the factors of $T$ this can be worked out independently for each span $\sss\in\mathcal{S}(T)$.

\begin{ex}\label{ex31} Let $T$ be 
\begin{center}
 \begin{tikzpicture}[xscale=1.2,>=latex',shorten >=.9pt, shorten <=1.4pt, line width=.6pt]
  \tikzstyle{every node}=[draw,circle,fill=black,minimum size=2pt,
                        inner sep=0pt]                    
\foreach \x in {0,1,2,4,5}{
\node at (\x,0) {};};
\foreach \x in {0,1,2,4,5}{
\node at (\x,1) {};};
\foreach \x in {0,1,2,4,5}{
\node at (\x,2) {};};
\foreach \x in {0,1,2,4,5}{
\node at (\x,3) {};};
\node at (3,0) {};
\node at (3,1) {};
\tikzstyle{every node}=[]
\draw (0,0) node [below] {\tiny $\mathbf{\underline{00}}$};
\draw (0,1) node [below] {\tiny $\mathbf{\underline{01}}$};
\draw (0,2) node [below] {\tiny $\mathbf{\underline{10}}$};
\draw (0,3) node [below] {\tiny $\mathbf{\underline{11}}$};

\draw (1,0) node [below] {\tiny $\mathbf{\underline{00}}$};
\draw (1,1) node [below] {\tiny $\mathbf{\underline{01}}$};
\draw (1,2) node [above] {\tiny $\mathbf{\underline{10}}$};
\draw (1,3) node [below] {\tiny $\mathbf{\underline{11}}$};

\draw (2,0) node [below] {\tiny $\mathbf{\underline{00}}$};
\draw (2,1) node [above] {\tiny $\mathbf{\underline{01}}$};
\draw (2,2) node [above] {\tiny $\mathbf{\underline{10}}$};
\draw (2,3) node [above] {\tiny $\mathbf{\underline{11}}$};

\draw (3,0) node [below] {\tiny $\mathbf{\underline{0}}$};
\draw (3,1) node [below] {\tiny $\mathbf{\underline{1}}$};

\draw (4,0) node [below] {\tiny $\mathbf{\underline{00}}$};
\draw (4,1) node [above] {\tiny $\mathbf{\underline{01}}$};
\draw (4,2) node [above] {\tiny $\mathbf{\underline{10}}$};
\draw (4,3) node [above] {\tiny $\mathbf{\underline{11}}$};

\draw (5,0) node [below] {\tiny $\mathbf{\underline{00}}$};
\draw (5,1) node [below] {\tiny $\mathbf{\underline{01}}$};
\draw (5,2) node [below] {\tiny $\mathbf{\underline{10}}$};
\draw (5,3) node [below] {\tiny $\mathbf{\underline{11}}$};

\draw [->,dashed,black] (0,0) --  node[below] {}  (1,0);
\draw [->,dashed,black] (0,1) --   (1,1);
\draw [->,dashed,black] (0,2) --   (1,2);
\draw [->,dashed,black] (0,3) --   (1,3);

\draw [->,dashed,black] (1,0) -- node[below] {}   (2,0);
\draw [->,black] (1,0) --   (2,1);

\draw [->,black] (1,1) --   (2,2);
\draw [->,dashed,black] (1,1) --   (2,3);

\draw [->,black] (1,2) --   (2,0);
\draw [->,dashed,black] (1,2) --   (2,1);

\draw [->,dashed,black] (1,3) --   (2,2);
\draw [->,black] (1,3) --   (2,3);

\draw [->,dashed,black] (2,0) --  node[below] {}  (3,0);
\draw [->,dashed,black] (2,1) --   (3,1);

\draw [->, dashed,black] (2,2) --   (3,0);
\draw [->, dashed,black] (2,3) --   (3,1);

\draw [->,dashed,black] (3,0) --  node[below] {}  (4,0);
\draw [->,black] (3,0) --   (4,1);
\draw [->,black] (3,0) --   (4,2);
\draw [->,dashed,black] (3,0) --   (4,3);

\draw [->,black] (3,1) --   (4,0);
\draw [->,dashed,black] (3,1) --   (4,1);
\draw [->,dashed,black] (3,1) --   (4,2);
\draw [->,black] (3,1) --   (4,3);

\draw [->,dashed,black] (4,0) --  node[below] {}  (5,0);
\draw [->,black] (4,1) --   (5,1);
\draw [->,black] (4,2) --   (5,2);
\draw [->,dashed,black] (4,3) --   (5,3);

    \end{tikzpicture}
    \end{center}
One can check that $\mathcal{S}(T)=\{\{(1,2),(3,3),(3,4)\}\}$ (see also Example \ref{ex33}). Then one gets:
\begin{itemize}
\item $C_{(1,2)}(T)=\langle01010\rangle$
\item $C_{(3,3)}(T)=\langle01011\rangle$
\item $C_{(3,4)}(T)=\langle01011\rangle$
\end{itemize}
Therefore by the above theorem there are two and only two distinct factorizations of $T$, namely
\begin{gather*}
01010|(1,2)\otimes01011|(3,3)\otimes01011|(3,4)\\
01010|(1,2)\otimes01011|(3,3)\otimes00000|(3,4)
\end{gather*}
According to the same theorem 
$$01010|(1,2)\otimes00000|(3,3)\otimes01011|(3,4)$$ is not an elementary trellis factorization of $T$  (because $C_{(3,3)}(T)\neq 0$), i.e. it yields a different (i.e. nonisomorphic) trellis $T'$, which we have depicted   below. From the diagram one can indeed check that $T'\neq T$ as it is possible to go out from $\underline{0}\in V_{3}(T')$ along two paths with all edge-labels equal to $0$ that meet again at $V_{2}(T')$, while this is not possible in $T$.
\begin{center}
 \begin{tikzpicture}[xscale=1.2,>=latex',shorten >=.9pt, shorten <=1.4pt, line width=.6pt]
  \tikzstyle{every node}=[draw,circle,fill=black,minimum size=2pt,
                        inner sep=0pt]                    
\foreach \x in {0,1,2,4,5}{
\node at (\x,0) {};};
\foreach \x in {0,1,2,4,5}{
\node at (\x,1) {};};
\foreach \x in {0,1,2,4,5}{
\node at (\x,2) {};};
\foreach \x in {0,1,2,4,5}{
\node at (\x,3) {};};
\node at (3,0) {};
\node at (3,1) {};
\tikzstyle{every node}=[]
\draw (0,0) node [below] {\tiny $\mathbf{\underline{00}}$};
\draw (0,1) node [below] {\tiny $\mathbf{\underline{01}}$};
\draw (0,2) node [below] {\tiny $\mathbf{\underline{10}}$};
\draw (0,3) node [below] {\tiny $\mathbf{\underline{11}}$};

\draw (1,0) node [below] {\tiny $\mathbf{\underline{00}}$};
\draw (1,1) node [below] {\tiny $\mathbf{\underline{01}}$};
\draw (1,2) node [above] {\tiny $\mathbf{\underline{10}}$};
\draw (1,3) node [below] {\tiny $\mathbf{\underline{11}}$};

\draw (2,0) node [below] {\tiny $\mathbf{\underline{00}}$};
\draw (2,1) node [above] {\tiny $\mathbf{\underline{01}}$};
\draw (2,2) node [above] {\tiny $\mathbf{\underline{10}}$};
\draw (2,3) node [above] {\tiny $\mathbf{\underline{11}}$};

\draw (3,0) node [below] {\tiny $\mathbf{\underline{0}}$};
\draw (3,1) node [below] {\tiny $\mathbf{\underline{1}}$};

\draw (4,0) node [below] {\tiny $\mathbf{\underline{00}}$};
\draw (4,1) node [above] {\tiny $\mathbf{\underline{01}}$};
\draw (4,2) node [above] {\tiny $\mathbf{\underline{10}}$};
\draw (4,3) node [above] {\tiny $\mathbf{\underline{11}}$};

\draw (5,0) node [below] {\tiny $\mathbf{\underline{00}}$};
\draw (5,1) node [below] {\tiny $\mathbf{\underline{01}}$};
\draw (5,2) node [below] {\tiny $\mathbf{\underline{10}}$};
\draw (5,3) node [below] {\tiny $\mathbf{\underline{11}}$};

\draw [->,dashed,black] (0,0) --  node[below] {}  (1,0);
\draw [->,dashed,black] (0,1) --   (1,1);
\draw [->,dashed,black] (0,2) --   (1,2);
\draw [->,dashed,black] (0,3) --   (1,3);

\draw [->,dashed,black] (1,0) -- node[below] {}   (2,0);
\draw [->,black] (1,0) --   (2,1);

\draw [->,black] (1,1) --   (2,2);
\draw [->,dashed,black] (1,1) --   (2,3);

\draw [->, dashed,black] (1,2) --   (2,0);
\draw [->,black] (1,2) --   (2,1);

\draw [->,black] (1,3) --   (2,2);
\draw [->,dashed,black] (1,3) --   (2,3);

\draw [->,dashed,black] (2,0) --  node[below] {}  (3,0);
\draw [->,dashed,black] (2,1) --   (3,1);

\draw [->, dashed,black] (2,2) --   (3,0);
\draw [->, dashed,black] (2,3) --   (3,1);

\draw [->,dashed,black] (3,0) --  node[below] {}  (4,0);
\draw [->,black] (3,0) --   (4,1);
\draw [->, dashed,black] (3,0) --   (4,2);
\draw [->,black] (3,0) --   (4,3);

\draw [->,black] (3,1) --   (4,0);
\draw [->,dashed,black] (3,1) --   (4,1);
\draw [->,black] (3,1) --   (4,2);
\draw [->, dashed,black] (3,1) --   (4,3);

\draw [->,dashed,black] (4,0) --  node[below] {}  (5,0);
\draw [->,black] (4,1) --   (5,1);
\draw [->, dashed,black] (4,2) --   (5,2);
\draw [->,black] (4,3) --   (5,3);

    \end{tikzpicture}
    \end{center}
\end{ex}

Since by the Factorization Theorem all linear trellises can be written as elementary trellis products the above theorem is crucial for their classification. We will use it indeed in the next section for classifying minimal linear trellises.

We want now to count the number of  distinct elementary trellis factorizations of a linear trellis $T$ 
such that $m(\sss,T)=1$ for all $\sss\in\sd(T)$. First, let 
$\otimes_{\sss\in\sd(T)}\al^{\sss}|\sss$ be an elementary trellis factorization of $T$ (which we know to exist by the Factorization Theorem).
Now, by \eqref{eq4}
$\al'$ is a valid labeling of  the elementary factor with span $\sss$ if and only if 
$$\langle\al^{\sss}\rangle+\sum_{\sss'<\sss}C_{\sss'}(T)=\langle\al'\rangle+\sum_{\sss'<\sss}C_{\sss'}(T)$$
As $\al'|\sss=\al''|\sss$ if and only if $\al'=y\al''$ for some $y\in\FF^{*}$, the elementary factors with span $\sss$ are precisely given by $(\al^{\sss}+\ww)|\sss$ for arbitrary $\ww\in\sum_{\sss'<\sss}C_{\sss'}(T)$. Thus, putting $$k_{\sss}=\dim(\sum_{\sss'<\sss}C_{\sss'}(T))$$ we have that:
\begin{itemize}
\item  if $\al^{\sss}\in\sum_{\sss'<\sss}C_{\sss'}(T)$ then 
there are precisely  $$1+\frac{q^{k_{\sss}}-1}{q-1}$$ valid elementary factors with span $\sss$ 
\item if $\al^{\sss}\notin\sum_{\sss'<\sss}C_{\sss'}(T)$ then 
$(\al^{\sss}+\ww)|\sss\neq(\al^{\sss}+\ww')|\sss$ whenever $\ww\neq\ww'$, and so there are precisely  $$q^{k_{\sss}}$$ valid elementary factors with span $\sss$
\end{itemize}
Note that if $T$ is also one-to-one then $\al^{\sss}\notin\sum_{\sss'<\sss}C_{\sss'}(T)$ and $$\dim(\sum_{\sss'<\sss}C_{\sss'}(T))=|\{\sss'\in\sd(T)|\sss'<\sss\}|$$

Resuming:
\begin{cor}\label{cor31}
Assume that $m(\sss,T)=1$ for all $\sss\in\sd(T)$. Put 
\begin{align*}
S'&:=\{\sss\in\sd(T)|C_{\sss}(T)=\sum_{\sss'<\sss}C_{\sss'}(T)\}\\
S''&:=\sd(T)\setminus S'
\end{align*}
Then $T$ has precisely 
$$\prod_{\sss\in S'}\frac{q^{k_{\sss}}+q-2}{q-1}\prod_{\sss\in S''}q^{k_{\sss}}$$ distinct elementary trellis factorizations.
If $T$ is also  one-to-one then it has precisely 
$$\exp_{q}({\sum_{\sss\in\sd(T)}|\{\sss'\in\sd(T)|\sss'<\sss\}|})$$
distinct elementary trellis factorizations.
\end{cor}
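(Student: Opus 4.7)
The strategy is to combine Theorem \ref{trelfac} with the per-span count derived in the discussion preceding the corollary, and then observe that choices at different spans are independent. Concretely, any elementary trellis factorization of $T$ has the form $\otimes_{\sss\in\sd(T)}\al^{\sss}|\sss$ (using $m(\sss,T)=1$), and by Theorem \ref{trelfac} the valid tuples $(\al^{\sss})_{\sss\in\sd(T)}$ are exactly those satisfying
\begin{equation*}
C_{\sss}(T)=\langle\al^{\sss}\rangle+\sum_{\sss'<\sss}C_{\sss'}(T) \quad \text{for every } \sss\in\sd(T).
\end{equation*}
The key observation is that this condition decouples across spans: the right-hand subspace $\sum_{\sss'<\sss}C_{\sss'}(T)$ is an invariant of $T$ itself, not of the labels chosen for the sub-span factors. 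Hence the total count is the product, over $\sss\in\sd(T)$, of the number of admissible elementary trellises $\al|\sss$ at span $\sss$.

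First I will handle each span $\sss$ separately. The admissibility condition says that $\al$ must lie in $C_{\sss}(T)$ and its image in $C_{\sss}(T)/\sum_{\sss'<\sss}C_{\sss'}(T)$ must generate that quotient. If $\sss\in S'$, the quotient is zero, so $\al$ ranges over the full $k_{\sss}$-dimensional space $\sum_{\sss'<\sss}C_{\sss'}(T)$, giving $q^{k_{\sss}}$ raw choices for $\al$; identifying $\al|\sss$ with $y\al|\sss$ for $y\in\FF^{*}$ (and noting the degenerate label $\al=\bm{0}$ yields a single trellis) yields exactly $1+(q^{k_{\sss}}-1)/(q-1) = (q^{k_{\sss}}+q-2)/(q-1)$ distinct elementary trellises. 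If $\sss\in S''$, the quotient is one-dimensional (its dimension is at most the span multiplicity $m(\sss,T)=1$, and it is nonzero by definition of $S''$), so $\al$ ranges over $C_{\sss}(T)\setminus\sum_{\sss'<\sss}C_{\sss'}(T)$, giving $q^{k_{\sss}+1}-q^{k_{\sss}}=q^{k_{\sss}}(q-1)$ raw choices; quotienting by scalars yields $q^{k_{\sss}}$ distinct factors. Multiplying the two kinds of per-span counts proves the first formula.

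For the one-to-one specialization, the map $L:\SSS(T)\to C(T)$ is injective, so $\dim C_{\sss'}(T)=\dim\SSS_{\sss'}(T)$ for all $\sss'$, and analogously the sums behave dimensionally like the corresponding span-subcode sums. Using $2)$ of Observation \ref{spansub1} together with the hypothesis $m(\sss,T)=1$ for all $\sss\in\sd(T)$, the sum $\sum_{\sss'<\sss}\SSS_{\sss'}(T)$ has dimension exactly $|\{\sss'\in\sd(T)\mid\sss'<\sss\}|$, and injectivity of $L$ transfers this equality to $k_{\sss}$. Injectivity also forces $\al^{\sss}\notin\sum_{\sss'<\sss}C_{\sss'}(T)$ for any valid factorization (otherwise one could produce a nonzero cycle in $\SSS_{\sss}(T)$ with vanishing label, contradicting one-to-oneness combined with atomicity of the basis element at span $\sss$), so $S'=\emptyset$ and $S''=\sd(T)$. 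The product then collapses to $\prod_{\sss\in\sd(T)}q^{k_{\sss}}=\exp_{q}\bigl(\sum_{\sss\in\sd(T)}|\{\sss'\in\sd(T)\mid\sss'<\sss\}|\bigr)$, as claimed.

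The only mildly subtle point, and the one I would check carefully, is the independence of per-span choices: although the constraint at span $\sss$ involves $C_{\sss'}(T)$ for $\sss'<\sss$, it does \emph{not} involve the chosen labels $\al^{\sss'}$, only the intrinsic subcodes of $T$. This is precisely the content of Theorem \ref{trelfac}, so the multiplication of counts is legitimate. Everything else is a bookkeeping exercise with the two cases $S'$ and $S''$.
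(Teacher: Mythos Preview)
Your proof is correct and follows essentially the same approach as the paper: the paper derives the per-span counts in the discussion immediately preceding the corollary (using Theorem \ref{trelfac} to decouple the spans, then splitting into the two cases $\al^{\sss}\in\sum_{\sss'<\sss}C_{\sss'}(T)$ and $\al^{\sss}\notin\sum_{\sss'<\sss}C_{\sss'}(T)$, and finally specializing to the one-to-one case), and then states the corollary as a summary. Your write-up reproduces this argument with only cosmetic differences in how the $S''$ count is organized (you count raw labels and divide by $|\FF^{*}|$, while the paper parametrizes the factors directly as $(\al^{\sss}+\ww)|\sss$).
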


For example, the trellis $T$ of the above example is one-to-one with span distribution 
given by 
$\mathcal{S}(T)=\{\{(1,2),(3,3),(3,4)\}\}$, and so it has precisely two distinct factorizations.
We also get the following corollary. 
\begin{cor}\label{cor32}
Let $T$ be a one-to-one linear trellis. Then $T$ has a unique elementary trellis factorization if and only if all spans in $\mathcal{S}(T)$ have multiplicity equal to $1$ and are incomparable.
\end{cor}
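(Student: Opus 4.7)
My plan is to deduce both directions from Corollary \ref{cor31}, supplemented by one short ad hoc construction to handle the case of a span of multiplicity at least two, which the corollary does not directly cover.

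For the ``if'' direction, I would assume every span in $\sd(T)$ has multiplicity $1$ and that any two distinct spans of $\sd(T)$ are incomparable under the span order. Then for every $\sss \in \sd(T)$ the set $\{\sss' \in \sd(T) \mid \sss' < \sss\}$ is empty, so the exponent appearing in the formula of Corollary \ref{cor31} is zero, and $T$ has precisely $q^{0} = 1$ elementary factorization, as required.

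For the ``only if'' direction I would argue the contrapositive. Suppose the conclusion fails; then either (a) some $\sss_{0} \in \sd(T)$ has $m := m(\sss_{0}, T) \geq 2$, or (b) all multiplicities equal $1$ but some pair $\sss_{1} < \sss_{2}$ lies in $\sd(T)$. Case (b) is immediate from Corollary \ref{cor31}: the exponent $\sum_{\sss \in \sd(T)} |\{\sss' \in \sd(T) \mid \sss' < \sss\}|$ is at least $1$, so there are at least $q \geq 2$ distinct factorizations.

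Case (a) is the step I expect to be the main obstacle, since Corollary \ref{cor31} does not apply verbatim. I would start from any elementary factorization of $T$ (which exists by the Factorization Theorem) and focus on its $m$ factors at span $\sss_{0}$, call them $\al^{1}|\sss_{0}, \ldots, \al^{m}|\sss_{0}$. By Theorem \ref{trelfac} these vectors satisfy $\langle \al^{1}, \ldots, \al^{m}\rangle + \sum_{\sss' < \sss_{0}} C_{\sss'}(T) = C_{\sss_{0}}(T)$, and because $T$ is one-to-one the identity $\dim \SSS_{\sss}(T) = \dim C_{\sss}(T)$ forces the images $\overline{\al^{1}}, \ldots, \overline{\al^{m}}$ in the $m$-dimensional quotient $C_{\sss_{0}}(T)/\sum_{\sss' < \sss_{0}} C_{\sss'}(T)$ to form a basis. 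I would then replace $\al^{1}$ by $\al^{1} + \al^{2}$ while leaving every other factor untouched. The span condition \eqref{eq4} is unchanged since $\langle \al^{1} + \al^{2}, \al^{2}, \al^{3}, \ldots, \al^{m}\rangle = \langle \al^{1}, \ldots, \al^{m}\rangle$, so Theorem \ref{trelfac} yields a second elementary factorization of $T$. Finally, the linear independence of the $\overline{\al^{i}}$ rules out $\al^{1} + \al^{2}$ being a scalar multiple of any $\al^{j}$, so $(\al^{1} + \al^{2})|\sss_{0}$ differs as an elementary trellis from every $\al^{j}|\sss_{0}$, and the two multisets of elementary factors are genuinely distinct. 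This produces two distinct factorizations of $T$ and completes the contrapositive.
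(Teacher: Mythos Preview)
Your proof is correct and follows the same approach the paper intends: the corollary is presented immediately after Corollary~\ref{cor31} with no separate proof, so the paper regards it as a direct consequence of the counting formula. Your ``if'' direction and case~(b) of the ``only if'' direction are exactly that deduction.

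Your case~(a) argument is a welcome addition, since Corollary~\ref{cor31} literally assumes all multiplicities equal $1$ and therefore does not by itself cover the situation $m(\sss_{0},T)\geq 2$; the paper leaves this step implicit. Your construction via Theorem~\ref{trelfac} is clean: the one-to-one hypothesis gives $\dim C_{\sss_{0}}(T)=\dim\SSS_{\sss_{0}}(T)$ and $\dim\sum_{\sss'<\sss_{0}}C_{\sss'}(T)=\dim\SSS_{<\sss_{0}}(T)$, so the quotient $C_{\sss_{0}}(T)/\sum_{\sss'<\sss_{0}}C_{\sss'}(T)$ has dimension $m$ and the $\overline{\al^{i}}$ are forced to be a basis (cf.\ Observation~\ref{spansub1}(4)). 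The replacement $\al^{1}\mapsto\al^{1}+\al^{2}$ preserves equation~\eqref{eq4} and produces a factor $(\al^{1}+\al^{2})|\sss_{0}$ that is not a scalar multiple of any $\al^{j}$, so the two multisets of elementary factors genuinely differ. Nothing is missing.
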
 
This extends the well known result that if there are no containments between different atomic spans of a linear code then the associated minimal conventional trellis factors uniquely, since (up to scalar multiplication) all the atomic generators are uniquely determined.


 \section{Applications: classifying  nonmergeable and minimal linear  trellises}\label{charactpap}

  \subsection{Preliminaries}\label{fundmin}
We review here some known fundamental results on the top of which we will build our new results. Some new notation will be also introduce in Subsection \ref{atomspanpap}.  We actually start by discussing and proving some important facts which have been overlooked in the literature.

\subsubsection{\textbf{Overlooked facts on nonmergeability and minimality}}
 While it is obvious that a minimal trellis is nonmergeable the same property for minimal linear trellises (i.e. linear trellises that are minimal only amongst linear trellises --- see Remark \ref{nolinminpap}) requires a proof. The proof essentially amounts to showing that a mergeable linear trellis can be merged to a smaller linear trellis for the same code. Such fact was actually observed in \cite{Ko} (see Lemma $4$ therein), but the proof there is cloudy. In fact it turns out that the ``almost reduced'' hypothesis (i.e. each vertex belongs to some cycle) is necessary, which was overlooked in \cite{Ko}. Below we prove the correct result and then give an example of the necessity of the extra hypothesis. We will also prove that a minimal linear trellis is reduced, which is tied up with the same result.

\begin{thm}\label{merglin} Let $T$ be an almost reduced linear trellis for $C$. If $T$ is mergeable then it can be merged to a linear trellis   
for $C$ smaller than $T$. 
\end{thm}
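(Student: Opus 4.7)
My plan is to construct a smaller linear trellis $T'$ for $C$ by quotienting $V_i(T)$ by the one-dimensional subspace $U := \lb v - w\rb$. Explicitly, set $V_j(T') := V_j(T)$ for $j \neq i$ and $V_i(T') := V_i(T)/U$, and take each edge set $E_{i-1}(T')$, $E_i(T')$ to be the image of the corresponding edge set of $T$ under the quotient $V_i(T) \to V_i(T)/U$, leaving the other edge sets unchanged. Linearity of $T'$ is then clear, $T'$ is strictly smaller than $T$ since $|V_i(T')| = |V_i(T)|/|\FF|$, and trimness follows from almost reducedness: every coset in $V_i(T')$ has a representative that lies on a cycle of $T$, which projects to a cycle of $T'$ through that coset.

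The main work is to show $C(T') = C$. The inclusion $C \subseteq C(T')$ is the trellis morphism $T \to T'$ induced by the quotient. For the reverse, I pick an arbitrary $\la' \in \SSS(T')$ and lift the two edges of $\la'$ at time $i$ to edges of $T$. This produces a ``quasi-cycle'' in $T$: a pair of paths $(\bm p_1,\bm p_2)$, with $\bm p_1$ running from some $u \in V_0(T)$ to some $\tilde v \in V_i(T)$ and $\bm p_2$ from some $\tilde v'' \in V_i(T)$ back to $u$, satisfying $\tilde v - \tilde v'' \in U$ and $L(\bm p_1)\cdot L(\bm p_2) = L(\la')$. (When $i = 0$ the same idea applies, but with a single path of length $n$ whose two endpoints differ by an element of $U$.) If $\tilde v = \tilde v''$ the lift is already a cycle of $T$ and there is nothing more to show.

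Otherwise $\tilde v - \tilde v'' = t(v-w)$ for some nonzero $t \in \FF$. I rescale $\la'$ to $t^{-1}\la' \in \SSS(T')$, so that its lift satisfies $\tilde v = v + y$ and $\tilde v'' = w + y$ for some $y \in V_i(T)$. Using almost reducedness I pick $\la_y \in \SSS(T)$ with $\nu_i(\la_y) = y$. Then $t^{-1}\la' - \la_y$ lies in $\SSS(T')$ and, by linearity of the edges of $T$, admits a lift whose time-$i$ endpoints are exactly $v$ and $w$---that is, a cycle through the merged vertex in the one-step merged trellis $T_1$ which switches from $v$ to $w$. Mergeability of $v,w$ gives $L(t^{-1}\la' - \la_y) \in C(T_1) = C$, and adding $L(\la_y) \in C$ and rescaling by $t$ yields $L(\la') \in C$.

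The principal obstacle is arranging the adjustment so that the corrected lift lands exactly on the pair $(v,w)$ at time $i$ rather than some other pair within the coset $v + U$; this is what permits invocation of mergeability. Almost reducedness of $T$ is precisely what supplies the cycle $\la_y$ needed for this realignment, and the example promised immediately after the theorem shows that without it the merged linear trellis may be strictly larger than necessary.
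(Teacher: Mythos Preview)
Your proof is correct and takes essentially the same approach as the paper: both quotient $V_i(T)$ by $\lb v-w\rb$, invoke almost reducedness to produce a cycle through a chosen vertex, and use linearity to translate an arbitrary path between coset-mates to one landing exactly on the pair $(v,w)$ so that mergeability applies. The paper streamlines the argument by first shifting so that $i=0$ and then establishing the set identity $\mathbb{P}(u,u+x(v-w))=\bm{p}'+x\bm{p}-x\,\mathbb{P}(v,w)$ (with $\bm{p}\in\mathbb{P}(v,v)$, $\bm{p}'\in\mathbb{P}(u,u)$), which handles all lifts simultaneously rather than one cycle at a time, but the underlying mechanism is identical to your correction by $\la_y$.
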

\begin{proof} 
 Suppose $T$ is mergeable at $v\neq w\in V_{i}(T)$. By shifting $T$ we can assume $i=0$. 
 Let $u\in V_{0}(T)$ and $x\in\FF^{*}$. By hypothesis there exist paths $\bm{p}\in\mathbb{P}(v,v),\bm{p'}\in\mathbb{P}(u,u)$. By  linearity $$\mathbb{P}(u,u+x(v-w))=\bm{p}'+x\bm{p}-x\mathbb{P}(v,w)$$ so that  $L(\mathbb{P}(u,u+x(v-w)))\subseteq C$. Similarly  $L(\mathbb{P}(u+x(v-w),u))\subseteq C$. Thus the vertices in each coset $u+\FF(v-w)$ can be merged together without affecting the code. The resulting trellis is clearly linear and smaller than $T$, proving the statement. 
\end{proof}

\begin{ex} The below linear trellis for $\langle111\rangle$ is mergeable,  
and it is so only at the pair of vertices $\underline{00}$, $\underline{11}\in V_{0}(T)$.  Once those two vertices are merged no other merging is possible, so that there is no way we can obtain a smaller linear trellis by merging vertices. The trellis is clearly not almost reduced.
\begin{center}
 \begin{tikzpicture}[yscale=.8,>=latex',shorten >=.9pt, shorten <=1.4pt, line width=.6pt]

\begin{scope}
[xshift=0cm]
  \tikzstyle{every node}=[draw,circle,fill=black,minimum size=2pt,
                        inner sep=0pt]                    
\foreach \x in {0}{
\node at (\x,0) {};};
\foreach \x in {0}{
\node at (\x,1) {};};
\foreach \x in {0}{
\node at (\x,2) {};};
\foreach \x in {0}{
\node at (\x,3) {};};

\tikzstyle{every node}=[]

\draw [->,black,dashed] (0,0) -- (1,0);
\draw [->,black,dashed] (0,2) -- (1,1);
\draw [->,black] (0,3) -- (1,1);
\draw [->,black] (0,1) -- (1,0);

\foreach \x in {0}{
\draw (\x,0) node [below] {\tiny ${\uu{00}}$};
\draw (\x,1) node [below] {\tiny ${\uu{01}}$};
\draw (\x,2) node [above] {\tiny ${\uu{10}}$};
\draw (\x,3) node [above] {\tiny ${\uu{11}}$};
};
\end{scope}

\begin{scope}
[xshift=1cm]
  \tikzstyle{every node}=[draw,circle,fill=black,minimum size=2pt,
                        inner sep=0pt]                    
\foreach \x in {0}{
\node at (\x,0) {};};
\foreach \x in {0}{
\node at (\x,1) {};};

\tikzstyle{every node}=[]
\draw [->,black,dashed] (0,0) -- (1,0);

\draw [->,black] (0,1) -- (1,1);

\foreach \x in {0}{
\draw (\x,0) node [below] {\tiny ${\uu{0}}$};
\draw (\x,1) node [below] {\tiny ${\uu{1}}$};
};
\end{scope}

\begin{scope}
[xshift=2cm]
  \tikzstyle{every node}=[draw,circle,fill=black,minimum size=2pt,
                        inner sep=0pt]                    
\foreach \x in {0}{
\node at (\x,0) {};};
\foreach \x in {0}{
\node at (\x,1) {};};
\node at (1,0) {};
\node at (1,1) {};
\node at (1,2) {};
\node at (1,3) {};

\tikzstyle{every node}=[]
\draw [->,black,dashed] (0,0) -- (1,0);
\draw [->,black,dashed] (0,1) -- (1,1);

\draw [->,black] (0,0) -- (1,2);
\draw [->,black] (0,1) -- (1,3);

\foreach \x in {0}{
\draw (\x,0) node [below] {\tiny ${\uu{0}}$};
\draw (\x,1) node [below] {\tiny ${\uu{1}}$};
};
\foreach \x in {1}{
\draw (\x,0) node [below] {\tiny ${\uu{00}}$};
\draw (\x,1) node [below] {\tiny ${\uu{01}}$};
\draw (\x,2) node [above] {\tiny ${\uu{10}}$};
\draw (\x,3) node [above] {\tiny ${\uu{11}}$};
};
\end{scope}

 \end{tikzpicture}
  \end{center}
  \end{ex}

 \begin{thm}\label{minred} Any minimal linear trellis is reduced.
\end{thm}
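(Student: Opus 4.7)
The plan has two steps. In the first step, I will show that any minimal linear trellis $T$ must be almost reduced by constructing its ``reduced part'' as an explicit linear subtrellis. Define $V_{i}^{r}:=\{\nu_{i}(\la)\mid \la\in\SSS(T)\}\leq V_{i}(T)$ and $E_{i}^{r}:=\{(\nu_{i}(\la),\alpha_{i}(\la),\nu_{i+1}(\la))\mid \la\in\SSS(T)\}\leq E_{i}(T)$; these are subspaces as images of linear maps out of $\SSS(T)$, and they assemble into a linear subtrellis $T^{r}\leq T$. The subtrellis $T^{r}$ is trim because each $v=\nu_{i}(\la)\in V_{i}^{r}$ inherits outgoing and incoming edges from the cycle $\la$, it is reduced by construction since every vertex and edge is on a cycle, and it satisfies $\SSS(T^{r})=\SSS(T)$ and hence $C(T^{r})=C(T)$. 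If $T$ were not almost reduced, some $V_{i}^{r}$ would be strictly smaller than $V_{i}(T)$, producing a linear trellis for $C(T)$ strictly smaller than $T$ and contradicting minimality.

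In the second step, I will promote almost-reducedness to reducedness. Since $T$ is almost reduced and minimal, Theorem \ref{merglin} forces $T$ to be nonmergeable: otherwise $T$ could be merged to a strictly smaller linear trellis for $C(T)$, contradicting minimality. Nonmergeability implies connectedness, as already observed in the paper immediately after the trellis class definitions. I then invoke the characterization of the reduced property for connected linear trellises established in Appendix A to conclude that $T$ is reduced, which completes the proof.

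The principal obstacle lies in the second step and its appeal to the Appendix A characterization. A self-contained alternative would be to prove directly the lemma that any almost-reduced linear trellis admitting an edge not on any cycle is mergeable. The strategy would take a non-cycle edge $e=(v,\alpha,w)\in E_{i}(T)$, use a cycle $\la$ through $v$ (available by almost-reducedness) to translate $e$ into an edge $e'=(0_{i},\beta,u)\in E_{i}(T)$ starting at $0_{i}$, observe that $e'$ then lies on no cycle through $0_{i}$ (else translating back by $\la$ would put $e$ on a cycle), and exploit this asymmetry between edges and cycles at $0_{i}$ to identify two distinct vertices in some $V_{j}(T)$ whose merging preserves the code. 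Making this final identification rigorous requires a careful analysis of the state equivalence induced by the extra edges, and this is where the real technical work would lie.
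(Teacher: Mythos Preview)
Your step 1 is correct and in fact makes explicit what the paper glosses over: the paper simply asserts that its edge-deleted subtrellis $T'$ ``has the same vertices as $T$'', which is exactly your almost-reducedness conclusion.

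Your step 2 has a genuine gap, and you have correctly sensed it. You apply Theorem~\ref{merglin} to $T$ and obtain that $T$ is nonmergeable, hence connected; you then invoke Theorem~\ref{thm45}. But Theorem~\ref{thm45} says a connected linear trellis is reduced \emph{if and only if} every vertex admits paths of length $n-1$ to and from a zero vertex. Knowing only that $T$ is connected, almost reduced, and nonmergeable does not hand you this path condition, so the appeal to Theorem~\ref{thm45} does nothing as stated.

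The fix is not your proposed self-contained lemma but simply to redirect the nonmergeability argument to $T^{r}$ instead of $T$; this is precisely the paper's route. After step~1 you know $T^{r}$ has the same vertex spaces as $T$, so $T^{r}$ is itself a minimal linear trellis for $C(T)$; being reduced it is in particular almost reduced, so Theorem~\ref{merglin} applies to $T^{r}$ and forces $T^{r}$ to be nonmergeable, hence connected. Now $T^{r}$ is \emph{both} connected and reduced, so the ``only if'' half of Theorem~\ref{thm45} produces the length-$(n-1)$ paths inside $T^{r}$, and these transfer verbatim to $T\supseteq T^{r}$. Since $T$ is then connected (it contains the connected $T^{r}$ on the same vertex set), the ``if'' half of Theorem~\ref{thm45} yields that $T$ is reduced. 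Your alternative lemma (almost reduced with a non-cycle edge implies mergeable) may well be true, but it is unnecessary once you route through $T^{r}$.
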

\begin{proof}
 Let $T$ be a minimal linear trellis for $C$.  
 By deleting all the edges of $T$ that do not belong to cycles we get a linear subtrellis $T'\leq T$ which is reduced and has the same vertices as $T$. Thus $T'$ must be minimal for $C$ too, and so by Theorem \ref{merglin} it must be nonmergeable, and therefore also connected. Now the statement follows  from Theorem \ref{thm45}.  
 \end{proof}

\begin{cor}\label{corpap1}
Any minimal linear trellis is nonmergeable. 
\end{cor}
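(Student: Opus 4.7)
The plan is to derive this directly from the two theorems immediately preceding it, using the implication reduced $\Rightarrow$ almost reduced as the bridge. Concretely, let $T$ be a minimal linear trellis for some code $C$. By Theorem \ref{minred} we know $T$ is reduced, which means every edge (and hence every vertex, by the trim assumption) belongs to some cycle; in particular $T$ is almost reduced.

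Now suppose for contradiction that $T$ were mergeable. Since $T$ is linear and almost reduced, Theorem \ref{merglin} applies and produces a linear trellis $T'$ for the same code $C(T')=C$ that is strictly smaller than $T$. This contradicts the minimality of $T$ amongst linear trellises for $C$ (which is the notion of minimal linear trellis fixed in Remark \ref{nolinminpap}). Hence $T$ must be nonmergeable.

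There is essentially no obstacle here: the corollary is a one-line combination of Theorems \ref{minred} and \ref{merglin}. The only subtlety worth noting is that the ``almost reduced'' hypothesis needed in Theorem \ref{merglin} really is available, and this is precisely what Theorem \ref{minred} supplies (and is why that theorem had to be proved first, in light of the example showing the hypothesis is not redundant in general).
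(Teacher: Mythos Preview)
Your proof is correct and matches the paper's own argument exactly: the paper simply says ``Put together Theorem \ref{merglin} and Theorem \ref{minred}.'' Your write-up just makes explicit the trivial implication reduced $\Rightarrow$ almost reduced and spells out the contradiction, which is fine.
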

\begin{proof}
Put together Theorem \ref{merglin} and Theorem \ref{minred}.
\end{proof}

 \begin{rmk}
 The definition of minimal linear trellises commonly given in the literature requires the reduced property in the hypothesis. Theorem \ref{minred} shows that by minimizing the \textit{state-complexity profile} $(|V_{0}(T)|,\ldots,|V_{n-1}(T)|)$ we get automatically rid of the unnecessary edges and so there is no such thing as a nonreduced minimal linear trellis, a fact which was not recognized before. 
\end{rmk}

\subsubsection{\textbf{Well known facts} (\cite{KV,V})}

For  a linear trellis $T$:

\begin{itemize}
\item $T$ is minimal $\implies$ $T$ is one-to-one.
\item If $T$ is reduced then: $T$ is nonmergeable $\implies$ $T$ is biproper, but the converse is not true. 
\item If $T$ is nonmergeable it is not necessarily minimal.
\item If $T$ is conventional then: $T$ is minimal $\iff$ $T$ is nonmergeable $\iff$ $T$ is biproper
\end{itemize}

\subsubsection{\textbf{Atomic bases, atomic spans, and minimal trellises}}\label{atomspanpap}
First, some notation. Henceforth by $C$ we will denote an $[n,k]$ linear code of full support (i.e. $\su(C)=\ZZ_{n}$). 
 We denote by $[\vv]^{*}$ the minimum conventional span of $\vv\in \FF^{n}$ (which clearly exists), and we call it \textit{the (conventional) span of} $\vv$. For any span $(a,l)$ we put $$\sigma((a,l)):=(a-1,l)$$
  Now, we have:

\begin{thm}[\cite{KsSo,Mc}]\label{minconv5}
There exists a basis $\{\vv^{1},\ldots,\vv^{k}\}$ of $C$ such that the spans $[\vv^{i}]^{*}$, $i=1,\ldots,k$, all  start and end at different positions, and  if  $\{\ww^{1},\ldots,\ww^{k}\}$ is another basis with the same property then $\{[\vv^{i}]^{*}\}_{i=1,\ldots,k}=\{[\ww^{i}]^{*}\}_{i=1,\ldots,k}$. 
\end{thm}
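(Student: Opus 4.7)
The plan is to prove existence constructively by a Gaussian-elimination procedure and uniqueness by a dimension-counting argument with M\"obius inversion.

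For existence, I would start from any basis $\{\vv^{1},\ldots,\vv^{k}\}$ of $C$ and apply a two-stage reduction, overwriting the basis at each step. First, whenever two basis vectors share a starting position $a$, pivot on coordinate $a$ of one of them to zero out coordinate $a$ of the other; the modified vector strictly later-starts than before. Since the total starting-position measure $\sum_{i}\textnormal{start}(\vv^{i})$ strictly increases at each step and is bounded above by $k(n-1)$, the process terminates at a basis with pairwise distinct starts. Next, with distinct starts in hand, whenever two vectors share an ending position $b$, select among them the vector $\vv^{i}$ with the \emph{latest} start and subtract a scalar multiple of it from the other vector $\vv^{j}$ to kill coordinate $b$. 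Because $\textnormal{start}(\vv^{i})>\textnormal{start}(\vv^{j})$, the subtraction touches only coordinates $\geq\textnormal{start}(\vv^{i})$, so $\textnormal{start}(\vv^{j})$ is preserved while $\textnormal{end}(\vv^{j})$ strictly decreases. Iterating terminates and yields a basis with both distinct starts and distinct ends.

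For uniqueness, define the intrinsic subcodes
\begin{equation*}
C^{*}_{[a,b]} := \{\vv \in C \mid \su(\vv) \subseteq [a,b]\}.
\end{equation*}
The crucial claim is that for any basis $\{\vv^{1},\ldots,\vv^{k}\}$ with distinct starts and distinct ends, the subset $\{\vv^{i} : [\vv^{i}]^{*} \subseteq [a,b]\}$ is a basis of $C^{*}_{[a,b]}$. Granting this, the number of basis elements whose span lies in $[a,b]$ equals $\dim C^{*}_{[a,b]}$, which depends only on $C$; letting $n(a,b)$ denote the multiplicity of basis elements with span exactly $[a,b]$, M\"obius inversion on the poset of intervals yields
\begin{equation*}
n(a,b) = \dim C^{*}_{[a,b]} - \dim C^{*}_{[a+1,b]} - \dim C^{*}_{[a,b-1]} + \dim C^{*}_{[a+1,b-1]},
\end{equation*}
so the multiset $\{[\vv^{i}]^{*}\}_{i=1,\ldots,k}$ is determined by $C$ alone, proving uniqueness.

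The main obstacle is the crucial claim, specifically the spanning direction (containment and linear independence are immediate). Given $\vv \in C^{*}_{[a,b]}$, write $\vv = \sum_{j} c_{j}\vv^{j}$. Among indices with $c_{j}\neq 0$, let $j_{0}$ minimize $\textnormal{start}(\vv^{j})$ and set $a_{0}:=\textnormal{start}(\vv^{j_{0}})$. Distinctness of starts ensures no other vector with $c_{j}\neq 0$ has support at $a_{0}$, so the $a_{0}$-coordinate of $\vv$ equals $c_{j_{0}}\vv^{j_{0}}_{a_{0}}\neq 0$. Since $\su(\vv)\subseteq[a,b]$, we conclude $a_{0}\geq a$, hence every $\vv^{j}$ with $c_{j}\neq 0$ starts at position $\geq a$. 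A symmetric argument applied to the index maximizing $\textnormal{end}(\vv^{j})$, using distinctness of ends, shows $\textnormal{end}(\vv^{j})\leq b$ for all such $j$. Therefore $[\vv^{j}]^{*}\subseteq[a,b]$ whenever $c_{j}\neq 0$, which completes the claim and hence the theorem.
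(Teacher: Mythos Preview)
Your proof is correct. The existence argument via two-stage Gaussian elimination is sound; the key observation that in the second stage you must subtract the vector with the \emph{later} start (so as to preserve the start of the vector being modified) is exactly the right idea, and your termination measures are fine. The uniqueness argument via the subcodes $C^{*}_{[a,b]}$ and inclusion--exclusion is also correct; the pivot argument showing that the minimum-start coordinate among contributing basis vectors must lie in $[a,b]$ is the standard trick.

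Note, however, that the paper does not supply its own proof of this theorem: it is quoted as a known result from \cite{KsSo,Mc} in the preliminaries and is stated without proof. So there is nothing in the paper to compare your argument against. That said, your approach is essentially the classical one from those references: Kschischang--Sorokine obtain existence by exactly this kind of row reduction to a minimal-span generator matrix, and the uniqueness of the span set is proved there (and in McEliece) by the same dimension-counting on the past/future subcodes, which is precisely your $\dim C^{*}_{[a,b]}$ formula. One minor cosmetic point: since the starts (and ends) are pairwise distinct, the spans $[\vv^{i}]^{*}$ are automatically pairwise distinct, so your $n(a,b)$ is always $0$ or $1$ and the word ``multiset'' is not needed.
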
  
Such bases of $C$ are  called  \textit{atomic}\index{basis!atomic}
since their elements are \textit{atomic codewords}\index{atomic codeword}, i.e. they cannot be written as sums of codewords with shorter conventional span.  
The  uniquely determined set $\{[\vv^{i}]^{*}\}_{i=1,\ldots,k}$ is called the  \textit{atomic span set}\index{atomic span set $\mathcal{S}^{*}(C)$} of $C$. We denote it by 
$$\mathcal{S}^{*}(C)$$
Atomic bases are closely tied up with minimal conventional trellises:

\begin{thm}[\cite{KsSo,KsV,V}]\label{minconv5} 
Let $\{\vv^{i}\}_{i=1,\ldots,k}$ be a basis of $C$. 
Then $\otimes_{i=1}^{k}\vv^{i}|[\vv^{i}]^{*}$ is the minimal conventional trellis $T^{*}(C)$ if and only if $\{\vv^{i}\}_{i=1,\ldots,k}$ is an atomic basis of $C$. 
\end{thm}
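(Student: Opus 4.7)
The plan is to combine the uniqueness of the minimal conventional trellis (Theorem~\ref{minconpap}) with the uniqueness of span distributions of elementary factorizations (Theorem~\ref{trespan}).

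For the ``if'' direction, assume $\{\vv^i\}$ is atomic with $[\vv^i]^{*}=(a_i,l_i)\in\mathcal{S}^{*}(C)$, and set $T:=\otimes_{i=1}^{k}\vv^i|[\vv^i]^{*}$. Using $C(T\otimes T')=C(T)+C(T')$ and the fact that $\{\vv^i\}$ spans $C$, we get $C(T)=C$. Since each $[\vv^i]^{*}\leq(0,n-1)$, every factor is a conventional elementary trellis, so $T$ itself is conventional. Moreover $\dim V_j(T)=|\{i:a_i<j\leq a_i+l_i\}|$, as precisely those elementary factors containing $j$ in $(a_i,a_i+l_i]$ contribute an $\FF$-coordinate at time $j$. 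The fact that atomic spans start and end at pairwise distinct positions is exactly what makes this count coincide with the classical past--future lower bound on $|V_j|$ valid for every conventional linear trellis representing $C$ (see~\cite{V}). Hence $T$ simultaneously minimizes all $|V_j|$, and by Theorem~\ref{minconpap} we conclude $T\sim T^{*}(C)$.

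For the ``only if'' direction, suppose $T:=\otimes_{i=1}^{k}\vv^i|[\vv^i]^{*}\sim T^{*}(C)$. Fix any atomic basis $\{\ww^i\}$ of $C$, which exists by the preceding theorem. The ``if'' direction just established gives $T':=\otimes_{i=1}^{k}\ww^i|[\ww^i]^{*}\sim T^{*}(C)\sim T$, and by Theorem~\ref{isolin} this may be upgraded to $T\simeq T'$. Theorem~\ref{trespan} then forces the multisets of elementary spans to coincide: $\{\{[\vv^i]^{*}\}\}_i=\{\{[\ww^i]^{*}\}\}_i=\mathcal{S}^{*}(C)$. Since by definition the spans in $\mathcal{S}^{*}(C)$ all start and end at pairwise distinct positions, so do those in $\{[\vv^i]^{*}\}_i$, and therefore $\{\vv^i\}$ is an atomic basis.

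The main obstacle is the state-complexity matching step in the ``if'' direction: one must verify that the count $|\{i:a_i<j\leq a_i+l_i\}|$ equals the classical past--future dimension bound for $|V_j|$. I would invoke this directly from~\cite{V} rather than reprove it, since it is the substance of the classical theory of minimal conventional trellises and is tangential to the algebraic framework developed in the present paper; once it is granted, the rest of the argument reduces to a clean application of Theorems~\ref{minconpap}, \ref{trespan} and~\ref{isolin}.
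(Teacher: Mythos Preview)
The paper does not supply its own proof of this theorem: it is stated in the preliminaries of Section~\ref{charactpap} with citations to \cite{KsSo,KsV,V} and no argument. So there is no in-paper proof to compare against, only the classical one from the cited references.

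Your argument is correct. The ``if'' direction is essentially the classical proof: you compute the state-complexity of the product and match it against the past--future bound from~\cite{V}. You are right to flag that matching step as the substantive external input; once granted, the rest is routine.

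Your ``only if'' direction, however, is genuinely different from the classical route and is a nice illustration of the paper's machinery. The classical argument (as in \cite{KsSo,V}) works directly with the state spaces: if some two spans $[\vv^i]^*$, $[\vv^j]^*$ share a starting (or ending) point, one can row-reduce to shorten one of them, producing a strictly smaller conventional trellis for $C$ and contradicting minimality. Your argument instead bootstraps from the ``if'' direction via the uniqueness of span distributions (Theorem~\ref{trespan}): since some atomic basis yields $T^*(C)$, and $T\simeq T^*(C)$, the multiset $\{\{[\vv^i]^*\}\}$ must equal $\mathcal{S}^*(C)$, whence the spans start and end at distinct positions. This trades an explicit row-reduction for an appeal to a structural invariant, which is cleaner once Theorem~\ref{trespan} is available (and there is no circularity, since Theorem~\ref{trespan} does not depend on the present statement). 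The classical approach has the advantage of being self-contained and constructive; yours has the advantage of showing that the result is an immediate corollary of the general span-distribution rigidity established in Section~\ref{factpap}.
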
  

This gives a method to construct $n$ (some of them possibly equal) minimal linear trellises for 
$C$: for all $i=0,\ldots,n-1$, construct the minimal conventional trellis for $\sigma^{i}(C)$ from an  
   atomic basis $\{\vv^{1},\ldots,\vv^{k}\}$  of $\sigma^{i}(C)$ with span set $\mathcal{S}^{*}(\sigma^{i}(C))$, and then shift it backwards by applying $\sigma^{-i}$. As a consequence,  the set 
   $$\mathcal{S}(C):=\cup_{i=0}^{n-1}\sigma^{-i}(\mathcal{S}^{*}(\sigma^{i}(C))$$
   contains precisely all the spans appearing in all the minimal trellises so constructed. Koetter/Vardy \cite{KV} proved the striking result that no other spans are needed to describe the graph structure of any other possible minimal linear trellis for $C$:

\begin{thm}[\cite{KV}]\label{KVmin} The following holds:
\begin{enumerate}
\item If $\otimes_{i=1}^{k}\vv^{i}|\sss^{i}$ is a minimal linear trellis for $C$ then all the spans $\sss^{i}$ are different and $\{\sss^{i}\}_{i=1,\ldots,k}\subseteq\mathcal{S}(C)$.

\item $\mathcal{S}(C)$ contains precisely $n$ spans, and they all start and end at different positions.
\item For all $i\geq0$ there are precisely $k$ conventional spans in $\sigma^{i}(\mathcal{S}(C))$, and these are the atomic spans of $\sigma^{i}(C)$.
\item If $\mathcal{S}(C)=\{(a_{1},l_{1}),\ldots,(a_{n},l_{n})\}$ then $\mathcal{S}(C^{\perp})=\{(a_{1}+l_{1},n-l_{1}),\ldots,(a_{n}+l_{n},n-l_{n})\}$.
\end{enumerate}
\end{thm}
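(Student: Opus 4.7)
The plan is to combine the Factorization Theorem (Corollary \ref{factthm}) with a cyclic-shift reduction to the conventional case and with the duality of minimal linear trellises. Throughout, I would fix a minimal linear trellis $T$ for $C$, which by Remark \ref{nolinminpap} and the well-known fact that minimal linear trellises are one-to-one admits, via Theorem \ref{trelfac}, an essentially canonical factorization $T \simeq \otimes_{i=1}^{k}\vv^i|\sss^i$ with $\{\vv^i\}$ a basis of $C$ and $\sss^i = (a_i, l_i)$.

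For part (1), I would first rule out repeated spans: if $\sss^i = \sss^j$ with $i \neq j$ then both spans are non-degenerate by Remark \ref{assumpt1pap}, and using Theorem \ref{trespan} one of the two factors can be absorbed into a strictly smaller-span factor, producing a smaller linear trellis for $C$ and contradicting minimality. For the inclusion $\sss^i \in \mathcal{S}(C)$ I would invoke cyclic shift-equivariance: $\sigma^{-a_i}(T)$ is a minimal linear trellis for $\sigma^{-a_i}(C)$ in which the shifted span is now conventional. Combining Theorem \ref{minconpap} (uniqueness of $T^{*}$) with a cut-open-at-time-$0$ argument shows that the conventional elementary factors of $\sigma^{-a_i}(T)$ build up $T^{*}(\sigma^{-a_i}(C))$ and hence correspond to atomic spans of $\sigma^{-a_i}(C)$; thus $(0, l_i) \in \mathcal{S}^{*}(\sigma^{-a_i}(C))$, so $\sss^i \in \mathcal{S}(C)$.

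For parts (2) and (3), a double-counting argument on the incidence multiset
\begin{equation*}
\mathcal{M} := \{(\sss, i) \in \mathcal{S}(C) \times \ZZ_n \mid \sigma^i(\sss) \in \mathcal{S}^{*}(\sigma^i(C))\}
\end{equation*}
would do the work. Theorem \ref{minconv5} gives $|\mathcal{M}| = nk$ by summing over $i$, while a span $(a, l) \in \mathcal{S}(C)$ with $l < n-1$ contributes $n - l$ conventional shifts (the number of $i$ for which $(a, l)$ is conventional in $\sigma^i(C)$), and atomicity is preserved across all these shifts because an atomic codeword for one shift remains atomic for every other shift in which its span stays conventional. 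This yields $\sum_{(a, l) \in \mathcal{S}(C)} (n-l) = nk$, which together with the distinctness of start/end points inside each $\mathcal{S}^{*}(\sigma^i(C))$ forces $|\mathcal{S}(C)| = n$ and the required global distinctness, giving (2); part (3) is then immediate from the construction of $\mathcal{S}(C)$.

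For part (4), I would use that the dual $T^\perp$ of a minimal linear trellis is minimal (stated in the preliminaries on self-dual trellises) and that dualizing an elementary factor $\vv|(a, l)$ produces an elementary trellis of span $(a + l, n - l)$, which is a direct computation on the local constraints $E_i(T^\perp) = E_i(T)^\perp$. Dualizing a minimal linear trellis for $C$ then produces a minimal linear trellis for $C^\perp$ whose spans are $\{(a_i + l_i, n - l_i)\}_{i=1}^{k}$, and applying the same construction at each shift together with the shift-equivariance $\mathcal{S}(\sigma^j(C)) = \sigma^j(\mathcal{S}(C))$ recovers all $n$ spans of $\mathcal{S}(C^\perp)$. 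The hard part will be the global distinctness in part (2): the counting identity gives the cardinality, but precluding two atomic spans coming from different shifts $\sigma^i(C), \sigma^j(C)$ (with $i \neq j$) from collapsing onto the same start or end when pulled back to $C$ requires a careful look at how atomic spans transform as one passes from $\sigma^i(C)$ to $\sigma^{i+1}(C)$, tracking which atomic spans wrap around at each step and which remain conventional.
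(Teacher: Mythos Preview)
This theorem is stated in the paper as a cited result from Koetter and Vardy \cite{KV}; the paper does not supply its own proof. It appears in the preliminaries (Subsection \ref{atomspanpap}) as background and is used throughout Section \ref{charactpap}, but the authors simply attribute it to \cite{KV} and move on. So there is no proof in the paper to compare your proposal against.

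As for your sketch itself: the overall architecture (shift to reduce to the conventional case, double-count atomic spans across shifts, dualize elementary factors) is the right shape and is close in spirit to how \cite{KV} actually proceeds. A few points deserve care. In part (1), the claim that a repeated span lets you ``absorb one factor into a strictly smaller-span factor'' via Theorem \ref{trespan} is not what that theorem gives you; you need instead to replace $\vv^i,\vv^j$ by $\vv^i, \vv^j - c\vv^i$ for a suitable scalar so that the second vector has a strictly shorter span, and then argue minimality is violated. In part (2), your counting identity $\sum_{(a,l)\in\mathcal{S}(C)}(n-l)=nk$ does not by itself give $|\mathcal{S}(C)|=n$; you also need the companion inequality from the dual side (or the state-space dimension formula) to close the count, and your own final paragraph correctly flags that the distinct-start/distinct-end property across different shifts is where the real work lies. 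In part (4), the statement that the dual of an elementary trellis $\vv|(a,l)$ has span $(a+l,n-l)$ is correct for biproper elementary trellises but needs the hypothesis that $\vv$ is nonzero at both endpoints of its span, which is guaranteed here since characteristic spans are minimal spans of their generators.
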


The set $\mathcal{S}(C)$ is thus called the \textit{characteristic span set}\index{characteristic span set $\mathcal{S}(C)$} of $C$. Note that all the spans in $\mathcal{S}(C)$ are nondegenerate, corresponding to the fact that any minimal linear trellis must be connected. A set of $n$ elementary trellises $\vv^{1}|\sss^{1},\ldots,\vv^{n}|\sss^{n}$, such that $\vv^{i}\in C$, $\sss^{i}$ is a minimal span of $\vv^{i}$ for all $i$, and $\{\sss^{i}\}_{i=1,\ldots,n}=\mathcal{S}(C)$, will be called a \textit{characteristic set}\index{characteristic set (or matrix) of $C$} of $C$ (or \textit{characteristic matrix} of $C$ if the trellises are listed in  array form). 
If $\{\sss^{i}\}_{i=1,\ldots,n}$ is a characteristic set of $C$, by the above theorems it follows that 
$$\{\sigma^{j}(\vv^{i})| \ \sigma^{j}(\sss^{i}) \textrm{ is conventional}, i=1,\ldots,n\}$$
 is an atomic basis of $\sigma^{j}(C)$ for all $j=0,\ldots,n-1$.

\begin{ex}
Assume that $C$ is cyclic.  It is clear that if $\sss$ is the span of a generating codeword of $C$ then $\mathcal{S}^{*}(C)=\{\sigma^{i}(\sss)\}_{i=0,\ldots,k-1}$ is the atomic span set of $C$. But then, as $C$ is cyclic, it also follows that 
$$\mathcal{S}(C)=\{\sigma^{i}(\sss)\}_{i=0,\ldots,n-1}$$
 See \cite{RB,SKSR} for more on   
  minimal linear trellises of cyclic codes.
\end{ex}

Any one-to-one trellis for $C$ which is isomorphic to a product of elementary trellises from a characteristic set of $C$ is called a KV-\textit{trellis}\index{trellis!KV-trellis} (notice that Nori \cite{N} calls any product of elementary trellises a KV-trellis, while we are following the terminology used in \cite{GW}). Theorem \ref{KVmin} tells us that any minimal linear trellis is a $\mathrm{KV}$-trellis. The converse is known to be false.
Also, although all minimal linear trellises for $C$ are structurally isomorphic to KV-trellises produced from a single fixed characteristic set of $C$ (see Corollary \ref{cornuovo}), in general it is not true  that they are also all isomorphic to the KV-trellises produced from a single fixed characteristic set, as wrongly stated in Thm 5.5 of \cite{KV}. The simple code of Example \ref{ex51}  yields an easy counterexample. This subtlety was also independently recognized before in \cite{GW}. In general one needs several characteristic sets of $C$ to describe all minimal linear trellises as  KV-trellises.  
Nevertheless, we will show in Subsection \ref{countmin} how  to determine and count all minimal linear trellises from a single characteristic set.


 \subsection{New insight into the nonmergeable property}\label{insmerg}
We give here a new characterization of the nonmergeable property for one-to-one linear trellises.
 Let us start with a couple of definitions. We say that a trellis $T$ of length $n$ is \textit{pathwise one-to-one} if different paths of length $n$ starting at  $V_{0}(T)$  never yield the same edge-label sequence.
  We then say that a trellis $T$ of length $n$ is \textit{fragment one-to-one}\index{trellis!fragment one-to-one}
  if all its cyclic shifts $\sigma^{i}(T)$, $i=0,\ldots,n-1$, are pathwise one-to-one, that is, if any two  distinct paths in $T$ of length $n$ starting at the same time index never yield the same edge-label sequence. 
  
When $T$ is biproper any two paths of length $n$  that start at a same time index $i$ and yield the same edge-label sequence cannot intersect. Hence a biproper conventional trellis $T$ must be fragment one-to-one. Vice versa, if a conventional trellis is fragment one-to-one then  it is easy  to see that it is biproper.  So, for conventional trellises the two concepts are equivalent. On the other hand, for conventional linear trellises the biproper property  is also equivalent to the nonmergeable property.  Thus  the following holds:

 \begin{obs}\label{obs51}
 A conventional linear trellis is nonmergeable if and only if it is fragment one-to-one.
 \end{obs}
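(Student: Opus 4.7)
The plan is to combine two equivalences for conventional (linear) trellises. The first is the classical result, recalled in Subsection \ref{fundmin}, that a conventional linear trellis is nonmergeable if and only if it is biproper. The second, which I will verify in both directions, is that, for any conventional trellis, biproper is equivalent to fragment one-to-one; the statement then follows immediately by transitivity.

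For biproper $\Rightarrow$ fragment one-to-one, I would argue as follows. Let $\bm{p}\neq\bm{q}$ be two length-$n$ paths starting at the same time index $i$ with the same edge-label sequence. Because $T$ is conventional, $V_{0}(T)=\{0\}$, and a length-$n$ path starting at time $i$ must cover every time index in $\ZZ_{n}$; in particular, both paths pass through the unique vertex $0\in V_{0}(T)$ at time $0$. From this common vertex, the ``start'' part of biproper forces the next edges (and hence vertices) of $\bm{p}$ and $\bm{q}$ to agree; applied iteratively it gives agreement at every time after $0$. Dually, the ``end'' part of biproper forces agreement going backwards from $0$. So $\bm{p}=\bm{q}$, contradicting distinctness.

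For fragment one-to-one $\Rightarrow$ biproper, I would argue the contrapositive. If $T$ is not biproper, then in a linear trellis this translates (by taking differences) into the existence, in some $E_{j}(T)$, of an edge of the form $(0,0,w)$ with $w\neq 0$ or of the form $(v,0,0)$ with $v\neq 0$. The key tool is that each $E_{k}(T)$ is a linear subspace of $V_{k}(T)\times\FF\times V_{k+1}(T)$, so that $(0,0,0)\in E_{k}(T)$ for every $k\in\ZZ_{n}$. In the first case, I would prepend $n-1$ copies of the zero edge $(0,0,0)$ before $(0,0,w)$, producing a length-$n$ path with all edge-labels zero, starting at $0\in V_{j+1}(T)$ and ending at the nonzero vertex $w\in V_{j+1}(T)$. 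This path and the all-zero path of $\sigma^{j+1}(T)$ are two distinct paths of length $n$ starting at vertices of $V_{0}(\sigma^{j+1}(T))=V_{j+1}(T)$ that share the same (zero) edge-label sequence, so $\sigma^{j+1}(T)$ is not pathwise one-to-one and $T$ is not fragment one-to-one. The second case is handled symmetrically, by appending $n-1$ zero edges after $(v,0,0)$ to obtain an analogous nonzero zero-label path in $\sigma^{j}(T)$.

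The delicate part is the converse direction: it crucially exploits linearity, through the fact that the zero edge is available at every time index, to turn a single non-biproper edge into a full length-$n$ zero-label path contradicting fragment one-to-one. The forward direction, by contrast, is essentially a rigidity argument that uses only the conventional hypothesis (so that all length-$n$ paths are forced to meet at $0\in V_{0}(T)$) together with biproperness.
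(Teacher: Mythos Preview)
Your proof is correct and follows essentially the same route as the paper: reduce to the classical equivalence nonmergeable $\Leftrightarrow$ biproper for conventional linear trellises, and then show biproper $\Leftrightarrow$ fragment one-to-one. One small remark: your claim that the converse ``crucially exploits linearity'' is a slight overstatement. As the paper indicates, fragment one-to-one $\Rightarrow$ biproper holds for \emph{any} conventional (trim) trellis: given a non-biproper pair $(v,\alpha,w_{1}),(v,\alpha,w_{2})\in E_{j}(T)$, simply prepend any length-$(n-1)$ path ending at $v$ (which exists by trimness) to each edge to obtain two distinct length-$n$ paths with identical label sequence. Your zero-path argument via linearity is perfectly valid here, just not strictly necessary.
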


 In the nonconventional case it is known that biproper linear trellises may be mergeable. Koetter/Vardy \cite{KV} give  the following example of such a situation: 
 \begin{center}
 \begin{tikzpicture}[yscale=.8,>=latex',shorten >=.9pt, shorten <=1.4pt, line width=.6pt]
  \tikzstyle{every node}=[draw,circle,fill=black,minimum size=2pt,
                        inner sep=0pt]                    
\foreach \x in {1}{
\node at (\x,0) {};};
\foreach \x in {-1,0,1,2}{
\node at (\x,1) {};};
\foreach \x in {-1,0,1,2}{
\node at (\x,2) {};};
\foreach \x in {1}{
\node at (\x,3) {};};

\tikzstyle{every node}=[]
\draw [->,black] (0,1) -- (1,2);
\draw [->,black] (0,2) -- (1,1);

\draw [->,black,dashed] (-1,1) -- (0,1);
\draw [->,black,dashed] (-1,2) -- (0,2);

\draw [->,black] (-1,1) -- (0,2);
\draw [->,black] (-1,2) -- (0,1);

\draw [->,black,dashed] (0,1) -- (1,0);
\draw [->,black,dashed] (0,2) -- (1,3);

\draw [->,black,dashed] (1,1) -- (2,1);
\draw [->,black] (1,2) -- (2,1);

\draw [->,black,dashed] (1,3) -- (2,2);
\draw [->,black] (1,0) -- (2,2);

\foreach \x in {1}{
\draw (\x,0) node [below] {\tiny ${\uu{11}}$};
\draw (\x,1) node [below] {\tiny ${\uu{01}}$};
\draw (\x,2) node [above] {\tiny ${\uu{10}}$};
\draw (\x,3) node [above] {\tiny ${\uu{00}}$};
};
\foreach \x in {-1,0,2}{
\draw (\x,1) node [below] {\tiny ${\uu{1}}$};
\draw (\x,2) node [above] {\tiny ${\uu{0}}$};
}
\draw (-2,1.5) node {$T=$};
 \end{tikzpicture}
  \end{center}
Note that this trellis is one-to-one but  is not fragment one-to-one: the paths $\uu{10}\ms1\uu{1}\ms0\uu{1}\ms1\uu{10}\ms$, $ \uu{11}\ms1\uu{0}\ms0\uu{0}\ms1\uu{01}\ms$ of length $3$ starting at $V_{2}(T)$ both yield the word $101$. This is precisely the reason why $T$ is  mergeable, as we are now going to show that the ``fragment one-to-one'' property is strongly related to the nonmergeable property. Our result can be seen as a general extension of Observation \ref{obs51}.
 
 \begin{thm}\label{thm51}
Let $T$ be a linear trellis. If $T$ is  nonmergeable and one-to-one then $T$ is connected and fragment one-to-one. The converse also holds if $T$ is almost reduced.
 \end{thm}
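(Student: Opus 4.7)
For the forward direction, connectedness is immediate from the remark already made in the excerpt that a nonmergeable trellis is connected. To establish fragment one-to-one I would argue by contraposition: suppose distinct length-$n$ paths $p_1 \neq p_2$ starting at the same time index $i$ satisfy $L(p_1) = L(p_2)$, and I will produce a mergeable pair. After replacing $T$ by $\sigma^{i}(T)$, assume $i=0$. The difference $d = p_1 - p_2$ is a length-$n$ zero-label path in $T$ from $d_0 \in V_0(T)$ to $d_n \in V_0(T)$. If $d_0 = d_n$, then $d$ would be a cycle of $T$ with the zero label sequence, forcing $d = 0$ by the one-to-one hypothesis and contradicting $p_1 \neq p_2$. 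Hence $d_0 \neq d_n$.

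The crucial step is then to show that merging $d_0$ and $d_n$ preserves the code, contradicting nonmergeability. Since the original cycles survive in the merged trellis, it suffices to verify that the label of every length-$n$ path between $d_0$ and $d_n$ in $T$ already lies in $C(T)$. For any path $q$ from $d_0$ to $d_n$, the difference $q - d$ is a cycle at $0 \in V_0(T)$, so $L(q) = L(q-d) \in C(T)$. For any path $q'$ from $d_n$ to $d_0$, the sum $q' + d$ is a closed path at $d_0 + d_n$, hence a cycle of $T$, so $L(q') = L(q' + d) \in C(T)$. Thus the merged trellis represents the same code, giving mergeability at $d_0 \neq d_n$, a contradiction.

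For the converse, assume $T$ is almost reduced, connected, and fragment one-to-one. The one-to-one property follows immediately: distinct cycles are in particular distinct length-$n$ paths starting at time $0$ and therefore carry different labels. For nonmergeability I would argue by contradiction: suppose $v \neq w \in V_i(T)$ can be merged, and shift so $i=0$. The plan is to exhibit a length-$n$ path $p$ from $v$ to $w$ in $T$, then apply mergeability to get $L(p) \in C(T) = L(\SSS(T))$, and finally use fragment one-to-one at $i=0$ to identify $p$ with the cycle $\la$ carrying its label, which forces $v = w$, a contradiction.

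The main obstacle is supplying the path $p$, since neither connectedness nor almost reducedness alone yields a length-$n$ path between two prescribed vertices of $V_0(T)$. To overcome this I would exploit linearity: the set
\[\Delta := \{(u_1,u_2) \in V_0(T) \times V_0(T) \mid \mathbb{P}(u_1,u_2) \neq \emptyset\}\]
is a linear subspace of $V_0(T) \times V_0(T)$, and almost reducedness places the diagonal in $\Delta$. By connectedness there is a directed path from $v$ to $w$ of length $kn$ for some $k \geq 1$, whose successive ``time-$0$'' vertices $v = u_0, u_1, \ldots, u_k = w$ give $(u_j, u_{j+1}) \in \Delta$; subtracting the diagonal element $(u_j, u_j) \in \Delta$ yields $(0, u_{j+1} - u_j) \in \Delta$, and summing over $j$ produces $(0, w - v) \in \Delta$ and hence $(v,w) = (v,v) + (0, w - v) \in \Delta$. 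This delivers the required length-$n$ path $p$ from $v$ to $w$, after which the contradiction outlined above closes the argument.
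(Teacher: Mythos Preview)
Your proof is correct and follows essentially the same route as the paper's. In the forward direction the paper starts directly with a length-$n$ zero-label path $\bm{p}\in\mathbb{P}(v,w)$ (your $d$) and argues exactly as you do that $v$ and $w$ are a mergeable pair; in the converse the paper invokes Corollary~\ref{cor45} (which rests on the telescoping argument of Theorem~\ref{thm44}) to produce the length-$n$ path from $v$ to $w$, and your $\Delta$-subspace computation is precisely an inline reproof of that corollary.
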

 \begin{proof}
 Assume $T$ is nonmergable and one-to-one. First, since  $T$ is nonmergeable it must be also connected. We prove now that $T$ is fragment one-to-one.   By using cyclic shifts it is sufficient to prove that $T$ is pathwise one-to-one. So assume there is a path $\bm{p}\in\mathbb{P}(v,w)$, $v,w\in V_{0}(T)$, with edge-label sequence $L(\bm{p})=0$. Then $L(\bm{p}')\in C(T)$ for all $\bm{p}'\in\mathbb{P}(v,w)$, since $\bm{p}'-\bm{p}$ is a cycle. Similarly,  $L(\bm{p}')\in C(T)$ for all $\bm{p}'\in\mathbb{P}(w,v)$, since $\bm{p}'+\bm{p}$ is a cycle. Thus we can merge $v$ and $w$ without affecting the represented code. So it must be $v=w$. But since $T$ is one-to-one, we conclude that $\bm{p}$ is the zero cycle. Hence $T$ is pathwise one-to-one.
 
 As for the converse, assume $T$ is almost reduced, connected and fragment one-to-one. Obviously then $T$ is one-to-one. Assume that $T$ is mergeable at $v\neq w\in V_{i}(T)$. By a cyclic shift we can assume $i=0$. 
 By Corollary \ref{cor45}  we know that there exists $\bm{p}\in\mathbb{P}(v,w)$. So $L(\bm{p})\in C(T)$. But since $T$ is fragment one-to-one $\bm{p}$ must then be a cycle, i.e. $v=w$, a contradiction. So $T$ is nonmergeable.
 \end{proof}

Since minimal linear trellises are nonmergeable (Corollary \ref{corpap1}) and one-to-one, we conclude that:

\begin{cor}\label{cor51}
Any minimal linear  trellis is fragment one-to-one. 
\end{cor}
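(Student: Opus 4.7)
The plan is to derive this as an immediate consequence of Theorem \ref{thm51} together with the facts already assembled about minimal linear trellises. Specifically, Theorem \ref{thm51} states that a nonmergeable, one-to-one linear trellis is connected and fragment one-to-one, so it suffices to verify both hypotheses for an arbitrary minimal linear trellis $T$ for $C$.

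First I would invoke Corollary \ref{corpap1}, which says that any minimal linear trellis is nonmergeable; this takes care of the ``nonmergeable'' hypothesis of Theorem \ref{thm51}. Next I would cite the well-known fact (recorded just above in the list in Subsection \ref{fundmin}) that a minimal linear trellis is one-to-one, which supplies the remaining hypothesis. Applying Theorem \ref{thm51} then yields that $T$ is fragment one-to-one (and incidentally also connected), which is exactly the statement of the corollary.

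There is no real obstacle here: the work has already been done in Theorem \ref{thm51} and Corollary \ref{corpap1}, and the corollary is just the assembly of those pieces. The only thing worth flagging is that Theorem \ref{thm51} in the form used here does not require the ``almost reduced'' hypothesis for the direction we need (nonmergeable + one-to-one $\Rightarrow$ fragment one-to-one), so no further reduction argument is needed; in any case, by Theorem \ref{minred} a minimal linear trellis is automatically reduced, hence almost reduced, so the stronger form of Theorem \ref{thm51} applies without issue.
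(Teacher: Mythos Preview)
Your proposal is correct and matches the paper's approach exactly: the paper simply notes that minimal linear trellises are nonmergeable (Corollary \ref{corpap1}) and one-to-one, and then invokes Theorem \ref{thm51}. Your additional remark about the ``almost reduced'' hypothesis being unnecessary for this direction is accurate and a nice clarification.
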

 
In \cite{GW} it was proven that any \textnormal{KV}-trellis is nonmergeable, thus we can extend the above corollary  (recall that any minimal linear trellises is a \textnormal{KV}-trellises). 
 
 \begin{cor}\label{cor52}
 Any \textnormal{KV}-trellis is fragment one-to-one.   
 \end{cor}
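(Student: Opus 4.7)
The plan is to deduce this corollary immediately from Theorem \ref{thm51} together with the result from \cite{GW} cited just above the statement. Recall that by the definition given at the end of Subsection \ref{atomspanpap}, a \textnormal{KV}-trellis is required to be one-to-one (it is a one-to-one trellis isomorphic to a product of elementary trellises from a characteristic set of $C$). Thus two of the three hypotheses of the ``only if'' direction of Theorem \ref{thm51} are at hand: one-to-oneness is built into the definition, and nonmergeability is precisely the content of the cited result from \cite{GW}.

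First I would state that if $T$ is a \textnormal{KV}-trellis, then $T$ is one-to-one by definition. Next I would invoke the result of \cite{GW} to conclude that $T$ is nonmergeable. Finally, I would apply the first (direct) implication of Theorem \ref{thm51}, which says that a nonmergeable one-to-one linear trellis is connected and fragment one-to-one; in particular $T$ is fragment one-to-one, which is exactly the claim.

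There is no real obstacle here, since both needed ingredients are already established. The only thing worth flagging is that we are using the direct direction of Theorem \ref{thm51}, which does \emph{not} require the ``almost reduced'' hypothesis (that hypothesis is only needed for the converse direction), so no additional verification about reducedness of \textnormal{KV}-trellises is necessary. In particular, since minimal linear trellises form a subclass of \textnormal{KV}-trellises, this corollary subsumes Corollary \ref{cor51} as a special case, consistent with the remark immediately preceding the statement.
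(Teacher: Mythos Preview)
Your proposal is correct and matches the paper's own argument exactly: invoke the \cite{GW} result that \textnormal{KV}-trellises are nonmergeable, combine with one-to-oneness (built into the definition), and apply the forward direction of Theorem \ref{thm51}. Your observation that the ``almost reduced'' hypothesis is irrelevant here (being needed only for the converse) is also accurate.
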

 
 However, Example IV.12 of \cite{GW} shows that a nonmergeable one-to-one linear trellis does not have to be a \textnormal{KV}-trellis. Thus not all fragment one-to-one linear trellises are \textnormal{KV}-trellises.
 
 Note that the proof given in \cite{GW}  of KV-trellises being nonmergeable requires  quite a long detour into BCJR-trellises (see the same reference for the definition). In fact it is proven there that  special types of BCJR-trellises are nonmergeable and that KV-trellises  are isomorphic to such special trellises. In sight of our new characterization of nonmergeable (one-to-one) trellises, it would be interesting to see if it is possible to give a direct proof that \textnormal{KV}-trellises  are fragment one-to-one. This would also make  more clear in what circumstances linear trellises fail to be KV-trellises. We leave this problem for future investigations.

 \subsection{Classifying nonmergeable trellises via multicycle codes}\label{secminimal}

In general a linear code has several  minimal linear trellis representations, and even more nonmergeable trellis representations. The problem of  how can we distinguish and thus classify these representations  has never been posed before.
Thanks to the characterization from the previous subsection we can now address this problem  for the class of  nonmergeable, one-to-one, reduced, linear trellises (and so in particular for minimal linear trellises).  Our result will tell us that it is possible to classify these trellises by means of edge-label sequences of closed paths of length longer than the length of $T$.  

To that purpose we first introduce some terminology. Let $T$ be a trellis of length $n$. A closed path in $T$ starting at $V_{0}(T)$ of length greater than $n$ will be called a \textit{multicycle}\index{multicycle} of $T$. Clearly a multicycle must have length $in$ for some $i>1$.  A multicycle of length $in$ will be also called an $i$-\textit{cycle}\index{cycle!$i$-cycle} of $T$. 
The code of edge-label 
sequences of $i$-cycles will be denote by $$C^{i}(T)$$  
and 
it will be  referred to also as the $i$-th \textit{code represented by}  $T$\index{code represented by $T$!$i$-th code represented by $T$}. So, a trellis does not represent merely a single code, but rather the sequence of codes $\{C^{i}(T)\}_{i=1}^{\infty}$.

We can now state and prove the announced result:

 \begin{thm}\label{thm53} Let $T$,  $T'$ be  fragment one-to-one, connected, and reduced linear trellises. Suppose that $$C^{i}(T)=C^{i}(T')$$ for some $i>1$. Then $T$ and $T'$ are isomorphic.
 \end{thm}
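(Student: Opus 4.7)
The plan is to invoke Corollary \ref{linisocor}: since $T,T'$ are fragment one-to-one (hence one-to-one) reduced linear trellises, it suffices to prove $C_{(a,l)}(T)=C_{(a,l)}(T')$ for every span $(a,l)$ of $\ZZ_n$. The overall strategy is to extract every such span subcode directly from $C^i(T)$, so that the hypothesis $C^i(T)=C^i(T')$ forces all span subcode equalities. First I would propagate the hypothesis to every shift: a closed path of length $in$ starting in $V_j(T)$ is a cyclic shift of an $i$-cycle of $T$, so $C^i(\sigma^j T)=\sigma^j(C^i(T))$ and similarly for $T'$; consequently $C^i(\sigma^j T)=C^i(\sigma^j T')$ for all $j\in\ZZ_n$.

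Next I would extract $C_0(T)$ (the labels of cycles starting at $0\in V_0(T)$) from $C^i(T)$ via the identity
\[C_0(T)=\{\al\in\FF^n:(\al,0^{(i-1)n})\in C^i(T)\}.\]
The nontrivial inclusion ``$\supseteq$'' relies crucially on fragment one-to-one: the only all-zero labelled length-$n$ path starting in $V_0(T)$ is the zero path from $0$ (two such paths would be distinct length-$n$ paths starting at time $0$ with the same label). Applying this to each of the $i-1$ trailing zero segments of a putative $i$-cycle with label $(\al,0^{(i-1)n})$ forces every intermediate passage through $V_0(T)$ to occur at $0$, so the leading segment labels a cycle starting at $0$. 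Performing this extraction on every $\sigma^j T$ yields the family $\{C_0(\sigma^j T)\}_{j\in\ZZ_n}$, which then coincides with $\{C_0(\sigma^j T')\}_{j\in\ZZ_n}$.

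The heart of the proof is the characterization
\begin{equation*}
C_{(0,l)}(T)=\bigl\{\al\in\FF^n:\su(\al)\subseteq[0,l]\text{ and }\sigma^j(\al)\in C_0(\sigma^j T)\ \forall\,j\in\{0\}\cup[l+1,n-1]\bigr\},
\end{equation*}
which writes $C_{(0,l)}(T)$ purely in terms of the previously extracted $\{C_0(\sigma^j T)\}_j$. The inclusion ``$\subseteq$'' is direct: if $\la\in\SSS_{(0,l)}(T)$ has label $\al$, then $\la$ meets $0$ at every $j\in\{0\}\cup[l+1,n-1]$, so the shifted cycle $\sigma^j(\la)$ is a cycle of $\sigma^j T$ starting at $0$ with label $\sigma^j(\al)$. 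For ``$\supseteq$'', the condition at $j=0$ yields $\al\in C_0(T)\subseteq C(T)$, so one-to-one gives a unique cycle $\la=L^{-1}(\al)$; each remaining condition translates to $\nu_j(\la)=0$, and combined with $\su(\al)\subseteq[0,l]$ this places $\la$ in $\SSS_{(0,l)}(T)$. Applying the analogous characterization to $\sigma^a T$ and using the shift identity $C_{(a,l)}(T)=\sigma^{-a}(C_{(0,l)}(\sigma^a T))$ then yields $C_{(a,l)}(T)=C_{(a,l)}(T')$ for every nondegenerate span; the degenerate span $\emptyset$ is trivial, and for $\ZZ_n$ the Factorization Theorem gives $C(T)=\sum C_{(a,l)}(T)$ summed over nondegenerate spans, so equality propagates. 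Corollary \ref{linisocor} then furnishes the desired (linear and hence ordinary) isomorphism.

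The main obstacle I anticipate is the ``$\supseteq$'' step of the characterization: one must ensure that the conditions at \emph{every} index $j\in[l+1,n-1]$ simultaneously force $\nu_j(L^{-1}(\al))=0$, rather than only at the boundary $j=l+1$. This is precisely where fragment one-to-one is indispensable, as it rules out the possibility of the cycle ``drifting'' along a distinct all-zero labelled path that would let it touch $0$ at some indices outside $(0,l]$ but not at others, and it explains why mere one-to-oneness would not suffice.
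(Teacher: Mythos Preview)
Your proof is correct and rests on the same underlying mechanism as the paper's --- pad $\al$ with $(i-1)n$ zeros, lift to an $i$-cycle, and use fragment one-to-one to force the zero-labelled portion to lie on zero vertices --- but the two arguments are organised differently. The paper proceeds by a single direct element chase: for a conventional span $(a,l)$ and $\al\in C_{(a,l)}(T)$ it observes $(\al,0^{(i-1)n})\in C^i(T)=C^i(T')$, lifts this to an $i$-cycle of $T'$, and notes that the zero-labelled segment from index $a+l+1$ wrapping around to index $a$ has length $in-l-1\geq n$, so fragment one-to-one of $T'$ forces all its vertices to vanish, placing $\al\in C_{(a,l)}(T')$. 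No auxiliary object is introduced; conventional spans suffice by shifting, and symmetry gives the reverse inclusion.

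Your route instead factors through the intermediate invariant $C_0(\sigma^j T)$ and gives two explicit formulas: one recovering $C_0(\sigma^j T)$ from $C^i(T)$, and one recovering $C_{(0,l)}(T)$ from the family $\{C_0(\sigma^j T)\}_j$. This makes the slogan ``$C^i(T)$ determines $T$'' literal and modular, at the cost of more bookkeeping. Your anticipated obstacle in the ``$\supseteq$'' direction is not an obstacle at all: each condition $\sigma^j(\al)\in C_0(\sigma^j T)$ forces $\nu_j(\la)=0$ \emph{independently}, precisely because fragment one-to-one makes $\sigma^j T$ itself one-to-one, so the cycle of $\sigma^j T$ with label $\sigma^j(\al)$ is unique and must equal $\sigma^j(\la)$. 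The paper's version is shorter and makes the role of $i>1$ (guaranteeing a zero tail of length $\geq n$) more transparent; yours would adapt more readily if one wanted to isolate exactly which slice of $C^i(T)$ encodes a given span subcode.
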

 \begin{proof} We use Corollary \ref{linisocor} of the Isomorphy Theorem \ref{thm15}. We need to show that $C_{(a,l)}(T)=C_{(a,l)}(T')$ for all nondegenerate spans $(a,l)$. By a cyclic shift, it is sufficient to show it for conventional spans. So, let $(a,l)$ be a conventional span, and let $\al\in C_{(a,l)}(T)$. Then clearly 
 $$\bm{\beta}:=\al0\ldots0\in C^{i}(T)$$
  where we are appending $(i-1)n$ zeros to the right of $\al$. Since $C^{i}(T)=C^{i}(T')$, there must exist $v_{j}\in V_{j}(T')$, $j=0,\ldots,in-1$, such that 
  $$\la:=v_{0}\beta_{0}v_{1}\beta_{1}\ldots v_{in-1}\beta_{in-1}$$
   is an $i$-cycle of $T'$. Now, since $\al\in C_{(a,l)}(T)$, the edge-label sequence of the path $$v_{a+l+1}\beta_{a+l+1}v_{a+l+2}\beta_{a+l+2}\ldots v_{in-1}\beta_{in-1}v_{0}\beta_{0}\ldots v_{a}$$ is the zero word. But since $i>1$, this path has length at least $n$, where $n$ is the length of $T'$, so all its vertices must be zero, because  $T'$ is fragment one-to-one. So we conclude that  
   $$v_{0}\beta_{0}v_{1}\beta_{1}\ldots v_{n-1}\beta_{n-1}\in\sal(T')$$
    and thus $\al\in C_{(a,l)}(T')$. So $C_{(a,l)}(T)\subseteq C_{(a,l)}(T')$. Symmetrically $C_{(a,l)}(T')\subseteq C_{(a,l)}(T)$, and  we are done.
 \end{proof}

\begin{cor}\label{cor53}
Let $T$, $T'$ be  nonmergeable, one-to-one, and reduced linear trellises (e.g. minimal linear trellises, or KV-trellises). If $C^{i}(T)=C^{i}(T')$ for some $i>1$ then $T$ and $T'$ must be isomorphic.
\end{cor}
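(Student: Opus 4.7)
The plan is to derive this as an immediate consequence of Theorem \ref{thm53} by verifying that its hypotheses are satisfied under the weaker assumptions stated in the corollary. The only gap between the two statements is that Theorem \ref{thm53} requires the trellises to be \emph{fragment one-to-one} and \emph{connected}, while here we only assume them to be \emph{nonmergeable} and \emph{one-to-one} (with the reduced property being common to both).

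First I would invoke Theorem \ref{thm51}, whose ``only if'' direction says precisely that any nonmergeable, one-to-one linear trellis is automatically connected and fragment one-to-one. Applied to both $T$ and $T'$, this upgrades our hypotheses to exactly those required by Theorem \ref{thm53}. Since the reduced assumption transfers verbatim, we may directly apply Theorem \ref{thm53} to conclude $T \sim T'$.

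For the parenthetical examples (minimal linear trellises and KV-trellises), I would briefly justify why these classes fall under the corollary. Minimal linear trellises are nonmergeable by Corollary \ref{corpap1}, reduced by Theorem \ref{minred}, and one-to-one by the known fact recalled in Subsection \ref{fundmin}. KV-trellises are one-to-one by definition, reduced (as products of nondegenerate elementary trellises), and nonmergeable by the result of \cite{GW} cited just above Corollary \ref{cor52}. Hence the corollary applies to both classes.

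There is no real obstacle here, since the substantive content is already carried by Theorems \ref{thm51} and \ref{thm53}; the corollary is essentially a repackaging that makes the applicability to the most important classes of trellises transparent. The only thing one has to be careful about is to cite Theorem \ref{thm51} in the correct direction (nonmergeable $+$ one-to-one $\Rightarrow$ fragment one-to-one $+$ connected), rather than its converse which requires the ``almost reduced'' hypothesis, which we have for free since reduced implies almost reduced.
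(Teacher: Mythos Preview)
Your proposal is correct and matches the paper's intended approach: the corollary is stated without proof immediately after Theorem \ref{thm53}, and the bridge from ``nonmergeable and one-to-one'' to ``connected and fragment one-to-one'' via Theorem \ref{thm51} is exactly what the surrounding text (including Corollaries \ref{cor51} and \ref{cor52}) sets up. Your justification of the parenthetical examples is also in line with how the paper handles them.
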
 
 
This is a striking result since it tells us that the code of $2$-cycles $C^{2}(T)$ (or $C^{i}(T)$ for any $i>1$) completely determines a trellis $T$ in the mentioned class (which is a large enough class for potential applications, especially since it contains the whole class of KV-trellises). In particular, while in general a linear code $C$ has several different minimal linear trellis representations  $T_{1},\ldots,T_{r}$ which thus satisfy $C=C(T_{1})=\ldots=C(T_{r})$, they must also satisfy $C^{2}(T_{i})\neq C^{2}(T_{j})$ for $i\neq j$. See Example 9 for instance.

 Note also that this result extends what is known for conventional trellises. In fact, it is known that a nonmergeable (equivalently, minimal) conventional trellis for a linear code $C$ is completely determined by $C$. But if $T$ and $T'$ are  conventional trellises then $C(T)=C(T')$ if and only if $C^{i}(T)=C^{i}(T')$, since $C^{i}(T)=C(T)\times\ldots\times C(T)$. Thus the above theorem includes as a special case the uniqueness  of minimal conventional trellises.  In addition, it proves that a minimal linear trellis $T$ is conventional (and so it is the unique minimal conventional trellis) if and only if $$C^{2}(T)=C(T)\times C(T)$$

Moreover this result gives a new method to determine whether or not two minimal linear trellises are equal (i.e. isomorphic). Indeed this  method is alternative to the methods given in Sections \ref{algframpap} and \ref{factpap}  for general linear trellises, which require  the knowledge respectively of all span subcodes and of an elementary trellis factorization, while this information is not required in the above result. In sight of potential applications it would be interesting to determine which method is quicker, i.e. which one yields the algorithm with lowest complexity. We leave this question for future investigations.

We can also apply the above result to deduce an interesting characterization of self-duality for KV-trellises. In order to do so, we need first a definition and a lemma.  Given a trellis $T$ of length $n$ and $i\geq1$ we define the \textit{$i$-cover of $T$} as the trellis $T^{i}$ of length $in$ given by 
\begin{align*}
V_{j}(T^{i})&:=V_{j}(T)\\
E_{j}(T^{i})&:=E_{j}(T)
\end{align*} 
for all $j\in\ZZ_{in}$. 
By definition we have $V_{j}(T^{i})=V_{j+n}(T^{i})$ and $E_{j}(T^{i})=E_{j+n}(T^{i})$ for all $j\in\ZZ_{in}$. Thus the trellis diagram of $T^{i}$ is obtained by continuing $i$ times the trellis diagram of $T$. In graph-theoretical language $T^{i}$ is  the unique (directed) graph $i$-cover of $T$ whose girth is $i$ times the girth of $T$, which justifies the terminology.  Clearly  
$$C^{i}(T)=C(T^{i})$$
 i.e. multicycles of $T$ are cycles in {trellis covers}\index{trellis!cover} of $T$.

Now, we have the following:

\begin{lem} Let $i\geq1$. If $T$ is linear then so is $T^{i}$, and  $$(T^{i})^{\perp}=(T^{\perp})^{i}$$ 
As a consequence $$(C^{i}(T))^{\perp}=C^{i}(T^{\perp})$$
 In particular if $T$ is self-dual then $C^{i}(T)$ is self-dual for all $i\geq1$.
\end{lem}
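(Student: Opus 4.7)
The plan is to unwind definitions step by step; the whole statement is a routine consequence of how the $i$-cover operation $T\mapsto T^{i}$ interacts with dualization, combined with the tautology $C^{i}(T)=C(T^{i})$ noted just before the lemma.

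First I would check that $T^{i}$ inherits a linear structure from $T$: since $V_{j}(T^{i})=V_{j\bmod n}(T)$ and $E_{j}(T^{i})=E_{j\bmod n}(T)$ for all $j\in\ZZ_{in}$, transporting the $\FF$-vector space structure from each $V_{j\bmod n}(T)$ to $V_{j}(T^{i})$ automatically makes each $E_{j}(T^{i})$ a subspace of the ambient product, and trimness of $T^{i}$ transfers from that of $T$ for the same reason.

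Next, to establish $(T^{i})^{\perp}=(T^{\perp})^{i}$, I would compare the two trellises at each time index. For each $j\in\ZZ_{in}$,
$$V_{j}((T^{i})^{\perp})=V_{j}(T^{i})=V_{j\bmod n}(T)=V_{j\bmod n}(T^{\perp})=V_{j}((T^{\perp})^{i}),$$
and applying the formula $E_{i}(T^{\perp})=E_{i}(T)^{\perp}$ recalled in the preliminaries,
$$E_{j}((T^{i})^{\perp})=E_{j}(T^{i})^{\perp}=E_{j\bmod n}(T)^{\perp}=E_{j\bmod n}(T^{\perp})=E_{j}((T^{\perp})^{i}).$$

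The code-level consequence then follows from the chain
$$(C^{i}(T))^{\perp}=C(T^{i})^{\perp}=C((T^{i})^{\perp})=C((T^{\perp})^{i})=C^{i}(T^{\perp}),$$
using $C^{i}(-)=C((-)^{i})$ together with the defining property $C(T^{\perp})=C(T)^{\perp}$. For the self-duality corollary I would observe that the $i$-cover construction clearly respects trellis isomorphism, since it just replicates vertex and edge data cyclically; hence $T\sim T^{\perp}$ implies $T^{i}\sim(T^{\perp})^{i}=(T^{i})^{\perp}$, whence $C^{i}(T)=C(T^{i})=C((T^{i})^{\perp})=(C^{i}(T))^{\perp}$. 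There is no real obstacle here: the only step requiring some care is stating the formal identity $(T^{i})^{\perp}=(T^{\perp})^{i}$ precisely enough that the subsequent chain of code-level equalities is unambiguous.
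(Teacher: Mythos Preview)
Your proposal is correct and follows essentially the same approach as the paper: both argue that dualization is a local operation on vertex/edge data and hence commutes with the $i$-cover, then chain the equalities $C^{i}(T)=C(T^{i})$ and $C(T^{\perp})=C(T)^{\perp}$ to obtain the code-level identity. The only minor point is that the explicit formula $E_{j}(T^{\perp})=E_{j}(T)^{\perp}$ you invoke is stated in the preliminaries only for $\FF=\FF_{2}$; the general argument rests on the locality of Forney's dual construction, which you are implicitly using anyway, so this is a cosmetic rather than substantive issue.
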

\begin{proof}
By definition of $T^{i}$ it is clear that if $T$ is linear so is $T^{i}$. On the other hand $T^{\perp}$ is defined locally (i.e. $V_{i}(T^{\perp})$, $V_{i+1}(T^{\perp})$, and $E_{i}(T^{\perp})$ depend only on $V_{i}(T^{})$, $V_{i+1}(T^{})$, and $E_{i}(T^{})$), and so, since $T^{i}$ and $T$  are locally equal, it follows that $(T^{i})^{\perp}=(T^{\perp})^{i}$.
For example, in the binare case $\FF\equiv\FF_{2}$ 
we have that 
\begin{align*}
V_{j}((T^{\perp})^{i})&=V_{j}(T^{\perp})=V_{j}(T)=V_{j}(T^{i})=V_{j}((T^{i})^{\perp})\\
E_{j}((T^{\perp})^{i})&=E_{j}(T^{\perp})=E_{j}(T)^{\perp}=E_{j}(T^{i})^{\perp}=E_{j}((T^{i})^{\perp})\end{align*}
 for all $j\in\ZZ_{in}$. Finally, we have $$(C^{i}(T))^{\perp}=(C(T^{i}))^{\perp}=C((T^{i})^{\perp})=C((T^{\perp})^{i})=C^{i}(T^{\perp})$$
\end{proof}

\begin{cor}\label{corkvdual}
Let $T$ and $T'$ be KV-trellises. Then $T^{\perp}\sim T'$ if and only if $$(C^{i}(T))^{\perp}=C^{i}(T')$$  for some $i>1$. In particular,  if $C^{i}(T)$ is self-dual for some $i>1$ then $T$ is self-dual (and so $C^{i}(T)$ is self-dual for all $i\geq1$).
\end{cor}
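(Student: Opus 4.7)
The plan is to apply the preceding lemma to convert the code equality $(C^{i}(T))^{\perp}=C^{i}(T')$ into the trellis-level equality $C^{i}(T^{\perp})=C^{i}(T')$, and then invoke Theorem \ref{thm53} to upgrade this to a trellis isomorphism $T^{\perp}\sim T'$. All trellises in sight share the common length $n$ of $T$ and $T'$, so no length mismatch arises.

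The ``only if'' direction is immediate: if $T^{\perp}\sim T'$, then isomorphic trellises have equal $i$-th represented codes for every $i\geq 1$, so $C^{i}(T^{\perp})=C^{i}(T')$, and the preceding lemma rewrites the left-hand side as $(C^{i}(T))^{\perp}$.

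For the ``if'' direction, fix $i>1$ with $(C^{i}(T))^{\perp}=C^{i}(T')$. The lemma yields $C^{i}(T^{\perp})=C^{i}(T')$. To apply Theorem \ref{thm53} I must check that both $T^{\perp}$ and $T'$ are fragment one-to-one, connected, and reduced linear trellises. For $T'$: as a KV-trellis it is one-to-one by definition and nonmergeable by the cited result from \cite{GW}, so it is fragment one-to-one and connected by Theorem \ref{thm51} (this is also Corollary \ref{cor52}), and it is reduced linear because it is a product of elementary trellises and these properties are preserved by $\otimes$. The hard step is verifying the same properties for $T^{\perp}$: this amounts to knowing that the dual of a KV-trellis is again a KV-trellis. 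This is the natural dual counterpart of Theorem \ref{KVmin}(4), which shows that $\mathcal{S}(C^{\perp})$ is the image of $\mathcal{S}(C)$ under the involution $(a,l)\mapsto(a+l,n-l)$ that also governs local dualization of edge sets; once this closure is granted, $T^{\perp}$ is a KV-trellis for $C^{\perp}$ and hence inherits every hypothesis of Theorem \ref{thm53}, yielding $T^{\perp}\sim T'$.

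The ``in particular'' clause follows by specialization to $T':=T$. If $C^{i}(T)$ is self-dual for some $i>1$, then $(C^{i}(T))^{\perp}=C^{i}(T)$, so the first part gives $T^{\perp}\sim T$, i.e.\ $T$ is self-dual. That $C^{j}(T)$ is then self-dual for every $j\geq 1$ is immediate from the preceding lemma: $T^{\perp}\sim T$ gives $C^{j}(T^{\perp})=C^{j}(T)$, and the lemma identifies $C^{j}(T^{\perp})$ with $(C^{j}(T))^{\perp}$. The main obstacle throughout is establishing the closure of the KV class under trellis duality, so that Theorem \ref{thm53} is applicable to $T^{\perp}$; the rest is routine bookkeeping via the lemma.
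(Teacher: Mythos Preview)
Your proposal is correct and follows essentially the same route as the paper: convert $(C^{i}(T))^{\perp}=C^{i}(T')$ into $C^{i}(T^{\perp})=C^{i}(T')$ via the preceding lemma, then apply Corollary \ref{cor53}/Theorem \ref{thm53} once one knows $T^{\perp}$ is again a KV-trellis. The paper resolves what you correctly flag as the main obstacle by citing Theorem IV.3 of \cite{GW2} for the closure of the KV class under trellis duality, rather than deriving it from Theorem \ref{KVmin}(4) alone.
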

\begin{proof}
By Theorem IV.3 of \cite{GW2} we know that $T^{\perp}$ is also a KV-trellis, so the statement follows from Corollary \ref{cor53} and the above lemma.
\end{proof}
 

\subsection{Determining and counting minimal linear trellises}\label{countmin}
Let $C$ be an $[n,k]$ linear code (with full support). We want to determine and consequently count all the minimal linear trellises for $C$ with the same underlying graph structure. As we pointed out before taking products of elementary trellises from a single characteristic set of $C$ is not sufficient for that task. We will make use of what we have proved in Section \ref{factpap}. 

We start with proving the following important theorem. 

\begin{thm}\label{thm54}
Let $T=\al^{1}|\sss_{1}\otimes\ldots\otimes \al^{k}|\sss_{k}$ be a minimal linear trellis for $C$.
Let $\bm{\beta}^{1},\ldots,\bm{\beta}^{k}\in C$ such that $\sss_{i}$ is a minimal span of $\bm{\beta}^{i}$ for all $i$. Then $\bm{\beta}^{1},\ldots,\bm{\beta}^{k}$ is a basis of $C$. As a consequence, $T'=\bm{\beta}^{1}|\sss_{1}\otimes\ldots\otimes \bm{\beta}^{k}|\sss_{k}$ is also a minimal linear trellis for $C$.
\end{thm}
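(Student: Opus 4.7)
The plan is to identify $T$ with the candidate product $T' := \bigotimes_{i=1}^{k} \bm{\beta}^i|\sss_i$ via the factorization criterion of Theorem \ref{trelfac}; from the identification, both the basis claim and its consequence will follow at once. First, since $T$ is minimal linear it is one-to-one (Subsection \ref{fundmin}), so $L\colon\SSS(T)\to C$ is a linear isomorphism. Combining this with the product basis $\{\la^i\}$ of $\SSS(T)$ produced by Observation \ref{obs11}, for which $L(\la^i)=\al^i$ and $[\la^i]=\sss_i$, the family $\{\al^i\}_{i=1,\ldots,k}$ is already a basis of $C$, so in particular $k=\dim C$. By Theorem \ref{KVmin} parts (1)--(2), the spans $\sss_1,\ldots,\sss_k$ are distinct elements of $\mathcal{S}(C)$ with pairwise distinct starting positions and pairwise distinct ending positions.

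Feeding the original factorization $T\simeq\bigotimes \al^i|\sss_i$ into Theorem \ref{trelfac} gives, for each~$i$,
\[
C_{\sss_i}(T)=\lb \al^i\rb+\sum_{\sss_j<\sss_i}C_{\sss_j}(T).
\]
Consequently, applying Theorem \ref{trelfac} in the reverse direction to the candidate $T'=\bigotimes \bm{\beta}^i|\sss_i$ will yield $T\simeq T'$ provided we establish, for each $i$, the two conditions (a) $\bm{\beta}^i\in C_{\sss_i}(T)$, and (b) $\bm{\beta}^i\notin\sum_{\sss_j<\sss_i}C_{\sss_j}(T)$.

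Condition (b) will be disposed of by a support argument exploiting the distinctness of starting and ending positions in $\mathcal{S}(C)$. Whenever $\sss_j<\sss_i$, the distinctness of endpoints inside $\{\sss_1,\ldots,\sss_k\}$ forces $[\sss_j]$ to lie strictly inside $[\sss_i]$, avoiding both $a_i$ and $a_i+l_i$; hence every vector in $\sum_{\sss_j<\sss_i}C_{\sss_j}(T)\subseteq\lb\al^j:\sss_j<\sss_i\rb$ has support contained in the interior of $[\sss_i]$ and therefore admits a span strictly smaller than $\sss_i$. Were $\bm{\beta}^i$ to lie in this sum, that would contradict the assumed minimality of $\sss_i$ as a span of $\bm{\beta}^i$ (with a minor adaptation for the edge case $l_i=n-1$, where one works with the $(\,\cdot\,,\,\cdot\,]$-version of the partial order).

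The hard part will be condition (a): we need to show that the unique $\la\in\SSS(T)$ with $L(\la)=\bm{\beta}^i$ actually has $\sss_i$ as a span of $\la$ itself, i.e., that the vertex sequence of $\la$ vanishes outside $(a_i,a_i+l_i]$ and not merely its edge-label sequence. Here the plan is to invoke that $T$, being minimal linear, is fragment one-to-one (Corollary \ref{cor51}) and reduced. Outside $[\sss_i]$ the edges of $\la$ carry label zero, and by pairing the corresponding fragment of $\la$ with the zero cycle (after a suitable cyclic shift bringing $\sss_i$ to conventional form) and using the uniqueness of length-$n$ paths with prescribed starting vertex and label sequence guaranteed by fragment one-to-one, one forces the intermediate vertices of $\la$ outside $(a_i,a_i+l_i]$ to vanish, so that $\la\in\SSS_{\sss_i}(T)$ and hence $\bm{\beta}^i\in C_{\sss_i}(T)$. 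Once (a) and (b) are in place, Theorem \ref{trelfac} yields $T\simeq\bigotimes \bm{\beta}^i|\sss_i=T'$; since $T$ is one-to-one, so is $T'$, and its product basis from Observation \ref{obs11} is transported by the label map to $\{\bm{\beta}^i\}$, which is therefore a basis of $C$. Finally, the isomorphism $T\simeq T'$ transfers minimality to $T'$, proving the consequence.
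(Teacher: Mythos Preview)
Your plan aims at the wrong target: you try to prove $T\simeq T'$, but this is strictly stronger than the statement and is \emph{false} in general. Example~\ref{ex51} already contains a counterexample. There $C=\langle 01010,11111\rangle$, and the two trellises
\[
T=01010|(3,3)\otimes 10101|(0,4)\quad\text{and}\quad T'=01010|(3,3)\otimes 11111|(0,4)
\]
are both minimal linear trellises for $C$ with the same span set $S=\{(3,3),(0,4)\}$, and the paper explicitly shows $T\not\sim T'$. Taking $\bm{\beta}^1=\al^1=01010$ and $\bm{\beta}^2=11111$ (for which $(0,4)$ is indeed a minimal span), this is exactly an instance of the hypotheses of the theorem, yet $T\not\simeq T'$. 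The theorem only asserts that the $\bm{\beta}^i$ form a basis and that $T'$ is minimal; it does \emph{not} claim $T\simeq T'$.

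The specific step that fails is your condition~(a): it is not true that $\bm{\beta}^i\in C_{\sss_i}(T)$ in general. In the example above, $C_{(0,4)}(T)=\langle 10101\rangle$ (since $(3,3)\not\leq(0,4)$, the only product-basis element with span $\leq(0,4)$ is the one labeled $10101$), and $11111\notin\langle 10101\rangle$. Your fragment one-to-one argument cannot rescue this: the zero-labeled portion of the cycle $\la$ outside $[\sss_i]$ has length $n-l_i-1<n$, whereas fragment one-to-one only constrains paths of length $\geq n$ (compare the proof of Theorem~\ref{thm53}, where the crucial inequality ``this path has length at least $n$'' comes from working with $i$-cycles for $i>1$). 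For a length-$1$ cycle there is simply no such constraint, and indeed the cycle representing $11111$ in $T$ does not pass through $0\in V_0(T)$.

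The paper's proof takes a completely different route: it never compares $T$ with $T'$. Instead it argues by contradiction on the basis property alone, changing one generator at a time. If $\bm{\beta}^1,\al^2,\ldots,\al^k$ were dependent, then $\bm{\beta}^1\in\langle\al^2,\ldots,\al^k\rangle$; since $\al^1$ and $\bm{\beta}^1$ share the minimal span $\sss_1$, a suitable combination $\vv=\al^1-y\bm{\beta}^1$ has a strictly smaller span $\sss\lneq\sss_1$, and one checks $\vv,\al^2,\ldots,\al^k$ is still a basis. Then $\vv|\sss\otimes\al^2|\sss_2\otimes\ldots\otimes\al^k|\sss_k$ is a strictly smaller linear trellis for $C$, contradicting minimality of $T$. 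The ``consequence'' that $T'$ is minimal then follows because $T'$ is structurally isomorphic to $T$ and represents $C$.
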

\begin{proof}
By changing generators one at a time it is clearly sufficient to prove the case where we change only one generator, say $\al^{1}\neq \bm{\beta}^{1}$ and $\al^{i}=\bm{\beta}^{i}$ for $i=2,\ldots,k$. Now, assume that 
$\bm{\beta}^{1},\al^{2},\ldots,\al^{k}$ are linearly dependent. Then $$\bm{\beta}^{1}=x_{2}\al^{2}+\ldots+x_{k}\al^{k}$$ for $x_{i}\in\FF$.
 Since $\sss_{1}$ is a minimal span for both $\al^{1}$ and $\bm{\beta}^{1}$, there exists $y\in\FF$ such that $\vv:=\al^{1}-y \bm{\beta}^{1}$ has a span $\sss\lneq\sss_{1}$. It follows that $\vv,\al^{2},\ldots,\al^{k}$ is a basis: if not then $\vv\in\FF \al^{2}+\ldots\FF \al^{k}$, and so also 
$$\al^{1}\in\FF \bm{\beta}^{1}+\FF \al^{2}+\ldots\FF \al^{k}\subseteq\FF \al^{2}+\ldots\FF \al^{k}$$ which is impossible by our assumption. But then we conclude that the trellis $T':=\vv|\sss\otimes \al^{2}|\sss_{2}\ldots\otimes \al^{k}|\sss_{k}$ represents $C$ and is smaller than $T$, which is a contradiction.
\end{proof}

Note that the above theorem was given as  an open problem in \cite{GW}. We have been later informed by Gluesing-Luerssen that Elizabeth  Weaver has independently proved it in the related context of counting characteristic matrices. 

 Combined with Theorem \ref{KVmin} the above theorem yields the following crucial corollary:
 
 \begin{cor}\label{cornuovo} Let $T$ be a minimal linear trellis for an $[n,k]$ linear code and let $\chi$ be a characteristic set of $C$. Then $T$ is structurally isomorphic to $\otimes_{i=1}^{k}\alpha^{i}|\sss_{i}$ for some 
 $\alpha^{1}|\sss_{1},\ldots,\alpha^{k}|\sss_{k}\in\chi$.
 \end{cor}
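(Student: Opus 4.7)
The plan is to leverage three earlier results: the Factorization Theorem (Corollary \ref{factthm}) to write $T$ itself as an elementary trellis product, Theorem \ref{KVmin} to locate the relevant spans inside $\mathcal{S}(C)$, and then Theorem \ref{thm54} to swap the labels of the factors for those coming from $\chi$ without destroying minimality. The final step will use Theorem \ref{unfact1} to pass from ``both are minimal with the same span multiset'' to ``structurally isomorphic''.

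First I would write $T \simeq \otimes_{i=1}^{k}\al^{i}|\sss_{i}$ using the Factorization Theorem, noting that there are exactly $k = \dim C$ factors since $T$ is one-to-one and a minimal linear trellis (so $\dim\SSS(T)=\dim C(T)=k$). By part (1) of Theorem \ref{KVmin}, the spans $\sss_{1},\ldots,\sss_{k}$ are pairwise distinct and all lie in the characteristic span set $\mathcal{S}(C)$.

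Next, since the characteristic set $\chi = \{\vv^{1}|\sss'_{1},\ldots,\vv^{n}|\sss'_{n}\}$ satisfies $\{\sss'_{j}\}_{j=1,\ldots,n} = \mathcal{S}(C)$ (with each span of $\mathcal{S}(C)$ occurring exactly once by part (2) of Theorem \ref{KVmin}), for every $i \in \{1,\ldots,k\}$ there exists a unique index $j_{i}$ with $\sss'_{j_{i}} = \sss_{i}$. Set $\bm{\beta}^{i} := \vv^{j_{i}}$; by definition of a characteristic set, $\sss_{i}$ is a minimal span of $\bm{\beta}^{i}$. Theorem \ref{thm54} applied to $T \simeq \otimes_{i=1}^{k}\al^{i}|\sss_{i}$ with these $\bm{\beta}^{i}$'s then immediately yields that
$$T' := \otimes_{i=1}^{k}\bm{\beta}^{i}|\sss_{i} \;=\; \otimes_{i=1}^{k} \vv^{j_{i}}|\sss_{i}$$
is again a minimal linear trellis for $C$, and by construction each $\bm{\beta}^{i}|\sss_{i}$ is an element of $\chi$.

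Finally, $T$ and $T'$ share the span distribution $\{\{\sss_{1},\ldots,\sss_{k}\}\}$ (these are distinct spans by Theorem \ref{KVmin}), so Theorem \ref{unfact1} gives that $T$ and $T'$ are structurally isomorphic, which is precisely the claim. I do not anticipate a real obstacle here: the one point that deserves care is verifying that Theorem \ref{thm54} applies with the replacement generators drawn from $\chi$, which amounts to the bookkeeping that each $\sss_{i}$ really is a minimal span of the corresponding $\vv^{j_{i}}$ --- but this is built into the definition of ``characteristic set''. The whole argument is essentially a translation exercise, with Theorem \ref{thm54} doing the substantive work and Theorem \ref{unfact1} providing the final identification at the level of graph structure.
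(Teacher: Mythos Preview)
Your proposal is correct and matches the paper's approach, which simply states that the corollary follows by combining Theorem \ref{KVmin} with Theorem \ref{thm54}. One small observation: for the bare statement of the corollary (structural isomorphy only), the invocation of Theorem \ref{thm54} is not strictly needed, since once the spans $\sss_{1},\ldots,\sss_{k}$ lie in $\mathcal{S}(C)$ you can pick the matching factors from $\chi$ and apply Theorem \ref{unfact1} directly; but your inclusion of Theorem \ref{thm54} mirrors the paper and additionally yields that the product from $\chi$ is itself minimal, which is exactly how the corollary is used immediately afterward.
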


Now, let 
$$\chi=\{\al^{\sss}|\sss\}_{\sss\in\mathcal{S}(C)}$$
 be a fixed characteristic set of $C$, where $\mathcal{S}(C)$ is the characteristic span set of $C$. Let $S\subseteq \mathcal{S}(C)$ be a subset of $k$ spans for which there exists a minimal linear trellis for $C$ whose span set is precisely $S$. 
 By the above theorem $
\otimes_{\sss\in S}\al^{\sss}|\sss$ is  also a minimal linear trellis for $C$. Then we have the following main result:

\begin{thm}\label{thm55}
Let $T$ be a linear trellis. Then $T$ is a minimal linear trellis for $C$ which is structurally isomorphic to the minimal linear trellis $\otimes_{\sss\in S}\al^{\sss}|\sss$ for $C$ if and only if   
$$T\sim\otimes_{\sss\in S}(\al^{\sss}+\ww^{\sss})|\sss$$ for some $\ww^{\sss}\in \lb\al^{\sss'}|\sss'<\sss,\sss'\in\mathcal{S}(C)\setminus S\rb$, $\sss\in{S}$.  
Moreover, if $$\otimes_{\sss\in S}(\al^{\sss}+\ww^{\sss})|\sss\sim\otimes_{\sss\in S}(\al^{\sss}+\ww'^{\sss})|\sss$$ for some  $\ww^{\sss},\ww'^{\sss}\in \lb\al^{\sss'}|\sss'<\sss,\sss'\in\mathcal{S}(C)\setminus S\rb$, $\sss\in{S}$, then $\ww^{\sss}=\ww'^{\sss}$ for all $\sss\in S$.
\end{thm}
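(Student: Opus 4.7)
My plan is to split the proof into the ``if'' direction, the ``only if'' direction, and the uniqueness assertion, treating them in order.

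For the \textbf{if} direction, I will set $T:=\otimes_{\sss\in S}(\al^{\sss}+\ww^{\sss})|\sss$ and argue that $\sss$ remains a minimal span of $\al^{\sss}+\ww^{\sss}$ for each $\sss\in S$. The reason is that by Theorem \ref{KVmin}(2) all spans in $\mathcal{S}(C)$ have pairwise distinct starts and pairwise distinct ends, so whenever $\sss'<\sss$ in $\mathcal{S}(C)$ (with $\sss=(a,l)$ and $\sss'=(a',l')$) one necessarily has $a'>a$ and $a'+l'<a+l$, giving $\al^{\sss'}_a=\al^{\sss'}_{a+l}=0$. Hence $\ww^{\sss}_a=\ww^{\sss}_{a+l}=0$ and $(\al^{\sss}+\ww^{\sss})_a\neq 0\neq(\al^{\sss}+\ww^{\sss})_{a+l}$. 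Theorem \ref{thm54} will then yield that $\{\al^{\sss}+\ww^{\sss}\}_{\sss\in S}$ is a basis of $C$ and that $T$ is a minimal linear trellis for $C$, and structural isomorphism to $\otimes_{\sss\in S}\al^{\sss}|\sss$ will follow from Theorem \ref{unfact1} since both have span distribution $S$.

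For the \textbf{only if} direction, I would start by invoking the Factorization Theorem together with Theorem \ref{unfact1} (using that $T$ is structurally isomorphic to $\otimes_{\sss\in S}\al^{\sss}|\sss$, so has span distribution $S$) to write $T\sim\otimes_{\sss\in S}\bm{\gamma}^{\sss}|\sss$; Theorem \ref{thm54} then guarantees that $\{\bm{\gamma}^{\sss}\}_{\sss\in S}$ is a basis of $C$ with $\sss$ a minimal span of $\bm{\gamma}^{\sss}$. Relying on the characterization (as in \cite{KV}) of the codewords of $C$ with support in $[a,a+l]$ as the linear span of $\{\al^{\sss^*}:\sss^*\in\mathcal{S}(C),\sss^*\leq\sss\}$, I can write $\bm{\gamma}^{\sss}=\al^{\sss}+\sum_{\sss^*<\sss,\sss^*\in\mathcal{S}(C)}c_{\sss^*}\al^{\sss^*}$ after rescaling (the $\al^{\sss}$-coefficient must be nonzero, or else $\sss$ would fail to be minimal by the same endpoint argument as above). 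Next I would induct on the partial order on $S$, exploiting the freedom, allowed by Theorem \ref{trelfac}, to modify $\bm{\gamma}^{\sss}$ by nonzero scalar multiples plus elements of $\sum_{\sss'\in S,\sss'<\sss}\lb\bm{\gamma}^{\sss'}\rb$; using the inductive hypothesis to substitute $\al^{\sss^*}=\bm{\gamma}^{\sss^*}-\ww^{\sss^*}$ for $\sss^*\in S$, the $\sss^*\in S$ contributions to the expansion of $\bm{\gamma}^{\sss}$ get absorbed into the freedom term, leaving $\bm{\gamma}^{\sss}\equiv\al^{\sss}+\ww^{\sss}$ with $\ww^{\sss}\in\lb\al^{\sss'}|\sss'<\sss,\sss'\notin S\rb$ as required.

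The hard part will be \textbf{uniqueness}. Suppose the two factorizations give isomorphic trellises, and induct on the partial order of $S$, so that $\ww^{\sss'}=\ww'^{\sss'}$ is already known for $\sss'<\sss$. Equating $C_{\sss}(T)$ as computed for the two factorizations via Observation \ref{obs11} yields
\[
(1-c)\al^{\sss}-\sum_{\sss'\in S,\sss'<\sss}d_{\sss'}\al^{\sss'}=c\ww^{\sss}-\ww'^{\sss}+\sum_{\sss'\in S,\sss'<\sss}d_{\sss'}\ww^{\sss'}
\]
for some $c\in\FF^{*}$ and $d_{\sss'}\in\FF$. Evaluating at the starting position $a$ of $\sss$ wipes out every term on the right-hand side (since $\al^{\sss^*}_a=0$ whenever $\sss^*\in\mathcal{S}(C)$ has $\sss^*<\sss$, because the starts are distinct), forcing $c=1$. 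The leftover identity then places $\sum_{\sss'\in S,\sss'<\sss}d_{\sss'}\al^{\sss'}$ in $\lb\al^{\sss^*}|\sss^*<\sss, \sss^*\notin S\rb$, and the decisive step will be a linear independence lemma: the family $\{\al^{\sss^*}|\sss^*<\sss, \sss^*\in\mathcal{S}(C)\}$ is linearly independent. I would prove this by a smallest-starting-point argument, ordering the starts of such $\sss^*$ linearly within $(a,a+l)$ (or within $\ZZ_{n}\setminus\{a\}$ when $l=n-1$), and exploiting that the $\al^{\sss^*}$ with smallest start is the only one nonzero at its own starting position, so that its coefficient must vanish; iterating eliminates all coefficients. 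Linear independence then forces every $d_{\sss'}=0$, and the equation collapses to $\ww'^{\sss}=\ww^{\sss}$, closing the induction.
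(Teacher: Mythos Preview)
Your proof is correct and follows essentially the same route as the paper's: both directions use Theorems \ref{unfact1}, \ref{thm54}, and \ref{trelfac} together with the fact that characteristic spans have pairwise distinct starting and ending points, and the uniqueness part rests on the same linear-independence observation (the paper phrases it as $\lb\al^{\sss}|\sss\in A\rb\cap\lb\al^{\sss}|\sss\in B\rb=0$ for disjoint $A,B\subseteq\mathcal{S}(C)$ sitting inside a common span). Your coordinate evaluation at the starting point $a$ to force $c=1$ is a bit more explicit than the paper's appeal to one-to-oneness of $T$, but the arguments are otherwise aligned.
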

\begin{proof}
Throughout we will reserve the notation $\ww^{\sss}$ for elements of $\lb\al^{\sss'}|\sss'<\sss,\sss'\in\mathcal{S}(C)\setminus S\rb$.
Now, assume that $T$ is a minimal linear trellis for $C$ which is structurally isomorphic to $
\otimes_{\sss\in S}\al^{\sss}|\sss$. Then by  the Factorization Theorem and Theorem \ref{unfact1} we have that 
$$T\sim\otimes_{\sss\in S}\bm{\beta}^{\sss}|\sss$$
 for some
$\bm{\beta}^{\sss}$, $\sss\in S$. Since $T$ is minimal, $\sss$ must be a minimal span of $\bm{\beta}^{\sss}$.
By Theorem \ref{KVmin}  for each $\sss\in S$ there exists $r\geq0$ such that $\sigma^{r}(\sss)$ is an atomic span of $\sigma^{r}(C)$. Since $\chi=\{\al^{\sss}|\sss\}_{\sss\in\mathcal{S}(C)}$ is a characteristic set,  by properties of atomic bases it follows  that $\bm{\beta}^{\sss}\in\lb\al^{\sss'}|\sss'\leq\sss,\sss'\in\mathcal{S}(C)\rb$ for each $\sss\in S$. Thus for each $\sss\in S$ we have $$\bm{\beta}^{\sss}=x^{\sss}\al^{\sss}+\vv^{\sss}+\ww^{\sss}$$ for some $x^{\sss}\in\FF$, $\vv^{\sss}\in\lb\al^{\sss'}|\sss'\in S, \sss'<\sss\rb$, and $\ww^{s}$. By rescaling we can assume that $\bm{\beta}^{\sss}=\al^{\sss}+\vv^{\sss}+\ww^{\sss}$. In particular $\bm{\beta}^{\sss}=\al^{\sss}+\ww^{\sss}$ for all the minimal spans $\sss$ in $S$.
Now, by Theorem \ref{trelfac} we know that if $\otimes_{\sss\in S}\bm{c}^{\sss}|\sss$ is a factorization of $T$ and $\bm{b}^{\sss}\in\lb\bm{c}^{\sss'}|\sss'\in S, \sss'<\sss\rb$
then 
$$\otimes_{\sss\in S}\bm{c}^{\sss}+\bm{b}^{\sss}|\sss$$ 
is also a factorization of $T$. Thus  starting with 
the spans $\sss\in S$ directly above the minimal spans of $S$ and going up, we can transform each codeword $\bm{\beta}^{\sss}$ into the form $\al^{\sss}+\ww^{\sss}$ by adding an appropriate multiple of  the ones with smaller span, and so get the sought factorization of $T$.     

Vice versa, assume $T\sim\otimes_{\sss\in S}(\al^{\sss}+\ww^{\sss})|\sss$. Clearly $T$ is structurally isomorphic to $\otimes_{\sss\in S}\al^{\sss}|\sss$, 
 since the spans of the factors are the same. Also, it is clear that $\sss\in S$ is a minimal span of $\al^{\sss}+\ww^{\sss}$  since the characteristic spans all start and end at different positions. But then by Theorem \ref{thm54} we deduce that $T$ must be minimal for $C$.

Finally, assume that $$T\sim\otimes_{\sss\in S}(\al^{\sss}+\ww^{\sss})|\sss\sim\otimes_{\sss\in S}(\al^{\sss}+\ww'^{\sss})|\sss$$
By Theorem \ref{trelfac} we have that  
$$\lb\al^{\sss}+\ww^{\sss}\rb+\sum_{\sss'<\sss}C_{\sss'}(T)=\lb\al^{\sss}+\ww'^{\sss}\rb+\sum_{\sss'<\sss}C_{\sss'}(T)$$ for every $\sss\in S$. 
Since $T$ is one-to-one we must have $$\al^{\sss}+\ww^{\sss}\notin \sum_{\sss'<\sss}C_{\sss'}(T)$$
So from the above equality we deduce that $$\ww^{\sss}-\ww'^{\sss}\in\sum_{\sss'<\sss}C_{\sss'}(T)=\lb\al^{\sss'}+\ww^{\sss'}|\sss'<\sss,\sss'\in S\rb$$ On the other hand $\ww^{\sss}-\ww'^{\sss}\in\lb\al^{\sss'}|\sss'<\sss,\sss'\in\mathcal{S}(C)\setminus S\rb$. But if $A$ and $B$ are disjoint sets of characteristic spans all sitting inside the same span then $\lb\al^{\sss}|\sss\in A\rb\cap\lb\al^{\sss}|\sss\in B\rb=0$, since different characteristic spans start and end at different positions. So $\ww^{\sss}=\ww'^{\sss}$ for all $\sss\in S$.
\end{proof}

A straight application of the above theorem gives us  the possibility to count minimal linear trellises with the same underlying graph structure.
 
\begin{cor}\label{cor55}
Let $T=\otimes_{\sss\in S}\al^{\sss}|\sss$ be a minimal linear trellis for $C$. Then the number of distinct minimal linear trellises for $C$ that are structurally isomorphic to $T$ is
\begin{equation}\label{mincount}
\exp_{q}\left(\sum_{\sss\in S}\#\{\sss'<\sss,\sss'\in \mathcal{S}(C)\setminus S\}\right)
\end{equation} 
\end{cor}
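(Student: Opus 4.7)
The strategy is to reduce this to a parameter count using Theorem \ref{thm55}, and then evaluate the dimensions of the parameter spaces via a cyclic shift argument that reduces to the conventional (atomic basis) case.

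By Theorem \ref{thm55}, the minimal linear trellises for $C$ structurally isomorphic to $T$ are precisely (up to isomorphism) the trellises
$$T(\bm{w}) := \otimes_{\sss\in S}(\al^{\sss}+\ww^{\sss})|\sss$$
as $\bm{w} = (\ww^{\sss})_{\sss\in S}$ ranges over the Cartesian product $\prod_{\sss\in S} V^{\sss}$, where
$$V^{\sss} := \lb\al^{\sss'}|\sss'<\sss,\sss'\in\mathcal{S}(C)\setminus S\rb.$$
The second part of Theorem \ref{thm55} says precisely that $\bm{w}\mapsto T(\bm{w})$ is injective: distinct tuples give nonisomorphic trellises. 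Hence the number of such minimal linear trellises equals
$$\prod_{\sss\in S} |V^{\sss}| \;=\; q^{\sum_{\sss\in S} \dim V^{\sss}},$$
so it remains to verify that for each $\sss\in S$ we have $\dim V^{\sss}=\#\{\sss'<\sss,\sss'\in \mathcal{S}(C)\setminus S\}$.

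The core step is therefore the linear independence claim: for each $\sss\in\mathcal{S}(C)$, the characteristic codewords $\{\al^{\sss'}\mid\sss'\leq\sss,\ \sss'\in\mathcal{S}(C)\}$ are $\FF$-linearly independent. To prove this, write $\sss=(a,l)$ and apply the cyclic shift $\sigma^{-a}$, which is a linear bijection on $\FF^n$ and carries characteristic codewords of $C$ to characteristic codewords of $\sigma^{-a}(C)$. After the shift, $\sss$ becomes the conventional span $(0,l)$, and the spans $\sss'\leq \sss$ become conventional spans of $\sigma^{-a}(C)$ contained in $[0,l]$. By part $(3)$ of Theorem \ref{KVmin}, these are precisely the atomic spans of $\sigma^{-a}(C)$ lying inside $[0,l]$, and the corresponding shifted characteristic codewords form a subset of an atomic basis of $\sigma^{-a}(C)$; hence they are linearly independent. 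Pulling back along $\sigma^{a}$ yields the claim.

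Restricting to the subset $\{\al^{\sss'}\mid \sss'<\sss,\ \sss'\in\mathcal{S}(C)\setminus S\}\subseteq \{\al^{\sss'}\mid\sss'\leq\sss,\ \sss'\in\mathcal{S}(C)\}$ preserves linear independence, so $\dim V^{\sss}=\#\{\sss'<\sss,\ \sss'\in\mathcal{S}(C)\setminus S\}$ as required, and substituting into the product yields the formula $\exp_{q}\!\bigl(\sum_{\sss\in S}\#\{\sss'<\sss,\sss'\in \mathcal{S}(C)\setminus S\}\bigr)$. The main (and only nontrivial) obstacle is the linear independence step, but it is dispatched cleanly by reducing via a cyclic shift to the atomic basis property of the minimal conventional trellis from Theorem \ref{KVmin}(3); everything else is bookkeeping on top of Theorem \ref{thm55}.
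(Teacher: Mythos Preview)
Your proof is correct and follows the paper's approach (which presents Corollary~\ref{cor55} as a direct application of Theorem~\ref{thm55}); you have correctly spelled out the one detail the paper leaves implicit, namely that the generators $\{\al^{\sss'}:\sss'<\sss,\ \sss'\in\mathcal{S}(C)\setminus S\}$ of each $V^{\sss}$ are linearly independent, via the atomic-basis statement following Theorem~\ref{KVmin}. One minor slip: with the paper's convention $\sigma((a,l))=(a-1,l)$, the shift making $\sss=(a,l)$ conventional is $\sigma^{a}$, not $\sigma^{-a}$; this does not affect the argument.
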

From the above 
we deduce immediately the following two  corollaries for trellises and codes with special characteristic span distributions.

\begin{cor}\label{cor56}
Let $T=\otimes_{\sss\in S}\al^{\sss}|\sss$ be a minimal linear trellis for $C$ such that no span in $S$ contains a 
   characteristic span of $C$ not in $S$. Then there are no other minimal trellises for $C$ with same graph structure as that of $T$. In particular, if $C$ is self-dual and $T$ is structurally isomorphic to $T^{\perp}$ then $T$ is self-dual.
\end{cor}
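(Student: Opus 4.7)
The plan is to derive both statements as direct applications of Corollary \ref{cor55} together with the basic facts about dual linear trellises recalled in Section II.

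For the first assertion, I would simply invoke Corollary \ref{cor55}, which gives the count
\[
\exp_{q}\!\left(\sum_{\sss\in S}\#\{\sss'<\sss,\ \sss'\in \mathcal{S}(C)\setminus S\}\right)
\]
of minimal linear trellises for $C$ structurally isomorphic to $T$. The hypothesis that no $\sss\in S$ contains a characteristic span of $C$ lying outside $S$ says precisely that each set $\{\sss'<\sss,\ \sss'\in\mathcal{S}(C)\setminus S\}$ is empty, so the exponent is $0$ and the count equals $q^{0}=1$. Hence $T$ itself is the only minimal linear trellis for $C$ with this graph structure, as required.

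For the ``In particular'' clause, assume $C=C^{\perp}$ and that $T$ is structurally isomorphic to $T^{\perp}$. First I would observe that $T^{\perp}$ qualifies as a competitor to $T$ in the sense of part one: by definition $C(T^{\perp})=C(T)^{\perp}=C^{\perp}=C$, and by the fact recalled in Section II that ``$T^{\perp}$ is minimal (and so reduced) if and only if $T$ is'', $T^{\perp}$ is a minimal linear trellis for $C$. By the structural-isomorphy assumption, $T^{\perp}$ has the same underlying graph as $T$. Applying the first part of the corollary to $T$, the unique minimal linear trellis for $C$ with this graph structure is $T$ itself, so $T^{\perp}\sim T$, i.e.\ $T$ is self-dual.

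The only point that requires a moment of care is the legitimacy of applying part one in the self-dual situation: one must confirm that $T^{\perp}$ really is a linear trellis for $C$ in the sense used throughout the section (in particular, reduced and of the correct length), which is exactly what the minimality-preservation statement about dualization supplies. Since both assertions reduce cleanly to earlier results, I do not anticipate a substantive obstacle beyond this bookkeeping.
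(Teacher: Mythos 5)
Your proposal is correct and follows exactly the route the paper intends: the paper derives Corollary \ref{cor56} immediately from the counting formula of Corollary \ref{cor55} (empty containment sets force the count to be $q^{0}=1$), and the self-dual clause then follows since $T^{\perp}$ is a minimal linear trellis for $C=C^{\perp}$ structurally isomorphic to $T$. Your extra check that dualization preserves minimality (and hence reducedness) is precisely the bookkeeping the paper relies on, so there is no gap.
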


\begin{cor}\label{cor57}
If there are no containments between the characteristic spans of $C$ then two distinct minimal linear trellises for $C$ are never structurally isomorphic. 
In particular, this holds true for cyclic codes.
\end{cor}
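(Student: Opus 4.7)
The plan is to obtain the statement as a direct consequence of Corollary \ref{cor55}, after invoking Corollary \ref{cornuovo} to ensure that every minimal linear trellis fits into the framework of Theorem \ref{thm55}. Specifically, the hypothesis that there are no containments between characteristic spans of $C$ means that for every subset $S\subseteq \mathcal{S}(C)$ and every $\sss\in S$, the set $\{\sss'<\sss\mid \sss'\in\mathcal{S}(C)\setminus S\}$ is empty. Thus the exponent in the counting formula \eqref{mincount} vanishes, and for any minimal linear trellis $T=\otimes_{\sss\in S}\al^{\sss}|\sss$ arising from the chosen characteristic set, the number of minimal linear trellises for $C$ structurally isomorphic to $T$ is $q^{0}=1$, i.e.\ only $T$ itself.

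Next, I would package this into the statement about arbitrary pairs $T,T'$ of minimal linear trellises for $C$. By Corollary \ref{cornuovo}, each of $T$ and $T'$ is structurally isomorphic to some product $\otimes_{\sss\in S}\al^{\sss}|\sss$ coming from a fixed characteristic set $\chi$ of $C$. If $T$ and $T'$ were structurally isomorphic, they would both lie in the same structural-isomorphism class, which by the previous paragraph contains a unique (isomorphism class of) minimal linear trellis. Hence $T\sim T'$, proving the contrapositive of the desired statement: distinct (i.e., non-isomorphic) minimal linear trellises for $C$ are never structurally isomorphic.

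For the ``in particular'' claim about cyclic codes, I would argue that a cyclic code $C$ satisfies the no-containment hypothesis. As noted in the Example preceding this subsection, if $\sss$ is the conventional span of a generating codeword of a cyclic code $C$, then $\mathcal{S}(C)=\{\sigma^{i}(\sss)\}_{i=0,\ldots,n-1}$, so every characteristic span has the same length $l$. Now two distinct spans of common length $l$ are incomparable in the partial order of Subsection \ref{eldef}: if $l<n-1$ then $(a_{1},l)\leq(a_{2},l)$ requires $[a_{1},a_{1}+l]\subseteq[a_{2},a_{2}+l]$, forcing equality of these intervals of cardinality $l+1$ and hence $a_{1}=a_{2}$; if $l=n-1$ then $(a_{1},n-1)\leq(a_{2},n-1)$ requires $(a_{1},a_{1}+n-1]\subseteq(a_{2},a_{2}+n-1]$, i.e.\ $\ZZ_{n}\setminus\{a_{1}\}\subseteq\ZZ_{n}\setminus\{a_{2}\}$, again forcing $a_{1}=a_{2}$. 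Therefore the characteristic spans of $C$ are pairwise incomparable, and the first part of the corollary applies.

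No step presents a real obstacle once Corollary \ref{cor55} and Corollary \ref{cornuovo} are in hand; the only point requiring minor care is the verification for the cyclic case, which reduces to the elementary observation that equal-length spans cannot contain one another strictly.
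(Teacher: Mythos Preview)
Your proposal is correct and follows essentially the same route as the paper: the paper simply states that Corollary~\ref{cor57} is deduced ``immediately'' from Corollary~\ref{cor55}, which is precisely what you do by observing that the exponent in \eqref{mincount} vanishes when no characteristic span is contained in another. Your extra care in invoking Corollary~\ref{cornuovo} and in verifying the incomparability of equal-length spans for cyclic codes just makes explicit what the paper leaves implicit (the latter via the Example preceding Theorem~\ref{KVmin}).
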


\begin{rmk}\label{rmkc7}
By similar arguments to the ones used in the proof of Theorem \ref{thm55} one can show that Corollary \ref{cor56} (and hence also Corollary \ref{cor57}) holds for KV-trellises too.
\end{rmk}

By Corollary \ref{cornuovo} and Theorem \ref{thm55} to determine and count all the minimal linear trellises for a given $[n,k]$ code $C$ we can proceed as follows:

\begin{enumerate}
\item Compute a characteristic set $\chi$ of $C$ (using for exampe ``Algorithm A'' from \cite{KV}).
\item Find all the possible KV-trellises yielded by $\chi$ (i.e. find all the possible subsets of $k$ elements of $\chi$ whose codewords are linearly independent).
\item Apply a sorting algorithm to the so found list of KV-trellises to find out which of those are minimal. 
\item Apply Theorem \ref{thm55} (and Corollary \ref{cor55}) to each minimal linear trellis so found.
\end{enumerate}

\begin{ex}\label{ex51}
Let $C=\lb01010,11111\rb$. Then a characteristic matrix for $C$ is given by
\begin{equation*}
\begin{pmatrix}
0{101}0|(1,2)\\
{01}0{10}|(3,3)\\
{10101}|(0,4)\\
{1}0{101}|(2,3)\\
{101}0{1}|(4,3)\\
\end{pmatrix}
\end{equation*}
A  trellis for $C$ from this characteristic matrix can be built only as a product of one of the first rows with one of the last three rows, which gives 6 possibilities. All these possibilities turn out to be  minimal linear trellises. Five of them are shifted conventional trellises.  
The remaining one is the product of the second and third row, and has span set $S=\{(3,3),(0,4)\}$. Note that $(3,3)$ contains no other characteristic span, while $(0,4)$ contains $(1,2)$. Thus there are precisely two nonisomorphic minimal linear trellises for $C$ with span distribution equal to $S$. These are 
depicted just below:
\begin{center}
 \begin{tikzpicture}[yscale=.8,>=latex',shorten >=.9pt, shorten <=1.4pt, line width=.6pt]
  \tikzstyle{every node}=[draw,circle,fill=black,minimum size=2pt,
                        inner sep=0pt]                    
\foreach \x in {0,1,2,3,4,5}{
\node at (\x,2) {};};
\foreach \x in {0,1,2,3,4,5}{
\node at (\x,1) {};};
\foreach \x in {1,4}{
\node at (\x,3) {};};
\foreach \x in {1,4}{
\node at (\x,-0) {};};
\tikzstyle{every node}=[]
\draw [->,black] (0,1) -- node[above] {} (1,1);
\draw [->,black, dashed] (1,1) -- node[pos=.05, above] {} (2,2);
\draw [->,black] (2,2) -- node[above] {} (3,2);
\draw [->,black] (3,2) -- node[above] {} (4,3);
\draw [->,black, dashed] (3,2) -- node[pos=.05, below] {} (4,1);
\draw [->,black] (4,3) -- node[above] {} (5,2);
\draw [->,black] (4,1) -- node[above] {} (5,1);
\draw [->,black] (0,2) -- node[above] {} (1,3);
\draw [->,black] (1,3) -- node[above] {} (2,2);
\draw [->,black, dashed] (0,2) -- node[above] {} (1,2);
\draw [->,black] (1,2) -- node[pos=.05, below] {} (2,1);
\draw [->,black, dashed] (0,1) -- node[above] {} (1,0);
\draw [->,black, dashed] (1,0) -- node[above] {} (2,1);
\draw [->,black, dashed] (2,1) -- node[above] {} (3,1);
\draw [->,black, dashed] (3,1) -- node[above] {} (4,0);
\draw [->,black, dashed] (4,0) -- node[above] {} (5,1);
\draw [->,black] (3,1) -- node[pos=.05, above] {} (4,2);
\draw [->,black, dashed] (4,2) -- node[above] {} (5,2);
                                     \draw [] (2.5,4) --  node [] {${01}0{10}|{(3,3)}\otimes{10101}|{(0,4)}=$} (2.5,4);

\draw (0,2) node [below] {\tiny$\uu{1}$};
\draw (1,2) node [below] {\tiny$\uu{10}$};
\draw (2,2) node [below] {\tiny$\uu{1}$};
\draw (3,2) node [below] {\tiny$\uu{1}$};
\draw (4,2) node [below] {\tiny$\uu{10}$};
\draw (5,2) node [below] {\tiny$\uu{1}$};
\draw (0,1) node [below] {\tiny$\uu{0}$};
\draw (1,1) node [below] {\tiny$\uu{01}$};
\draw (2,1) node [below] {\tiny$\uu{0}$};
\draw (3,1) node [below] {\tiny$\uu{0}$};
\draw (4,1) node [below] {\tiny$\uu{01}$};
\draw (5,1) node [below] {\tiny$\uu{0}$};

\draw (1,0) node [below] {\tiny$\uu{00}$};
\draw (4,0) node [below] {\tiny$\uu{00}$};
\draw (1,3) node [above] {\tiny$\uu{11}$};
\draw (4,3) node [above] {\tiny$\uu{11}$};
   \end{tikzpicture}
      \end{center}   
      \begin{center}
 \begin{tikzpicture}[yscale=.8,>=latex',shorten >=.9pt, shorten <=1.4pt, line width=.6pt]
  \tikzstyle{every node}=[draw,circle,fill=black,minimum size=2pt,
                        inner sep=0pt]                    
\foreach \x in {0,1,2,3,4,5}{
\node at (\x,2) {};};
\foreach \x in {0,1,2,3,4,5}{
\node at (\x,1) {};};
\foreach \x in {1,4}{
\node at (\x,3) {};};
\foreach \x in {1,4}{
\node at (\x,-0) {};};
\tikzstyle{every node}=[]
                                    \draw [] (2.5,4) --  node [above] {${01}0{10}|{(3,3)}\times{11111}|{(0,4)}=$} (2.5,4);

\draw [->,black] (0,1) -- node[above] {} (1,1);
\draw [->,black] (1,1) -- node[pos=.05, above] {} (2,2);
\draw [->,black] (2,2) -- node[above] {} (3,2);
\draw [->,black, dashed] (3,2) -- node[above] {} (4,3);
\draw [->,black] (3,2) -- node[pos=.05, below] {} (4,1);
\draw [->,black] (4,3) -- node[above] {} (5,2);
\draw [->,black] (4,1) -- node[above] {} (5,1);
\draw [->,black] (0,2) -- node[above] {} (1,3);
\draw [->,black, dashed] (1,3) -- node[above] {} (2,2);
\draw [->,black, dashed] (0,2) -- node[above] {} (1,2);
\draw [->,black] (1,2) -- node[pos=.05, below] {} (2,1);
\draw [->,black, dashed] (0,1) -- node[above] {} (1,0);
\draw [->,black, dashed] (1,0) -- node[above] {} (2,1);
\draw [->,black, dashed] (2,1) -- node[above] {} (3,1);
\draw [->,black, dashed] (3,1) -- node[above] {} (4,0);
\draw [->,black, dashed] (4,0) -- node[above] {} (5,1);
\draw [->,black] (3,1) -- node[pos=.05, above] {} (4,2);
\draw [->,black, dashed] (4,2) -- node[above] {} (5,2);

\draw (0,2) node [below] {\tiny$\uu{1}$};
\draw (1,2) node [below] {\tiny$\uu{10}$};
\draw (2,2) node [below] {\tiny$\uu{1}$};
\draw (3,2) node [below] {\tiny$\uu{1}$};
\draw (4,2) node [below] {\tiny$\uu{10}$};
\draw (5,2) node [below] {\tiny$\uu{1}$};
\draw (0,1) node [below] {\tiny$\uu{0}$};
\draw (1,1) node [below] {\tiny$\uu{01}$};
\draw (2,1) node [below] {\tiny$\uu{0}$};
\draw (3,1) node [below] {\tiny$\uu{0}$};
\draw (4,1) node [below] {\tiny$\uu{01}$};
\draw (5,1) node [below] {\tiny$\uu{0}$};

\draw (1,0) node [below] {\tiny$\uu{00}$};
\draw (4,0) node [below] {\tiny$\uu{00}$};
\draw (1,3) node [above] {\tiny$\uu{11}$};
\draw (4,3) node [above] {\tiny$\uu{11}$};
   \end{tikzpicture}
      \end{center}     
In particular, the minimal trellis $01010|(3,3)\otimes11111|(0,4)$ is not equal to any $\mathrm{KV}$-trellis  coming from the above characteristic set, while $01010|(3,3)\otimes10101|(0,4)$ is not equal to any $\mathrm{KV}$-trellis coming from the only other possible characteristic set of $C$.   
We conclude also that in total $C$ has 7 minimal linear trellises. 

Note that the 5 minimal linear trellises which are shifted conventional trellises give rise to the following codes of 2-cycles: $$C\times C, \sigma(C\times C), \sigma^{2}(C\times C), \sigma^{3}(C\times C), \sigma^{4}(C\times C)$$
 The other two trellises instead give rise to:
\begin{gather*}
C^{2}(01010|(3,3)\otimes10101|(0,4))=\\=\lb1010110101,0101001010,
1011101000,1011111101\rb
\\
C^{2}(01010|(3,3)\otimes11111|(0,4))=\\=\lb1010110101,0101001010,
1110101000,1110110111\rb
\end{gather*}
As predicted by Corollary \ref{cor53} one can check that these 7 codes of length 10 are all different.
\end{ex}

Below we give another interesting example of the potential application of the above results, where we deduce that the Golay trellis \cite{CFV} is self-dual.

\begin{ex}\label{golay2} 
 Consider the Golay trellis $T_{\mathcal{G}}$  presented in \cite{CFV} (see also \cite{CFV0}) as 
 $T_{\mathcal{G}}=\otimes_{i=0}^{2}\sigma^{8i}(T_{1}\otimes T_{2}\otimes T_{3}\otimes T_{4})$ 
where
\begin{align*}
T_{1}&={1101110111}00000000000000|(0,9)\\
T_{2}&=00{1111100111}000000000000|(2,9)\\
T_{3}&=0000{1101101111}0000000000|(4,9)\\
T_{4}&=000000{1101110111}00000000|(6,9)
\end{align*}
$T_{\mathcal{G}}$ is  a minimal trellis for the binary $[24,12,8]$ Golay code $\mathcal{G}$ (with the given special coordinate ordering). It is actually a very special trellis, since it achieves simultaneously the minimum value for $\prod_{i}|V_{i}(T)|$  and   $\max \{|V_{i}(T)|\}_{i}$ for any possible $T$ for $\mathcal{G}$ under any possible coordinate ordering. Now, the spans of its factors have all length $9$. Thus by Theorem \ref{KVmin}, since $\mathcal{G}$ is  self-dual, the other $12$ characteristic spans have all length $15$, and so cannot be contained in any span of  $T_{\mathcal{G}}$. Hence by Corollary \ref{cor57} there is no other minimal trellis  for $\mathcal{G}$ with the same graph structure as that of $T_{\mathcal{G}}$.
Moreover, since the dual trellis $T_{\mathcal{G}}^{\perp}$ has the same state-complexity profile as that of $T_{\mathcal{G}}$, we have that $$\prod_{i}|V_{i}(T_{\mathcal{G}}^{\perp})|=\prod_{i}|V_{i}(T_{\mathcal{G}})|$$ from which follows that $\mathcal{S}(T_{\mathcal{G}}^{\perp})$ must be also made up of the $12$ spans of length $9$ in $\mathcal{S}(\mathcal{G})$ (as $T_{\mathcal{G}}^{\perp}$ also represents and hence is minimal for $\mathcal{G}=\mathcal{G}^{\perp}$), and so  $T_{\mathcal{G}}^{\perp}$ is structurally isomorphic to     $T_{\mathcal{G}}$. 
But then 
$T_{\mathcal{G}}\sim T_{\mathcal{G}}^{\perp}$, i.e. $T_{\mathcal{G}}$ is self-dual.

\end{ex}

\subsection{Improving iterative/LP trellis decoding through the complete classification of minimal linear trellises}\label{pointout}
The complexity of trellis decoding is directly proportional to  trellis size, so, to achieve low complexity it is necessary to search for the smallest trellis representations of codes. If trellis size is all what one is interested in then obviously the way labels are arranged on a trellis does not matter (as long as the trellis represents the prescribed code), and it is thus sufficient to classify all  trellis representations up to structural isomorphism. The work done by Koetter/Vardy \cite{KV} 
goes precisely in that direction as it actually focuses on the structural classification of minimal linear trellises for linear codes.

On the other hand, the performance and behavior of iterative and LP trellis decoding is affected by so-called \textit{pseudocodewords} (see \cite{F, FKKR,FKMT,H,KVo}), which gives importance to sorting trellises also with respect to their pseudocodewords. In fact, one wants to find trellises that yield few bad pseudocodewords when performing iterative/LP decoding. 

Now, consider the two minimal linear trellises $T=01010|(3,3)\otimes10101|(0,4)$ and $T'=01010|(3,3)\otimes11111|(0,4)$ for the code $C=\lb01010,10101\rb$  depicted in Example \ref{ex51}.  
$T'$ yields the (unscaled) pseudocodeword $12101\in\RR^{5}$ (arising from the unique $2$-cycle with edge-label sequence $1110101000$), while as one can easily check all the (unscaled) pseudocodewords of $T$ are sums of the codewords of $C$ seen as a subset of $\RR^{5}$ (via the map $\FF_{2}\ni0\mapsto0\in\RR$, $\FF_{2}\ni1\mapsto1\in\RR$). In particular the convex cones generated by the pseudocodewords of $T$ and $T'$ are different (the shape of such cones strongly influences the behavior of iterative/LP decoding).

This shows that it is possible to have two different (i.e. nonisomorphic) but structurally isomorphic   (minimal) linear trellises for the same code yielding different pseudocodewords. In other words, a rearrangement of edge-labels that preserves the represented code may still change the yielded pseudocodewords. Such phenomenon (whose discovery we actually had announced first in \cite{CB}) was never observed before in the literature, 
and it implies that in order to sort all the (minimal) linear trellises for a given linear code with respect to pseudocodewords it is not sufficient to do a classification of such trellises up to structural isomorphism. 
We need instead a complete classification. For minimal linear trellises this completed classification can be feasibly carried out as described in the paragraph before Example \ref{ex51}, which is the result of joining our work with the one of Koetter/Vardy \cite{KV}. 
 By carrying out this classification we can thus  find the trellises which at the same time achieve the lowest decoding complexity and have the best behavior and performance for iterative/LP decoding.

\begin{rmk}\label{rmk32}
One can easily check that if a linear trellis  is optimized with respect to iterative/LP decoding (in the sense that it yields the fewest possible pseudocodewords) then it must be {one-to-one}. So,
for such decoding purposes, 
even if we want to explore nonminimal linear trellis representations we can still restrict ourselves to the classification of those that are one-to-one.
\end{rmk}


\section{Further applications to quasi-cyclic and nonreduced linear trellises}

\subsection{Quasi-cyclic factorizations and isomorphisms of quasi-cyclic linear trellises}

 Let $m\geq1$ divide  the length of $T$.  
 We say that $T$ is \textit{$m$-quasi-cyclic}\index{trellis!quasi-cyclic} if it is isomorphic to a trellis $T'$ satisfying 
 \begin{align*}
 V_{i}(T')&=V_{i+m}(T')\\
 E_{i}(T')&=E_{i+m}(T')
 \end{align*} 
 for all $i\in\ZZ_{n}$. 
Note that if $T$ is linear then the isomorphic trellis $T'$ is also linear, and they are linearly isomorphic. 

Given a trellis $T$ obviously the $i$-cover $T^{i}$ is an $i$-quasi-cyclic trellis, since by its very definition $T^{i}$ satisfies the above equalities. Another natural way to construct quasi-cyclic trellises is given as follows: given a trellis $T$ of length $n$ and an $m$ dividing $n$ then it is easily proven that $$T\otimes \sigma^{m}(T)\otimes\sigma^{2m}(T)\otimes\ldots\otimes \sigma ^{(\frac{n}{m}-1)m}(T)$$ is $m$-quasi-cyclic. 
For example, if $T=101|(0,2)$ and $m=1$, by definition of trellis product we get that $T\otimes\sigma(T)\otimes\sigma^{2}(T)$ is precisely equal to
\begin{center}
 \begin{tikzpicture}[yscale=.8,>=latex',shorten >=.9pt, shorten <=1.4pt, line width=.6pt]
\begin{scope}
[xshift=0cm]
  \tikzstyle{every node}=[draw,circle,fill=black,minimum size=2pt,
                        inner sep=0pt]                    
\foreach \x in {0,1}{
\node at (\x,0) {};};
\foreach \x in {0,1}{
\node at (\x,1) {};};
\foreach \x in {0,1}{
\node at (\x,2) {};};
\foreach \x in {0,1}{
\node at (\x,3) {};};

\tikzstyle{every node}=[]

\draw [->,black,dashed] (0,0) -- (1,0);
\draw [->,black] (0,0) -- (1,2);

\draw [->,black,dashed] (0,3) -- (1,3);
\draw [->,black] (0,3) -- (1,1);

\draw [->,black] (0,1) -- (1,0);
\draw [->,black,dashed] (0,1) -- (1,2);

\draw [->,black] (0,2) -- (1,3);
\draw [->,black,dashed] (0,2) -- (1,1);

\foreach \x in {0}{
\draw (\x,0) node [below] {\tiny ${\uu{000}}$};
\draw (\x,1) node [below] {\tiny ${\uu{001}}$};
\draw (\x,2) node [above] {\tiny ${\uu{010}}$};
\draw (\x,3) node [above] {\tiny ${\uu{011}}$};
};
\foreach \x in {1}{
\draw (\x,0) node [below] {\tiny ${\uu{000}}$};
\draw (\x,1) node [below] {\tiny ${\uu{010}}$};
\draw (\x,2) node [above] {\tiny ${\uu{100}}$};
\draw (\x,3) node [above] {\tiny ${\uu{110}}$};
};
\end{scope}
\begin{scope}
[xshift=1cm]
  \tikzstyle{every node}=[draw,circle,fill=black,minimum size=2pt,
                        inner sep=0pt]                    
\foreach \x in {0,1}{
\node at (\x,0) {};};
\foreach \x in {0,1}{
\node at (\x,1) {};};
\foreach \x in {0,1}{
\node at (\x,2) {};};
\foreach \x in {0,1}{
\node at (\x,3) {};};

\tikzstyle{every node}=[]
\draw [->,black,dashed] (0,0) -- (1,0);

\draw [->,black,dashed] (0,3) -- (1,3);

\draw [->,black,dashed] (0,1) -- (1,1);
\draw [->,black] (0,1) -- (1,0);
\draw [->,black] (0,0) -- (1,1);

\draw [->,black,dashed] (0,2) -- (1,2);
\draw [->,black] (0,2) -- (1,3);
\draw [->,black] (0,3) -- (1,2);

\foreach \x in {1}{
\draw (\x,0) node [below] {\tiny ${\uu{000}}$};
\draw (\x,1) node [below] {\tiny ${\uu{001}}$};
\draw (\x,2) node [above] {\tiny ${\uu{100}}$};
\draw (\x,3) node [above] {\tiny ${\uu{101}}$};
};
\end{scope}

\begin{scope}
[xshift=2cm]
  \tikzstyle{every node}=[draw,circle,fill=black,minimum size=2pt,
                        inner sep=0pt]                    
\foreach \x in {0,1}{
\node at (\x,0) {};};
\foreach \x in {0,1}{
\node at (\x,1) {};};
\foreach \x in {0,1}{
\node at (\x,2) {};};
\foreach \x in {0,1}{
\node at (\x,3) {};};

\tikzstyle{every node}=[]
\draw [->,black,dashed] (0,0) -- (1,0);
\draw [->,black] (0,0) -- (1,2);

\draw [->,black,dashed] (0,3) -- (1,3);
\draw [->,black] (0,3) -- (1,1);

\draw [->,black] (0,1) -- (1,3);
\draw [->,black,dashed] (0,1) -- (1,1);

\draw [->,black] (0,2) -- (1,0);
\draw [->,black,dashed] (0,2) -- (1,2);

\foreach \x in {1}{
\draw (\x,0) node [below] {\tiny ${\uu{000}}$};
\draw (\x,1) node [below] {\tiny ${\uu{001}}$};
\draw (\x,2) node [above] {\tiny ${\uu{010}}$};
\draw (\x,3) node [above] {\tiny ${\uu{011}}$};
};\end{scope}

 \end{tikzpicture}
  \end{center}
which is isomorphic to (just swap the vertices labeled $\underline{100}$ and $\underline{001}$ at time index $2$, and then relabel all the vertices)
\begin{center}
 \begin{tikzpicture}[yscale=.8,>=latex',shorten >=.9pt, shorten <=1.4pt, line width=.6pt]
\begin{scope}
[xshift=0cm]
  \tikzstyle{every node}=[draw,circle,fill=black,minimum size=2pt,
                        inner sep=0pt]                    
\foreach \x in {0,1}{
\node at (\x,0) {};};
\foreach \x in {0,1}{
\node at (\x,1) {};};
\foreach \x in {0,1}{
\node at (\x,2) {};};
\foreach \x in {0,1}{
\node at (\x,3) {};};

\tikzstyle{every node}=[]

\draw [->,black,dashed] (0,0) -- (1,0);
\draw [->,black] (0,0) -- (1,2);

\draw [->,black,dashed] (0,3) -- (1,3);
\draw [->,black] (0,3) -- (1,1);

\draw [->,black] (0,1) -- (1,0);
\draw [->,black,dashed] (0,1) -- (1,2);

\draw [->,black] (0,2) -- (1,3);
\draw [->,black,dashed] (0,2) -- (1,1);

\foreach \x in {0}{
\draw (\x,0) node [below] {\tiny ${\uu{00}}$};
\draw (\x,1) node [below] {\tiny ${\uu{01}}$};
\draw (\x,2) node [above] {\tiny ${\uu{10}}$};
\draw (\x,3) node [above] {\tiny ${\uu{11}}$};
};
\foreach \x in {1}{
\draw (\x,0) node [below] {\tiny ${\uu{00}}$};
\draw (\x,1) node [below] {\tiny ${\uu{01}}$};
\draw (\x,2) node [above] {\tiny ${\uu{10}}$};
\draw (\x,3) node [above] {\tiny ${\uu{11}}$};
};
\end{scope}
\begin{scope}
[xshift=1cm]
  \tikzstyle{every node}=[draw,circle,fill=black,minimum size=2pt,
                        inner sep=0pt]                    
\foreach \x in {0,1}{
\node at (\x,0) {};};
\foreach \x in {0,1}{
\node at (\x,1) {};};
\foreach \x in {0,1}{
\node at (\x,2) {};};
\foreach \x in {0,1}{
\node at (\x,3) {};};

\tikzstyle{every node}=[]
\draw [->,black,dashed] (0,0) -- (1,0);
\draw [->,black] (0,0) -- (1,2);

\draw [->,black,dashed] (0,3) -- (1,3);
\draw [->,black] (0,3) -- (1,1);

\draw [->,black] (0,1) -- (1,0);
\draw [->,black,dashed] (0,1) -- (1,2);

\draw [->,black] (0,2) -- (1,3);
\draw [->,black,dashed] (0,2) -- (1,1);

\foreach \x in {1}{
\draw (\x,0) node [below] {\tiny ${\uu{00}}$};
\draw (\x,1) node [below] {\tiny ${\uu{01}}$};
\draw (\x,2) node [above] {\tiny ${\uu{10}}$};
\draw (\x,3) node [above] {\tiny ${\uu{11}}$};
};
\end{scope}

\begin{scope}
[xshift=2cm]
  \tikzstyle{every node}=[draw,circle,fill=black,minimum size=2pt,
                        inner sep=0pt]                    
\foreach \x in {0,1}{
\node at (\x,0) {};};
\foreach \x in {0,1}{
\node at (\x,1) {};};
\foreach \x in {0,1}{
\node at (\x,2) {};};
\foreach \x in {0,1}{
\node at (\x,3) {};};

\tikzstyle{every node}=[]
\draw [->,black,dashed] (0,0) -- (1,0);
\draw [->,black] (0,0) -- (1,2);

\draw [->,black,dashed] (0,3) -- (1,3);
\draw [->,black] (0,3) -- (1,1);

\draw [->,black] (0,1) -- (1,0);
\draw [->,black,dashed] (0,1) -- (1,2);

\draw [->,black] (0,2) -- (1,3);
\draw [->,black,dashed] (0,2) -- (1,1);

\foreach \x in {1}{
\draw (\x,0) node [below] {\tiny ${\uu{00}}$};
\draw (\x,1) node [below] {\tiny ${\uu{01}}$};
\draw (\x,2) node [above] {\tiny ${\uu{10}}$};
\draw (\x,3) node [above] {\tiny ${\uu{11}}$};
};\end{scope}

 \end{tikzpicture}
  \end{center}
 Similarly, the Golay trellis of Example \ref{golay2} (which is depicted in \cite{CFV}) is $3$-quasi-cyclic.

Now, we have the following theorem: 

\begin{thm}\label{qthm1}
Let $T$ be a connected, reduced, linear trellis of length $n$ such that $V_{i}(T)=V_{i+m}(T)$ and $E_{i}(T)=E_{i+m}(T)$ for all $i\in\ZZ_{n}$, and some $m$ dividing $n$. Consider the shift map of cycles 
$$\beta:=\sigma^{m}:\SSS(T)\rightarrow \SSS(\sigma^{m}(T))=\SSS(T)$$
Let $\BB$ be a product basis of $T$ and let $\BB_{[0,m)}$ be the subset of those $\la\in\BB$ whose span starting point $a$ satisfies $ 0\leq a <m$. Then $$\sqcup_{i=0}^{(n/m)-1}\beta^{i}(\BB_{[0,m)})$$ is a product basis of $T$.

\end{thm}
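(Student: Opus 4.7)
The hypothesis $V_i(T) = V_{i+m}(T)$, $E_i(T) = E_{i+m}(T)$ makes $\sigma^m(T)$ literally equal to $T$, so $\beta = \sigma^m$ is a genuine linear automorphism of $\SSS(T)$. My strategy is to show that $\beta$ is compatible with the span-subcode filtration and then to invoke Theorem~\ref{thm12} (i.e.\ the characterization of product bases via quotients $\SSS_{(a,l)}(T)/\SSS_{<(a,l)}(T)$) to conclude that the cut-and-paste set
\[
\BB' := \bigsqcup_{i=0}^{(n/m)-1}\beta^i(\BB_{[0,m)})
\]
inherits the product basis property from $\BB$.

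The first step is to verify that $\beta$ respects the filtration. Because the left cyclic shift $\sigma^m$ sends a vector with support in $[a, a+l]$ to one with support in $[a-m, a-m+l]$, and because the partial order on spans from Subsection~\ref{eldef} is invariant under simultaneous translation of starting points, $\beta$ restricts to linear isomorphisms $\SSS_{(a,l)}(T) \xrightarrow{\sim} \SSS_{(a-m,l)}(T)$ and $\SSS_{<(a,l)}(T) \xrightarrow{\sim} \SSS_{<(a-m,l)}(T)$. Consequently $\beta$ descends to an isomorphism of the quotients $\SSS_{(a,l)}(T)/\SSS_{<(a,l)}(T) \xrightarrow{\sim} \SSS_{(a-m,l)}(T)/\SSS_{<(a-m,l)}(T)$, and transports minimum spans covariantly: if $[\la] = (a', l)$ then $[\beta^i(\la)] = (a' - im \bmod n,\, l)$.

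The second step is a simple bookkeeping: for every $a \in \ZZ_n$, division by $m$ yields a unique pair $(a', i)$ with $a' \in [0, m)$ and $i \in \{0, \ldots, n/m - 1\}$ such that $a \equiv a' - im \pmod n$. This makes the union defining $\BB'$ truly disjoint and identifies, for each starting position $a$, which $\beta^i$-shifted piece of $\BB_{[0,m)}$ contributes to it.

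To finish, fix any span $(a, l)$ and let $(a', i)$ be the associated pair. By the first two steps, the elements of $\BB'$ whose minimum span is $(a, l)$ are precisely $\beta^i\bigl(\{\la \in \BB : [\la] = (a', l)\}\bigr)$. Since $\BB$ is a product basis, Observation~\ref{spansub1}(4) tells us that these $\la$'s project to a basis of $\SSS_{(a', l)}(T)/\SSS_{<(a', l)}(T)$; pushing forward by the quotient isomorphism induced by $\beta^i$, the corresponding $\beta^i(\la)$'s project to a basis of $\SSS_{(a, l)}(T)/\SSS_{<(a, l)}(T)$. Theorem~\ref{thm12} then certifies that $\BB'$ is a product basis of $T$. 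The main delicate point is the indexing bookkeeping modulo $n$ in step two; beyond that, everything is a direct consequence of $\beta$ being a label-code automorphism that shifts the whole span-subcode lattice of $T$ by $-m$.
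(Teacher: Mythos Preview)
Your argument is correct and follows essentially the same route as the paper's own proof: both identify that $\beta$ is a linear automorphism of $\SSS(T)$ carrying $\SSS_{(a,l)}(T)$ isomorphically to $\SSS_{(a-m,l)}(T)$ (and hence the quotients), and both then appeal to Observation~\ref{spansub1} and Theorem~\ref{thm12} to conclude that the shifted pieces reassemble into a product basis.

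One point you leave implicit that the paper makes explicit: when you invoke Theorem~\ref{thm12} at the end, you have only exhibited lifts for the \emph{nondegenerate} spans $(a,l)$ with $0\le l\le n-1$. To conclude that $\BB'$ is a basis of all of $\SSS(T)$ you also need the quotient $\SSS(T)/\SSS_{n-1}(T)$ to vanish, i.e.\ $\SSS(T)=\SSS_{n-1}(T)$; this is exactly where the connectedness hypothesis enters (a connected linear trellis has no product-basis elements with the degenerate span $\ZZ_n$). The paper flags this with the line ``where the first equality follows from $T$ being connected.'' It is a small gap in your write-up rather than a flaw in the strategy, but worth stating, since otherwise the hypothesis ``connected'' appears unused.
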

\begin{proof} 
Note that $\beta$ is a linear isomorphism of $\SSS(T)$ with itself.
We clearly have that $\beta(\SSS_{(a+m,l)}(T))=\sal(T)$ for all spans $(a,l)$. So,  a subset $S\subset\SSS_{(a+m,l)}(T)$ is a lifting of a basis of $\SSS_{(a+m,l)}(T)/\SSS_{<(a+m,l)}(T)$ if and only if $\beta(S)\subseteq\sal(T)$ is a lifting of a basis of $\sal(T)/\SSS_{<(a,l)}(T)$. Thus, by Observation \ref{spansub1} and Theorem \ref{thm12} it follows that the union $\sqcup_{i=0}^{(n/m)-1}\beta(\BB_{[0,m)})$ is a product basis of $$\SSS(T)=\SSS_{n-1}(T)=\sum_{a\in\ZZ_{n},l\leq n-1}\sal(T)$$
 where the first equality follows from  $T$ being connected, 
and so we are done.
\end{proof}

By combining the above theorem with the Factorization Theorem and Theorem \ref{thm13} we get also the following corollaries:

\begin{cor}\label{qcor1}
The span distribution $\mathcal{S}(T)$ of any connected, $m$-quasi-cyclic, reduced, linear trellis $T$ of length $n$ is decomposed into disjoint orbits of order $n/m$ under the action of $\sigma^{m}$.
\end{cor}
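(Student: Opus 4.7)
The plan is to deduce the corollary from Theorem \ref{qthm1} via the correspondence between product bases and span distributions.

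First I would reduce to the case where $T$ itself satisfies the shift-invariance $V_i(T) = V_{i+m}(T)$ and $E_i(T) = E_{i+m}(T)$ for all $i\in\ZZ_n$. By definition $T$ is isomorphic to such a trellis $T'$, and this isomorphism may be taken to be linear by Theorem \ref{isolin}. Since $\sd(T)$ depends only on the structural isomorphism class by Theorem \ref{unfact1}, this reduction is harmless; moreover connectedness and reducedness are preserved.

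Next, by the Factorization Theorem (Corollary \ref{factthm}) together with Theorem \ref{thm13}, $T$ admits a product basis $\BB$. Applying Theorem \ref{qthm1}, the family $\BB' := \sqcup_{i=0}^{(n/m)-1} \beta^i(\BB_{[0,m)})$ is itself a product basis of $T$, where $\beta = \sigma^m$ and $\BB_{[0,m)}$ consists of those $\la \in \BB$ whose span starts in $[0,m)$. By Theorem \ref{trespan} together with part $4$ of Observation \ref{spansub1}, the span distribution $\sd(T)$ coincides with the multiset $\{\{[\la] : \la \in \BB'\}\}$.

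Now the crucial computation: for any cycle $\la$ with $[\la] = (a,l)$, the shift $\sigma^m(\la)$ has minimum span $(a - m \bmod n,\, l)$; hence the $\sigma^m$-orbit of $(a,l)$ is $\{(a - im \bmod n,\, l) : 0 \leq i < n/m\}$, which has cardinality exactly $n/m$ since $m \mid n$. Therefore each $\la \in \BB_{[0,m)}$ contributes, via $\beta^0(\la), \ldots, \beta^{(n/m)-1}(\la)$, a full $\sigma^m$-orbit of spans to $\sd(T)$. Because every $\sigma^m$-orbit in $\ZZ_n$ meets the interval $[0,m)$ in exactly one point, distinct elements of $\BB_{[0,m)}$ either produce identical orbits (counted with the correct multiplicity) or disjoint ones, so $\sd(T)$ is naturally partitioned into $\sigma^m$-orbits, each of order $n/m$.

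The only real obstacle is bookkeeping: one must carefully track multiplicities to ensure that repeated spans within $\BB_{[0,m)}$ do not fragment a single orbit of the action. This is handled by the single-intersection-point observation above, which forces any two orbits produced by the construction to either coincide or be disjoint, giving the claimed decomposition.
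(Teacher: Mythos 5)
Your proposal is correct and follows essentially the same route as the paper: the corollary is obtained by combining Theorem \ref{qthm1} with the Factorization Theorem and the dictionary between product bases and elementary factorizations (Theorem \ref{thm13}, Observation \ref{spansub1}), so that the quasi-cyclic product basis $\sqcup_{i}\beta^{i}(\BB_{[0,m)})$ forces $\sd(T)$ to be a union of $\sigma^{m}$-orbits of size $n/m$. Your explicit orbit/multiplicity bookkeeping (each orbit meeting $[0,m)$ in exactly one starting point) is just a spelled-out version of what the paper leaves implicit.
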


\begin{cor}\label{qcor2}
Let $\otimes_{i=1}^{r} T_{i}$ be a product of connected elementary trellises of length $n$. If $\otimes_{i=1}^{r} T_{i}$ is $m$-quasi-cyclic then 
 $$\otimes_{i=1}^{r} T_{i}\sim\otimes_{j=0}^{(n/m)-1}(\beta^{j}(T_{i_{1}})\otimes\ldots\otimes\beta^{j}(T_{i_{s}}))$$
  where  $T_{i_{1}},\ldots,T_{i_{s}}$ are those elementary trellises whose span starting point $a$ satisfies $0\leq a<m$ 
   and $\beta=\sigma^{m}$.  
\end{cor}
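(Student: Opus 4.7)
The plan is to combine Theorem \ref{qthm1}, Observation \ref{obs11}, and Theorem \ref{thm13} in order to translate the given factorization $T := \otimes_{i=1}^{r} T_{i}$ into a product basis of $\SSS(T)$, invoke the quasi-cyclic product-basis theorem, and then translate back into elementary factors. First, since the hypothesis asks $T$ to be $m$-quasi-cyclic, there is a linear isomorphism $\phi: T \to T'$ with $T'$ satisfying the literal periodicity $V_{i}(T') = V_{i+m}(T')$ and $E_{i}(T') = E_{i+m}(T')$ for all $i$. Each $T_{i}$ is elementary hence reduced, so $T$ (and hence $T'$) is reduced; each $T_i$ is connected by hypothesis, and Appendix \ref{trellisconnect} (together with an induction on $r$) yields that $T$ is connected as well, so the full hypothesis of Theorem \ref{qthm1} is in place for $T'$.

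Next, let $\la^{i}$ generate $\SSS(T_{i}) \hookrightarrow \SSS(T)$. By Observation \ref{obs11}, $\BB := \{\la^{1},\ldots,\la^{r}\}$ is a product basis of $T$ with $[\la^{i}] = (a_{i},l_{i})$. Pushing forward via $\phi$ gives a product basis $\BB' := \SSS(\phi)(\BB)$ of $T'$, and because $\SSS(\phi)$ is an edge-label-preserving linear isomorphism respecting the span subcode filtration, each $\SSS(\phi)(\la^{i})$ still has edge-label sequence $\al^{i}$ and minimum span $(a_{i},l_{i})$. Theorem \ref{qthm1} applied to $T'$ and $\BB'$ then produces a new product basis
\begin{equation*}
\BB'' \;=\; \bigsqcup_{j=0}^{(n/m)-1} \beta^{j}\bigl(\BB'_{[0,m)}\bigr),
\end{equation*}
where $\BB'_{[0,m)} = \{\SSS(\phi)(\la^{i_{1}}),\ldots,\SSS(\phi)(\la^{i_{s}})\}$ collects exactly those elements whose span starting point lies in $[0,m)$.

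Finally, by Theorem \ref{thm13} the product basis $\BB''$ corresponds to an elementary trellis factorization of $T'$. A single basis element $\beta^{j}(\SSS(\phi)(\la^{i_{k}}))$ is a cycle whose edge-label sequence is $\sigma^{jm}(\al^{i_{k}})$ and whose minimum span is $\sigma^{jm}((a_{i_{k}},l_{i_{k}}))$, so by the very definition of elementary trellis the associated factor is precisely $\sigma^{jm}(\al^{i_{k}})\mid\sigma^{jm}((a_{i_{k}},l_{i_{k}})) = \sigma^{jm}(T_{i_{k}}) = \beta^{j}(T_{i_{k}})$. Collecting these factors yields
\begin{equation*}
T \;\simeq\; T' \;\simeq\; \bigotimes_{j=0}^{(n/m)-1}\bigl(\beta^{j}(T_{i_{1}}) \otimes \cdots \otimes \beta^{j}(T_{i_{s}})\bigr),
\end{equation*}
which implies the claim since $\simeq$ implies $\sim$.

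The main obstacle will be justifying that the product $T = \otimes_{i=1}^{r} T_{i}$ of connected elementary trellises is itself connected, since this is required to apply Theorem \ref{qthm1} and connectivity is not among the basic product properties listed in Subsection 2.6; this is where the machinery of Appendix \ref{trellisconnect} is needed. A secondary, more clerical, difficulty is to keep the two uses of $\sigma$---acting on cycles and on trellises---consistent, so that the shifted basis element $\beta^{j}(\SSS(\phi)(\la^{i_{k}}))$ is indeed identified (via the dictionary of Theorem \ref{thm13}) with the shifted elementary factor $\beta^{j}(T_{i_{k}})$ as a trellis.
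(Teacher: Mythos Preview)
Your proof is correct and follows essentially the same route the paper indicates: obtain a product basis from the given elementary factorization via Observation~\ref{obs11}, apply Theorem~\ref{qthm1} to produce a $\beta$-periodic product basis, and then translate back through Theorem~\ref{thm13}. Your explicit passage to a literally periodic $T'$ and your flagging of the connectivity of $\otimes_{i} T_i$ (which is needed for Theorem~\ref{qthm1} and is not among the listed product properties) are both appropriate refinements of what the paper leaves implicit.
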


The above tells us that the above type of trellises admit a  product basis/elementary trellis factorization with a quasi-cyclic structure.

We can also deduce the following theorem which we will use in the next subsection to extend results for reduced trellises to nonreduced ones. This theorem tells us that if two aforementioned trellises  are isomorphic then they admit an isomorphism with a quasi-cyclic structure too. 

\begin{thm}\label{qthm2}
Let $T$ and $T'$ be two connected, reduced, linear trellises of same length $n$ and satisfying $V_{i}(T)=V_{i+m}(T)$, $V_{i}(T')=V_{i+m}(T')$, $E_{i}(T)=E_{i+m}(T)$,  and $E_{i}(T')=E_{i+m}(T')$ for all $i\in\ZZ_{n}$, and some $m$ dividing $n$. Assume that  $T$ and $T'$ are (linearly) isomorphic. Then there exists a linear isomorphism $f:T\rightarrow T'$ such that 
 $$\beta\circ \SSS(f)= \SSS(f)\circ\beta$$ for the  shift operator $\beta=\sigma^{m}$, i.e. $f_{i}=f_{i+m}$ for all $i$. 
\end{thm}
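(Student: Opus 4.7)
The plan is to adapt the inductive construction used in the proof of the Isomorphy Theorem \ref{thm15} to a $\beta$-equivariant setting, where $\beta := \sigma^m$ acts on both $\SSS(T)$ and $\SSS(T')$ as a linear automorphism (well-defined thanks to $V_i = V_{i+m}$ and $E_i = E_{i+m}$). The easy facts to record up front are that $\beta$ permutes the family of span subcodes via $\beta(\SSS_{(a,l)}(\cdot)) = \SSS_{(a-m, l)}(\cdot)$ and intertwines with the edge-label map via $L \circ \beta = \sigma^m \circ L$. Since $T \simeq T'$, Theorem \ref{thm15} gives $\dim \SSS_{(a,l)}(T) = \dim \SSS_{(a,l)}(T')$ and $C_{(a,l)}(T) = C_{(a,l)}(T')$ for every span, and Theorem \ref{qthm1} lets me fix a quasi-cyclic product basis $\BB$ of $T$ to use in the construction below.

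I will build, by induction on $l = -1, 0, \ldots, n$, linear isomorphisms $F_{(a,l)} : \SSS_{(a,l)}(T) \to \SSS_{(a,l)}(T')$ satisfying: (i) edge-label preservation, $L \circ F_{(a,l)} = L$; (ii) containment compatibility, $F_{(a,l)}|_{\SSS_{(a',l')}(T)} = F_{(a',l')}$ for $(a', l') \le (a, l)$; and (iii) equivariance, $\beta \circ F_{(a,l)} = F_{(a-m, l)} \circ \beta$. For each length $l$, I partition the spans of length $l$ into $\beta$-orbits (each of size $n/m$ since $m$ divides $n$, with a unique representative $(a, l)$ satisfying $0 \le a < m$). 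At each orbit representative I run the procedure of Theorem \ref{thm15} verbatim (gluing the lower-length maps into $F_{<(a,l)}$ via $\BB$, then applying Lemma \ref{lem12} to extend to $F_{(a,l)}$); at the remaining orbit members I simply set $F_{(a-jm, l)} := \beta^j \circ F_{(a,l)} \circ \beta^{-j}$ for $j = 1, \ldots, n/m - 1$.

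Property (i) propagates because the two shifts cancel: $L(\beta^j F_{(a,l)} \beta^{-j}(\la)) = \sigma^{jm}(L(F_{(a,l)} \beta^{-j}(\la))) = \sigma^{jm}(L(\beta^{-j}(\la))) = \sigma^{jm}(\sigma^{-jm}(L(\la))) = L(\la)$, and property (iii) is immediate from the definition together with $\beta^{n/m} = \mathrm{id}$, which handles the wrap-around within an orbit. The main obstacle is property (ii) for the propagated maps: I must show that $F_{(a-jm, l)}$ agrees with the previously constructed $F_{(a', l')}$ whenever $(a', l') < (a - jm, l)$. The key observation is that $(a' + jm, l') < (a, l)$, so property (ii) at the representative gives $F_{(a,l)}|_{\SSS_{(a'+jm, l')}(T)} = F_{(a'+jm, l')}$, while the inductive hypothesis gives $\beta^j \circ F_{(a'+jm, l')} \circ \beta^{-j} = F_{(a', l')}$; combining the two yields $F_{(a-jm, l)}|_{\SSS_{(a', l')}(T)} = F_{(a', l')}$ as needed.

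Finally, $F := F_{\ZZ_n}$ is a linear isomorphism $\SSS(T) \to \SSS(T')$ preserving span subcodes and edge labels and satisfying $\beta \circ F = F \circ \beta$, so by Observation \ref{trellmappap} there is a linear isomorphism $f : T \to T'$ with $\SSS(f) = F$. Unpacking $\sigma^m \circ \SSS(f) = \SSS(f) \circ \sigma^m$ coordinate by coordinate yields $f_{i+m}(v) = f_i(v)$ for every vertex $v \in V_{i+m}(T) = V_i(T)$ that appears on some cycle; reducedness of $T$ ensures this is all of $V_i(T)$, completing the proof.
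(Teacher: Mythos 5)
Your overall strategy is workable and genuinely different from the paper's: the paper does not re-run the inductive construction of Theorem \ref{thm15} at all, but instead takes the given isomorphism $g$, a product basis $\BB$ of $T$, passes to the $\beta$-invariant product bases $\widetilde{\BB}=\sqcup_{i}\beta^{i}(\BB_{[0,m)})$ and $\widetilde{\BB'}$ supplied by Theorem \ref{qthm1}, and defines $F(\beta^{i}(\la)):=\beta^{i}(\SSS(g)(\la))$ on $\widetilde{\BB}$, so that $\beta\circ F=F\circ\beta$ holds by construction and only label- and span-preservation need checking. Your equivariant redo of the Theorem \ref{thm15} induction is heavier but sound in its main technical point: the verification of compatibility (ii) for the conjugated maps, using that $(a',l')<(a-jm,l)$ forces $l'<l$, is correct, as is the label computation (i).

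The genuine hole is the equivariance of the final map. Your justification of (iii) --- ``immediate from the definition together with $\beta^{n/m}=\mathrm{id}$'' --- only covers the free orbits of nondegenerate spans (size $n/m$, and only when $m<n$; for $m=n$ the theorem is vacuous). The span $\ZZ_{n}$ at the last step $l=n$ is a $\beta$-fixed point, its orbit is a singleton, and its map is produced by the ``verbatim'' procedure, which involves choices; so $\beta\circ F_{\ZZ_{n}}=F_{\ZZ_{n}}\circ\beta$ --- which is precisely the assertion of the theorem --- is claimed, not derived. The repair is short but must be made explicit. Since $T$ is connected, $\SSS(T)=\SSS_{n-1}(T)=\SSS_{<\ZZ_{n}}(T)$ (as in the proof of Theorem \ref{qthm1}), so at $l=n$ no new Lemma \ref{lem12} extension occurs and $F_{\ZZ_{n}}$ is just the glue $\la\mapsto F_{[\la]}(\la)$ over the product basis $\BB$ (minimum spans exist by Observation \ref{spansub1}). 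Equivariance of this glue then follows on basis elements because the glue extends every $F_{(a',l')}$ with $(a',l')<\ZZ_{n}$, because $[\beta(\la)]=\sigma^{m}([\la])$, and by inductive (iii): $F_{\ZZ_{n}}(\beta(\la))=F_{[\beta(\la)]}(\beta(\la))=\beta(F_{[\la]}(\la))=\beta(F_{\ZZ_{n}}(\la))$; linearity finishes it. (Alternatively, one can use the $\beta$-invariance of the quasi-cyclic basis you fixed from Theorem \ref{qthm1} --- which, as written, you never actually exploit.) With that addition your proof is complete; the concluding step, where reducedness guarantees every vertex lies on a cycle so that $f_{i}=f_{i+m}$, is fine.
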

\begin{proof}
Let $\BB$  be a product basis  of $T$ and  $g:T\rightarrow T'$ a linear isomorphism. The induced isomorphism $\SSS(g):\SSS(T)\rightarrow\SSS(T')$ sends $\BB$ to a product basis $\BB'$ of $T'$ and satisfies $\SSS(g)(\BB_{[0,m)})=\BB'_{[0,m)}$ (we are using the notation from Theorem \ref{qthm1}). 
By  Theorem \ref{qthm1} $$\widetilde{\BB}:=\sqcup_{i=0}^{(n/m)-1}\beta^{i}(\BB_{[0,m)}), \ \widetilde{\BB'}:=\sqcup_{i=0}^{(n/m)-1}\beta^{i}(\BB'_{[0,m)})$$ are product bases respectively of $T, T'$. So we can define a linear isomorphism $F:\SSS(T)\rightarrow\SSS(T')$ by the formula $$F(\beta^{i}(\la)):=\beta^{i}(\SSS(g)(\la))$$ for all $i$ and all $\la\in\BB_{[0,m)}$. 
By construction $F(\widetilde{\BB})=\widetilde{\BB'}$ and $\beta\circ F=F\circ\beta$.

 Now, since $L\circ\SSS(g)=L$, i.e. $\SSS(g)$ preserves edge-labels, it follows  that 
 \begin{gather*}L(F(\beta^{i}(\la)))=L(\beta^{i}(\SSS(g)(\la)))=\beta^{i}(L(\SSS(g)(\la)))=\\=\beta^{i}(L(\la))=L(\beta^{i}(\la))
\end{gather*}
  for all $i$ and all $\la\in\BB_{[0,m)}$, and so $F$ preserves edge-labels too. Also, being $g$ a linear isomorphism we have  $[\SSS(g)(\la)]=[\la]$ for all $\la\in\BB$, and so 
  \begin{gather*}[F(\beta^{i}(\la))]=[\beta^{i}(\SSS(g)(\la))]=\beta^{i}([\SSS(g)(\la)])=\\=\beta^{i}([\la])=[\beta^{i}(\la)]
  \end{gather*}
   for all $i$ and all $\la\in\BB_{[0,m)}$, i.e. $[F(\widetilde{\la})]=[\widetilde{\la}]$ for all $\widetilde{\la}\in\widetilde{\BB}$, from which also follows that $$[F^{-1}(\widetilde{\la})]=[F(F^{-1}(\widetilde{\la}))]=[\widetilde{\la}]$$ for all $\widetilde{\la}\in\widetilde{\BB'}$. Thus $F(\sal(T))=\sal(T')$ for all $(a,l)$. We conclude that $F=\SSS(f)$ for some linear isomorphism $f:T\rightarrow T'$ (see Subsection \ref{labcodemapspap}). Finally, $\beta\circ F=F\circ\beta$
 means that $\beta\circ \SSS(f)= \SSS(f)\circ\beta$.
\end{proof}

\subsection{Extending results to nonreduced trellises}

We start with proving an important property which makes it possible to extend results for reduced linear trellises to nonreduced linear trellises, as we shall see.

\begin{thm}\label{covred}
Let $T$ be a linear trellis. Then there exists $i\geq1$ such that $T^{i}$ is reduced. \end{thm}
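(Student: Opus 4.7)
The plan is to show that every edge of $T$ lies on some multicycle of $T$, i.e., on a cycle of $T^{i_e}$ for some positive integer $i_e$. Because concatenating an $i_e$-cycle with itself $k$ times yields a $(k i_e)$-cycle through the same edge, a single $i$ will work for all edges once we take any common multiple of the finitely many $\{i_e\}$; trimness (inherited by $T^i$ from $T$) then ensures every vertex of $T^i$ is an endpoint of some such edge and hence also lies on a cycle. So the task reduces to producing a finite $i_e$ edgewise.

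The algebraic device I plan to use is the subspace $R \subseteq V_0(T) \times V_0(T)$ of pairs $(u,u')$ such that $T$ admits a walk of length $n$ from $u$ to $u'$. Linearity of $T$ makes $R$ a linear subspace; trimness forces both projections of $R$ to equal $V_0(T)$; and the zero cycle (always in $\SSS(T)$) gives $(0,0) \in R$. Writing $R^{(k)}$ for the $k$-fold composition (walks of length $kn$) and setting $R_k(0) := \{u : (0,u) \in R^{(k)}\}$, prepending zero self-loops shows $R_k(0) \subseteq R_{k+1}(0)$, so these subspaces stabilize at $L := R_{k_0}(0)$ for some $k_0$.

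The crucial step will be to push $R$ down to a well-defined linear bijection $\bar R : V_0(T)/L \to V_0(T)/L$. Well-definedness uses linearity of $R$ together with $R_1(0) \subseteq L$ (to check that $\overline{R(u)}$ is a single coset depending only on $\bar u$); surjectivity uses the onto property of $\pi_2(R)$ coming from trimness; bijectivity then follows from finiteness of $V_0(T)/L$. As a linear automorphism of a finite $\FF$-space, $\bar R$ has some finite order $m$, so $\bar R^i = \mathrm{id}$ whenever $m \mid i$. A useful consequence is that, for such $i$, the coset $R_{i-1}(u_1)$ stabilizes to a single coset of $L$ whose representative is determined by $\bar R^i(\bar u) = \bar u$.

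To conclude, fix any edge $e = v \alpha w$ at time $j$. By trimness, pick $u, u_1 \in V_0(T)$ with walks $u \to v$ of length $j$ and $w \to u_1$ of length $n - j - 1$; gluing via $e$ yields $(u, u_1) \in R$ and thus $\bar u_1 = \bar R(\bar u)$. Choose $i$ to be any multiple of $m$ with $i - 1 \geq k_0$; then $\bar R^{i-1}(\bar u_1) = \bar R^i(\bar u) = \bar u$ and $R_{i-1}(0) = L$, which together force $R_{i-1}(u_1) = u + L \ni u$. This supplies a walk in $T$ from $u_1$ back to $u$ of length $(i-1)n$, which combined with the earlier pieces forms a cycle of $T^i$ through $e$. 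Since $m$ and $k_0$ depend only on $T$ and not on $e$, the same $i$ serves every edge and $T^i$ is reduced. The principal technical hurdle will be the clean verification that $\bar R$ is a genuine linear bijection on $V_0(T)/L$; once that is in hand the rest is a bookkeeping argument interlocking linearity, trimness, and the stabilization of $L$.
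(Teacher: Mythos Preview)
Your argument is correct and is a genuinely different (and more self-contained) route than the paper's. The paper simply invokes Corollary~\ref{cor44} from Appendix~\ref{trellisconnect} (every edge of a linear trellis lies on some closed path, which is itself proved by an ad hoc path-splicing argument using linearity and trimness), and then takes a common multiple exactly as you outline in your first paragraph. You instead rebuild that fact from scratch by encoding the ``length-$n$ reachability'' relation as a linear subspace $R\leq V_0(T)\times V_0(T)$, stabilizing the forward orbit of $0$ to a subspace $L$, and observing that $R$ descends to an honest linear automorphism $\bar R$ of the finite space $V_0(T)/L$, whose finite order then closes every walk into a multicycle. The advantage of your approach is that it is explicit and quantitative: it produces a single $i$ (any multiple of the order $m$ of $\bar R$ exceeding $k_0$) that works uniformly for every edge, and it isolates exactly where linearity, trimness, and finiteness each enter. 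The paper's approach is shorter only because it defers the real work to the appendix; your proof essentially re-proves Corollary~\ref{cor44} in a cleaner structural way. One small point you leave implicit (as does the paper): an $i$-cycle of $T$ through an edge $e$ at step $j$ gives a cycle of $T^i$ through $e\in E_j(T^i)$, but the same edge $e$ also sits in $E_{j+rn}(T^i)$ for $r=1,\dots,i-1$; cyclically shifting the $i$-cycle by $rn$ (which still starts in $V_0(T)$) covers those copies, so $T^i$ is indeed reduced.
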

\begin{proof}
By Corollary \ref{cor44} we know that for each edge $\bm{e}$ of $T$ there exists $i\geq1$ such that $\bm{e}$ belongs to an $i$-cycle of $T$. Also, clearly, if $\bm{e}$ belongs to an $i$-cycle, then it belongs to an $ir$-cycle for all $r\geq1$. Thus, by taking a  common multiple, there exists $m\geq1$ such that each edge of $T$ belongs to some $m$-cycle. But then it is clear that $T^{m}$ is reduced (since its diagram is just $m$ concatenated copies of the diagram of $T$). 
\end{proof}

We can apply the above result to prove that connected linear trellises are determined by their covers. First an important lemma:

\begin{lem}\label{lem2pap} If $T$ is a connected linear trellis and $i\geq1$ then $T^{i}$ is also connected.
\end{lem}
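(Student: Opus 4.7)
The plan is to show, for any two vertices $v, w$ of $T^{i}$, that there is a directed path from $v$ to $w$ in $T^{i}$; the strategy is to route through the all-zero cycle of $T^{i}$, which serves as a ``backbone'' that meets every time index and lets us stitch arbitrary vertices together.

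First I would record two elementary structural facts about $T^{i}$. Since $V_{j}(T^{i}) = V_{j \bmod n}(T)$ and $E_{j}(T^{i}) = E_{j \bmod n}(T)$ for every $j \in \ZZ_{in}$, the reduction $\ZZ_{in} \to \ZZ_{n}$ induces a trellis covering $T^{i} \to T$ that is the identity on the underlying vertex and edge data. Consequently, any directed path in $T$ of length $L$ lifts canonically to a directed path of the same length $L$ in $T^{i}$ starting at any preassigned time $j \in \ZZ_{in}$ compatible modulo $n$ with the original starting time (the lift simply wraps around $T^{i}$ as many times as $L$ dictates). Linearity of $T^{i}$ then furnishes the backbone: the all-zero sequence is a directed cycle of length $in$ in $T^{i}$ passing through the zero vertex $0 \in V_{j}(T^{i})$ for every $j \in \ZZ_{in}$, so all these zero vertices are pairwise directed-reachable from one another along it.

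Now, given any $v \in V_{j_{1}}(T^{i})$, connectedness of $T$ produces a directed path in $T$ from the underlying vertex $v \in V_{j_{1} \bmod n}(T)$ to $0 \in V_{0}(T)$; its length $L$ satisfies $L \equiv -j_{1} \pmod n$, and lifting it to $T^{i}$ starting at $v$ yields a directed path from $v$ to the zero vertex at time $(j_{1} + L) \bmod in$ in $T^{i}$, which lies on the zero backbone. Dually, for any target $w \in V_{j_{2}}(T^{i})$, connectedness of $T$ supplies a directed path in $T$ from $0 \in V_{0}(T)$ to $w$, which lifts to a directed path in $T^{i}$ from some zero vertex to $w$. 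Splicing these two lifted paths with a suitable arc of the zero backbone produces a directed path from $v$ to $w$ in $T^{i}$, which is exactly what the lemma asserts.

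I do not foresee a genuine obstacle: the whole argument is driven by linearity (existence of the zero cycle) and by the covering structure (canonical path lifting), with the only real care needed being the bookkeeping of time indices modulo $in$ so that the three spliced pieces meet correctly. In particular, no appeal to the equivalence of directed and undirected connectedness from Appendix~\ref{trellisconnect} is required here.
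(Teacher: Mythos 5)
Your proof is correct and follows essentially the same route as the paper: lift paths of $T$ into the cover $T^{i}$ (possible since $V_{j}(T^{i})=V_{j\bmod n}(T)$ and $E_{j}(T^{i})=E_{j\bmod n}(T)$) and use the all-zero vertices, joined by the zero multicycle, as a connecting backbone. The one difference is completeness: the paper only exhibits, for each vertex, a lifted path \emph{into} the zero backbone and then concludes, which tacitly rests on the equivalence of directed and undirected connectedness for the linear trellis $T^{i}$ (Corollary \ref{cor43}) or on repeating the argument with reversed edges, whereas by also lifting a path from a zero vertex \emph{out} to the target $w$ and splicing along the backbone you obtain directed connectedness outright, so your closing remark that Appendix \ref{trellisconnect} is not needed is accurate for your version (though not for the paper's shorter one).
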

\begin{proof}
Let $v\in V_{j}(T^{i})= V_{j}(T)$. Then there exists a path $\bm{p}$ in $T$ from $v$ to $0\in V_{0}(T)$. Clearly $\bm{p}$ yields a path in $T^{i}$ from $v\in V_{j}(T^{i})$ to $0\in V_{rn}(T^{i})$ for some $r\geq0$. But all the zero vertices are connected, thus $T^{i}$ is connected. 
\end{proof}

\begin{thm}\label{isthmcov}
Let $T$ and $T'$ be connected linear trellises, and let $i>1$. Then $$T\simeq T' \Longleftrightarrow T^{i}\simeq(T')^{i}$$ 
\end{thm}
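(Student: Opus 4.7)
The forward implication is routine: given a linear isomorphism $f:T\rightarrow T'$ with components $f_j: V_j(T)\rightarrow V_j(T')$ for $j\in \ZZ_n$, the maps $f^{(i)}_j:=f_{j\!\!\mod n}$ for $j\in\ZZ_{in}$ assemble into a linear isomorphism $T^i\rightarrow(T')^i$, since by definition of the $i$-cover the vertex sets and edge sets of $T^i$ and $(T')^i$ repeat with period $n$. So my focus will be on the nontrivial direction $(\Leftarrow)$.

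The plan is to reduce the problem to the reduced quasi-cyclic case so that Theorem \ref{qthm2} applies. First, using Theorem \ref{covred}, pick $r_T, r_{T'}\geq1$ such that $T^{r_T}$ and $(T')^{r_{T'}}$ are reduced, and set $r:=r_T r_{T'}$; then $T^{ir}$ and $(T')^{ir}$ are both reduced (a cycle of $T^{r_T}$ traversed $ir/r_T$ times gives a cycle of $T^{ir}$ through the corresponding edge, and similarly for $T'$). By Lemma \ref{lem2pap} they are also connected. Second, a linear isomorphism $h:T^i\rightarrow(T')^i$ induces a linear isomorphism $h^{(r)}:T^{ir}\rightarrow(T')^{ir}$ by concatenating $r$ copies of $h$, i.e.\ $h^{(r)}_j:=h_{j\!\!\mod in}$, since both sides have period $in$.

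Now $T^{ir}$ and $(T')^{ir}$ are both connected, reduced, linearly isomorphic linear trellises of length $irn$ satisfying $V_j(\cdot)=V_{j+n}(\cdot)$ and $E_j(\cdot)=E_{j+n}(\cdot)$ for all $j$. Applying Theorem \ref{qthm2} with shift $m=n$ yields a linear isomorphism $g:T^{ir}\rightarrow(T')^{ir}$ whose components satisfy $g_j=g_{j+n}$ for all $j\in\ZZ_{irn}$. Define $f_j:=g_j:V_j(T)\rightarrow V_j(T')$ for $j\in\ZZ_n$, using the identifications $V_j(T^{ir})=V_j(T)$ and $V_j((T')^{ir})=V_j(T')$. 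For $j\in\ZZ_n$ and $v\alpha w\in V_j(T)\times\FF\times V_{j+1}(T)$, the edge $v\alpha w$ lies in $E_j(T)$ iff it lies in $E_j(T^{ir})$ iff $g_j(v)\alpha g_{j+1}(w)\in E_j((T')^{ir})$ iff $f_j(v)\alpha f_{j+1}(w)\in E_j(T')$; the only potentially delicate case is $j=n-1$, where one needs $g_n=g_0$, which holds precisely because of the quasi-cyclic equality $g_j=g_{j+n}$. Hence $f$ is a well-defined linear isomorphism $T\rightarrow T'$.

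The main obstacle is ensuring that the isomorphism we produce at the cover level really descends, and this is exactly what the quasi-cyclic refinement of Theorem \ref{qthm2} buys us: an arbitrary isomorphism of $T^{ir}$ with $(T')^{ir}$ need not be periodic, but Theorem \ref{qthm2} allows us to replace it by one that is. The rest is just bookkeeping about covers and identifications, together with the mild observation that sufficiently high covers of our trellises are reduced so that the hypotheses of Theorem \ref{qthm2} are met.
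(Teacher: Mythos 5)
Your proof is correct and follows essentially the same route as the paper's: pass to a common cover $T^{ir}$, $(T')^{ir}$ that is reduced (Theorem \ref{covred}) and connected (Lemma \ref{lem2pap}), note it inherits a linear isomorphism, and invoke Theorem \ref{qthm2} with $m=n$ to get an $n$-periodic isomorphism. The only difference is presentational: you spell out the descent of the periodic isomorphism back to $T\simeq T'$, a step the paper leaves implicit in ``then by Theorem \ref{qthm2} we are done.''
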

\begin{proof}
Clearly $T\simeq T'\Rightarrow T^{i}\simeq(T')^{i}$.
 By the previous lemma and theorem there exist $s,s'\geq1$ such that $T^{s}$ and $(T')^{s'}$ are both reduced and connected. So, if $T^{i}\simeq(T')^{i}$  then $T^{iss'}$ and $ (T')^{iss'}$ are linearly isomorphic, connected and reduced (if a trellis $T$ is reduced then $T^{j}$ is clearly reduced too for any $j\geq1$).  Then by Theorem \ref{qthm2}  we are done.
\end{proof}

Note that linearity is necessary both for Theorem \ref{covred} and Lemma \ref{lem2pap} as one can check by easy examples. Also, the connectedness hypothesis is necessary for Theorem \ref{isthmcov}. We depict below nonisomorphic trellises $T$ and $T'$ ($T$ has $3$ connected components while $T'$ has $4$) such that $T^{2}\simeq (T')^{2}$:
     \begin{center}
 \begin{tikzpicture}[yscale=.8,>=latex',shorten >=.9pt, shorten <=1.4pt, line width=.6pt]
  \tikzstyle{every node}=[draw,circle,fill=black,minimum size=2pt,
                        inner sep=0pt]                    
\foreach \x in {0,1,2}{
\node at (\x,0) {};};
\foreach \x in {0,1,2}{
\node at (\x,1) {};};
\foreach \x in {0,1,2}{
\node at (\x,2) {};};
\foreach \x in {0,1,2}{
\node at (\x,3) {};};

\tikzstyle{every node}=[]
\draw [->,dashed] (0,0) -- (1,0);
\draw [->,dashed] (1,0) -- (2,0);

\draw [->,dashed] (0,3) -- (1,3);
\draw [->,dashed] (1,3) -- (2,3);

\draw [->,dashed] (0,1) -- (1,2);
\draw [->,dashed] (1,2) -- (2,2);

\draw [->,dashed] (0,2) -- (1,1);
\draw [->,dashed] (1,1) -- (2,1);

\foreach \x in {0,1,2}{
\draw (\x,0) node [below] {\tiny ${\uu{00}}$};
\draw (\x,1) node [below] {\tiny ${\uu{01}}$};
\draw (\x,2) node [above] {\tiny ${\uu{10}}$};
\draw (\x,3) node [above] {\tiny ${\uu{11}}$};
};

\draw (-1,1.5) node {$T=$};
\begin{scope}
[xshift=4cm]
  \tikzstyle{every node}=[draw,circle,fill=black,minimum size=2pt,
                        inner sep=0pt]                    
\foreach \x in {0,1,2}{
\node at (\x,0) {};};
\foreach \x in {0,1,2}{
\node at (\x,1) {};};
\foreach \x in {0,1,2}{
\node at (\x,2) {};};
\foreach \x in {0,1,2}{
\node at (\x,3) {};};

\tikzstyle{every node}=[]
\draw [->,dashed] (0,0) -- (1,0);
\draw [->,dashed] (1,0) -- (2,0);

\draw [->,dashed] (0,3) -- (1,3);
\draw [->,dashed] (1,3) -- (2,3);

\draw [->,dashed] (0,1) -- (1,1);
\draw [->,dashed] (1,2) -- (2,2);

\draw [->,dashed] (0,2) -- (1,2);
\draw [->,dashed] (1,1) -- (2,1);

\foreach \x in {0,1,2}{
\draw (\x,0) node [below] {\tiny ${\uu{00}}$};
\draw (\x,1) node [below] {\tiny ${\uu{01}}$};
\draw (\x,2) node [above] {\tiny ${\uu{10}}$};
\draw (\x,3) node [above] {\tiny ${\uu{11}}$};
};

\draw (-1,1.5) node {$\not\simeq $};
\draw (3,1.5) node {$=T'$};
\end{scope}
      \end{tikzpicture}
    \end{center}
\begin{center}
 \begin{tikzpicture}[yscale=.8,>=latex',shorten >=.9pt, shorten <=1.4pt, line width=.6pt]
  \tikzstyle{every node}=[draw,circle,fill=black,minimum size=2pt,
                        inner sep=0pt]                    
\foreach \x in {0,1,2,3,4}{
\node at (\x,0) {};};
\foreach \x in {0,1,2,3,4}{
\node at (\x,1) {};};
\foreach \x in {0,1,2,3,4}{
\node at (\x,2) {};};
\foreach \x in {0,1,2,3,4}{
\node at (\x,3) {};};

\tikzstyle{every node}=[]
\draw [->,dashed] (0,0) -- (1,0);
\draw [->,dashed] (1,0) -- (2,0);
\draw [->,dashed] (2,0) -- (3,0);
\draw [->,dashed] (3,0) -- (4,0);

\draw [->,dashed] (0,3) -- (1,3);
\draw [->,dashed] (1,3) -- (2,3);
\draw [->,dashed] (2,3) -- (3,3);
\draw [->,dashed] (3,3) -- (4,3);

\draw [->,dashed] (0,1) -- (1,2);
\draw [->,dashed] (1,2) -- (2,2);
\draw [->,dashed] (2,2) -- (3,1);
\draw [->,dashed] (3,1) -- (4,1);

\draw [->,dashed] (0,2) -- (1,1);
\draw [->,dashed] (1,1) -- (2,1);
\draw [->,dashed] (2,1) -- (3,2);
\draw [->,dashed] (3,2) -- (4,2);

\foreach \x in {0,1,2,3,4}{
\draw (\x,0) node [below] {\tiny ${\uu{00}}$};
\draw (\x,1) node [below] {\tiny ${\uu{01}}$};
\draw (\x,2) node [above] {\tiny ${\uu{10}}$};
\draw (\x,3) node [above] {\tiny ${\uu{11}}$};
};

\draw (-1,1.5) node {$T^{2}=$};
\draw (5,1.5) node {$\simeq (T')^{2}$};
%
%
%
%
%
%
%
      \end{tikzpicture}
    \end{center}

 An immediate consequence of the above two theorems  
 is that we can extend the important Theorems \ref{isolin} and \ref{linstr} to the nonreduced case too: 

\begin{cor}\label{qcor4}
Two connected, nonreduced, linear trellises are isomorphic if and only they are linearly isomorphic. 
\end{cor}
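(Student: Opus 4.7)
The plan is to reduce the nonreduced case to the reduced case by passing to a sufficiently high trellis cover, where all the machinery of Section~\ref{algframpap} (in particular Theorem~\ref{isolin}) is already available, and then descend back using the cover-theoretic results of this subsection.

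First I would observe that the ``if'' direction is immediate, so only the ``only if'' direction requires work. So assume $T \sim T'$ via some (not necessarily linear) isomorphism $f : T \to T'$, with $T$ and $T'$ connected linear trellises of common length $n$. Since $T^{i}$ and $(T')^{i}$ have, by the very definition of the $i$-cover, the same vertex sets and edge sets as $T$ and $T'$ (only reindexed periodically along $\ZZ_{in}$), the maps $f_{j}$ naturally induce an isomorphism $f^{i} : T^{i} \to (T')^{i}$ for every $i \geq 1$; in particular $T^{i} \sim (T')^{i}$ for all $i$.

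Next I would invoke Theorem~\ref{covred} to find $s, s' \geq 1$ such that $T^{s}$ and $(T')^{s'}$ are reduced, and then set $i := s s'$. Since a cover of a reduced trellis is easily seen to be reduced, both $T^{i}$ and $(T')^{i}$ are reduced. By Lemma~\ref{lem2pap} they are also connected, and by the previous paragraph they are isomorphic. Now I am in the situation of Theorem~\ref{isolin}: two isomorphic reduced (connected) linear trellises must be linearly isomorphic. Hence $T^{i} \simeq (T')^{i}$.

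Finally, applying Theorem~\ref{isthmcov} (for connected linear trellises, $T \simeq T'$ if and only if $T^{i} \simeq (T')^{i}$ for some $i > 1$) I conclude $T \simeq T'$, as desired. The proof is essentially a routine assembly of Theorems~\ref{covred}, \ref{isthmcov} and \ref{isolin}, so there is no real obstacle; the only mild point to verify carefully is that a (nonlinear) isomorphism $f : T \to T'$ lifts tautologically to an isomorphism of covers, which follows at once from the definition of $T^{i}$ via the periodic identifications $V_{j}(T^{i}) = V_{j \bmod n}(T)$ and $E_{j}(T^{i}) = E_{j \bmod n}(T)$.
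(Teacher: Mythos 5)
Your proof is correct and follows the paper's intended argument: lift the isomorphism to a common cover made reduced via Theorem~\ref{covred} (and connected via Lemma~\ref{lem2pap}), apply Theorem~\ref{isolin} there, and descend with Theorem~\ref{isthmcov}. The only cosmetic point is that $i=ss'>1$ is automatic here since nonreduced $T,T'$ force $s,s'\geq2$, so the hypothesis $i>1$ of Theorem~\ref{isthmcov} is indeed met.
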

\begin{cor}\label{qcor5}
The linear structure of a connected, nonreduced, linear trellis is essentially unique (as in theorem \ref{linstr}). 
\end{cor}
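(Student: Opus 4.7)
The plan is to reduce Corollary \ref{qcor5} directly to Corollary \ref{qcor4}, mirroring how Theorem \ref{linstr} was deduced from Theorem \ref{isolin}. Given a trellis $T$ equipped with two vector-space structures $(+_1,\cdot_1)$ and $(+_2,\cdot_2)$ on each $V_i(T)$ that both make $T$ linear, let $T_1$ and $T_2$ denote the resulting linear trellises. By definition of trellis isomorphism (which only looks at the directed graph with edge labels, not at the vector-space structure on the vertex sets), the identity map on each $V_i(T)$ trivially yields a trellis isomorphism $T_1\simeq T_2$: the two objects share literally the same vertex sets, edge sets, and edge-labels.

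Now I simply invoke Corollary \ref{qcor4}. The hypotheses are met: $T_1$ and $T_2$ are linear trellises; they are connected (connectedness depends only on the underlying graph, which is common to both); and they may be assumed nonreduced --- in the reduced case the result is already Theorem \ref{linstr}. The corollary then promotes the trellis isomorphism provided by the identity map into a linear isomorphism $T_1 \simeq T_2$. This is exactly the conclusion of Corollary \ref{qcor5}.

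There is essentially no obstacle left: all of the substantive work --- covering by $T^i$ until the cover is reduced (Theorem \ref{covred}), preserving connectedness under covers (Lemma \ref{lem2pap}), the isomorphy-through-covers statement (Theorem \ref{isthmcov}), and the quasi-cyclic isomorphism extraction (Theorem \ref{qthm2}) --- has already been carried out and bundled into Corollary \ref{qcor4}. The one conceptual point worth stressing in the write-up is the same caveat noted after Theorem \ref{linstr}: the linear isomorphism $T_1 \simeq T_2$ provided here need not be the identity map on vertices, i.e.\ the two vector-space structures on $V_i(T)$ need not agree, but they are related by an $\FF$-linear bijection, which is exactly what ``essentially unique'' means.
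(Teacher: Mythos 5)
Your reduction is exactly the paper's intended argument: the two structures give linear trellises $T_{1},T_{2}$ on the same underlying labeled graph, the identity is a (possibly nonlinear) trellis isomorphism, and Corollary \ref{qcor4} (with Theorem \ref{linstr} covering the reduced case) upgrades it to a linear isomorphism. This matches the paper, which states Corollary \ref{qcor5} as an immediate consequence of Corollary \ref{qcor4} in the same way Theorem \ref{linstr} follows from Theorem \ref{isolin}, so nothing is missing.
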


We can also extend Theorem \ref{thm53} to nonreduced trellises, i.e. prove that the sequence $\{C^{i}(T)\}_{i=1}^{\infty}$ determines trellises (in the mentioned class) even in the nonreduced case. 
We will need the following theorem.

\begin{thm}\label{covred2} Let $T$ be a connected linear  trellis. If $T^{i}$  is  reduced then $T^{i+1}$ is reduced.  
\end{thm}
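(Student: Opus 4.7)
The plan is to exhibit, for each edge $\bm{e}$ of $T^{i+1}$, an $(i{+}1)$-cycle of $T$ passing through $\bm{e}$ at the correct time slot. The construction splices together two ingredients: an $i$-cycle through $\bm{e}$ (supplied by the hypothesis that $T^{i}$ is reduced) and a length-$n$ closed path at the starting vertex (supplied by connectedness of $T$).

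First, since $T$ is a connected linear trellis, Corollary~\ref{cor45} (used for exactly this purpose in the proof of Theorem~\ref{thm51}) gives $\mathbb{P}(v,v)\neq\emptyset$ for every $v\in V_{0}(T)$, i.e., there is a $1$-cycle of $T$ anchored at any chosen $v$; write $\bm{q}_{v}$ for such a $1$-cycle. Next, fix an edge $\bm{e}$ of $T^{i+1}$ at time index $j\in\ZZ_{(i+1)n}$ and write $j=qn+k$ with $0\le q\le i$ and $0\le k<n$, so that $\bm{e}\in E_{k}(T)$. Because $T^{i}$ is reduced, for any time slot $\equiv k\pmod n$ inside $\{0,1,\ldots,in-1\}$ we can pick an $i$-cycle $\bm{p}$ of $T$ through $\bm{e}$ at that slot; denote by $v$ its starting vertex.

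Now splice. If $q<i$, choose $\bm{p}$ so that $\bm{e}$ sits at time $qn+k$, and form $\bm{p}\bm{q}_{v}$: this is a closed path of length $(i{+}1)n$ starting and ending at $v$, and it contains $\bm{e}$ at time $qn+k=j$, so it is an $(i{+}1)$-cycle of $T^{i+1}$ through $\bm{e}$. If $q=i$, choose $\bm{p}$ so that $\bm{e}$ sits at time $(i-1)n+k$, and form $\bm{q}_{v}\bm{p}$: again a closed path of length $(i{+}1)n$ at $v$, now containing $\bm{e}$ at time $n+(i-1)n+k=in+k=j$. Either way $\bm{e}$ lies on a cycle of $T^{i+1}$, so $T^{i+1}$ is reduced.

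The only mild obstacle is the bookkeeping of time indices: the $i$-cycle supplies $i$ possible chunks in which to place $\bm{e}$, and the $1$-cycle can be concatenated on either side, giving exactly the $i+1$ positions needed to reach every $j\in\{0,\ldots,(i{+}1)n-1\}$ with $j\equiv k\pmod n$.
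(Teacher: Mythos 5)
Your splice is fine as bookkeeping, but its key ingredient does not exist under the stated hypotheses, so there is a genuine gap at the first step. Corollary \ref{cor45} requires $T$ to be \emph{almost reduced} in addition to connected and linear (in the proof of Theorem \ref{thm51}, which you cite, that hypothesis is explicitly present). Here you only know that $T$ is connected and that $T^{i}$ is reduced; this gives closed paths of length $in$ through every vertex and edge, but not closed paths of length $n$, i.e.\ it does not make $T$ almost reduced, and the claim $\mathbb{P}(v,v)\neq\emptyset$ for all $v\in V_{0}(T)$ is false in general. Concretely, the trellis of Example \ref{ex41} is connected and linear, and by Theorem \ref{covred} some cover $T^{m}$ is reduced (every edge lies on a closed path of length $2$, $4$ or $6$, so $m=6$ works); yet $\mathbb{P}(\uu{01},\uu{01})=\emptyset$, since the only edge out of $\uu{01}\in V_{0}(T)$ goes to $\uu{10}$, whose outgoing edges at time $1$ end at $\uu{10}$ and $\uu{11}$. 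Worse, for the edge from $\uu{01}$ to $\uu{10}$ in $E_{0}(T)$ placed at slot $0$, the starting vertex of \emph{any} multicycle through it is forced to be $\uu{01}$, so the lap $\bm{q}_{v}$ you want to append simply is not there; and there are $12$-cycles through this edge (alternating between $\uu{01}$ and $\uu{10}$ only) that avoid every vertex admitting a length-$n$ return lap, so moving the splice point along the chosen multicycle does not rescue the argument either.

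The paper avoids this by routing the extra lap through the zero vertices rather than through $v$: by Lemma \ref{lem2pap} both $T^{i}$ and $T^{i+1}$ are connected, and by Theorem \ref{thm45} a connected linear trellis of length $N$ is reduced if and only if every vertex has paths to and from a zero vertex of length $N-1$. Reducedness of $T^{i}$ thus gives paths $v\rightarrow 0$ and $0\rightarrow v$ of length $in-1$ for every vertex $v$; appending (resp.\ prepending) the all-zero closed path of length $n$, which always exists by linearity, produces such paths of length $(i+1)n-1$, and Theorem \ref{thm45} applied to the connected trellis $T^{i+1}$ concludes. If you wish to keep an edge-by-edge splice, you would first have to show that the multicycle through $\bm{e}$ can be chosen or modified (using linearity) so as to pass through a vertex possessing a length-$n$ return path, which in effect amounts to the same detour through the zero-vertex characterization.
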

\begin{proof}
By Lemma \ref{lem2pap} $T^{i}$ and $T^{i+1}$ are connected. By Theorem \ref{thm45} we know that a connected linear  trellis of length $n$ is reduced if and only if each vertex can be connected in both directions to a zero vertex by paths of length $n-1$. Thus for every vertex $v\in T^{i}$  there exist paths $v\rightarrow 0$ and $0 \rightarrow v$ of length $in-1$, where $n$ is the length of $T$. But then it is clear that  the same holds for $T^{i+1}$. So $T^{i+1}$ is reduced too.
\end{proof}

The connectedness hypothesis is necessary. For example, here we have a linear disconnected trellis $T$ such that $T^{2}$ is reduced while $T^{3}$ is not:
  \begin{center}
 \begin{tikzpicture}[yscale=.8,>=latex',shorten >=.9pt, shorten <=1.4pt, line width=.6pt]
  \tikzstyle{every node}=[draw,circle,fill=black,minimum size=2pt,
                        inner sep=0pt]                    
\foreach \x in {0,1}{
\node at (\x,0) {};};
\foreach \x in {0,1}{
\node at (\x,1) {};};
\foreach \x in {0,1}{
\node at (\x,2) {};};
\foreach \x in {0,1}{
\node at (\x,3) {};};

\tikzstyle{every node}=[]
\draw [->,dashed] (0,0) -- (1,0);

\draw [->,dashed] (0,3) -- (1,3);

\draw [->,dashed] (0,1) -- (1,2);

\draw [->,dashed] (0,2) -- (1,1);

\foreach \x in {0,1}{
\draw (\x,0) node [below] {\tiny ${\uu{00}}$};
\draw (\x,1) node [below] {\tiny ${\uu{01}}$};
\draw (\x,2) node [above] {\tiny ${\uu{10}}$};
\draw (\x,3) node [above] {\tiny ${\uu{11}}$};
};

\draw (-1,1.5) node {$T=$};

\begin{scope}
[xshift=4cm]
  \tikzstyle{every node}=[draw,circle,fill=black,minimum size=2pt,
                        inner sep=0pt]                    
\foreach \x in {0,1}{
\node at (\x,0) {};};
\foreach \x in {0,1}{
\node at (\x,1) {};};
\foreach \x in {0,1}{
\node at (\x,2) {};};
\foreach \x in {0,1}{
\node at (\x,3) {};};

\tikzstyle{every node}=[]
\draw [->,dashed] (0,0) -- (1,0);

\draw [->,dashed] (0,3) -- (1,3);

\draw [->,dashed] (0,1) -- (1,2);

\draw [->,dashed] (0,2) -- (1,1);

\foreach \x in {0,1}{
\draw (\x,0) node [below] {\tiny ${\uu{00}}$};
\draw (\x,1) node [below] {\tiny ${\uu{01}}$};
\draw (\x,2) node [above] {\tiny ${\uu{10}}$};
\draw (\x,3) node [above] {\tiny ${\uu{11}}$};
};
\draw (-1,1.5) node {$T^{2}=$};
\end{scope}
\begin{scope}
[xshift=5cm]
  \tikzstyle{every node}=[draw,circle,fill=black,minimum size=2pt,
                        inner sep=0pt]                    
\foreach \x in {0,1}{
\node at (\x,0) {};};
\foreach \x in {0,1}{
\node at (\x,1) {};};
\foreach \x in {0,1}{
\node at (\x,2) {};};
\foreach \x in {0,1}{
\node at (\x,3) {};};

\tikzstyle{every node}=[]
\draw [->,dashed] (0,0) -- (1,0);

\draw [->,dashed] (0,3) -- (1,3);

\draw [->,dashed] (0,1) -- (1,2);

\draw [->,dashed] (0,2) -- (1,1);

\foreach \x in {1}{
\draw (\x,0) node [below] {\tiny ${\uu{00}}$};
\draw (\x,1) node [below] {\tiny ${\uu{01}}$};
\draw (\x,2) node [above] {\tiny ${\uu{10}}$};
\draw (\x,3) node [above] {\tiny ${\uu{11}}$};
};
\end{scope}
      \end{tikzpicture}
    \end{center}
  \begin{center}
 \begin{tikzpicture}[yscale=.8,>=latex',shorten >=.9pt, shorten <=1.4pt, line width=.6pt]
\begin{scope}
  \tikzstyle{every node}=[draw,circle,fill=black,minimum size=2pt,
                        inner sep=0pt]                    
\foreach \x in {0,1}{
\node at (\x,0) {};};
\foreach \x in {0,1}{
\node at (\x,1) {};};
\foreach \x in {0,1}{
\node at (\x,2) {};};
\foreach \x in {0,1}{
\node at (\x,3) {};};

\tikzstyle{every node}=[]
\draw [->,dashed] (0,0) -- (1,0);

\draw [->,dashed] (0,3) -- (1,3);

\draw [->,dashed] (0,1) -- (1,2);

\draw [->,dashed] (0,2) -- (1,1);

\foreach \x in {0,1}{
\draw (\x,0) node [below] {\tiny ${\uu{00}}$};
\draw (\x,1) node [below] {\tiny ${\uu{01}}$};
\draw (\x,2) node [above] {\tiny ${\uu{10}}$};
\draw (\x,3) node [above] {\tiny ${\uu{11}}$};
};
\draw (-1,1.5) node {$T^{3}=$};
\end{scope}
\begin{scope}
[xshift=1cm]
  \tikzstyle{every node}=[draw,circle,fill=black,minimum size=2pt,
                        inner sep=0pt]                    
\foreach \x in {0,1}{
\node at (\x,0) {};};
\foreach \x in {0,1}{
\node at (\x,1) {};};
\foreach \x in {0,1}{
\node at (\x,2) {};};
\foreach \x in {0,1}{
\node at (\x,3) {};};

\tikzstyle{every node}=[]
\draw [->,dashed] (0,0) -- (1,0);

\draw [->,dashed] (0,3) -- (1,3);

\draw [->,dashed] (0,1) -- (1,2);

\draw [->,dashed] (0,2) -- (1,1);

\foreach \x in {1}{
\draw (\x,0) node [below] {\tiny ${\uu{00}}$};
\draw (\x,1) node [below] {\tiny ${\uu{01}}$};
\draw (\x,2) node [above] {\tiny ${\uu{10}}$};
\draw (\x,3) node [above] {\tiny ${\uu{11}}$};
};
\end{scope}
\begin{scope}
[xshift=2cm]
  \tikzstyle{every node}=[draw,circle,fill=black,minimum size=2pt,
                        inner sep=0pt]                    
\foreach \x in {0,1}{
\node at (\x,0) {};};
\foreach \x in {0,1}{
\node at (\x,1) {};};
\foreach \x in {0,1}{
\node at (\x,2) {};};
\foreach \x in {0,1}{
\node at (\x,3) {};};

\tikzstyle{every node}=[]
\draw [->,dashed] (0,0) -- (1,0);

\draw [->,dashed] (0,3) -- (1,3);

\draw [->,dashed] (0,1) -- (1,2);

\draw [->,dashed] (0,2) -- (1,1);

\foreach \x in {1}{
\draw (\x,0) node [below] {\tiny ${\uu{00}}$};
\draw (\x,1) node [below] {\tiny ${\uu{01}}$};
\draw (\x,2) node [above] {\tiny ${\uu{10}}$};
\draw (\x,3) node [above] {\tiny ${\uu{11}}$};
};
\end{scope}
      \end{tikzpicture}
    \end{center}

Now we can prove our extension of Theorem \ref{thm53}.

\begin{thm}
Let $T$ and $T'$ be fragment, one-to-one, connected linear trellises. Let $i,j\geq1$ such that $T^{i}$ and $(T')^{j}$ are reduced (by Theorem \ref{covred} such $i,j$ exist). Assume that $C^{s}(T)=C^{s}(T')$ for some $s>\max\{i,j\}$. Then $T\simeq T'$.
\end{thm}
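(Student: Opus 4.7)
The plan is to reduce to Theorem \ref{thm53} on the covers $T^s,(T')^s$ and then descend via Theorem \ref{isthmcov}. First I would verify the hypotheses of Theorem \ref{thm53} for these covers. By iterating Theorem \ref{covred2} (each step uses Lemma \ref{lem2pap} to preserve connectedness), the reducedness of $T^i$ propagates to all $T^m$ with $m\geq i$, so $T^s$ is reduced; the same argument gives $(T')^s$ reduced. Connectedness is Lemma \ref{lem2pap}. Fragment one-to-one lifts to the covers because a length-$sn$ path in $T^s$ decomposes into $s$ consecutive length-$n$ paths in $T$, each determined by its labels via fragment one-to-one of $T$; in particular $T^s$ and $(T')^s$ are one-to-one. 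Finally the hypothesis translates to $C(T^s)=C^s(T)=C^s(T')=C((T')^s)$.

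Invoking Theorem \ref{thm53} directly on the covers would require $C^k(T^s)=C^k((T')^s)$ for some $k>1$, whereas we only have the level-one equality $k=1$. The key point is that fragment one-to-one on the covers still operates at the shorter length $n$ of the base, so the core argument of Theorem \ref{thm53} adapts to conventional spans $(a,l)$ of $T^s$ of length $l\leq (s-1)n-1$: given $\alpha\in C_{(a,l)}(T^s)$, the unique cycle of $(T')^s$ with label $\alpha$ (existence from $C(T^s)=C((T')^s)$, uniqueness from one-to-one) has a zero-label subpath of length $sn-1-l\geq n$, which fragment one-to-one of $T'$ forces to be the zero path; hence the cycle is supported in $[a,a+l]$ and $\alpha\in C_{(a,l)}((T')^s)$. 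By symmetry and cyclic shift, $C_{(a,l)}(T^s)=C_{(a,l)}((T')^s)$ for all spans of length at most $(s-1)n-1$.

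The remaining spans of length $l\in[(s-1)n,sn-1]$ are the delicate case: the zero-label region is now shorter than $n$ and the fragment one-to-one argument no longer applies. To bridge this, I would use the algebraic machinery of Section \ref{algframpap}: Theorem \ref{thm11}, Observation \ref{spansub1}, and the dimension recurrence of Observation \ref{obs30} let one compute the dimensions of the long span subcodes from the short ones together with $\dim C(T^s)$. Combined with reducedness and one-to-one of the covers, this forces dimension equality for all span subcodes, and an inductive construction in the spirit of the proof of Theorem \ref{thm15} lifts the short-span subspace equalities to equalities on all spans. Corollary \ref{linisocor} then gives $T^s\simeq(T')^s$, and Theorem \ref{isthmcov} descends this to $T\simeq T'$. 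The main obstacle I foresee is precisely this long-span case: the short-span analysis is a routine adaptation of Theorem \ref{thm53}, whereas the long-span step requires the dimension-theoretic bookkeeping absent from the proof of Theorem \ref{thm53}.
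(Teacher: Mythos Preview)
Your short-span argument is exactly the adaptation of Theorem~\ref{thm53} that the paper uses, but you applied it at the wrong cover level. The paper works with $T^{h}$ and $(T')^{h}$ where $h:=\max\{i,j\}$, not with $T^{s}$ and $(T')^{s}$. At level $h$ every conventional span $(a,l)$ has $l\le hn-1$; since $s>h$ gives $s\ge h+1$, the zero-label tail in the $s$-cycle has length $sn-1-l\ge sn-hn\ge n$, and fragment one-to-one of the base $T'$ forces that tail to be the zero path. Thus your ``short-span'' computation already covers \emph{all} spans of $T^{h}$, yielding $C_{(a,l)}(T^{h})=C_{(a,l)}((T')^{h})$ for every $(a,l)$, hence $T^{h}\simeq(T')^{h}$ by Corollary~\ref{linisocor}, and then $T\simeq T'$ by Theorem~\ref{isthmcov}. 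No long-span case arises.

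Your proposed treatment of the long-span range $l\in[(s-1)n,sn-1]$ is the part that does not go through as written. The claim that the recurrence of Observation~\ref{obs30} together with $\dim C(T^{s})$ ``forces dimension equality for all span subcodes'' is not substantiated: Observation~\ref{obs30} computes multiplicities from \emph{all} the $\dim\SSS_{(a,l)}$, not the other way around, and knowing equality only on a downward-closed family of spans plus the top dimension does not determine the remaining $C_{(a,l)}$ without further input. So this step is a genuine gap. Fortunately it is an artificial obstacle created by the choice of level; replacing $s$ by $h$ removes it entirely and recovers the paper's two-line proof.
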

\begin{proof}
Let $h:=\max\{i,j\}$ and $s>h$ such that $C^{s}(T)=C^{s}(T')$. By Theorem \ref{covred2} we have that $T^{h}$ and $(T')^{h}$ are both reduced. But then an immediate adaption of the arguments used in the proof of Theorem \ref{thm53} yields that $T^{h}\simeq(T')^{h}$, which by Theorem \ref{isthmcov} implies that $T\simeq T'$. 
\end{proof}

We conclude this subsection by observing that a motivation for studying nonreduced linear trellises comes from the fact that these naturally arise by taking duals of reduced linear trellises or wrapped fragments of quasi-cyclic trellises (i.e. cutting $T$ at time indices $i,j$ such that $V_{i}(T)=V_{j}(T)$ and wrapping). For example the nonreduced linear trellis of length $8$ depicted in \cite{CFV} which represents the $[8,4,4]$ Hamming code is a fragment of the Golay trellis $T_{\mathcal{G}}$ (see Example \ref{golay2}) from the same paper.


\appendix
\label{trellisconnect}
\section{Connectivity of linear trellises}\label{trellisconnect}

We prove in this appendix some fundamental results on connectivity of linear trellises which  have not appeared before in the literature. We will also make use of them in the paper.  
We use the notation $v\rightarrow w$\nomenclature[vw]{$v\rightarrow w$}{path from $v$ to $w$} to mean a path from $v$ to $w$.

\begin{rmk}
Connectedness is closely related to the notion of controllability in systems theory. See \cite{FoGl,FoGl2} for   relations between connectedness and other trellis properties from the  ``controllability'' point of view.  
\end{rmk}

We start with proving that for linear trellises there is no distinction between being connected by directed paths and being connected by undirected paths.

\begin{thm}\label{thm43}
Let $T$ be a linear trellis. Let $\bm{e}$ be an edge in $T$ from $v$ to $w$. Then there exists a path in $T$ from $w$ to $v$.
\end{thm}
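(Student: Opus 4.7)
The plan is to extend a path backwards from $v$ using trim, then modify it linearly so that it starts from $w$ instead of from the vertex we landed at.

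First I would extend $v$ backwards by $n-1$ steps. Since $T$ is trim, iterating backwards from $v$ by picking incoming edges yields a directed path $q_0 : u_0 \to v$ of length $n-1$ for some $u_0 \in V_{i+1}(T)$. If $u_0 = w$, then $q_0$ is already the desired path. Otherwise, set $\delta := w - u_0 \in V_{i+1}(T)$; I claim it will suffice to produce a directed path $r : \delta \to 0_i$ of some length of the form $kn-1$, $k \geq 1$. Indeed, padding $q_0$ with $k-1$ concatenated zero cycles (which exist in $T$ by linearity of each $E_j(T)$) yields a path $\widetilde{q}_0 : u_0 \to v$ of length $kn-1$; then, by linearity of the vertex spaces $V_j(T)$ and edge spaces $E_j(T)$, the coordinate-wise sum $r + \widetilde{q}_0$ is a directed path in $T$ from $\delta + u_0 = w$ to $0_i + v = v$.

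The theorem thus reduces to showing that for every $x \in V_{i+1}(T)$ there exist $k \geq 1$ and a directed path of length $kn-1$ from $x$ to $0_i$. Call the set of such $x$ by $V^*$. I would next verify that $V^*$ is a linear subspace of $V_{i+1}(T)$: the zero path shows $0_{i+1} \in V^*$; and given witnesses for $x_1, x_2 \in V^*$ of lengths $k_1 n - 1$ and $k_2 n - 1$, appending zero cycles to each equalizes their lengths to some common $kn-1$, whereupon the coordinate-wise sum of the two length-$(kn-1)$ paths is (again by linearity of the $E_j(T)$) a directed path from $x_1 + x_2$ to $0_i$; scalar closure is analogous.

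The hard part will be showing $V^* = V_{i+1}(T)$. Here the plan is to fix a linear section of outgoing edges of $T$, producing a linear endomorphism $\phi : V_{i+1}(T) \to V_{i+1}(T)$ representing ``go forward $n$ steps''. By the Fitting decomposition $V_{i+1}(T) = V^{\mathrm{inv}} \oplus V^{\mathrm{nil}}$, with $\phi$ acting invertibly on the first summand and nilpotently on the second. The $V^{\mathrm{nil}}$ summand is easily seen to lie in $V^*$, since $\phi^k$ sends any $x \in V^{\mathrm{nil}}$ to $0_{i+1}$ for large $k$, after which $n-1$ more section steps supply a path to $0_i$. The main obstacle is the $V^{\mathrm{inv}}$ part; my plan is to exploit trim also in the backward direction (every vertex has an incoming edge) to build a chain of reachable affine subspaces in $V_i(T)$ that enlarges with $k$ via insertions of zero cycles at intermediate $0_j$ vertices, observe that this chain stabilizes by finiteness of $V_i(T)$, and combine this stabilization with the invertibility of $\phi$ on $V^{\mathrm{inv}}$ to force every $x \in V^{\mathrm{inv}}$ into $V^*$.
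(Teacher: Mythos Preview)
Your plan has two real gaps.

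First, the padding step is unjustified: you cannot turn $q_0: u_0 \to v$ of length $n-1$ into $\widetilde{q}_0: u_0 \to v$ of length $kn-1$ by ``padding with $k-1$ zero cycles''. Zero cycles start and end at zero vertices, so appending or prepending one to $q_0$ changes an endpoint unless that endpoint is already $0$. To lengthen $q_0$ while keeping both endpoints fixed you would need a closed path through $u_0$ or through $v$, which is precisely what the theorem is trying to establish.

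Second, and more fundamentally, your target $V^* = V_{i+1}(T)$ is false in general. The theorem is stated for arbitrary (trim) linear trellises, including disconnected ones. For $T = \bm{0}|\ZZ_n$ one has $V_j(T) = \FF$ and $E_j(T) = \{(x,0,x):x\in\FF\}$ for all $j$, so from any $x\neq 0$ one reaches only $x$, never $0$; hence $V^* = \{0\}$. Your section map $\phi$ is then the identity, $V^{\mathrm{nil}} = 0$, $V^{\mathrm{inv}} = \FF$, and whatever argument you give for the $V^{\mathrm{inv}}$ part must fail. The theorem nonetheless holds trivially for this $T$ since every edge has $v = w$.

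The paper sidesteps both issues by never routing through $0$. It extends the edge to a biinfinite walk $\ldots,v_{-1},v_0=v,v_1=w,v_2,\ldots$ via trimness, uses pigeonhole (not connectedness) to find closed length-$n$ paths $\bm{p},\bm{q}$ through some $v_{-i'}$ and $v_j$ on either side of $\bm{e}$, and then notes that if $\bm{s}$ is the walk segment $v_{-i'}\to v_j$ of length $rn$ then linearity makes $\bm{p}^r+\bm{q}^r-\bm{s}$ a path from $v_j$ back to $v_{-i'}$. Since $v,w$ lie on $\bm{s}$, splicing pieces of $\bm{s}$ and this reversed path produces $w\to v$. The closed paths $\bm{p},\bm{q}$ found by pigeonhole are exactly what furnish the length-matching that your padding could not.
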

\begin{proof}
Put $v_{0}:=v$, $v_{1}:=w$.
Since all our trellises are trim the outdegree and indegree of each vertex of $T$ are positive, and so we can construct a doubly infinite sequence of vertices $$\ldots v_{-2},v_{-1}, v_{0},v_{1},v_{2},\ldots$$ such that for each $i\in\ZZ$ there exists an edge in $T$ from $v_{i}$ to $v_{i+1}$. In particular, for all $i<j$ there exists a path from $v_{i}$ to $v_{j}$ of length $j-i$.  Now, $T$ has finitely many vertices so we can find $i,j>0$ such that $v_{-i}=v_{-i-n}$ and $v_{j}=v_{j+n}$, where $n$ is the length of $T$. Thus we get closed paths $v_{-i}\overset{\bm{p}}{\rightarrow}v_{-i}$, $v_{j}\overset{\bm{q}}{\rightarrow}v_{j}$ of length $n$. Take $i\leq i'<i+n$ such that $-i'\equiv j \mod n$.  Changing the starting point, we can assume that $\bm{p}$ starts (and so ends)  at $v_{-i'}$. We  also know that there exists a path $v_{-i'}\overset{\bm{s}}{\rightarrow}v_{j}$ of length $j+i'$. 
Note that $v_{-i'}$ and $v_{j}$ belong to the same vertex set $V_{h}(T)$ for some $h\in\ZZ_{n}$, so $i'+j=rn$, for some $r\geq1$. Let $v_{-i'}\overset{\bm{p}^{r}}{\rightarrow}v_{-i'}$ be the closed path of length $rn$ given by cycling $r$ times around $\bm{p}$. Define $v_{j}\overset{\bm{q}^{r}}{\rightarrow}v_{j}$ similarly.
By linearity we get a path $$\bm{s}'=\bm{p}^{r}+\bm{q}^{r}-\bm{s}$$
 of length $rn$ from $v_{j}=v_{-i'}+v_{j}-v_{-i'}$ to $v_{-i'}=v_{-i'}+v_{j}-v_{j}$. Since  the vertices $v=v_{0}$ and $w=v_{1}$ belong by construction to the path $v_{-i'}\overset{\bm{s}}{\rightarrow}v_{j}$, we can use it in conjunction with $v_{j}\overset{\bm{s}'}{\rightarrow}v_{-i'}$ to reach $v$ from $w$, and so our proof is concluded.
\end{proof}

\begin{cor}\label{cor43} 
A linear trellis is connected  if and only if it connected as an undirected graph.
\end{cor}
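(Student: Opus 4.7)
The plan is to deduce this corollary directly from Theorem \ref{thm43}. The "only if" direction is immediate: any directed path is also an undirected path, so directed-connectedness trivially implies undirected-connectedness. The substantive content is the converse.

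For the "if" direction, I would take two vertices $v, w$ of $T$ and an undirected path joining them, say $v = u_{0} - u_{1} - \cdots - u_{k} = w$, where each consecutive pair $u_{j}, u_{j+1}$ is joined by a directed edge $\bm{e}_{j}$ of $T$ oriented in one of the two possible ways. The idea is to convert this undirected walk into a genuine directed path from $v$ to $w$ by processing one edge at a time: if $\bm{e}_{j}$ points from $u_{j}$ to $u_{j+1}$, leave it alone; if instead $\bm{e}_{j}$ points from $u_{j+1}$ to $u_{j}$, invoke Theorem \ref{thm43} to replace it with a directed path from $u_{j}$ to $u_{j+1}$. Concatenating the resulting directed segments yields a directed path from $v$ to $w$, establishing connectedness of $T$ as a directed graph.

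There is essentially no obstacle here since Theorem \ref{thm43} does all the work: it guarantees that every reverse edge can be bypassed by a forward directed path, which is precisely the gap between undirected and directed connectedness. The only minor care needed is to notice that in an undirected path the intermediate vertices $u_{j}$ need not lie in a monotone sequence of time indices, but since Theorem \ref{thm43} is applied edge-by-edge this causes no trouble — the resulting concatenation is a legitimate directed walk, which in a trim directed graph can be pruned to a directed path if desired. Hence the corollary follows with no additional machinery.
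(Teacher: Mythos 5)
Your proof is correct and is exactly the argument the paper intends: the corollary is stated as an immediate consequence of Theorem \ref{thm43}, with the implicit reasoning being precisely your edge-by-edge replacement of reversed edges by directed paths. No gap; nothing further is needed.
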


\begin{cor}\label{cor44}
Each vertex and edge of a linear trellis belongs to some closed path.
\end{cor}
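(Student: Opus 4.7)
The plan is to derive this corollary as an essentially immediate consequence of Theorem \ref{thm43}, using only the trim assumption to handle the vertex case.

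First I would handle the edge case. Let $\bm{e}$ be an edge of the linear trellis $T$, going from $v \in V_i(T)$ to $w \in V_{i+1}(T)$. By Theorem \ref{thm43} there exists a (directed) path $\bm{p}$ in $T$ from $w$ back to $v$. Concatenating $\bm{e}$ with $\bm{p}$ yields a directed walk that starts at $v$, traverses $\bm{e}$, and returns to $v$. Since the starting and ending vertices coincide, this is a closed path of $T$ containing $\bm{e}$, so $\bm{e}$ belongs to a closed path.

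Next I would handle the vertex case by reducing it to the edge case. Let $v$ be a vertex of $T$. Since $T$ is trim by our standing assumption (each vertex has an incoming and an outgoing edge), there is an edge $\bm{e}$ of $T$ starting at $v$. By the edge case just proved, $\bm{e}$ lies on some closed path $\bm{q}$, and this $\bm{q}$ necessarily passes through $v$ (namely at the endpoint of $\bm{e}$ that corresponds to the tail). Hence $v$ also belongs to a closed path.

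There is no real obstacle here: Theorem \ref{thm43} already does all the nontrivial work, namely reversing direction along an edge using the linear structure, and the trim hypothesis (built into our notion of trellis) covers the vertex case with essentially no extra effort. The only small point to observe is that ``closed path'' in this paper means a path whose starting vertex equals its ending vertex, so the concatenation $\bm{e}\ast\bm{p}$ legitimately qualifies as such (it has positive length and returns to its starting vertex).
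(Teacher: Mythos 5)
Your proof is correct and follows exactly the route the paper intends: the corollary is stated without a separate proof precisely because, as you argue, concatenating an edge $v\alpha w$ with the reverse path from $w$ to $v$ supplied by Theorem \ref{thm43} gives a closed path through the edge, and trimness reduces the vertex case to the edge case.
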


\begin{ex}\label{ex41}
The following is a linear, connected, and nonreduced trellis. 
  \begin{center}
 \begin{tikzpicture}[yscale=.8,>=latex',shorten >=.9pt, shorten <=1.4pt, line width=.6pt]
  \tikzstyle{every node}=[draw,circle,fill=black,minimum size=2pt,
                        inner sep=0pt]                    
\foreach \x in {0,1}{
\node at (\x,0) {};};
\foreach \x in {0,1}{
\node at (\x,1) {};};
\foreach \x in {0,1}{
\node at (\x,2) {};};
\foreach \x in {0,1}{
\node at (\x,3) {};};

\tikzstyle{every node}=[]
\draw [->,dashed] (0,0) -- (1,0);

\draw [->,dashed] (0,3) -- (1,3);

\draw [->,dashed] (0,1) -- (1,2);

\draw [->,dashed] (0,2) -- (1,1);

\foreach \x in {0,1}{
\draw (\x,0) node [below] {\tiny ${\uu{00}}$};
\draw (\x,1) node [below] {\tiny ${\uu{01}}$};
\draw (\x,2) node [above] {\tiny ${\uu{10}}$};
\draw (\x,3) node [above] {\tiny ${\uu{11}}$};
};

\draw (-1,1.5) node {$T=$};

\begin{scope}
[xshift=1cm]
  \tikzstyle{every node}=[draw,circle,fill=black,minimum size=2pt,
                        inner sep=0pt]                    
\foreach \x in {0,1}{
\node at (\x,0) {};};
\foreach \x in {0,1}{
\node at (\x,1) {};};
\foreach \x in {0,1}{
\node at (\x,2) {};};
\foreach \x in {0,1}{
\node at (\x,3) {};};

\tikzstyle{every node}=[]
\draw [->,dashed] (0,0) -- (1,0);

\draw [->,dashed] (0,3) -- (1,3);

\draw [->,dashed] (0,1) -- (1,1);
\draw [->,dashed] (0,1) -- (1,0);
\draw [->,dashed] (0,0) -- (1,1);

\draw [->,dashed] (0,2) -- (1,2);
\draw [->,dashed] (0,2) -- (1,3);
\draw [->,dashed] (0,3) -- (1,2);

\foreach \x in {1}{
\draw (\x,0) node [below] {\tiny ${\uu{00}}$};
\draw (\x,1) node [below] {\tiny ${\uu{01}}$};
\draw (\x,2) node [above] {\tiny ${\uu{10}}$};
\draw (\x,3) node [above] {\tiny ${\uu{11}}$};
};
\end{scope}

 \end{tikzpicture}
  \end{center}
Note that some edges of $T$ belong to cycles, i.e. closed paths of length $2$, while other edges belong only to closed paths of length $4$ or even length $6$. For example, this happens respectively for the edges $\uu{00}0\uu{00}\in E_{0}(T)$, $\uu{10}0\uu{01}\in E_{0}(T)$, $\uu{00}0\uu{01}\in E_{1}(T)$.
 \end{ex}
The above results do not hold for nonlinear trellises as one can easily check. 

The following theorem is another important consequence of linearity. It was also observed by Heide Gluesing-Luerssen (private communication). Note also that Lemma 6.8 of \cite{KV2} can be obtained as a special case of it. 
\begin{thm}\label{thm44}
Let $T$ be a linear trellis of length $n$. Suppose  $T$ is almost reduced. Let $v,w\in V_{i}(T)$ for some $i\in\ZZ_{n}$, and suppose there exists a path in $T$ from $v$ to $w$. Then there exists a path from $v$ to $w$ of length $n$.
\end{thm}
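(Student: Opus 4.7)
The plan is to exploit linearity to telescope away the extra length. First I observe that since $v,w\in V_i(T)$, any path from $v$ to $w$ must have length $m\equiv0\pmod n$, as it ends in the same vertex set where it started. So $m=rn$ for some $r\geq0$. The case $r=0$ forces $v=w$, in which case a closed path of length $n$ through $v$ exists directly from the almost reduced hypothesis, so I can assume $r\geq1$.

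Next I would set up the decomposition. Let $\bm{p}=v_0\alpha_0 v_1\ldots v_{rn}$ be the given path from $v=v_0$ to $w=v_{rn}$, and break it into $r$ consecutive sub-paths $\bm{p}_0,\bm{p}_1,\ldots,\bm{p}_{r-1}$, where $\bm{p}_j:v_{jn}\rightarrow v_{(j+1)n}$ has length exactly $n$. Each $\bm{p}_j$ starts at time index $i$. Since $T$ is almost reduced, for each $j=1,\ldots,r-1$ the intermediate vertex $v_{jn}\in V_i(T)$ lies on a cycle $\bm{c}_j:v_{jn}\rightarrow v_{jn}$ of length $n$, which also starts at time $i$.

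Now the main linearity step. Viewing length-$n$ paths starting at time $i$ as elements of $\prod_{k=0}^{n-1}V_{i+k}(T)\times\FF^n$, their coordinate-wise $\FF$-linear combination is automatically a valid length-$n$ path, because each edge set $E_k(T)$ is a vector subspace of $V_k(T)\times\FF\times V_{k+1}(T)$. Consider
\[
\bm{q}:=\bm{p}_0+\bm{p}_1+\ldots+\bm{p}_{r-1}-\bm{c}_1-\bm{c}_2-\ldots-\bm{c}_{r-1}.
\]
This is a valid path of length $n$ starting at time $i$. Its starting vertex telescopes to $v_0+\sum_{j=1}^{r-1}(v_{jn}-v_{jn})=v$, and its ending vertex telescopes to $\sum_{j=1}^{r-1}(v_{jn}-v_{jn})+v_{rn}=w$, which is exactly what we need.

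I do not anticipate any real obstacle. The only point worth verifying carefully is the claim that a coordinate-wise sum of length-$n$ paths sharing the same starting time index is again a valid length-$n$ path; this is an immediate consequence of the linearity of each $E_k(T)$ and is really the whole content of the argument. Everything else is telescoping bookkeeping, and the almost reduced hypothesis is used exactly once, to produce the correcting cycles $\bm{c}_j$ at the intermediate vertices $v_{jn}$.
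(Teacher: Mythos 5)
Your proposal is correct and follows essentially the same argument as the paper: split the length-$rn$ path into $r$ length-$n$ segments, use the almost reduced hypothesis to take length-$n$ closed paths at the $r-1$ intermediate vertices, and form the linear combination (sum of segments minus the cycles), which telescopes to a length-$n$ path from $v$ to $w$. Your separate treatment of the degenerate case $v=w$ is a harmless addition; otherwise the two proofs coincide.
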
 
 \begin{proof} Assume we have a path $v\overset{\bm{p}}{\rightarrow}w$ of length $rn$, $r>1$. Then there are vertices $v_{j}\in V_{i}(T)$ for $j=1,\ldots, r$, with $v_{0}=v$, $v_{r}=w$, and paths $v_{j}\overset{\bm{p}^{j}}{\rightarrow}v_{j+1}$ of length $n$  for $j=0,\ldots,r-1$. By hypothesis we also have closed paths $v_{j}\overset{\bm{q}^{j}}{\rightarrow}v_{j}$  of length $n$ for $j=1,\ldots,r$. By linearity we then get a path $\sum_{i=0}^{r-1}\bm{p}^{j}-\sum_{j=1}^{r}\bm{q}^{j-1}$ of length $n$ from 
 $$v=v_{0}=\sum_{j=0}^{r-1}v_{j}-\sum_{j=1}^{r-1}v_{j}$$ 
 to
 $$w=v_{r}=\sum_{j=0}^{r-1}v_{j+1}-\sum_{j=1}^{r-1}v_{j}$$.
 \end{proof}
 
 \begin{cor}\label{cor45} 
 If $T$ is a connected, almost reduced, linear trellis of length $n$ then for each pair of vertices  $v,w\in V_{0}(T)$ there exists paths $v \rightarrow w$ and $w \rightarrow v$ of length $n$.
 \end{cor}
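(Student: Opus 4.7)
The plan is to obtain the two paths by a direct application of the preceding results in the appendix, specifically Theorem \ref{thm44}, combined with the hypothesis that $T$ is connected.

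First I would fix $v, w \in V_{0}(T)$ with $v \neq w$ (the case $v = w$ being trivial, since every vertex lies on a closed path by Corollary \ref{cor44}). By the definition of connectedness used in the paper, there exists a directed path from $v$ to $w$ in $T$; since $v, w \in V_{0}(T)$, this path must have length equal to some positive multiple of $n$, say $rn$ with $r \geq 1$. Symmetrically, connectedness provides a directed path from $w$ to $v$, also of length a positive multiple of $n$.

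Next I would upgrade these paths to length exactly $n$ by invoking Theorem \ref{thm44}, whose hypotheses are satisfied: $T$ is linear and almost reduced, $v$ and $w$ lie in the same vertex set $V_{0}(T)$, and a directed path between them exists. Applying the theorem once to $v \to w$ and once to $w \to v$ yields paths of length $n$ in each direction, which is exactly the claim.

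The argument is routine once Theorems \ref{thm43} and \ref{thm44} are in hand, so there is no real obstacle; the only subtlety is to notice that the directed (rather than undirected) notion of connectedness suffices for the hypothesis of Theorem \ref{thm44} without needing to invoke Corollary \ref{cor43}. In fact, Corollary \ref{cor45} can be viewed simply as the specialization of Theorem \ref{thm44} to the connected setting, packaged for convenient reference in later sections of the paper.
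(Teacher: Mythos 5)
Your proposal is correct and follows exactly the route the paper intends: connectedness (in the paper's directed sense, applied to both ordered pairs) gives paths $v\rightarrow w$ and $w\rightarrow v$ of length a positive multiple of $n$, and Theorem \ref{thm44} upgrades each to length exactly $n$. The only cosmetic point is the case $v=w$: Corollary \ref{cor44} only gives a closed path of unspecified length $rn$, so you should either apply Theorem \ref{thm44} to it as well, or note that the almost-reduced hypothesis directly puts $v\in V_{0}(T)$ on a cycle, which necessarily starts and ends at $v$ and hence has length $n$.
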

 
 \begin{cor}\label{cor46}
 Let $T$ be a  reduced linear trellis of length $n$. Assume $v$ is connected to some (and thus each) zero vertex. Then there exist paths $v{}{\rightarrow}0$ and $0{}{\rightarrow}v$ of length $n-1$.
 \end{cor}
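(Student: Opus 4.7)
The plan is to obtain the length-$(n-1)$ path $v\to 0$ from a length-$n$ path $v\to 0_0$ supplied by Corollary~\ref{cor45}, by first tweaking the length-$n$ path (using the reduced hypothesis) so that its final edge becomes the zero edge, and then stripping that edge off; the length-$(n-1)$ path $0\to v$ is obtained symmetrically by tweaking the first edge instead.

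First I would reduce, via the shift $\sigma^i$, to the case $v\in V_0(T)$, so that the target is $0_{n-1}$ (a length-$(n-1)$ path from $V_0$ ends in $V_{n-1}$). I would then restrict attention to the connected component $T_0\subseteq T$ containing $v$; by hypothesis this component also contains every zero vertex, since the all-zero path (which exists by linearity of the $E_i(T)$) connects all of them to $0_0$. Because $T$ is linear and reduced, $T_0$ is itself a linear subtrellis (sums and scalar multiples of paths to $0$ remain paths to $0$, and every cycle of $T$ through a vertex of $T_0$ stays in $T_0$), and it inherits the reduced property. Hence $T_0$ is a connected, reduced, linear trellis.

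Applying Corollary~\ref{cor45} to $T_0$ then yields a length-$n$ path $\bm{p}\colon v\to 0_0$. Write its last edge as $\bm{e}=u\alpha 0_0\in E_{n-1}(T_0)$, with $u\in V_{n-1}(T_0)$ and $\alpha\in\FF$. By linearity of $E_{n-1}(T)$, the edge $-\bm{e}=(-u)(-\alpha)0_0$ also lies in $E_{n-1}(T_0)$, and by the reduced hypothesis it belongs to some cycle $\Mu$ of $T_0$; by construction $\Mu$ satisfies $\nu_0(\Mu)=0$, $\nu_{n-1}(\Mu)=-u$, and $\alpha_{n-1}(\Mu)=-\alpha$. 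Then $\bm{p}':=\bm{p}+\Mu$ is a length-$n$ path from $v+0=v$ to $0_0+0_0=0_0$, and its last edge, computed coordinate-wise, is $(u+(-u))(\alpha+(-\alpha))(0_0+0_0)$, i.e.\ the zero edge $0_{n-1}\,0\,0_0$. Deleting this trailing zero edge from $\bm{p}'$ yields the desired length-$(n-1)$ path $v\to 0_{n-1}$.

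For the other direction, starting from a length-$n$ path $0_0\to v$ in $T_0$ (again from Corollary~\ref{cor45}), I would apply the same trick to its first edge: using the reducedness of $T_0$ to find a cycle that cancels it out and produces the zero edge $0_0\,0\,0_1$, then stripping that zero edge off to obtain a length-$(n-1)$ path $0_1\to v$. The main subtle point is making sure that the restricted trellis $T_0$ is itself reduced (not merely almost reduced): it is this stronger property that supplies the modifying cycle $\Mu$, and without it only Theorem~\ref{thm44} / Corollary~\ref{cor45} (which need only almost reducedness) would apply, giving paths of length $n$ rather than the sharper $n-1$.
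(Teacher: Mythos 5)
Your proposal is correct and follows essentially the paper's own route: obtain a length-$n$ path from $v$ to a zero vertex via the almost-reduced machinery (the paper applies Theorem \ref{thm44} directly to $T$, while you pass to the connected component and invoke Corollary \ref{cor45}), then use reducedness to produce a cycle through (the negative of) the extremal edge and exploit linearity so that this edge becomes the zero edge and can be stripped off, the paper doing the equivalent subtraction $\bm{p}'-\bm{q}$ with the cycle-minus-edge path. The remaining differences are cosmetic, and your verifications that the component $T_0$ is linear, trim, connected, and reduced (hence eligible for Corollary \ref{cor45}) are sound, modulo the routine step of padding paths with zero paths so they can be added.
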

 \begin{proof}
 By inverting the direction of all edges it is sufficient to prove that there exists a path $v{}{\rightarrow}0$ of length $n-1$. Now, since all zero vertices are connected,  from the assumption it follows that there exists a path $v{}{\rightarrow}0$ of length $rn$ for some $r\geq1$. Thus, from Theorem \ref{thm44} there exists a path $v\overset{\bm{p}}{\rightarrow}0$ of length $n$, i.e. a path $v\overset{\bm{p}'}{\rightarrow}w$ of length $n-1$ and a path $w\overset{\bm{p}''}{\rightarrow}0$ of length $1$, for some vertex $w$. Since $T$ is reduced there must exist a path $0\overset{\bm{q}}{\rightarrow}w$ of length $n-1$. So $\bm{p}'-\bm{q}$ is a path of length $n-1$ from $v=v-0$ to $0=w-w$.
 \end{proof}
 
 As a consequence of the last corollary we can give an alternative characterization of reduced trellises in the connected case.
 \begin{thm}\label{thm45}
 Let $T$ be a    connected linear trellis of length $n$. Then $T$ is reduced if and only if for each vertex $v$ of $T$ there exist paths $v{}{\rightarrow}0$ and $0{}{\rightarrow}v$ of length $n-1$.
 \end{thm}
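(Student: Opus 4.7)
The plan is to verify the two directions separately, with the harder work lying in the ``if'' direction.

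For the ``only if'' direction, I would appeal directly to Corollary \ref{cor46}. Since $T$ is connected, every vertex $v$ is (directed-)connected to some zero vertex --- this uses Theorem \ref{thm43} and Corollary \ref{cor43} to move from undirected to directed connectivity. Since $T$ is reduced by assumption, the hypothesis of Corollary \ref{cor46} is met, and we immediately obtain paths $v\to 0$ and $0\to v$ of length $n-1$.

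For the ``if'' direction, the goal is to show that every edge $e$ belongs to some cycle. Fix $e=v\alpha w\in E_{i}(T)$ with $v\in V_{i}(T)$, $w\in V_{i+1}(T)$. First I would use the hypothesis to produce two auxiliary paths of length $n-1$:
\begin{align*}
p&\colon w\to 0, \qquad w\in V_{i+1}(T),\ 0\in V_{i+n}(T)=V_{i}(T),\\
q&\colon 0\to v, \qquad 0\in V_{i-(n-1)}(T)=V_{i+1}(T),\ v\in V_{i}(T).
\end{align*}
The crucial observation is that $p$ and $q$ have identical ``time profiles'' (both traverse the vertex sets $V_{i+1}(T), V_{i+2}(T),\ldots,V_{i+n-1}(T), V_{i}(T)$), so by linearity of $T$ the pointwise sum $p+q$ is a valid path of length $n-1$ starting at $w+0=w$ and ending at $0+v=v$, i.e.\ a path $w\to v$. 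Concatenating $e$ with $p+q$ then produces a closed path of length $n$ passing through $e$; since a closed path of length $n$ visits each vertex set exactly once, it contains a unique vertex of $V_{0}(T)$, and re-indexing from that vertex yields a cycle of $T$ (in the sense of Subsection \ref{special}) containing $e$. Hence every edge is on some cycle, so $T$ is reduced.

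The main obstacle, though quite minor, is the bookkeeping of cyclic time indices when verifying that $p$ and $q$ live over exactly the same sequence of vertex sets --- this is what allows us to invoke linearity to form $p+q$. Once that is checked, the construction of the cycle containing $e$ is a direct concatenation, and no extra machinery beyond the linearity of edge sets $E_{j}(T)$ is required.
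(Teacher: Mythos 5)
Your proposal is correct and follows essentially the same route as the paper: the ``only if'' direction is exactly the appeal to Corollary \ref{cor46}, and the ``if'' direction is the same construction --- take paths $0\to v$ and $w\to 0$ of length $n-1$, observe they share the same starting time index, add them by linearity to get a path $w\to v$ of length $n-1$, and close it up with the edge $e$ to obtain a cycle through $e$.
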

 \begin{proof}
 The ``only if'' part is due to Corollary \ref{cor46}. Vice versa, assume $\bm{e}=v\alpha w$ is an edge of $T$. We want to show that $\bm{e}$ belongs to a cycle of $T$, i.e. that there exists a path $w{}{\rightarrow}v$ of length $n-1$. By hypothesis we have paths $0\overset{\bm{p}}{\rightarrow}v$ and $w\overset{\bm{q}}{\rightarrow}0$ of length $n-1$. The starting time indices of $\bm{p}$ and $\bm{q}$ are equal. Thus we can add them  and get the path $\bm{p}+\bm{q}$ of length $n-1$ from $w$ to $v$.
 \end{proof}


 \section{Graphical characterization of span distributions}\label{graphcarpap}

In this appendix we show how Theorem \ref{unfact1} can be proven by means of a direct graphical approach. This approach involves looking at the earliest intersections of paths starting along different edges from a fixed vertex. It turns out also that from this intersection data one can completely determine $\mathcal{S}(T)$. 
As argued in the proof of Theorem \ref{unfact1} we need to consider only unlabeled trellises, so all trellises in this appendix will be unlabeled. 

Now, let us first give some notation. Given a multiset $S$, we write $$m(x,S)$$ for the multiplicity of $x$ in $S$. If  $\bm{e}=vv'$ is an edge of a trellis $T$ then 
\begin{align*}
h(\bm{e})&:=v'\\
t(\bm{e})&:=v
\end{align*}
 are respectively the \textit{head}  and  \textit{tail} of $\bm{e}$. 
Given  two different edges $\bm{e}\neq \bm{e}'$ of $T$ such that $t(\bm{e})=t(\bm{e}')$, 
we define $$l(\bm{e},\bm{e}')$$ to be the smallest $r\geq0$ such there exist two (directed) paths $\bm{p}=v_{0}\ldots v_{r+1}$,   $\bm{p}'=v'_{0}\ldots v'_{r+1}$ in $T$ satisfying  $\bm{e}=v_{0}v_{1}$, $\bm{e}'=v'_{0}v'_{1}$ (so that $v_{0}=v'_{0}$), and $v_{r+1}=v'_{r+1}$. If there is no  path satisfying those conditions then we put $l(\bm{e},\bm{e}'):=\infty$. 
We define then the multiset $$I(\bm{e}):=\{ \{l(\bm{e},\bm{e}')|\bm{e}'\neq \bm{e}, t(\bm{e}')=t(\bm{e})\}\}$$ 

\begin{ex}\label{ex32} Consider the nonlinear trellis
\begin{center}
 \begin{tikzpicture}[yscale=.8,>=latex',shorten >=.9pt, shorten <=1.4pt, line width=.6pt]
  \tikzstyle{every node}=[draw,circle,fill=black,minimum size=2pt,
                        inner sep=0pt]                    
\foreach \x in {0,1,2,3}{
\node at (\x,2) {};};
\foreach \x in {0,1,2,3}{
\node at (\x,1) {};};
\foreach \x in {0,1,2,3}{
\node at (\x,3) {};};
\tikzstyle{every node}=[]
\draw [->,dashed] (2,1) --   (3,1);
\draw [->,dashed] (2,2) --   (3,2);
\draw [->,dashed] (2,3) --   (3,3);

\draw [->,dashed] (0,2) --   (1,2);
\draw [->,dashed] (1,2) --  (2,3);
\draw [->,dashed] (1,1) --  (2,3);
\draw [->,dashed] (1,2) --  (2,2);
\draw [->,dashed] (0,1) --  node[below] {\scriptsize e} (1,1);
\draw [->,dashed] (1,1) --   (2,1);
\draw [->,dashed] (0,3) --  node[above] {\scriptsize e''} (1,3);
\draw [->,dashed] (1,3) --  node[above] {\scriptsize e'}(2,3);
\draw [->,dashed] (0,3) --  node[above] {} (1,1);
\draw [->,dashed] (0,3) --  node[above] {} (1,2);
\draw [->,dashed] (1,3) --  node[above] {} (2,1);
\draw [->,dashed] (0,1) --  node[above] {} (1,2);
    \end{tikzpicture}
    \end{center}
Then $I(\bm{e})=\{\{1\}\}$, $I(\bm{e}')=\{\{2\}\}$, and $I(\bm{e}'')=\{\{1,1\}\}$.
\end{ex}

For linear trellises to compute $l(\bm{e},\bm{e}')$ one can fix $\bm{p}$ and let only $\bm{p}'$  vary. In fact the following holds. 

\begin{obs}\label{obs32} Let $T$ be a linear trellis. Fix a path $\bm{p}=v_{0}\ldots v_{r+1}$ such that $\bm{e}=v_{0}v_{1}$. Then $l(\bm{e},\bm{e}')=\min\{r\geq0| \ \exists \bm{p}'=v'_{0}\ldots v'_{r+1} \textnormal{ such that } \bm{e}'=v'_{0}v'_{1}, v'_{r+1}=v_{r+1}\}$.
\end{obs}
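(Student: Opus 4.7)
The observation reduces the minimization in the definition of $l(\bm{e},\bm{e}')$ from a search over pairs $(\bm{p},\bm{p}')$ to a search over $\bm{p}'$ alone, once some path $\bm{p}$ extending $\bm{e}$ is fixed. I will prove two inequalities, both times assuming that the fixed $\bm{p}$ is long enough for the relevant index $r+1$ to make sense (which is automatic once $r \geq l(\bm{e},\bm{e}')$).

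The inequality ``$\leq$'' is immediate and does not require linearity: if $\bm{p}'=v_0'\ldots v_{r+1}'$ starts with $\bm{e}'$ and satisfies $v_{r+1}'=v_{r+1}$, then truncating $\bm{p}$ to its first $r+1$ edges yields a path from $v_0$ of length $r+1$ starting with $\bm{e}$ and ending at the same vertex as $\bm{p}'$; thus $l(\bm{e},\bm{e}')\leq r$ for any such $r$, and hence $l(\bm{e},\bm{e}')$ is bounded by the minimum on the right-hand side.

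The reverse inequality ``$\geq$'' is where linearity of $T$ comes in. Set $s:=l(\bm{e},\bm{e}')$ and take witnesses $\bm{q}=w_0\ldots w_{s+1}$ and $\bm{q}'=w_0'\ldots w_{s+1}'$ with $\bm{e}=w_0w_1$, $\bm{e}'=w_0'w_1'$, and $w_{s+1}=w_{s+1}'$. Truncate $\bm{p}$ to its first $s+1$ edges to obtain $\bm{p}|_{s+1}=v_0\ldots v_{s+1}$. The key idea is to form, using the vector space structure on each $V_i(T)$ and $E_i(T)$,
\begin{equation*}
\bm{p}'\;:=\;\bm{q}'+\bm{p}|_{s+1}-\bm{q},
\end{equation*}
which is a legitimate path in $T$ of length $s+1$ because paths can be added coordinatewise in a linear trellis. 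Its initial vertex is $w_0'+v_0-w_0=w_0'$ (since $v_0=t(\bm{e})=t(\bm{e}')=w_0$), its first edge is $\bm{e}'+\bm{e}-\bm{e}=\bm{e}'$, and its terminal vertex is $w_{s+1}'+v_{s+1}-w_{s+1}=v_{s+1}$ by the meeting condition $w_{s+1}=w_{s+1}'$. Hence $\bm{p}'$ witnesses that $r=s$ belongs to the set on the right-hand side, so the right-hand minimum is at most $s=l(\bm{e},\bm{e}')$. Combining the two inequalities gives the equality.

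\textbf{Where the difficulty lies.} There is essentially no obstacle: the statement is a clean ``rigidity'' consequence of the additive structure, entirely analogous to the path-manipulation arguments used in Theorem \ref{thm43} and Theorem \ref{thm44}. The only subtlety is notational — the symbol $r$ is reused both as the length parameter of the fixed $\bm{p}$ and as the minimization variable — and one must read the statement as allowing $\bm{p}$ to be chosen of any length $\geq l(\bm{e},\bm{e}')+1$, with $v_{r+1}$ denoting the vertex of $\bm{p}$ at position $r+1$.
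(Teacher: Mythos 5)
Your proof is correct and matches the paper's intent: the paper dismisses this observation with the single line ``This is an immediate consequence of linearity,'' and your explicit argument --- adding the witness paths as $\bm{q}'+\bm{p}|_{s+1}-\bm{q}$ to transport the meeting point onto the fixed path --- is precisely the linearity argument being invoked. Your remark about the reuse of $r$ and the need to take $\bm{p}$ long enough is a fair reading of the statement and does not affect correctness.
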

\begin{proof}
This is an immediate consequence of linearity.
\end{proof}

The highly symmetrical graph structure of linear trellises is further reflected in the following fundamental lemma.

\begin{lem}\label{lem31} Let $T$ be a linear trellis of length $n$.
Then:
\begin{itemize}
\item $l(\bm{e},\bm{e}')\leq n-1$ 
\item $I(\bm{e})=I(\bm{e}')$ if $t(\bm{e}), t(\bm{e}')\in V_{a}(T)$ for some $a\in\ZZ_{n}$
\end{itemize}
\end{lem}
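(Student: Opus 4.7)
The plan is to split the two assertions and handle each with a short but different use of linearity, exploiting the standing hypothesis at the start of the paper that all trellises in Appendix \ref{graphcarpap} are reduced.

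For the bound $l(\bm{e},\bm{e}') \leq n-1$, I would invoke the reduced hypothesis directly: every edge lies on a cycle of length $n$, so both $\bm{e}$ and $\bm{e}'$ extend to length-$n$ paths through their common tail $v$ that return to $v$. These two paths meet at their endpoint, which immediately witnesses $l(\bm{e},\bm{e}') \leq n-1$.

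For the equality of multisets $I(\bm{e}) = I(\bm{e}')$ when $v = t(\bm{e})$ and $v' = t(\bm{e}')$ both lie in $V_a(T)$, the plan is to define an affine bijection on edges and check that it preserves $l$. Concretely, the map $\phi(\bm{e}'') := \bm{e}'' - \bm{e} + \bm{e}'$ sends $E_a(T)$ to itself by linearity, has tail $v - v + v' = v'$, and restricts to a bijection from the edges with tail $v$ other than $\bm{e}$ onto the edges with tail $v'$ other than $\bm{e}'$ (its inverse is $\bm{e}''' \mapsto \bm{e}''' - \bm{e}' + \bm{e}$). The goal then becomes $l(\bm{e},\bm{e}'') = l(\bm{e}', \phi(\bm{e}''))$. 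Given witnessing paths $\bm{p}_{\bm{e}}$ and $\bm{p}_{\bm{e}''}$ of length $r+1$ meeting at some $u \in V_{a+r+1}(T)$, and any path $\bm{p}_{\bm{e}'}$ of length $r+1$ starting with $\bm{e}'$ and ending at some $u'$ (which exists since $T$ is trim), the formal $\FF$-linear combination $\bm{p}_{\bm{e}''} - \bm{p}_{\bm{e}} + \bm{p}_{\bm{e}'}$ is again a path (because each $E_i(T)$ is a subspace), it starts with $\phi(\bm{e}'')$, and it ends at $u - u + u' = u'$. Hence it meets $\bm{p}_{\bm{e}'}$ at $u'$, giving $l(\bm{e}', \phi(\bm{e}'')) \leq r$; the reverse inequality follows by running the identical argument with $\bm{e}$ and $\bm{e}'$ interchanged, using the inverse translation.

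The only real subtlety I anticipate is the formal claim that an $\FF$-linear combination of three directed paths of the same length in $T$ is itself a directed path; this is precisely where the assumption that each edge set $E_i(T)$ is a vector subspace is used, and it holds position-by-position along the three sequences. Once this is granted, both parts of the lemma reduce to one-line observations.
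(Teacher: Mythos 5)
Your proof is correct, but it reaches the first bullet by a genuinely different route than the paper. For the bound $l(\bm{e},\bm{e}')\leq n-1$ the paper argues via the Factorization Theorem: the inequality is checked for elementary trellises and then transferred to products of elementary trellises, i.e.\ to (reduced) linear trellises. You instead use the standing reduced hypothesis of this appendix directly: every edge lies on a cycle, so re-basing the two cycles through $\bm{e}$ and $\bm{e}'$ at their common tail gives two length-$n$ paths ending at that tail, witnessing $r=n-1$. This is more self-contained (no appeal to factorization) and it makes explicit where reducedness enters --- a point worth making, since the bound genuinely fails for non-reduced linear trellises (the non-reduced trellis of Example in Appendix A has two edges out of a vertex whose earliest merge occurs only after $5>n$ steps), so some such hypothesis is unavoidable and the paper's own argument also uses it implicitly through the Factorization Theorem. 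For the second bullet the paper offers no details beyond ``immediate consequence of linearity''; your affine translation $\bm{e}''\mapsto\bm{e}''-\bm{e}+\bm{e}'$ on $E_a(T)$, together with the observation that a coordinatewise $\FF$-linear combination of equal-length paths is again a path (which also needs trimness to supply the auxiliary path starting with $\bm{e}'$), is exactly the natural fleshing-out of that one-liner, and your two-sided argument via the inverse translation correctly yields $l(\bm{e},\bm{e}'')=l(\bm{e}',\bm{e}''-\bm{e}+\bm{e}')$ and hence equality of the multisets $I(\bm{e})=I(\bm{e}')$, including the degenerate case $l=\infty$.
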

\begin{proof}
The inequality is clearly true for elementary trellises, and so it is also true for product of elementary trellises, i.e. linear trellises. Finally, the equality is an immediate consequence of the linearity of $T$. 
\end{proof}

In sight of the above, for  a linear trellis $T$ and $a\in\ZZ_{n}$ it is legitimate to define $$I_{a}(T):=I(\bm{e})$$ where $\bm{e}$ is any edge of $T$ such that $t(\bm{e})\in V_{a}$. 
The next lemma tells us that $I_{a}(T)$ is determined only by those elementary factors of $T$ whose span starts at $a$.
\begin{lem}\label{lem32}
Let $T$ be a linear trellis, and let $T' $ be an elementary trellis with span not starting at $a\in\ZZ_{n}$.  Then $I_{a}(T\otimes T')=I_{a}(T)$.
\end{lem}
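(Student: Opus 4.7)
The plan is to reduce computing $l(\bm{E},\bm{E}')$ in $T\otimes T'$ for edges at time $a$ to the corresponding computation in $T$, by exploiting the fact that outside the starting position of its span, an elementary trellis has only one outgoing edge at each vertex.

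First I would record the following linearity-driven characterization, valid in any linear trellis: for distinct edges $\bm{e}\neq \bm{e}'$ with common tail at time $a$, $l(\bm{e},\bm{e}')$ equals the minimum $r\geq 0$ such that there exists a path of length $r$ from $h(\bm{e})-h(\bm{e}')$ to $0\in V_{a+1+r}$. This is obtained by noting that a pair of length-$r$ paths starting at $h(\bm{e}),h(\bm{e}')$ and meeting at a common endpoint corresponds, via differences, to a length-$r$ path from $h(\bm{e})-h(\bm{e}')$ to $0$, and conversely by using trim-ness to add back an arbitrary length-$r$ path out of $h(\bm{e}')$.

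Next I would analyze the graph structure of $T'=\al|(a',l')$ at time $a$ with $a'\neq a$. In each of the mutually exclusive subcases (i) $a\notin[a',a'+l']$, (ii) $a\in(a',a'+l')$, and (iii) $a=a'+l'$, a direct inspection of the edge set $E_a(T')=\lb(v_a,\alpha_a,v_{a+1})\rb$ shows that each vertex of $V_a(T')$ has exactly one outgoing edge in $T'$. Consequently, for any tail $(u,u')\in V_a(T\otimes T')=V_a(T)\times V_a(T')$, the map sending an outgoing edge $\bm{E}$ at $(u,u')$ to its $T$-component $\bm{e}$ is a bijection onto the outgoing edges of $T$ at $u$; moreover any two distinct edges $\bm{E}\neq\bm{E}'$ at $(u,u')$ share the same $T'$-component, so that $h(\bm{E})-h(\bm{E}')=(h(\bm{e})-h(\bm{e}'),0)$.

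Applying the first step inside $T\otimes T'$, a length-$r$ path from $(h(\bm{e})-h(\bm{e}'),0)$ to $(0,0)$ decomposes by definition of the trellis product into a length-$r$ path from $h(\bm{e})-h(\bm{e}')$ to $0$ in $T$ together with a length-$r$ path from $0$ to $0$ in $T'$. The latter always exists (the all-zero path in the elementary trellis), so $l(\bm{E},\bm{E}')=l(\bm{e},\bm{e}')$ for corresponding pairs. Hence the bijection transports the multiset $I(\bm{e})$ to $I(\bm{E})$, and $I_a(T\otimes T')=I_a(T)$ by Lemma \ref{lem31}. The main obstacle is precisely this uniform case analysis: the hypothesis $a'\neq a$ is what rules out the $|\FF|$ outgoing edges present at the starting position of the span of $T'$, which would otherwise destroy the edge bijection and allow $\bm{E}\neq\bm{E}'$ with differing $T'$-components.
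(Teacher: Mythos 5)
Your proof is correct and follows essentially the same route as the paper, whose one-line argument rests on exactly the componentwise decomposition of paths in $T\otimes T'$ that you use; your additional observations (uniqueness of the outgoing $T'$-edge at time $a$ since the span of $T'$ does not start there, and the reformulation of $l(\bm{e},\bm{e}')$ via differences in the spirit of Observation \ref{obs32}) simply make explicit what the paper leaves implicit. No gaps.
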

\begin{proof}
The equality follows immediately from the fact that any path $\bm{p}$ in $T\otimes T'$ is given by a sequence of vertices $(v_{0},v'_{0}),\ldots,(v_{r+1},v'_{r+1})$ where $(v_{i},v'_{i})\in V_{i}(T\otimes T')=V_{i}(T)\times V_{i}(T')$ for all $i$.  
\end{proof}

The above lemma is crucial for the following theorem, which is the key result:

\begin{thm}\label{thm31} Let $T$ be an unlabeled linear trellis.
Fix $a\in\ZZ_{n}$. 
Then for $ l=0,\ldots,n-1$
\begin{equation}\label{eq5}
1+\sum_{i=0}^{l} m(i,I_{a}(T))=\exp_{q}\Bigl({\sum_{i=0}^{l}m((a,i),T)}\Bigr)
\end{equation}
\end{thm}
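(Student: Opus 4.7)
The plan is to reduce to a product of elementary trellises whose spans all start at $a$, and then exploit the rigid path structure inside such a product.

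First, by the Factorization Theorem (Corollary \ref{factthm}) I would write $T\simeq\otimes_{j}\bm{0}|(a_j,l_j)$ with $l_j\geq 1$ for all $j$ (since $T$ is unlabeled, no length-zero factor appears). Applying Lemma \ref{lem32} repeatedly strips away every elementary factor with $a_j\neq a$ and produces the subproduct $T_a:=\otimes_{j:\,a_j=a}\bm{0}|(a,l_j)$, which satisfies $I_a(T)=I_a(T_a)$ and $m((a,i),T)=m((a,i),T_a)$ for every $i$. Hence it suffices to prove \eqref{eq5} for $T_a$.

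For $T_a$, each elementary factor has vertex $0$ at time $a$, so $V_a(T_a)=\{\bm{0}\}$. Set $r_i:=m((a,i),T)=|\{j:a_j=a,\,l_j=i\}|$ and $R:=\sum_{i\geq 1}r_i$. An edge $\bm{e}$ leaving $\bm{0}\in V_a(T_a)$ is determined by its head $\bm{\lambda}\in V_{a+1}(T_a)=\FF^{R}$, yielding $q^{R}$ edges in all. Unpacking the edge structure of the elementary trellis $\bm{0}|(a,l)$ (where the intermediate edge sets have the form $\langle(1,0,1)\rangle$ and the terminal one is $\langle(1,0,0)\rangle$), the forward path from $\bm{\lambda}$ in $T_a$ is uniquely determined: in factor $j$, the trajectory stays at $\lambda_j$ through time $a+l_j$ and sits at $0$ thereafter. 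Thus the initial edge fully determines the path in $T_a$.

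Given two distinct edges $\bm{e},\bm{e}'$ with heads $\bm{\lambda},\bm{\lambda}'$, the induced paths agree in factor $j$ at time $a+r+1$ if and only if $r\geq l_j$ or $\lambda_j=\lambda'_j$; hence $l(\bm{e},\bm{e}')=\max\{l_j:\lambda_j\neq\lambda'_j\}$. Fixing $\bm{e}$, the count of $\bm{e}'\neq\bm{e}$ with $l(\bm{e},\bm{e}')\leq l$ equals the number of $\bm{\lambda}'\neq\bm{\lambda}$ which agree with $\bm{\lambda}$ on every coordinate $j$ with $l_j>l$, namely $q^{R_l}-1$, where $R_l:=|\{j:l_j\leq l\}|=\sum_{i=0}^{l}r_i=\sum_{i=0}^{l}m((a,i),T)$. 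Since this count also equals $\sum_{i=0}^{l}m(i,I_a(T))$, adding $1$ to both sides gives equation \eqref{eq5}. The main obstacle is only the careful unpacking of the forward-path rigidity inside $T_a$ and its interaction with the coordinate-wise meeting times; once this is clean the final counting is immediate.
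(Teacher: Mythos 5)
Your proof is correct and takes essentially the same route as the paper: reduce via Lemma \ref{lem32} to the subproduct of elementary factors whose spans start at $a$, then use the outdegree-one rigidity at all times other than $a$ to count, for a fixed initial edge, the initial edges whose paths have merged by time $a+l+1$ — which is the same $q^{\sum_{i\le l}m((a,i),T)}$ count the paper phrases as the number of cycles through a zero vertex. The only point worth adding is a word on possible factors with degenerate span $\ZZ_{n}$ (they can occur for disconnected unlabeled trellises and have no starting point): they are likewise stripped by Lemma \ref{lem32}, exactly as in the paper's ``without loss of generality $l_{j}<n$'' step, after which your claim $V_{a}(T_{a})=\{\bm{0}\}$ and the ensuing counting are valid.
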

\begin{proof}
Since $T$ is an unlabeled linear trellis we have that $T=\bm{0}|(a_{1},l_{1})\otimes\ldots\otimes \bm{0}|(a_{s},l_{s})$ for some $a_{j}\in\ZZ_{n}$ and $l_{j}>0$.
By the above lemma we can assume without loss of generality that $a_{j}=a$ and $l_{j}<n$ 
for all $j=1,\ldots,s$. By a shift we can also clearly assume that $a=0$, so that $T$ is conventional. Since all the spans start at $a$, the outgoing degree of any $v\in V_{a}(T)$ with $a>0$ is equal to one. But then both sides  of \eqref{eq5} are equal to the number of cycles in $T$ that pass through $0\in V_{l}(T)$. 
\end{proof}

For any fixed $a\in\ZZ_{n}$, equation \ref{eq5} can be recursively solved for $m((a,i),T)$, $i=0,\ldots,n-1$, and the solution is unique. Thus the graph-theoretical data given by the multisets $I_{a}(T)$ gives us the possibility to find all nondegenerate spans along with their multiplicities. The multiplicity of the  
degenerate span $\ZZ_{n}$ is instead given by the logarithm of the number of connected components of $T$. Also, as the left-hand side of \ref{eq5} depends only on $T$ and not on its factorization, this gives another proof of Theorem \ref{unfact1}.

\begin{ex}\label{ex33} Let $T$ be the linear unlabeled trellis
\begin{center}
 \begin{tikzpicture}[xscale=1.2,>=latex',shorten >=.9pt, shorten <=1.4pt, line width=.6pt]
  \tikzstyle{every node}=[draw,circle,fill=black,minimum size=2pt,
                        inner sep=0pt]                    
\foreach \x in {0,1,2,4,5}{
\node at (\x,0) {};};
\foreach \x in {0,1,2,4,5}{
\node at (\x,1) {};};
\foreach \x in {0,1,2,4,5}{
\node at (\x,2) {};};
\foreach \x in {0,1,2,4,5}{
\node at (\x,3) {};};
\node at (3,0) {};
\node at (3,1) {};
\tikzstyle{every node}=[]
\draw (0,0) node [below] {\tiny $\mathbf{\underline{00}}$};
\draw (0,1) node [below] {\tiny $\mathbf{\underline{01}}$};
\draw (0,2) node [below] {\tiny $\mathbf{\underline{10}}$};
\draw (0,3) node [below] {\tiny $\mathbf{\underline{11}}$};

\draw (1,0) node [below] {\tiny $\mathbf{\underline{00}}$};
\draw (1,1) node [below] {\tiny $\mathbf{\underline{01}}$};
\draw (1,2) node [above] {\tiny $\mathbf{\underline{10}}$};
\draw (1,3) node [below] {\tiny $\mathbf{\underline{11}}$};

\draw (2,0) node [below] {\tiny $\mathbf{\underline{00}}$};
\draw (2,1) node [above] {\tiny $\mathbf{\underline{01}}$};
\draw (2,2) node [above] {\tiny $\mathbf{\underline{10}}$};
\draw (2,3) node [above] {\tiny $\mathbf{\underline{11}}$};

\draw (3,0) node [below] {\tiny $\mathbf{\underline{0}}$};
\draw (3,1) node [below] {\tiny $\mathbf{\underline{1}}$};

\draw (4,0) node [below] {\tiny $\mathbf{\underline{00}}$};
\draw (4,1) node [above] {\tiny $\mathbf{\underline{01}}$};
\draw (4,2) node [above] {\tiny $\mathbf{\underline{10}}$};
\draw (4,3) node [above] {\tiny $\mathbf{\underline{11}}$};

\draw (5,0) node [below] {\tiny $\mathbf{\underline{00}}$};
\draw (5,1) node [below] {\tiny $\mathbf{\underline{01}}$};
\draw (5,2) node [below] {\tiny $\mathbf{\underline{10}}$};
\draw (5,3) node [below] {\tiny $\mathbf{\underline{11}}$};

\draw [->,dashed] (0,0) --  node[below] {\scriptsize $\bm{e}^{0}$}  (1,0);
\draw [->,dashed] (0,1) --   (1,1);
\draw [->,dashed] (0,2) --   (1,2);
\draw [->,dashed] (0,3) --   (1,3);

\draw [->,dashed] (1,0) -- node[below] {\scriptsize $\bm{e}^{1}$}   (2,0);
\draw [->,dashed] (1,0) --   (2,1);

\draw [->,dashed] (1,1) --   (2,2);
\draw [->,dashed] (1,1) --   (2,3);

\draw [->,dashed] (1,2) --   (2,0);
\draw [->,dashed] (1,2) --   (2,1);

\draw [->,dashed] (1,3) --   (2,2);
\draw [->,dashed] (1,3) --   (2,3);

\draw [->,dashed] (2,0) --  node[below] {\scriptsize $\bm{e}^{2}$}  (3,0);
\draw [->,dashed] (2,1) --   (3,1);

\draw [->,dashed] (2,2) --   (3,0);
\draw [->,dashed] (2,3) --   (3,1);

\draw [->,dashed] (3,0) --  node[below] {\scriptsize $\bm{e}^{3}$}  (4,0);
\draw [->,dashed] (3,0) --   (4,1);
\draw [->,dashed] (3,0) --   (4,2);
\draw [->,dashed] (3,0) --   (4,3);

\draw [->,dashed] (3,1) --   (4,0);
\draw [->,dashed] (3,1) --   (4,1);
\draw [->,dashed] (3,1) --   (4,2);
\draw [->,dashed] (3,1) --   (4,3);

\draw [->,dashed] (4,0) --  node[below] {\scriptsize $\bm{e}^{4}$}  (5,0);
\draw [->,dashed] (4,1) --   (5,1);
\draw [->,dashed] (4,2) --   (5,2);
\draw [->,dashed] (4,3) --   (5,3);

    \end{tikzpicture}
    \end{center}
Then one easily computes:
\begin{itemize}
\item $I_{0}(T)=I(\bm{e}^{0})=\emptyset$
\item $I_{1}(T)=I(\bm{e}^{1})=\{\{2\}\}$
\item $I_{2}(T)=I(\bm{e}^{2})= \emptyset $
\item $I_{3}(T)=I(\bm{e}^{3})=\{\{3,4,4\}\}$
\item $I_{4}(T)=I(\bm{e}^{4})= \emptyset $
\end{itemize}
So, by equation \ref{eq5} we get that $\mathcal{S}_{0}(T)=\mathcal{S}_{2}(T)=\mathcal{S}_{4}(T)=\emptyset$, $\mathcal{S}_{1}(T)=\{\{(1,2)\}\}$, 
and 
$\mathcal{S}_{3}(T)=\{\{(3,3),(3,4)\}\}$, where $\mathcal{S}_{i}(T):=\{\{(a,l)\in\mathcal{S}(T)|l=i\}\}$. Therefore $T=\bm{0}|(1,2)\otimes\bm{0}|(3,3)\otimes\bm{0}|(3,4)$. 
\end{ex}


\section*{Acknowledgments}
The authors thank Prof. G. David Forney for invaluable feedback and illuminating comments, and Prof. Heide Gluesing-Luerssen for stimulating discussions.



\end{document}